\documentclass[acmsmall]{acmart}

\usepackage[most]{tcolorbox}
\usepackage{listings}
\usepackage{caption}           
\usepackage{cleveref}
\usepackage{mathtools}
\usepackage{marvosym}
\usepackage{enumitem}
\usepackage{tabularx}
\usepackage{multirow}
\usepackage{adjustbox}
\usepackage{makecell}
\usepackage{bm}
\usepackage{dsfont}
\usepackage{placeins}
\usepackage{wrapfig}
\usepackage{csquotes}
\usepackage{tikz}
\usepackage[table]{xcolor}

\definecolor{ForestGreen}{cmyk}{0.91,0,0.88,0.12}

\newcommand{\caserule}{%
  \arrayrulecolor{gray!50}\cmidrule{2-3}\arrayrulecolor{black}}

\usepackage[disable]{todonotes}                        
\crefformat{section}{\S#2#1#3}
\Crefformat{section}{\S#2#1#3}

\crefformat{subsection}{\S#2#1#3}
\Crefformat{subsection}{\S#2#1#3}
\crefformat{subsubsection}{\S#2#1#3}
\Crefformat{subsubsection}{\S#2#1#3}
\AtEndPreamble{%
  \ifcsname remark\endcsname\else
    \theoremstyle{remark}
    \newtheorem{remark}[theorem]{Remark}
    \theoremstyle{acmplain}
  \fi
  \AtBeginEnvironment{lemma}{\crefalias{theorem}{lemma}}%
  \AtBeginEnvironment{definition}{\crefalias{theorem}{definition}}%
  \AtBeginEnvironment{remark}{\crefalias{theorem}{remark}}%
  \AtEndEnvironment{lemma}{\crefalias{theorem}{theorem}}%
  \AtEndEnvironment{definition}{\crefalias{theorem}{definition}}%
  \AtEndEnvironment{remark}{\crefalias{theorem}{theorem}}%
}

\usetikzlibrary{patterns}
\usetikzlibrary{fit,positioning,calc, arrows.meta, positioning, shapes.symbols}

\makeatletter
\def\@acmplainindent{0pt}
\def\@acmdefinitionindent{0pt}
\def\@proofindent{\noindent}
\makeatother

\tcbuselibrary{listingsutf8}

\newtcblisting{mybox}[2][]{%
  colback=#2!20,
  colframe=#2!80!black,
  coltitle=white,
  boxrule=1pt,
  arc=6pt,
  fonttitle=\bfseries\footnotesize,
  listing only,
  listing options={
    basicstyle=\ttfamily\footnotesize,
    breaklines=true,
    language=Dice,
    showstringspaces=false
  },
  width=\linewidth,
  valign=top,
  height=3.6cm,
  #1
}

\newtcblisting{programpanel}[2][]{%
  #1,
  listing only,
  title={#2},
  colback=black!2,
  colframe=black!30,
  colbacktitle=black!6,
  coltitle=black,
  fonttitle=\small\itshape,
  boxrule=0.4pt,
  titlerule=0.4pt,
  arc=2pt,
  left=4pt,
  right=4pt,
  top=5pt,
  bottom=5pt,
  valign=center,
  listing options={
    aboveskip=0pt,
    belowskip=0pt,
    escapechar=!,
    basicstyle=\ttfamily\footnotesize}
}
\usepackage[scaled=0.84]{beramono}
\usepackage{mathpartir}
\usepackage{xspace}
\usepackage{todonotes}
\usepackage{subcaption}

\lstdefinelanguage{Dice}{
    morekeywords={let, in, if, then, else, fst, snd, fun, observe, match, with, end, ref, fix},
    morekeywords=[2]{uniform, discrete, gaussian, exponential, beta, flip},
    morekeywords=[3]{bool, float, int},
    basicstyle=\ttfamily\small,
    keywordstyle=\bfseries,
    commentstyle=\itshape,
    stringstyle=\ttfamily,
    lineskip=-1pt,                
    aboveskip=0.1em,                
    belowskip=0.1em,
}

\lstdefinelanguage{OCaml}{
    morekeywords={let, in, if, then, else, fst, snd, fun, match, end, with, rec, observe},
    morekeywords=[2]{uniform, discrete, gaussian, exponential, beta, flip},
    basicstyle=\ttfamily\small,
    keywordstyle=\bfseries,                            
    keywordstyle=[2]\bfseries,                        
    keywordstyle=[3]\color{pink}\bfseries,           
    commentstyle=\color{green!50!black}\itshape,
    stringstyle=\ttfamily,
    morecomment=[s]{(*}{*)}, 
    lineskip=-1pt,                
    aboveskip=0.1em,                
    belowskip=0.1em,
}

\newcommand{\Nats}{\mathbb{N}}
\newcommand{\Reals}{\mathbb{R}}

\newcommand{\letkw}{\textnormal{\ttfamily\bfseries let}}
\newcommand{\inkw}{\textnormal{\ttfamily\bfseries in}}
\newcommand{\ifkw}{\textnormal{\ttfamily\bfseries if}}
\newcommand{\thenkw}{\textnormal{\ttfamily\bfseries then}}
\newcommand{\elsekw}{\textnormal{\ttfamily\bfseries else}}
\newcommand{\uniform}{\textnormal{\ttfamily\bfseries uniform}}
\newcommand{\discrete}{\textnormal{\ttfamily\bfseries discrete}}
\newcommand{\gaussian}{\textnormal{\ttfamily\bfseries gaussian}}
\newcommand{\exponential}{\textnormal{\ttfamily\bfseries exponential}}
\newcommand{\betafn}{\textnormal{\ttfamily\bfseries beta}} 
\newcommand{\fstkw}{\textnormal{\ttfamily\bfseries fst}}
\newcommand{\sndkw}{\textnormal{\ttfamily\bfseries snd}}
\newcommand{\funkw}{\textnormal{\ttfamily\bfseries fun}}
\newcommand{\observekw}{\textnormal{\ttfamily\bfseries observe}}
\newcommand{\matchkw}{\textnormal{\ttfamily\bfseries match}}
\newcommand{\withkw}{\textnormal{\ttfamily\bfseries with}}

\newcommand{\fixkw}{\textnormal{\ttfamily\bfseries fix}}
\newcommand{\matchcase}{\boldsymbol{\mid}} 

\newcommand{\diverge}{\textnormal{\ttfamily\bfseries diverge}}
\newcommand{\logand}{\mathrel{\&\&}}                    
\newcommand{\logor}{\mathrel{||}}                       
\newcommand{\lognot}{\text{not}}                        

\newcommand{\bool}{\textnormal{\ttfamily\bfseries bool}}
\newcommand{\intty}{\textnormal{\ttfamily\bfseries int}}
\newcommand{\float}{\textnormal{\ttfamily\bfseries float}}
\newcommand{\floattype}[2]{\mathnormal{\float}[#1; #2]}
\newcommand{\fin}[1]{\textnormal{\ttfamily\bfseries fin}(#1)}
\newcommand{\listty}[1]{\textnormal{\ttfamily\bfseries list}(#1)}
\newcommand{\unit}{\textnormal{\ttfamily\bfseries unit}}

\newcommand{\finconst}[2]{#1_{\##2}}                    
\newcommand{\finlt}[1]{<_{\#{#1}}}                      
\newcommand{\finleq}[1]{\leq_{\#{#1}}}                  
\newcommand{\fingt}[1]{>_{\#{#1}}}                      
\newcommand{\fingeq}[1]{\geq_{\#{#1}}}                  
\newcommand{\fineq}[1]{==_{\#{#1}}}                     

\definecolor{typecolor}{HTML}{4169E1} 
\newcommand{\codetype}[1]{\textcolor{typecolor}{\ttfamily\bfseries#1}}
\newcommand{\ftnv}[1]{\codetype{float[#1]}}  
\newcommand{\ftb}[2]{\codetype{float[#1; #2]}}  
\newcommand{\ft}[1]{\codetype{float[#1; T]}}        

\newcommand{\expr}{e}

\newcommand{\var}{x}

\newcommand{\true}{\textnormal{true}}
\newcommand{\false}{\textnormal{false}}

\newcommand{\latticeleq}{\sqsubseteq}

\newcommand{\latticeelem}{d}

\newcommand{\cutsets}{\mathcal{B}}
\newcommand{\clt}[1]{<\!\!{#1}}
\newcommand{\cleq}[1]{\leq\!\!{#1}}

\newcommand{\valsets}{\mathcal{V}}
\newcommand{\distinguishes}[2]{\mathsf{recovers}(#1, #2)}
\newcommand{\answerslt}[3]{\mathsf{answers}_<(#1, #2, #3)}
\newcommand{\answersleq}[3]{\mathsf{answers}_{\leq}(#1, #2, #3)}
\newcommand{\subtype}[2]{#1 <: #2}
\newcommand{\has}[2]{\mathsf{has}(#1, #2)}

\newcommand{\intervals}[1]{\text{Intervals}(#1)}

\newcommand{\sem}[1]{\llbracket #1  \rrbracket}
\newcommand{\bigsemstep}[2]{\langle #1  \rangle_{#2}}

\newcommand{\bigsem}[1]{\langle #1  \rangle}

\newcommand{\nonvals}{N}
\newcommand{\vals}{Vals}

\newcommand{\monbind}{\ensuremath{\ensuremath{\gg\!\!=}}}
\newcommand{\dirac}[1]{\delta_{#1}}

\newcommand{\Slice}{\textnormal{\scshape Slice}\xspace}
\newcommand{\Slicebf}{\textbf{\scshape Slice}}
\newcommand{\Storm}{\textnormal{\scshape Storm}\xspace}
\newcommand{\Stormbf}{\textbf{\scshape Storm}}
\newcommand{\Dice}{\textnormal{\scshape Dice}\xspace}
\newcommand{\Roulette}{\textnormal{\scshape Roulette}\xspace}
\newcommand{\SPPL}{\textnormal{\scshape Sppl}\xspace}
\newcommand{\PSI}{\textnormal{\scshape Psi}\xspace}
\newcommand{\Hakaru}{\textnormal{\scshape Hakaru}\xspace}
\newcommand{\Dicebf}{\textbf{\scshape Dice}}
\newcommand{\Roulettebf}{\textbf{\scshape Roulette}}

\newcommand{\R}{\mathbb{R}}

\newcommand{\discretize}[1]{\mathcal{D}\sem{#1}}

\newcommand{\dc}{\ensuremath{\texttt{DC}_{B,B_1,B_2,V_1,V_2}}}

\newcommand{\skel}{Skel}
\newcommand{\holes}{holes}
\newcommand{\decompose}{\mathsf{dec}}

\colorlet{green}{green!70!black}
\colorlet{gray}{gray!70!black}
\colorlet{blue}{blue!50!black}
\colorlet{red}{red!80!black}
\colorlet{purple}{purple!80!black}
\colorlet{orange}{orange}
\colorlet{teal}{teal!80!black}
\colorlet{olive}{olive!70!black}
\colorlet{brown}{brown!90!black}
\colorlet{violet}{violet!70!black}

\newcommand*\circled[1]{\tikz[baseline=(char.base)]{
            \node[shape=circle,draw,inner sep=1.5pt] (char) {#1};}}

\setcopyright{cc}
\setcctype{by}
\acmDOI{10.1145/3839534}
\acmYear{2026}
\acmJournal{PACMPL}
\acmVolume{10}
\acmNumber{OOPSLA2}
\acmArticle{402}
\acmMonth{10}
\acmSubmissionID{oopslab26main-p1463-p}
\received{2026-03-17}
\received[accepted]{2026-06-10}

\begin{document}

\title{Type-Directed Discretization of Probabilistic Programs (Extended Version)}

\author{Katherine Wu}
\orcid{0000-0002-4179-0411}
\affiliation{%
  \institution{Cornell University}
  \city{Ithaca}
  \country{USA}
}
\email{kaw324@cornell.edu}

\author{Jules Jacobs}
\orcid{0000-0003-1976-3182}
\affiliation{%
  \institution{ETH Zurich}
  \city{Zurich}
  \country{Switzerland}
}
\affiliation{%
  \institution{Jane Street}
  \city{New York}
  \country{USA}
}
\email{julesjacobs@gmail.com}

\author{Kevin Batz}
\orcid{0000-0001-8705-2564}
\affiliation{%
  \institution{University of Munster}
  \city{Munster}
  \country{Germany}
}
\email{kevin.batz@uni-muenster.de}

\author{Alexandra Silva}
\orcid{0000-0001-5014-9784}
\affiliation{%
  \institution{Cornell University}
  \city{Ithaca}
  \country{USA}
}
\email{alexandra.silva@cornell.edu}

\addtocontents{toc}{\protect\setcounter{tocdepth}{-10}}

\begin{abstract}
We study \emph{exact discretization} as a semantics-preserving transformation for recursive, higher-order probabilistic programs with continuous distributions. We target programs where continuous values are compared against finitely many constants, so exact inference reduces to a discrete problem. Our central technical contribution is a non-local, type-directed analysis that infers where continuous values can be partitioned into finitely many observationally relevant regions, then rewrites sampling and comparison behavior over those regions. We call this transformation \Slice{}. Because this construction is global and type-directed, correctness requires reasoning beyond the local syntax: we formalize the transformation and prove soundness for boolean queries using a coupling-style logical relations argument over operational semantics. As an application, transformed programs can be executed by discrete engines such as \Dice{}, \Roulette{}, and \Storm{}. Our empirical evaluation shows two complementary strengths of \Slice{} when paired with discrete backends: it enables \emph{exact inference} for challenging continuous programs that lie beyond the reach of previous exact systems, and, on benchmarks where direct comparison is possible, it is competitive with state-of-the-art exact inference systems for continuous programs.
\end{abstract}

\begin{CCSXML}
<ccs2012>
   <concept>
       <concept_id>10002950.10003648.10003649</concept_id>
       <concept_desc>Mathematics of computing~Probabilistic representations</concept_desc>
       <concept_significance>500</concept_significance>
       </concept>
   <concept>
       <concept_id>10002950.10003648.10003662</concept_id>
       <concept_desc>Mathematics of computing~Probabilistic inference problems</concept_desc>
       <concept_significance>500</concept_significance>
       </concept>
   <concept>
       <concept_id>10003752.10010124.10010131.10010134</concept_id>
       <concept_desc>Theory of computation~Operational semantics</concept_desc>
       <concept_significance>500</concept_significance>
       </concept>
 </ccs2012>
\end{CCSXML}

\ccsdesc[500]{Mathematics of computing~Probabilistic representations}
\ccsdesc[500]{Mathematics of computing~Probabilistic inference problems}
\ccsdesc[500]{Theory of computation~Operational semantics}

\keywords{probabilistic programming, exact inference, discretization, type systems}

\bibliographystyle{ACM-Reference-Format}

\maketitle

\section{Introduction}\label{sec:intro}
Probabilistic programming languages (PPLs) are widely used to model uncertainty~\cite{Moy2025Roulette,Holtzen2020Dice,DeRaedt2007ProbLog,Saad2021SPPL,Carpenter2017Stan,Salvatier2016PyMC3,Bingham2019Pyro,Dillon2017TFP,Tran2016Edward,Tolpin2016Anglican,Goodman2014WebPPL,Pfeffer2009Figaro,Minka2018InferNET,Ge2018Turing,CusumanoTowner2019Gen,Tehrani2020BeanMachine,Goodman2008Church,Narayanan2016Hakaru}. Programs with {\em continuous} distributions capture rich models but require reasoning about uncountable domains, whereas purely {\em discrete} programs often admit fast and precise inference but cannot express real-valued models.

\begin{figure}[t]
    \centering
    \scalebox{.99}{
    \begin{tikzpicture}[
        node distance=1.2cm and 2cm,
        data/.style={rectangle, draw, dashed, minimum width=2.5cm, minimum height=0.9cm, align=center, font=\small},
        arrow/.style={->, thick},
        interp/.style={->, thick},
        operation/.style={font=\footnotesize, midway, above},
        operationside/.style={font=\footnotesize, midway, right},
        operationleft/.style={font=\footnotesize, midway, left},
        bigbox/.style={draw, thick, rounded corners, inner sep=0.3cm},
        impl/.style={font=\tiny, text=gray}
    ]
        \node[data] (ast) {AST\\{\tiny\texttt{expr}}};
        \node[data, right=of ast] (tast) {Typed AST\\{\tiny\texttt{texpr}}};
        \node[data, right=of tast] (dast) {Discretized\\{\tiny \texttt{expr}}};
        
        \node[data, above=1.5cm of ast] (input) {\texttt{.slice} file\\{\footnotesize continuous program}};
        
        \node[bigbox, fit=(ast)(tast)(dast), inner sep=0.5cm, label=above:{\textbf{\Slice{} Discretizer}}] (slicebox) {};
        
        \node[impl] at ($(slicebox.south) + (0.0,0.25)$) {ast.ml, utils.ml, pretty.ml};
        
        \node[data, below=1.55cm of dast] (dice) {\texttt{.dice} / \texttt{.rkt} / \\\texttt{.tra} / \texttt{.lab} file\\{\footnotesize discrete program}};
        
        \node[data, left=2.6cm of dice] (output) {Probability\\Distribution};
        
        \draw[arrow] (input) -- node[operationside, yshift=5mm, xshift=1mm] {Parse} (ast);
        \node[impl, align=center] at ($(input)!0.5!(ast) + (0.6,0.1)$) {parser.mly\\lexer.mll};
        
        \draw[arrow] (ast) -- node[operation] {Type Infer} (tast);
        \node[impl, align=center] at ($(ast)!0.5!(tast) + (0,-0.4)$) {inference.ml\\lats.ml(i)};
    
        \draw[arrow] (tast) -- node[operation] {Discretize} (dast);
        \node[impl, align=center] at ($(tast)!0.5!(dast) + (0,-0.4)$) {discretization.ml\\distributions.ml};
        
        \draw[arrow] (dast) -- node[operationside, yshift=-1mm] {Convert} (dice);
        \node[impl, align=center] at ($(dast)!0.5!(dice) + (1.3,-0.4)$) {to\_\{dice,roulette,mc\}.ml};
        
        \draw[arrow] (dice.west) -- node[operation, align=center] {Dice / Roulette / \\ Storm Engine} (output.east);
        \node[impl, align=center] at ($(dice)!0.5!(output) + (0,-0.4)$) {BDD + WMC / \\ DTMC model checking};
        
        \draw[interp] (ast.south) .. controls +(0,-0.5) and +(0,1) .. 
            node[operationleft, font=\footnotesize, align=center, yshift=-5mm, xshift=8mm] {Interpreter\\[-1mm]{\tiny \textcolor{gray}{interp.ml}}} (output.north);
        \draw[interp] (dast.south) .. controls +(0,-0.5) and +(0,1) .. (output.north);
    \end{tikzpicture}}
    \caption{Pipeline for exact discretization: typed \Slice{} programs are transformed into semantically equivalent discrete programs, then evaluated by discrete engines. The transformation stage (type inference + discretization) is backend-agnostic; \Dice{}, \Roulette{}, and \Storm{} are examples of target backends.}
    \label{fig:architecture}
\end{figure}

\Slice{} studies \emph{exact discretization}: a semantics-preserving source-to-source transformation that converts, if possible, continuous programs into discrete ones. We target higher-order recursive programs with conditioning and continuous sampling, without arithmetic. The key mechanism is \emph{type-directed cut inference}: our type system tracks each constant threshold (or {\em cut}) used in comparisons and uses this information to partition continuous domains into finitely many intervals. For example, $\uniform(0,1) < 0.5$ discretizes to $\discrete(0.5,0.5) < 1$, yielding a discrete program that existing exact inference tools can process.

Cuts may arise far from their source expressions: a sampled value may flow through higher-order functions, conditionals, recursive calls, data structures, and observations before it is eventually compared. To discretize soundly, \Slice{} propagates cut information along these non-local flows, collecting all observationally relevant cut points for each continuous value and using them to partition its domain into finitely many intervals (see~\Cref{sec:examples} for illustrative examples). A type system provides a natural framework for this analysis: it gives a compositional account of how cut information propagates through each language construct (via one typing rule per langauge feature); it separates the analysis result, i.e. a type derivation, from the type inference algorithm used to infer it; and it supports our correctness argument, since semantic preservation can be stated as a logical relation indexed by our types carrying cut and value sets.

Our discretization method is subject to a principled restriction: a continuous value must ultimately be compared against constants; direct comparisons between two unknown reals such as $\uniform(0,1)<\uniform(0,1)$ cannot be discretized exactly and remain continuous. Nonetheless, benchmarks from the literature and our novel case studies show that many programs \emph{can} be discretized by our system automatically.

Discretization itself is not new. Garg et al.~\cite{Garg2024BitBlast} encode real-valued variables via user-chosen bit-precisions, trading accuracy for tractability. Other approximate discretization schemes exist in the literature~\cite{Huang2021AQUA,Beutner2022GuaranteedBounds,Claret2013BayesianDataFlow}. In \Slice{}, \emph{exact discretization} means that the transformation is semantics-preserving: the discrete program preserves the relevant probabilities of the original continuous program. This differs from approximate discretization, which approximates the original semantics and may provide error bounds or convergence guarantees, but does not prove equality of source and transformed program probabilities in the sense of our main theorem.

\Cref{fig:architecture} shows the concrete pipeline from typed source programs to discrete backends. Because discretization is source-to-source, any discrete PPL that supports finite distributions can serve as a  backend; we instantiate this with \Dice{} \cite{Holtzen2020Dice}, \Roulette{} \cite{Moy2025Roulette}, and \Storm{} \cite{storm}.

\smallskip
\noindent\emph{Outline.} Starting from examples (\Cref{sec:examples}) and concluding with related and future work in \Cref{sec:related}, we:
\begin{description}[leftmargin=0.3cm,topsep=3pt,itemsep=1pt]
    \item[Type System (\Cref{sec:language})] Characterize when programs are discretizable.
       \item[Type Inference (\Cref{sec:type-inference})] Infer types automatically to guide discretization.
    \item[Exact Discretization (\Cref{sec:discretization})] Define a type-directed transformation from continuous programs to discrete ones using inferred type information.
    \item[Soundness (\Cref{sec:soundness})] Prove semantic preservation of the discretization transformation with a coupling-style relational argument
    \cite{Bizjak2015Step,Wand2018Contextual,Culpepper2017ContextualScoring}.
    \item[Implementation, Case Studies, and Evaluation (\Cref{sec:implem_eval})] Implement \Slice{} type inference and the discretization transformation in OCaml, and provide \Dice{}, \Roulette{}, and \Storm{} backends for running the discretized programs. We provide (i) case studies combining continuous sampling, higher-order functions, and unbounded recursion; to the best of our knowledge, there is no prior exact inference system that can cover programs \emph{combining} all three of these features and (ii) case studies where \Slice{}+\Dice{} and \Slice{}+\Roulette{} show favorable scaling and empirical performance on new and existing benchmarks~\cite{gehr2016psi,Saad2021SPPL}.
\end{description}

\section{Slice by Example}\label{sec:examples}
To build intuition for \Slice{}, we start with simple examples demonstrating the key ideas, and then provide a more involved example containing higher-order recursion and lists in \Cref{sec:examples:7}. Our pipeline performs \emph{fully automatic exact inference} on all provided programs.

\subsection{Branching on a Single Continuous Variable}\label{sec:examples:1}
Consider a program that samples a real value from a standard Gaussian distribution and then performs two nested comparisons: first checking whether the sample is less than 0.8, then comparing it to either 0.1 or 2.0:

\begin{lstlisting}[aboveskip=1em,belowskip=1em,gobble=4]
    let x = gaussian(0,1) in if x < 0.8 then x < 0.1 else x < 2.0
\end{lstlisting}

\noindent We discretize this program by replacing the real-valued sample with a discrete interval index, as shown below and in \Cref{fig:cuts-gaussian}:

\begin{figure}[t]
  \centering
  \scalebox{0.6}{
  \begin{tikzpicture}[scale=1.1]

  \begin{scope}[xshift=-2.5cm, scale=0.75]

  \draw[->] (-5,0) -- (6,0);
  \node[text=green] at (-4.8, -0.3) {$-\infty$};
  \node[text=green] at (5.8, -0.3) {$+\infty$};

  \node[text=green] at (0.1, -0.5) {0.1};
  \node[text=green] at (0.8, -0.5) {0.8};
  \node[text=green] at (2.0, -0.5) {2.0};

  \draw[thick, domain=-4:6, smooth, variable=\x]
      plot ({\x},{2.2*exp(-(\x*\x)/2)});

  \draw[dotted, thick, gray] (0.1,2.4) -- (0.1,0);
  \draw[dotted, thick, gray] (0.8,2.4) -- (0.8,0);
  \draw[dotted, thick, gray] (2.0,2.4) -- (2.0,0);

  \fill[blue,   opacity=0.5] (-4,0) -- plot[domain=-4:0.1] (\x,{2.2*exp(-(\x*\x)/2)}) -- (0.1,0) -- cycle;
  \fill[red,    opacity=0.5] (0.1,0) -- plot[domain=0.1:0.8] (\x,{2.2*exp(-(\x*\x)/2)}) -- (0.8,0) -- cycle;
  \fill[purple, opacity=0.5] (0.8,0) -- plot[domain=0.8:2.0] (\x,{2.2*exp(-(\x*\x)/2)}) -- (2.0,0) -- cycle;
  \fill[olive, opacity=0.5] (2.0,0) -- plot[domain=2.0:6] (\x,{2.2*exp(-(\x*\x)/2)}) -- (6,0) -- cycle;

  \def\shift{0.7}
  \def\barW{1.4}
  \def\barY{-3.2}

  \def\hA{-1.6}        
  \def\hB{-2.464}      
  \def\hC{-2.639}      
  \def\hD{-3.132}      

  \def\xA{-2*\barW + \shift}
  \def\xB{-1*\barW + \shift}
  \def\xC{0 + \shift}
  \def\xD{\barW + \shift}

  \draw (\xA,\hA) rectangle ++(\barW,\barY-\hA);
  \fill[blue, opacity=0.5] (\xA,\hA) rectangle ++(\barW,\barY-\hA);

  \draw (\xB,\hB) rectangle ++(\barW,\barY-\hB);
  \fill[red, opacity=0.5] (\xB,\hB) rectangle ++(\barW,\barY-\hB);

  \draw (\xC,\hC) rectangle ++(\barW,\barY-\hC);
  \fill[purple, opacity=0.5] (\xC,\hC) rectangle ++(\barW,\barY-\hC);

  \draw (\xD,\hD) rectangle ++(\barW,\barY-\hD);
  \fill[olive, opacity=0.5] (\xD,\hD) rectangle ++(\barW,\barY-\hD);

  \node[text=orange] at (\xA+0.5*\barW,\barY-0.3) {0};
  \node[text=orange] at (\xB+0.5*\barW,\barY-0.3) {1};
  \node[text=orange] at (\xC+0.5*\barW,\barY-0.3) {2};
  \node[text=orange] at (\xD+0.5*\barW,\barY-0.3) {3};

  \draw[dotted, thick, gray]
      (-5,0) .. controls (-5,-1.0) and (\xA,-1.05) .. (\xA,\hA);

  \draw[dotted, thick, gray]
      (0.1,0) .. controls (0.1,-1.1) and (\xB,-1.15) .. (\xB,\hA);

  \draw[dotted, thick, gray]
      (0.8,0) .. controls (0.8,-1.2) and (\xC,-1.25) .. (\xC,\hC);

  \draw[dotted, thick, gray]
      (2.0,0) .. controls (2.0,-1.3) and (\xD,-1.35) .. (\xD,\hD);

  \draw[dotted, thick, gray]
      (6,0) .. controls (6,-1.4) and (\xD+\barW,-1.4) .. (\xD+\barW,\hD);

  \end{scope}

  \node[left, align=left, text width=6cm] at (-6.2,0.9) {original program:\\ $\letkw \; x = \gaussian(0,1) \; \inkw$ \\ $\ifkw \; x < \textcolor{green}{0.8} \; \thenkw \; x < \textcolor{green}{0.1} \; \elsekw \; x < \textcolor{green}{2.0}$};

  \node[left, align=left, text width=6cm] at (-6.2,-2.0) {discretized program:\\ $\letkw \; x =$ \\$\discrete(\textcolor{blue}{0.5398},\textcolor{red}{0.2483},\textcolor{purple}{0.1891},\textcolor{olive}{0.0228}) \; \inkw$ \\ $\ifkw \; x < \textcolor{orange}{2} \; \thenkw \; x < \textcolor{orange}{1} \; \elsekw \; x < \textcolor{orange}{3}$};

  \node[align=left] at (5.2,-1.8) {
    \textcolor{blue}{$\Pr(0)=CDF(0.1)-\overbrace{CDF(-\infty)}^{= 0}=0.5398$}\\
    \textcolor{red}{$\Pr(1)=CDF(0.8)-CDF(0.1)=0.2483$}\\
    \textcolor{purple}{$\Pr(2)=CDF(2.0)-CDF(0.8)=0.1891$}\\
    \textcolor{olive}{$\Pr(3)=\underbrace{CDF(+\infty)}_{= 1}-CDF(2.0)=0.0228$}
  };

  \end{tikzpicture}
  }
  \caption{Discretization of the original program into its semantically equivalent discretized program for~\Cref{sec:examples:1}. The shaded regions show how the cuts at 0.1, 0.8, and 2.0 partition the continuous distribution into four intervals $(-\infty, 0.1)$, $[0.1, 0.8)$, $[0.8, 2.0)$, and $[2.0, +\infty)$, whose masses become the probabilities of outcomes 0, 1, 2, and 3, i.e., interval indices, in the discrete distribution.}
  \label{fig:cuts-gaussian}
\end{figure}

\begin{lstlisting}[aboveskip=1em,belowskip=1em,gobble=4]
    let x = discrete(0.5398, 0.2483, 0.1891, 0.0228) in if x < 2 then x < 1 else x < 3
\end{lstlisting}

\noindent Here, \discrete(0.5398, 0.2483, 0.1891, 0.0228) already represents all four observationally distinguishable outcomes of the Gaussian sample. The precise real value of $x$ never matters to the rest of the program; only the interval in which it falls does. This is the regime we target: continuous quantities observed only through finitely many comparison outcomes. \emph{Slice’s goal is to detect such situations automatically.} We partition the continuous distribution into intervals that are indistinguishable to the rest of the program, then map every real to the interval it inhabits.

Our type system guides the transformation, in particular through a \emph{cut set} --- the set of comparison thresholds extracted from the program. A continuous distribution is annotated with its cut set, which records the values it is compared against. For our example, the type annotation is:

\begin{lstlisting}[aboveskip=1em,belowskip=1em,escapechar=!,gobble=4]
    let x = (gaussian(0,1) : !{\ftnv{\{<0.1,<0.8,<2\}}}!) in if x < 0.8 then x < 0.1 else x < 2.0
\end{lstlisting}

\noindent Intuitively, the cuts $\{\clt{0.1}, \clt{0.8}, \clt{2.0}\}$ partition the real line $\R$ into four intervals $(-\infty, 0.1)$, $[0.1, 0.8)$, $[0.8, 2.0)$, $[2.0, +\infty)$, and only the interval containing the Gaussian sample $x$ affects the rest of the computation. \Slice{} uses these cuts to replace the continuous distribution with a discrete one whose outcomes correspond exactly to these intervals, and each floating-point constant is mapped to the index of the interval that contains it.

\subsection{Branching on Multiple Continuous Variables}\label{sec:examples:2}

\noindent Different continuous quantities may induce different cut sets, and \Slice{} computes the appropriate set for each one independently. Consider an example with multiple continuous variables:

\begin{lstlisting}[aboveskip=1em,belowskip=1em,gobble=4]
    let x = uniform(0, 1) in
    let y = uniform(0, 2) in
    if x < 0.5 then x < 0.2 else y < 0.1
\end{lstlisting}

\noindent In this program, the variable $x$ is drawn from \uniform(0, 1) and compared against two constants: $0.5$ and $0.2$. These cuts $\clt{0.5}$ and $\clt{0.2}$ partition the real line $\R$ into three intervals: $(-\infty, 0.2)$, $[0.2, 0.5)$, and $[0.5, +\infty)$. Similarly, the variable $y$, drawn from \uniform(0, 2), is compared only with $0.1$. This single cut $\clt{0.1}$ partitions the real line $\R$ into $(-\infty, 0.1)$ and $[0.1, +\infty)$.

Type inference performs this analysis by collecting all cuts for each continuous variable. For $x$, it discovers the cuts $\clt{0.2}$ and $\clt{0.5}$, giving it the type \ftnv{\{<0.2,<0.5\}}. For $y$, it only finds $\clt{0.1}$, giving it the type \ftnv{\{<0.1\}}. This information is captured in the program below (left):

\begin{tcbraster}[
    raster columns=2,
    raster equal height=rows,
    raster before skip=1.5em,
    raster after skip=1.5em,
    raster column skip=0.8em]
    \begin{programpanel}{(a) Type-Annotated Program}
let x = uniform(0, 1) : !{\ftnv{\{<0.2,<0.5\}}}! in
let y = uniform(0, 2) : !{\ftnv{\{<0.1\}}}! in
if x < 0.5 then x < 0.2 else y < 0.1
    \end{programpanel}
    \begin{programpanel}{(b) Discretized Program}
let x = discrete(0.2, 0.3, 0.5) in
let y = discrete(0.05, 0.95) in
if x < 2 then x < 1 else y < 1
    \end{programpanel}
\end{tcbraster}
\label{fig:examples:2}

\Slice{} therefore discretizes $x$ into three outcomes \discrete(0.2, 0.3, 0.5) and $y$ into two outcomes \discrete(0.05, 0.95), where each probability in the discrete distribution is the measure of the corresponding original interval. Comparisons then operate over the integer indices of those intervals. The constants $0.5$ and $0.2$ map to indices 2 and 1, respectively, in the partition $(-\infty,0.2)\cup[0.2,0.5)\cup[0.5,+\infty)$ for $x$; the constant $0.1$ maps to index 1 in the partition $(-\infty,0.1)\cup[0.1,+\infty)$ for $y$.

\subsection{Control Flow}\label{sec:examples:4}

We now consider comparisons against a continuous quantity whose value is determined by control flow. Additionally, continuous values can flow through conditionals:

\begin{lstlisting}[aboveskip=1em,belowskip=1em,escapechar=!,gobble=4]
    let x = if flip() then uniform(0,2) else gaussian(0,1) in
    let y = if 1.5 > x then 1.8 else 0.3 in x <= y
\end{lstlisting}

\noindent Here, $y$ evaluates to either $1.8$ or $0.3$ depending on $x$. Because $x \leq y$ may use either value, \Slice{} collects both constants and treats the comparison as ranging over this finite set. This motivates enriching our float types with a \emph{value set}: a finite collection of concrete values that an expression may evaluate to. \Cref{fig:float-type-def} presents the \emph{full} float type implemented by \Slice{}; each floating-point expression is annotated with such a type. The type-annotated program is below (left):

\begin{tcbraster}[
    raster columns=5,
    raster equal height=rows,
    raster before skip=1.5em,
    raster after skip=1.5em,
    raster column skip=0.8em]
    \begin{programpanel}[raster multicolumn=3]{(a) Type-Annotated Program}
let x =
 (if flip() then
   (uniform(0,2) : !{\ft{\{<=0.3,<1.5,<=1.8\}}}!)
 else
   (gaussian(0,1) : !{\ft{\{<=0.3,<1.5,<=1.8\}}}!)
 : !{\ft{\{<=0.3,<1.5,<=1.8\}}}!) in
let y =
 (if (1.5 : !{\ftb{\{<=0.3,<1.5,<=1.8\}}{\{1.5\}}}!) > x
   then
     (1.8 : !{\ftb{\{<=0.3,<1.5,<=1.8\}}{\{1.8\}}}!)
   else
     (0.3 : !{\ftb{\{<=0.3,<1.5,<=1.8\}}{\{0.3\}}}!)
 : !{\ftb{\{<=0.3,<1.5,<=1.8\}}{\{0.3,1.8\}}}!) in
x <= y
    \end{programpanel}
    \begin{programpanel}[raster multicolumn=2]{(b) Discretized Program}
let x =
 if flip() then
   discrete(0.15, 0.6, 0.15, 0.1)
 else
   discrete(0.617911, 0.315281,
            0.030877, 0.035930)
in
let y =
 if 2 > x then 2 else 0
in
x <= y
    \end{programpanel}
\end{tcbraster}
\label{fig:examples:4}

\noindent Value sets refine an expression's cut set by adding cuts induced by comparisons. For example, in the test $x \leq y$, the value set of $y$ contributes boundary points that must also appear in the cut set of $x$. This dependency is non-local: constants introduced in one branch can constrain how a sampled value from another branch must be discretized.

Whenever two expressions are compared, the type system equates their cut sets so they are discretized over the same intervals. These cut sets then propagate backward through the control flow to all contributing subexpressions. \Cref{fig:control-flow-gaussian} illustrates this non-local propagation, which the soundness proof must relate through a semantic coupling relation.

The discretized program is shown on the right. The cuts $\cleq{0.3}$, $\clt{1.5}$, $\cleq{1.8}$ induce the intervals $(-\infty, 0.3]$, $(0.3, 1.5)$, $[1.5, 1.8]$, $(1.8, +\infty)$. Each \discrete{} statement \mbox{lists the mass of these intervals} rounded to six decimal places.

\begin{figure}
	\scalebox{0.9}{
    $
    \textbf{float}
    [
    \underbrace{\text{cut-set}}_{\footnotesize\begin{array}{c}\text{set of cuts}\text{ (e.g., }\{<\!0.1, <\!0.5, \leq\!0.8\}\text{)}\text{ or }\top\text{ (uncut)}\end{array}}
    ;
    \underbrace{\text{value-set}}_{\footnotesize\begin{array}{c}\text{set of concrete values }\text{(e.g., }\{0.1, 0.4, 0.5\}\text{)}\text{ or }\top\text{ (any value)}\end{array}}
    ]
    $
}
    \caption{The type \lstinline{float[cut-set; value-set]} captures two aspects of a floating-point expression. The cut set is determined either by (i) the constants the expression is compared against or (ii) the constant itself if the expression is a float constant. The value set records the concrete values an expression can take. The discretization is driven by the cut set: the $n$ cuts define $n+1$ intervals over the real line.}
    \label{fig:float-type-def}
\end{figure}

\subsection{Discrete Latent Variable}\label{sec:examples:5}

Continuous distributions can have parameters that depend on discrete random choices. In this example, a Gaussian distribution's standard deviation is determined by such a random choice:

\begin{lstlisting}[escapeinside={<@}{@>},aboveskip=1em,belowskip=1em,gobble=4]
    let x = if flip() then 0.5 else 1.5 in gaussian(0, x) < 0.5
\end{lstlisting}

\noindent 
The type annotations show how the variable $x$ determines the Gaussian's standard deviation, and \Cref{fig:control-flow-distinguishes} illustrates how type information is propagated.

\begin{tcbraster}[
    raster columns=5,
    raster equal height=rows,
    raster before skip=1.5em,
    raster after skip=1.5em,
    raster column skip=0.8em]
    \begin{programpanel}[raster multicolumn=3]{(a) Type-Annotated Program}
let x =
  (if flip() then
      (0.5 : !{\ftb{\{<=0.5,<=1.5\}}{\{0.5\}}}!)
    else
      (1.5 : !{\ftb{\{<=0.5,<=1.5\}}{\{1.5\}}}!)
  : !{\ftb{\{<=0.5,<=1.5\}}{\{0.5,1.5\}}}!)
in
((gaussian(0, x) : !{\ft{\{<0.5\}}}!)
  < (0.5 : !{\ftb{\{<0.5\}}{\{0.5\}}}!))
    \end{programpanel}
    \begin{programpanel}[raster multicolumn=2]{(b) Discretized Program}
let x =
  if flip() then 0 else 1
in
(if x == 0 then
    discrete(0.84, 0.16)
  else
    discrete(0.63, 0.37)) < 1
    \end{programpanel}
\end{tcbraster}
\label{fig:examples:5}

\noindent Because $x$ is used as a parameter to the $\gaussian$ distribution, \Slice{} generates a separate case for each possible value that $x$ can take (namely 0.5 and 1.5). To support this, the type system uses non-strict inequalities in cut sets, allowing these values to map to unique integer indices starting at 0. Specifically, 0.5 maps to index 0 via \codetype{\ftb{\{<=0.5,<=1.5\}}{\{0.5\}}}, and 1.5 maps to index 1 via \codetype{\ftb{\{<=0.5,<=1.5\}}{\{1.5\}}}. The resulting discretized program then branches on these values and instantiates the corresponding discrete distribution, depending on which standard deviation was selected by the initial flip, as shown in the program on the right.

\subsection{Conditioning}\label{sec:examples:6}

\Slice{} supports conditioning via $\observekw$. Consider the following: 

%

\begin{tcbraster}[
    raster columns=5,
    raster equal height=rows,
    raster before skip=1.5em,
    raster after skip=1.5em,
    raster column skip=0.8em]
    \begin{programpanel}[raster multicolumn=3]{(a) Type-Annotated Program}
let x =
 (uniform(0.0, 1.0) : !{\ft{\{<0.2,<0.5\}}}!) in
let _ =
 observe(x < (0.5 : !{\ftb{\{<0.2,<0.5\}}{\{0.5\}}}!)) in
x < (0.2 : !{\ftb{\{<0.2,<0.5\}}{\{0.2\}}}!)
    \end{programpanel}
    \begin{programpanel}[raster multicolumn=2]{(b) Discretized Program}
let x = discrete(0.2, 0.3, 0.5) in
let _ = observe (x < 2) in x < 1
    \end{programpanel}
\end{tcbraster}
\label{fig:examples:6}

\noindent
In the discretized program, the observation $x<2$ eliminates the discrete state corresponding to the original interval $[0.5,+\infty)$, thereby preserving the effect of conditioning on $x<0.5$.

Exact equalities are encoded with $\observekw\; (x \leq c \mathrel{\&\&} x \geq c)$. The two cuts $\clt{c}$ and $\cleq{c}$ induce the partition $(-\infty,c), \{c\}, (c,\infty)$, so \Slice{} can track probability mass assigned exactly to $c$ as its own discrete outcome. This supports point observations when the observed value has strictly positive probability mass; \Cref{ap:gpa} gives such an example.

Measure-zero observations remain delicate~\cite{Jacobs2021ParadoxesProbProg}. What matters for $\observekw(e)$ is whether $e$ has positive probability, not whether it is written as an equality. When an observation has probability zero, \Slice{} records observation-failure mass instead of renormalizing (cf.\ \Cref{sec:soundness}); if all observation-satisfying executions have probability zero, the result is the empty subdistribution. Scoring semantics for true measure-zero observations are out of scope, so users should approximate them with narrow intervals when that is the intended model. \Cref{ap:measure-zero-observations} gives details and examples.

\subsection{Unbounded Recursion, Higher-Order Functions, and Lists}\label{sec:examples:7}
\newcommand{\linelabel}[1]{%
  \edef\@currentlabel{\thelstnumber}%
  \label{#1}%
}

\begin{figure}[t]
\centering
\begin{minipage}[t]{0.44\textwidth}
\vspace{0pt}
\begin{lstlisting}[aboveskip=0.4em,belowskip=0.4em,
  language=OCaml,
  escapechar=!,
  numbers=right,
  stepnumber=1,
  numberstyle=\color{gray}\scriptsize,
  numbersep=3pt,
  basicstyle=\ttfamily\footnotesize,
  commentstyle=\color{ForestGreen}\itshape]
(* Returns true if y is in [lo,hi) *)
let within = fun lo -> fun hi -> fun y -> !\linelabel{ln:within}!
  (y >= lo) && (y < hi) in

(* Function for sampling interval bounds *)
let b = fun _ ->  !\linelabel{ln:b}!
  discrete(0.18: 0.05, 0.11: 0.24,
          0.21: 0.37, 0.09: 0.58,
          0.41: 0.89) in

(* Initialize intervals by sampling their
respective (possibly overlapping) bounds *)
let init =!\linelabel{ln:init}!
  (false, within 0.00 (b ())) ::
  (false, within (b ()) (b ())) ::
  (false, within (b ()) (b ())) ::
  (false, within (b ()) 1.00) ::
  nil in

(* Update flags depending on y falling
   within the respective interval *)
let update = fix update y := fun ys -> !\linelabel{ln:update}!
  match ys with
    | nil -> nil
    | h :: t ->
        ((fst h || ((snd h) y)), snd h)
        :: (update y t)
  end in
\end{lstlisting}
\end{minipage}
\hfill
\hspace{-4pt}%
\begin{minipage}[t]{0.48\textwidth}
\vspace{0pt}
\begin{lstlisting}[aboveskip=0.4em,belowskip=0.4em,
  language=OCaml,
  escapechar=!,
  numbers=right,
  stepnumber=1,
  firstnumber=last,
  numberstyle=\color{gray}\scriptsize,
  numbersep=0pt,
  basicstyle=\ttfamily\footnotesize,
  commentstyle=\color{ForestGreen}\itshape]
(* Sample from a gaussian, then use
   the resulting value to update flags *)
let step = fun ys -> !\linelabel{ln:step}!
  update (gaussian(0,1)) ys in

(* Number of samples determined by
   unbounded recursion *)
let cover = fix cover st :=!\linelabel{ln:cover}!
  if uniform(0,1) < 0.1 then st
  else cover (step st) in

(* Have all intervals been covered at
   least once? *)
let is_covered = fix is_covered ys := !\linelabel{ln:iscovered}!
  match ys with
    | nil -> true
    | h :: t -> (fst h) && (is_covered t)
  end in

(* Run entire process on sampled intervals *)
let final = cover init in
  is_covered final !\linelabel{ln:final}!
\end{lstlisting}
\end{minipage}%
\hspace{4pt}

\caption{An infinite stochastic process for random interval covering over $[0,1]$.}
\label{fig:epsball}
\end{figure}
\noindent
We now turn to the more involved example in \Cref{fig:epsball}, which \Slice{} discretizes automatically. This example emphasizes how our discretization procedure operates across higher-order functions, unbounded recursion, and lists.

The program models the following process: We start by randomly sampling $4$ intervals (l.\ \ref{ln:b} and \ref{ln:init}), stored as a list of pairs, where the first component is a flag and the second component is a function that determines whether a given point falls within that interval. We then employ unbounded recursion (l.\ \ref{ln:cover}) to repeat the following a potentially unbounded number of times according to a geometric distribution: sample a point $y$ from a gaussian (l.\ \ref{ln:step}), and update the flag of every interval to $\true$ whenever $y$ falls within that interval (l.\ \ref{ln:update}). The program returns $\true$ iff \emph{all} intervals have been hit at least once (l. \ref{ln:final} and \ref{ln:iscovered}). Exact inference on this program therefore determines the probability that this infinite stochastic process hits every interval at least once.

\Slice{} \emph{automatically} produces a \emph{provably equivalent} (cf.\ \Cref{sec:soundness}) discrete program (shown in \Cref{ap:epsball}). Even though this program still contains unbounded recursion, our back-end automatically determines that its stochastic execution behavior can be modeled as a \emph{finite-state} Markov chain. Our key insight is that exact inference can then be performed via \emph{probabilistic model checking}: we use \Storm{} \cite{storm}, which automatically computes that the sought-after probability is $0.034474...$. See \Cref{sec:case-studies} for details. To the best of our knowledge, our pipeline is the first capable of performing exact inference on suitable programs containing continuous sampling, higher-order functions, and unbounded recursion. In \Cref{sec:casestudy:epsball}, we extend this program by conditioning \emph{inside the recursion}.

%
%
%
%


\section{The Slice Language and Type System}\label{sec:language}
We now describe the language (\Cref{sec:sub_language}), its types (\Cref{sec:types}), and the typing rules for discretization (\Cref{sec:type-system}). Henceforth, we use the terms \emph{expression}, \emph{subexpression}, and \emph{program} somewhat interchangeably when the intended meaning is clear from context.

\subsection{The Slice Language}
\label{sec:sub_language}
\begin{figure}[!t]
	\scalebox{0.8}{
	\begin{minipage}{1\textwidth}
	\begin{align*}
		e ::= &\; x \mid c \mid \text{true} \mid \text{false} \mid \finconst{k}{n} \mid () & \text{constants} \\
		| &\; \letkw \; x = e_1\; \inkw \; e_2  & \text{let binding} \\
		| &\; \uniform(e_1, e_2) \mid \gaussian(e_1, e_2) \mid \exponential(e) \mid \ldots & \text{continuous distributions} \\
		| &\; \discrete(p_0, \ldots, p_{n-1})      & \text{discrete distribution} \\
		| &\; e_1 < e_2 \mid e_1 \leq e_2 \mid e_1 \finlt{n} e_2 \mid e_1 \finleq{n} e_2 & \text{comparisons} \\
		| &\; e_1 \logand e_2 \mid e_1 \logor e_2 \mid \text{not}\; e & \text{Boolean operations} \\
		| &\; \ifkw \; e_1\; \thenkw \; e_2\; \elsekw \; e_3 & \text{conditional} \\
		| &\; (e_1, e_2) \mid \fstkw \; e \mid \sndkw \; e & \text{pairs} \\
		| &\; \funkw \; x \; \rightarrow \; e \mid e_1 \; e_2 \mid \fixkw\; f\; x := e & \text{functions} \\
		| &\; \observekw\; e \mid e_1; e_2 & \text{effects} \\
		| &\; \text{nil} \mid e_1 :: e_2 \mid ( \matchkw \; e \; \withkw \matchcase \text{nil} \rightarrow e_{\text{nil}} \matchcase e_1 :: e_2 \rightarrow e_{\text{cons}}\; {\textnormal{\ttfamily\bfseries end}}) & \text{lists} \\
		| &\; \diverge\; & \text{diverge} \\
	\end{align*}
	\end{minipage}
}
\caption{Syntax of the \Slice{} language. \Slice{} supports a variety of continuous distributions with one and two parameters; for a full list of supported distributions, see~\Cref{appen:supported-distributions}.}
\label{fig:grammar}
\end{figure}

The syntax in \Cref{fig:grammar} extends a standard higher-order functional core with probabilistic choices. Beyond constants, variables, comparisons, Boolean operators, conditionals, pairs, and lists, \Slice{} offers first-class functions, recursion, and observations. The construct $\observekw(e)$ conditions on $e$ holding. Finite-domain constants $\finconst{k}{n}$ range over $\{\finconst{0}{n},\ldots,\finconst{(n-1)}{n}\}$ as discussed below. The expression $\discrete(p_0,\ldots,p_{n-1})$ requires a nonempty list of probabilities: $n \geq 1$, every $p_i \geq 0$, and $\sum_i p_i = 1$. It returns $\finconst{k}{n}$ with probability $p_k$. For random sampling, we include the usual discrete distributions and any continuous distribution with a tractable cumulative distribution function (one- or two-argument variants such as exponential, laplace, uniform, gaussian, beta, gamma). Adding further distributions is straightforward as long as their CDFs are known; \Cref{appen:supported-distributions} lists the ones implemented. For brevity, we include only $<$ and $\leq$ (and resp. $\finlt{n}$ and $\finleq{n}$) in the language, but we desugar $>$ and $\geq$ (and their finite variants $\fingt{n}$ and $\fingeq{n}$) by swapping their operands, as well as $\fineq{n}$ by binding each operand once and testing both $\finleq{n}$ directions.


\subsection{Types}\label{sec:types}
We formalize the types that \Slice{} supports. 
A \emph{value} set $V$ is a finite set of reals or $\top$.
Elements of $\{ \clt{x} ~|~ x \in \R \}	\cup \{ \cleq{x} ~|~ x \in \R \}$ are called \emph{cuts}. Cut sets induce intervals partitioning the reals as follows:
 Given a non-empty cut set $B =\{ \sim_1\!\!x_1, \ldots, \sim_n\!\! x_n \}$ with $x_1 \leq \ldots \leq x_n$, ordering $\clt{c}$ before $\cleq{c}$ when both occur, we define the set $\intervals{B}$ of intervals it induces as expected, e.g., 
 $\intervals{\{\clt{0.1},\cleq{0.5}\}} = \{(-\infty,0.1), [0.1,0.5], (0.5,\infty)\}$.
The strict and non-strict cuts differ only in which side contains the boundary point: $\intervals{\{\clt{c}\}} = \{(-\infty,c), [c,\infty)\}$, whereas $\intervals{\{\cleq{c}\}} = \{(-\infty,c], (c,\infty)\}$. Thus a boundary generated by $e<c$ places $c$ on the right, while a boundary generated by $e\leq c$ places $c$ on the left.
%
%
Moreover, we let $\intervals{\emptyset} = \{(-\infty,\infty)\}$ and $\intervals{\top} = 2^\Reals$. We denote by $\intervals{B}_i$ the $i$-th interval of $\intervals{B}$, assuming $\intervals{B}$ is $0$-indexed, i.e., $i \in \{0,...,n\}$. We denote the set of all cut- and value sets, respectively, by $\cutsets$ and $\valsets$. 
\begin{definition}[Types]
	For $n \in \Nats$, $B \in \cutsets$, and $V \in \valsets$, types $\tau$ adhere to the grammar
	\begin{align*}
		\tau {}::={} &\phantom{|} \textbf{int} 
		\mid \textbf{bool}
		\mid \unit
		 \mid\fin{n} 
		\mid \float[B; V]
		 \mid \tau_1 * \tau_2 
		 \mid \tau_1 \rightarrow \tau_2 
		 \mid\listty{\tau}~.
	\end{align*}
\end{definition}
Standard types cover integers, booleans, ordered pairs, functions, and lists. 
Finite types $\fin{n}$ contain $\{\finconst{0}{n},\ldots,\finconst{(n-1)}{n}\}$; the ${\#n}$-subscript reminds inference backends such as \Dice{} that these values live in a finite domain. The float type $\float[B; V]$ annotates a real-valued expression with its cut set $B$ (an over-approximation of the comparisons it participates in) and value set $V$ (an over-approximation of the values it can take). $B=\top$ means the expression may be compared to arbitrary reals; $B=\emptyset$ means it will never be compared; $V=\top$ indicates an unrestricted result.

\subsection{Typing Rules}\label{sec:type-system}
The typing rules are split into three parts: Rules specific to float types and continuous distributions (\Cref{fig:typing-float}), standard rules for all other language constructs (\Cref{fig:typing-general}), and subtyping (\Cref{fig:subtyping}).

To understand the float typing rules, it is helpful to understand how discretization works.
A type $\float$[B; V] is discretized as follows:
    If $B \neq\top$ is a finite set of cuts, then the type is discretized to a type $\fin{n+1}$ where $n+1 = |\intervals{B}|$ is the number of intervals induced by $B$. For instance, if $B = \{\clt{0.5}, \cleq{0.8}\}$, then the type is discretized to $\fin{3}$, where $\finconst{0}{3}$ represents the interval $(-\infty, 0.5)$, $\finconst{1}{3}$ represents the interval $[0.5, 0.8]$, and $\finconst{2}{3}$ represents the interval $(0.8, \infty)$.
    If $B = \top$, then no discretization is performed, and the type remains $\float$.

To understand the typing rules, keep in mind that the cut sets $B$ need to be sufficiently fine-grained --- i.e., contain enough cuts --- so that the discretized program has enough information to determine how the original continuous program behaved.

\begin{figure}
	\centering
	\scalebox{0.75}{
\begin{mathpar}
    \inferrule[\textsc{Float-Const}]
    {\ \has{V}{c}}
    {\Gamma \vdash c : \float[B; V]}

	    \inferrule[\textsc{Continuous-1}]
	    {\ \Gamma \vdash e : \float[B_1; V_1] \and \distinguishes{B_1}{V_1}
	    \and B_1 \equiv_{\top} B}
	    {\Gamma \vdash \exponential(e) : \float[B; \top]}

	    \makebox[\linewidth][c]{$\displaystyle
	      \inferrule[\textsc{Continuous-2}]
	      {\ \Gamma \vdash e_1 : \float[B_1; V_1] \and \distinguishes{B_1}{V_1}
	         \and B_1 \equiv_{\top} B \and
	         \Gamma \vdash e_2 : \float[B_2; V_2] \and \distinguishes{B_2}{V_2}
	         \and B_2 \equiv_{\top} B}
	      {\Gamma \vdash \uniform(e_1, e_2) : \float[B; \top]}
	    $}

    \inferrule[\textsc{Less}]
    {\Gamma \vdash e_1 : \float[B; V_1] \and \Gamma \vdash e_2 : \float[B; V_2] \and \answerslt{B}{V_1}{V_2}}
    {\Gamma \vdash e_1 < e_2 : \bool}

    \inferrule[\textsc{Less-Equal}]
    {\Gamma \vdash e_1 : \float[B; V_1] \and \Gamma \vdash e_2 : \float[B; V_2] \and \answersleq{B}{V_1}{V_2}}
    {\Gamma \vdash e_1 \leq e_2 : \bool}

\end{mathpar}
}
\caption{Typing rules for float types and continuous distributions ($\uniform$ is a representative for all other two-argument distributions). The notions \emph{recovers} and \emph{answers} are defined in~\Cref{def:distinguishes,def:answers-less,def:answers-less-equal}. Here $\Gamma$ is a typing environment of the form $\{\var_1 : \tau_1,\ldots,\var_n : \tau_n\}$,  where $ \Gamma, x: \tau_1 = (\Gamma \setminus \{ x : \tau ~|~ \tau~\text{type}\} ) \cup \{ x: \tau_1 \}$.}
\label{fig:typing-float}
\end{figure}

\paragraph{Constants.}
Rule \textsc{Float-Const} assigns a constant $c$ the type $\float[B; V]$ whenever $\has{V}{c}$ holds (i.e., $c\in V$ or $V=\top$). Constants do not constrain the cut set, but recording them in $V$ ensures later comparisons know which values must remain distinguishable.

\paragraph{Continuous Distributions.} Rule \textsc{Continuous-1} types a single-argument distribution when its argument $e$ has type $\float[B; V]$ and $\distinguishes{B}{V}$:
We write $B_1 \equiv_{\top} B_2$ iff $B_1=\top \Longleftrightarrow B_2=\top$.
\begin{definition}
	\label{def:distinguishes}
		Let $B\in \cutsets$ and $V\in\valsets$. We say that \emph{$\distinguishes{B}{V}$} if $B = \top$ or ($V \neq \top$ and every interval $I$ induced by $B$ contains at most one value in $V$, i.e., $
	\forall I \in \intervals{B}\colon | V \cap I|  \leq 1
	$).
\end{definition}
Intuitively, $\distinguishes{B}{V}$ means that the discretization induced by $B$ is lossless on $V$: from the resulting interval index, the original value in $V$ can be uniquely recovered. For example, $\exponential(e)$ can be discretized only when $e$ takes finitely many recoverable values. Thus $V$ must be finite and $B$ must separate those values so the discretized program can determine which one occurred. \Cref{sec:discretization} revisits this invariant. The rule for two-argument distributions is analogous. 

\paragraph{Comparisons.} 
For $e < c$ with $e : \float[B_1; \top]$ and $c : \float[B_2; \{c\}]$, the discretized $e$ must still decide the predicate, so $B_1$ must include $\clt{c}$. If $c$ can take any value in $V_2$, every $\clt{v}$ for $v\in V_2$ must appear. Symmetrically, for $c < e$ with $e : \float[B_2; V_2]$, $B_2$ must contain $\cleq{c}$ so the discretized intervals know on which side of $c$ they fall. For example, $2.3<e$ requires cuts separating $(-\infty,2.3]$ from $(2.3,\infty)$; finer partitions are allowed but not required.

In general, the typing rule for $e_1 < e_2$ requires that the expressions share a cut set $B$, and that their respective value sets $V_1, V_2$ are related in a way that allows the comparison to be determined after discretization, via the $\answerslt{B}{V_1}{V_2}$ relation:

\begin{definition}
	\label{def:answers-less}
	Let $B\in \cutsets$ and $V_1, V_2 \in\valsets$.
	We say that \emph{$\answerslt{B}{V_1}{V_2}$} if  (1) $B = \top$, or (2) $B \neq \top$ and $V_2 \neq \top$ and $\forall x \in V_2,\ \clt{x} \in B$, or
		(3) $B \neq \top$ and $V_1 \neq \top$ and $\forall x \in V_1,\ \cleq{x} \in B$.
\end{definition}


To understand this definition, we first want to introduce two different perspectives on the $B$ set. The first perspective is that $B$ specifies a set of intervals, such that values of the expression will be mapped to their respective interval index. The second perspective is that this discretized index \emph{still contains enough information to evaluate all comparisons in $B$ when viewed as predicates}. That is, if $B$ contains $\clt{x}$ then the discretized value of $e : \float[B; V]$ still contains enough information to answer $e < x$ precisely: if after discretization, $e$ has an interval index below the interval containing $x$, then the comparison is true, and otherwise it is false.

Now, if $e_1 : \float[B; V_1]$ and $e_2 : \float[B; V_2]$, then $\answerslt{B}{V_1}{V_2}$ ensures that the comparison result can still be determined after discretization because:

\begin{enumerate}
	\item If $B = \top$, then no discretization is performed, and the comparison can just be performed on the real-valued expression.
	\item If $B \neq \top$ and $V_2 \neq \top$, then the expression is discretized into intervals as determined by $B$, but in this case the right-hand side can only take on values in $V_2$. In this case, all $\clt{x}$ for $x \in V_2$ must be contained in $B$, which means that the value of $e_2$ can be exactly recovered from its interval index.
	Then, given the previous observation, we can determine the comparison result by comparing the interval index of $e_1$ to the interval index of $e_2$.
	\item If $B \neq \top$ and $V_1 \neq \top$, then the situation is similar to the previous case.
\end{enumerate}

If $V_1=V_2=\top$, then $\answerslt{B}{V_1}{V_2}$
(and analogously $\answersleq{B}{V_1}{V_2}$) requires $B=\top$, so neither comparand is discretized.
Additionally, we insist that $B_1 = B_2$, which ensures that both comparands will be discretized to the same type, or not discretized at all.


Finally, for non-strict comparisons, the rule is analogous to \Cref{def:answers-less}:
\begin{definition}
	\label{def:answers-less-equal}
	Let $B\in \cutsets$ and $V_1, V_2 \in\valsets$.
	We say that \emph{$\answersleq{B}{V_1}{V_2}$} if (1) $B = \top$, or (2)  $B \neq \top$ and $V_2 \neq \top$ and $\forall x \in V_2,\ \cleq{x} \in B$, or (3) 
	 $B \neq \top$ and $V_1 \neq \top$ and $\forall x \in V_1,\ \clt{x} \in B$.
\end{definition}

\begin{figure}
	\scalebox{0.75}{
    \begin{mathpar}
        \inferrule[\textsc{Var}]
        {\ }
        {\Gamma, x: \tau \vdash x : \tau}
    
        \inferrule[\textsc{Let}]
        {\Gamma \vdash e_1 : \tau_1 \\
         \Gamma, x: \tau_1 \vdash e_2 : \tau_2}
        {\Gamma \vdash \letkw \; x = e_1 \; \inkw \; e_2 : \tau_2}
    
        \inferrule[\textsc{If}]
        {\Gamma \vdash e_1 : \bool \\
         \Gamma \vdash e_2 : \tau \\
         \Gamma \vdash e_3 : \tau}
        {\Gamma \vdash \ifkw \; e_1 \; \thenkw \; e_2 \; \elsekw \; e_3 : \tau}

        \inferrule[\textsc{Observe}]
        {\Gamma \vdash e : \bool}
        {\Gamma \vdash \observekw\; e : \unit}
    
        \inferrule[\textsc{Discrete}]
        {n \geq 1 \\
            \forall i \in \{0,\ldots,n-1\}.\; p_i \geq 0 \\
            \sum_{i=0}^{n-1} p_i = 1}
        {\Gamma \vdash \discrete(p_0, \ldots, p_{n-1}) : \fin{n}}

        \inferrule[\textsc{Diverge}]
        {\ }
        {\Gamma \vdash \diverge : \tau}

        \inferrule[\textsc{FinConst}]
        {0 \leq k < n}
        {\Gamma \vdash \finconst{k}{n} : \fin{n}}
    
        \inferrule[\textsc{FinLess}]
        {\Gamma \vdash e_1 : \fin{n} \\
         \Gamma \vdash e_2 : \fin{n}}
        {\Gamma \vdash e_1 \finlt{n} e_2 : \bool}
    
        \inferrule[\textsc{FinLessEq}]
        {\Gamma \vdash e_1 : \fin{n} \\
         \Gamma \vdash e_2 : \fin{n}}
        {\Gamma \vdash e_1 \finleq{n} e_2 : \bool}
    \end{mathpar}
}
    \caption{Typing rules for general language constructs}
        \label{fig:typing-general}
    \end{figure}

\paragraph{Subtyping.}
The type system includes a subtyping relation that captures when one type can be replaced by another. The subtyping rules are shown in \Cref{fig:subtyping}. For float types $\float[B;V]$, subtyping follows the lattice ordering on $V$: an expression known to range over some set of values can also be considered to range over a superset of those values. For $B$, we require equality to make sure that the discretized representation of the expression is not changed by subtyping.

All other typing rules are standard and depicted in \Cref{fig:typing-general} and \Cref{appen:typing-rules}, respectively.

\begin{figure}
	\scalebox{0.75}{
\begin{mathpar}
    \inferrule[\textsc{Sub-Refl}]
    {\tau\in\{\intty,\bool,\fin{n},\unit\}}
    {\tau <: \tau}


    \inferrule[\textsc{Sub-Float}]
    {B_1 = B_2 \and V_1 \sqsubseteq V_2}
    {\float[B_1; V_1] <: \float[B_2; V_2]}

    \inferrule[\textsc{Sub-Pair}]
    {\tau_1 <: \tau_1' \and \tau_2 <: \tau_2'}
    {\tau_1 * \tau_2 <: \tau_1' * \tau_2'}

    \inferrule[\textsc{Sub-List}]
    {\tau <: \tau'}
    {\listty{\tau} <: \listty{\tau'}}

    \inferrule[\textsc{Sub-Arrow}]
    {\tau_1' <: \tau_1 \and \tau_2 <: \tau_2'}
    {\tau_1 \rightarrow \tau_2 <: \tau_1' \rightarrow \tau_2'}

    \inferrule[\textsc{Sub-Type}]
    {\tau_1 <: \tau_2 \and \Gamma \vdash e : \tau_1}
    {\Gamma \vdash e : \tau_2}
\end{mathpar}
}
\caption{Subtyping relation. For both cut- and value sets, we let  $			\latticeelem \latticeleq \latticeelem'
				~\text{iff}~
				\begin{cases}
						\latticeelem \subseteq \latticeelem' &\text{if $\latticeelem\neq \top$ and $\latticeelem'\neq \top$} \\
						\latticeelem' = \top & \text{otherwise}~.
					\end{cases}$}
\label{fig:subtyping}
\end{figure}

\section{Type Inference}\label{sec:type-inference}

We now describe our type inference algorithm, whose goal is to annotate each subexpression of a program with a sound type. Specifically, we focus on expressions of type $\float[B;V]$ and show how \Slice{} populates the cut set $B$ and value set $V$, which later determine how and whether the expression can be discretized. Type inference for all other types is standard and follows the classic Hindley-Milner algorithm. For float-typed expressions, we use a two-phase approach: the first phase traverses the program once to generate constraints over cut and value sets; the second phase solves those constraints to produce a fully annotated program.

The distinction between typing and type inference is important. The correctness argument cleanly separates into two pieces: (1) type inference produces a typing derivation matching our typing rules, and (2) given such a typing derivation, discretization is semantics preserving. Thus the proof of discretization does not have to reason directly about the type inference procedure.

\subsection{Constraint Generation}

We use the constraint language in \Cref{fig:constraints-grammar} to express requirements over the type structure and float refinements. We adopt the framework of~\cite{Odersky1999ConstrainedTypes}, introducing a new judgment of the form $\Gamma \vdash e : \tau \mid C$, which reads as \emph{in the environment $\Gamma$, $e$ has type $\tau$ and produces the set of constraints $C$}. In this judgment, $\Gamma$ and $e$ are inputs to the type inference algorithm, while $\tau$ and $C$ are outputs: the algorithm infers the type $\tau$ for expression $e$ and generates the constraints $C$ that must hold for this typing to be valid. The inference rules formalize which constraints must be generated and later solved for a given expression $e$ to have inferred type $\tau$. \Cref{fig:type-inference} gives the constraint generation rules for \Slice{}. We focus on a small but representative subset of constructs in \Slice{}, capturing the key ideas behind constraint generation for the \emph{full} language.

The type inference rules closely mirror the typing rules in \Cref{fig:typing-float}, with two key differences. First, the inference rules generate a constraint set instead of having those constraints appear in the premises. Second, subtyping is handled in a syntax-directed way: rather than using a general subtyping rule, we bake the subtyping relation into each rule.

\begin{figure}[!t]
    \scalebox{0.9}{$\begin{aligned}
    C &::= \; \tau_1 <: \tau_2  
          \mid \distinguishes{B}{V}  
          \mid \answerslt{B}{V_1}{V_2}  
          \mid \answersleq{B}{V_1}{V_2} 
          \mid \has{V}{c} \\
          &\quad \mid B_1 \sqsubseteq B_2
          \mid B_1 = B_2  
          \mid V_1 \sqsubseteq V_2 
          \mid V_1 = V_2  
          \mid B_1 \equiv_{\top} B_2   
          \mid C_1, C_2   
    \end{aligned}$}
\caption{Syntax for our constraint language. Constraints correspond to the semantic requirements in our type system: subtyping ensures type compatibility, $\distinguishes{B}{V}$ ensures that distribution parameters can be recovered after discretization, $\answerslt{B}{V_1}{V_2}$ and $\answersleq{B}{V_1}{V_2}$ ensure that comparison results can be determined after discretization, and $\has{V}{c}$ ensures that constants are properly tracked. $B_1 \equiv_{\top} B_2$ holds exactly when $B_1 = \top \iff B_2 = \top$.}
\label{fig:constraints-grammar}
\end{figure}

\begin{figure}
    \footnotesize

    \begin{mathpar}
        \inferrule[\textsc{TFloat-Const}]
        {\ }
        {\Gamma \vdash c : \float[B;V] \mid \has{V}{c}}

        \inferrule[\textsc{TIf}]
        {\Gamma \vdash e_1 : \bool \mid C_1 \\
         \Gamma \vdash e_2 : \tau_2 \mid C_2 \\
         \Gamma \vdash e_3 : \tau_3 \mid C_3}
        {\Gamma \vdash \ifkw \; e_1 \; \thenkw \; e_2 \; \elsekw \; e_3 : \tau_4 \mid C_1, C_2, C_3, \subtype{\tau_2}{\tau_4}, \subtype{\tau_3}{\tau_4}}

        \inferrule[\textsc{TVar}]
        {\ }
        {\Gamma, x: \tau \vdash x : \tau \mid \emptyset}
        
         \inferrule[\textsc{TLess}]
        {\Gamma \vdash e_1 : \float[B_1; V_1] \mid C_1 \and \Gamma \vdash e_2 : \float[B_2; V_2] \mid C_2}
        {\Gamma \vdash e_1 < e_2 : \bool \mid C_1, C_2, B_1 = B_2, \answerslt{B_1}{V_1}{V_2}}

        \inferrule[\textsc{TContinuous-1}]
        {\Gamma \vdash e : \float[B_1; V_1] \mid C}
        {\Gamma \vdash \exponential(e) : \float[B; V] \mid C, \distinguishes{B_1}{V_1}, B_1 \equiv_{\top} B, V = \top}

        \inferrule[\textsc{TContinuous-2}]
        {\Gamma \vdash e_1 : \float[B_1; V_1] \mid C_1 \\
         \Gamma \vdash e_2 : \float[B_2; V_2] \mid C_2}
        {\Gamma \vdash \uniform(e_1, e_2) : \float[B; V] \mid C_1, C_2, \distinguishes{B_1}{V_1}, B_1 \equiv_{\top} B, \distinguishes{B_2}{V_2}, B_2 \equiv_{\top} B, V = \top}

               \inferrule[\textsc{TLessEq}]
        {\Gamma \vdash e_1 : \float[B_1; V_1] \mid C_1 \and \Gamma \vdash e_2 : \float[B_2; V_2] \mid C_2}
        {\Gamma \vdash e_1 \leq e_2 : \bool \mid C_1, C_2, B_1 = B_2, \answersleq{B_1}{V_1}{V_2}}

        \inferrule[\textsc{TLet}]
        {\Gamma \vdash e_1 : \tau_1 \mid C_1 \\
         \Gamma, x: \tau_1 \vdash e_2 : \tau_2 \mid C_2}
        {\Gamma \vdash \letkw \; x = e_1 \; \inkw \; e_2 : \tau_2 \mid C_1, C_2}

    \end{mathpar}
\caption{Constraint generation rules for type inference for float-typed expressions.}
    \label{fig:type-inference}
\end{figure}

\begin{example}
  To make the constraint generation process concrete, \Cref{fig:constraint-generation} walks through~\Cref{sec:examples:4}. Each subexpression is annotated with type variables for the cut sets ($B_i$) and value sets ($V_i$), and the constraints produced at each step are shown explicitly.
\end{example}

\begin{figure}[t]
    \centering
    \small
    
    \lstdefinestyle{mystyle}{
        basicstyle=\ttfamily\small,
        columns=fullflexible,
        keepspaces=true,
        escapechar=!,
        mathescape=true
    }
    
    \scalebox{0.9}{
    \begin{tabular}{p{0.35\linewidth} p{0.7\linewidth}}

    \textbf{Program} & \textbf{Constraints Emitted} \\[0.4em]

    \lstinline[style=mystyle]|let x = if flip()| 
      & \\
    
    \lstinline[style=mystyle]| then uniform(0 : $\textcolor{blue}{\floattype{B_1}{V_1}}$,| 
      & $+\; \textcolor{blue}{\has{V_1}{0}}$ \\
    
    \lstinline[style=mystyle]|$\hspace{8em}$ 2 : $\textcolor{blue}{\floattype{B_2}{V_2}}$)| 
     & $+\; \textcolor{blue}{\has{V_2}{2}}$ \\

    \lstinline[style=mystyle]|$\hspace{3em}$ : $\textcolor{violet}{\floattype{B_3}{V_3}}$| 
     & $+\; \textcolor{violet}{\distinguishes{B_1}{V_1}, B_1 \equiv_{\top} B_3, \distinguishes{B_2}{V_2}, B_2 \equiv_{\top} B_3, V_3 = \top}$ \\
    
    \lstinline[style=mystyle]| else gaussian(0 : $\textcolor{blue}{\floattype{B_4}{V_4}}$, | 
      & $+\; \textcolor{blue}{\has{V_4}{0}}$ \\

    \lstinline[style=mystyle]|$\hspace{8.4em}$ 1 : $\textcolor{blue}{\floattype{B_5}{V_5}}$)| 
      & $+\; \textcolor{blue}{\has{V_5}{1}}$ \\

    \lstinline[style=mystyle]|$\hspace{3em}$ : $\textcolor{violet}{\floattype{B_6}{V_6}}$ | 
      & $+\; \textcolor{violet}{\distinguishes{B_4}{V_4}, B_4 \equiv_{\top} B_6, \distinguishes{B_5}{V_5}, B_5 \equiv_{\top} B_6, V_6 = \top}$  \\

    \lstinline[style=mystyle]| : $\textcolor{teal}{\floattype{B_7}{V_7}}$ | 
      & $+\; \textcolor{teal}{\subtype{\floattype{B_3}{V_3}}{\floattype{B_7}{V_7}}, \subtype{\floattype{B_6}{V_6}}{\floattype{B_7}{V_7}}}$ \\
    
    \lstinline[style=mystyle]|in| 
      & \\
    
    \lstinline[style=mystyle]|let y = if x : $\textcolor{olive}{\floattype{B_7}{V_7}}$ | 
      & $+\; \textcolor{olive}{\emptyset}$ \\

    \lstinline[style=mystyle]|$\hspace{6em}$ <| 
      & \\

    \lstinline[style=mystyle]|$\hspace{5.5em}$ 1.5 : $\textcolor{blue}{\floattype{B_8}{V_8}}$| 
      & $+\; \textcolor{blue}{\has{V_8}{1.5}}$ \\

    \lstinline[style=mystyle]|$\hspace{5.5em}$ : $\textcolor{brown}{\bool}$| 
      & $+\; \textcolor{brown}{\answerslt{B_7}{V_7}{V_8}, B_7 = B_8}$ \\
    
    \lstinline[style=mystyle]| then 1.8 : $\textcolor{blue}{\floattype{B_9}{V_9}}$| 
      & $+\; \textcolor{blue}{\has{V_9}{1.8}}$ \\
    
    \lstinline[style=mystyle]| else 0.3 : $\textcolor{blue}{\floattype{B_{10}}{V_{10}}}$| 
      & $+\; \textcolor{blue}{\has{V_{10}}{0.3}}$ \\

    \lstinline[style=mystyle]| : $\textcolor{teal}{\floattype{B_{11}}{V_{11}}}$| 
      & $+\; \textcolor{teal}{\subtype{\floattype{B_9}{V_9}}{\floattype{B_{11}}{V_{11}}}, \subtype{\floattype{B_{10}}{V_{10}}}{\floattype{B_{11}}{V_{11}}}}$ \\
    
    \lstinline[style=mystyle]|in| 
      & \\
    
    \lstinline[style=mystyle]| x : $\textcolor{olive}{\floattype{B_7}{V_7}}$| 
      & $+\; \textcolor{olive}{\emptyset}$  \\

    \lstinline[style=mystyle]| <=| 
      & \\

    \lstinline[style=mystyle]| y : $\textcolor{olive}{\floattype{B_{11}}{V_{11}}}$| 
      & $+\; \textcolor{olive}{\emptyset}$  \\

    \lstinline[style=mystyle]| : $\textcolor{brown}{\bool}$| 
      & $\textcolor{brown}{+\; \answersleq{B_7}{V_7}{V_{11}}, B_7 = B_{11}}$ \\
    
    \end{tabular}
}
    
    \caption{Program from~\Cref{sec:examples:4} (left) and its emitted constraints from each line (right) from \textcolor{blue}{\textsc{TFloat-Const}}, \textcolor{violet}{\textsc{TContinuous-2}}, \textcolor{teal}{\textsc{TIf}}, \textcolor{olive}{\textsc{TVar}}, and \textcolor{brown}{\textsc{TLess} and \textsc{TLessEq}} following the constraint generation rules in~\Cref{fig:type-inference}. Lines annotated with $\emptyset$ emit no constraints.}
    \label{fig:constraint-generation}
\end{figure}

\subsection{Constraint Solving}
 \noindent
Once constraint generation completes, we obtain a system of constraints over type variables, cut sets $B$, and value sets $V$. The constraint-solving phase proceeds in two steps:

\begin{figure}[t]
  \centering
  \small

    \begin{subfigure}[t]{\linewidth}
      \centering

      \lstdefinestyle{mystyle}{
          basicstyle=\ttfamily\small,
          columns=fullflexible,
          keepspaces=true,
          escapechar=!,
          mathescape=true
      }
	\scalebox{0.9}{
      \begin{tabular}{p{0.42\linewidth} p{0.6\linewidth}}

      \textbf{Subtyping Constraint}
        & \textbf{After Subtyping Reduction} \\[0.4em]

      \lstinline[style=mystyle]|$\textcolor{teal}{\floattype{B_3}{V_3} <: \floattype{B_7}{V_7}}$|
        & $\textcolor{teal}{B_3 = B_7},\; \textcolor{teal}{V_3 \sqsubseteq V_7}$ \\

      \lstinline[style=mystyle]|$\textcolor{teal}{\floattype{B_6}{V_6} <: \floattype{B_7}{V_7}}$|
        & $\textcolor{teal}{B_6 = B_7},\; \textcolor{teal}{V_6 \sqsubseteq V_7}$ \\

      \lstinline[style=mystyle]|$\textcolor{teal}{\floattype{B_9}{V_9} <: \floattype{B_{11}}{V_{11}}}$|
        & $\textcolor{teal}{B_9 = B_{11}},\; \textcolor{teal}{V_9 \sqsubseteq V_{11}}$ \\

      \lstinline[style=mystyle]|$\textcolor{teal}{\floattype{B_{10}}{V_{10}} <: \floattype{B_{11}}{V_{11}}}$|
        & $\textcolor{teal}{B_{10} = B_{11}},\; \textcolor{teal}{V_{10} \sqsubseteq V_{11}}$ \\

      \end{tabular}
    }

      \caption{\emph{Subtyping reduction}. Each subtyping constraint over float types is decomposed into an equality constraint on cut sets and a subset constraint on value sets.}
  \end{subfigure}

  \vspace{0.8em}

  \begin{subfigure}[t]{0.7\linewidth}
      \centering
      \scalebox{0.9}{
      \begin{tabular}{c p{0.28\linewidth} p{0.38\linewidth}}

      &
      \shortstack{\textbf{Constraints}\\\textbf{Processed}}
      &
      \shortstack{\textbf{Solver Updates}} \\[0.4em]

      & 
        & $\forall i \colon B_i = \emptyset, V_i = \emptyset$ \\

      \multirow{7}{*}{
        \textcolor{black}{
          $\left\{
          \begin{array}{c}
          \\ \\ \\ \\ \\ \\ \\
          \end{array}
          \right.$
        }
        \;\text{\circled{1}}
      }
      & $\textcolor{blue}{\has{V_1}{0}}$
        & $\textcolor{blue}{V_1 := V_1 \sqcup \{0\} = \{0\}}$ \\

      & $\textcolor{blue}{\has{V_2}{2}}$
        & $\textcolor{blue}{V_2 := V_2 \sqcup \{2\} = \{2\}}$ \\

      & $\textcolor{blue}{\has{V_4}{0}}$
        & $\textcolor{blue}{V_4 := V_4 \sqcup \{0\} = \{0\}}$ \\

      & $\textcolor{blue}{\has{V_5}{1}}$
        & $\textcolor{blue}{V_5 := V_5 \sqcup \{1\} = \{1\}}$ \\

      & $\textcolor{blue}{\has{V_8}{1.5}}$
        & $\textcolor{blue}{V_8 := V_8 \sqcup \{1.5\} = \{1.5\}}$ \\

      & $\textcolor{blue}{\has{V_9}{1.8}}$
        & $\textcolor{blue}{V_9 := V_9 \sqcup \{1.8\} = \{1.8\}}$ \\

      & $\textcolor{blue}{\has{V_{10}}{0.3}}$
        & $\textcolor{blue}{V_{10} := V_{10} \sqcup \{0.3\} = \{0.3\}}$ \\[0.6em]

      \multirow{2}{*}{
        \textcolor{black}{
          $\left\{
          \begin{array}{c}
          \\ \\
          \end{array}
          \right.$
        }
        \;\textsc{\circled{2}}
      }
      & $\textcolor{violet}{V_3 = \top}$
        & $\textcolor{violet}{V_3 := \top}$ \\

      & $\textcolor{violet}{V_6 = \top}$
        & $\textcolor{violet}{V_6 := \top}$ \\[0.6em]

      \multirow{4}{*}{
        \textcolor{black}{
          $\left\{
          \begin{array}{c}
          \\ \\ \\ \\ 
          \end{array}
          \right.$
        }
        \;\textsc{\circled{3}}
      }
      & $\textcolor{teal}{V_3 \sqsubseteq V_7}$
        & $\textcolor{teal}{V_7 := V_7 \sqcup \top = \top}$ \\

      & $\textcolor{teal}{V_6 \sqsubseteq V_7}$
        & $\textcolor{teal}{\emptyset}$ \\

      & $\textcolor{teal}{V_9 \sqsubseteq V_{11}}$
        & $\textcolor{teal}{V_{11} := V_{11} \sqcup \{1.8\} = \{1.8\}}$ \\

      & $\textcolor{teal}{V_{10} \sqsubseteq V_{11}}$
        & $\textcolor{teal}{V_{11} := V_{11} \sqcup \{0.3\}= \{0.3, 1.8\}}$ \\[0.6em]

      \multirow{4}{*}{
        \textcolor{black}{
          $\left\{
          \begin{array}{c}
          \\ \\ \\ \\
          \end{array}
          \right.$
        }
        \;\textsc{\circled{4}}
      }
      & $\textcolor{violet}{\distinguishes{B_1}{V_1}}$
        & $\textcolor{violet}{\emptyset}$ \\

      & $\textcolor{violet}{\distinguishes{B_2}{V_2}}$
        & $\textcolor{violet}{\emptyset}$ \\

      & $\textcolor{violet}{\distinguishes{B_4}{V_4}}$
        & $\textcolor{violet}{\emptyset}$ \\

      & $\textcolor{violet}{\distinguishes{B_5}{V_5}}$
        & $\textcolor{violet}{\emptyset}$ \\[0.6em]

      \multirow{2}{*}{
        \textcolor{black}{
          $\left\{
          \begin{array}{c}
          \\ \\ 
          \end{array}
          \right.$
        }
        \;\textsc{\circled{5}}
      }
      & $\textcolor{brown}{\answerslt{B_7}{V_7}{V_8}}$
        & $\textcolor{brown}{B_7 := B_7 \sqcup \{\clt{1.5}\} = \{\clt{1.5}\}}$ \\

      & $\textcolor{brown}{\answersleq{B_7}{V_7}{V_{11}}}$
        & $\textcolor{brown}{\begin{array}[t]{@{}r@{}l@{}}
            B_7 :=\; &B_7 \sqcup \{\cleq{0.3}, \cleq{1.8}\} \\
                =\; &\{\clt{1.5}, \cleq{0.3}, \cleq{1.8}\}
          \end{array}}$ \\[0.6em]

      \multirow{2}{*}{
        \textcolor{black}{
          $\left\{
          \begin{array}{c}
          \\ \\ \\
          \end{array}
          \right.$
        }
        \;\textsc{\circled{6}}
      }
      & $\textcolor{brown}{B_7 = B_8, B_7 = B_{11}}$
        & $\textcolor{brown}{B_7 = B_8, B_7 = B_{11}}$ \\

      & $\textcolor{teal}{B_3 = B_7, B_6 = B_7}$
        & $\textcolor{teal}{B_3 = B_7, B_6 = B_7}$ \\

	      & $\textcolor{teal}{B_9 = B_{10}, B_{10} = B_{11}}$
	        & $\textcolor{teal}{B_9 = B_{11}, B_{10} = B_{11}}$ \\[0.6em]

	      \multirow{2}{*}{
	        \textcolor{black}{
	          $\left\{
	          \begin{array}{c}
	          \\ \\
	          \end{array}
	          \right.$
	        }
	        \;\textsc{\circled{7}}
	      }
	      & $\textcolor{violet}{B_1 \equiv_{\top} B_3, B_2 \equiv_{\top} B_3}$
	        & $\textcolor{violet}{\emptyset}$ \\

	      & $\textcolor{violet}{B_4 \equiv_{\top} B_6, B_5 \equiv_{\top} B_6}$
	        & $\textcolor{violet}{\emptyset}$ \\

	      \end{tabular}
    }
      \caption{\emph{Lattice solving}. Each pass performs monotone updates to $B_i$ or $V_i$. Lines annotated with $\emptyset$ denote constraint is trivially satisfied.}
  \end{subfigure}
  \hfill
  \begin{subfigure}[t]{0.24\linewidth}
      \centering
      \begin{tabular}{p{0.95\linewidth}}

      \textbf{Cut Sets} \\[0.3em]
      $B_1 = \emptyset$ \\
      $B_2 = \emptyset$ \\
      $B_4 = \emptyset$ \\
      $B_5 = \emptyset$ \\
      $B_3 = B_6$ \\
      $ = B_7 = B_8$ \\
      $ = B_9 = B_{10}$ \\
      $ = B_{11} $ \\
      $ = \{\clt{1.5}, \cleq{0.3}, \cleq{1.8}\}$ \\[0.6em]

      \textbf{Value Sets} \\[0.3em]
      $V_1 = \{0\}$ \\
      $V_2 = \{2\}$ \\
      $V_3 = \top$ \\
      $V_4 = \{0\}$ \\
      $V_5 = \{1\}$ \\
      $V_6 = \top$ \\
      $V_7 = \top$ \\
      $V_8 = \{1.5\}$ \\
      $V_9 = \{1.8\}$ \\
      $V_{10} = \{0.3\}$ \\
      $V_{11} = \{0.3, 1.8\}$ \\

      \end{tabular}
      \caption{Final cut sets and value sets obtained.}
  \end{subfigure}

  \caption{Following~\Cref{fig:constraint-generation}, the constraint solving process for~\Cref{sec:examples:4}.}
  \label{fig:constraint-solving}
\end{figure}

\begin{enumerate}
    \item \textbf{Subtyping reduction}: We first process all subtyping constraints of the form $\subtype{\tau_1}{\tau_2}$. Since subtyping for float types follows the rule $\subtype{\float[B_1;V_1]}{\float[B_2;V_2]}$ if and only if $B_1 = B_2$ and $V_1 \sqsubseteq V_2$, we reduce each subtyping constraint to corresponding equality and subset constraints on the cut and value sets. Subtyping elimination continues until only constraints among $B$'s and $V$'s remain.

    \item \textbf{Lattice constraint solving}: After subtyping reduction, we are left with constraints purely over cut sets and value sets: equalities, subset relations, and the specialized constraints $\distinguishes{B}{V}$, $\answerslt{B}{V_1}{V_2}$, $\answersleq{B}{V_1}{V_2}$, and $\has{V}{c}$. These constraints are solved using a monotone fixed-point algorithm over the cut and value lattices.
\end{enumerate}

\paragraph{Subtyping Reduction.}
Subtyping reduction eliminates all remaining type-level subtyping constraints by recursively decomposing them according to type structure.
For base float types, a constraint
$\subtype{\float[B_1;V_1]}{\float[B_2;V_2]}$
is reduced to the conjunction $B_1 = B_2$ and $V_1 \sqsubseteq V_2$.
For compound types, subtyping is handled structurally before reaching floats.
For example, a constraint
$\subtype{\float[B_1;V_1] * \float[B_2;V_2]}
        {\float[B_5;V_5] * \float[B_6;V_6]}$
is decomposed into
$\subtype{\float[B_1;V_1]}{\float[B_5;V_5]}$
and
$\subtype{\float[B_2;V_2]}{\float[B_6;V_6]}$,
each of which is then reduced to equality and subset constraints on the corresponding cut
and value sets.
This process is syntactic and continues until no subtyping constraints remain, leaving only constraints over $B$ and $V$.

\paragraph{Lattice Solving.}
The lattice solver keeps track of the current contents of all $B$ and $V$ variables.
Initially, all $B$ and $V$ variables are set to $\emptyset$.
During the solving process, the variables may go up in the lattice to a finite set, or to $\top$.
Each of the constraints is handled as follows:

\begin{itemize}[leftmargin=*]
    \item \textbf{Value containment} ($\has{V}{c}$): We update $V$ to include the constant $c$, i.e., $V := V \sqcup \{c\}$. If $V$ is already $\top$, the constraint is trivially satisfied. \circled{1}

    \item \textbf{Top assignment} ($V = \top$): We immediately set $V$ to $\top$, representing that the value set contains all possible float values. \circled{2}

    \item \textbf{Subset constraints} ($B_1 \sqsubseteq B_2$ or $V_1 \sqsubseteq V_2$): We ensure that the contents of the first variable are always a subset of the second. When $B_1$ or $V_1$ is updated, we propagate the changes to $B_2$ or $V_2$ by taking the least upper bound. \circled{3}

    \item \textbf{Recoverability} ($\distinguishes{B}{V}$): This constraint ensures that the cut set $B$ can distinguish all values in $V$. If $B = \top$, the constraint is already satisfied. If $V = \top$ and $B \neq \top$, we set $B := \top$, since no finite cut set recovers an unrestriced real value. Otherwise $V$ is finite; we use a heuristic of inserting $\cleq{x}$ for each value $x \in V$. These cuts place distinct elements of $V$ in distinct intervals and therefore satisfy recoverability. This choice can add redundant cuts, but remains sound. \circled{4}

    \item \textbf{Less-than answerability} ($\answerslt{B}{V_1}{V_2}$): Intuitively, this constraint ensures that $B$ contains enough cuts to answer all possible comparisons of the form ``$v_1 < v_2$'' where $v_1 \in V_1$ and $v_2 \in V_2$.
    If $B = \top$, the constraint is trivially satisfied.
    Otherwise, we again have a heuristic choice: we add a cut $\clt{v}$ for each value $v \in V_2$ (if $V_2 \neq \top$) and $\cleq{v}$ for each value $v \in V_1$ (if $V_1 \neq \top$). If $V_1$ and $V_2$ are both $\top$, we set $B$ to $\top$. \circled{5}

    \item \textbf{Less-equal answerability} ($\answersleq{B}{V_1}{V_2}$): Analogous to less-than. \circled{5}

	    \item \textbf{Equality constraints} ($B_1 = B_2$ or $V_1 = V_2$): We unify the variables, treating them as the same variable throughout the solving process. Any update to one variable is immediately reflected in all variables unified with it. \circled{6}

	    \item \textbf{Top consistency} ($B_1 \equiv_{\top} B_2$): If either variable becomes $\top$, set both to $\top$. \circled{7}
\end{itemize}

The solver repeatedly processes constraints until reaching a fixed point, where no constraint causes any further updates to the variables. The monotonicity of the lattice operations ensures that this process terminates. This procedure may add more cuts to $B$s than strictly necessary, but it never sets $B=\top$ unless that is the only way to satisfy all constraints. In other words, the discretization itself may be suboptimal in terms of the number of cuts.

Lattice constraint solving takes time polynomial in the size of the program: at worst, \emph{every} constant in the program gets added to the cut set or value set of every subexpression \mbox{of type $\float[B;V]$.}

\paragraph{Additional Constraint for Contextual Equivalence.}
When the return type of the program involves a float type, either directly or as part of a larger compound type (e.g., $\float[B;V] \to \float[B';V']$), we add an additional constraint to ensure that these float types are not discretized. 
We recursively traverse the return type and set all nested $B$ sets to $\top$. This ensures that the observable behavior of the program is preserved after discretization, regardless of the context in which the program is used. When the return type is basic, such as $\bool$, nothing needs to be done.

\begin{example}
  To make the constraint solving process concrete, \Cref{fig:constraint-solving} walks through~\Cref{sec:examples:4}. Each constraint obtained from the generation phase is processed by the solver, then collected to obtain the final cut and value sets.
\end{example}

\section{Discretization}\label{sec:discretization}
\begin{figure}[t]
    \centering
    \scriptsize
    \setlength{\tabcolsep}{6pt}
    \renewcommand{\arraystretch}{1.2}
    \begin{adjustbox}{max width=\textwidth}
    \begin{tabular}{@{}>{\raggedright\arraybackslash}m{0.27\textwidth}
        @{\hspace{6em}}
        >{\raggedright\arraybackslash}p{0.18\textwidth}
        @{\hspace{1em}}
        >{\raggedright\arraybackslash}p{0.47\textwidth}@{}}
        \toprule
        \toprule
        $\boldsymbol{e : \tau}$ & \textbf{Condition} & $\boldsymbol{\discretize{e : \tau}}$ \\
        \midrule
        ${x : \tau}$ & & $x$ \\
        \midrule
        \multirow[c]{2}{0.27\textwidth}{\raisebox{-0.8ex}{${c : \float[B;V]}$}}
        & $B = \top$ 
		& $c$ \\
        \caserule
        & $B \neq \top$
        & $\finconst{k}{(|B|+1)} \text{ such that } c \in \intervals{B}_k$ \\
        \midrule
        ${\letkw\; x = e_1 : \tau_1\; \inkw\; e_2 : \tau_2 : \tau_2}$
        &
        &
        $\letkw\; x = \discretize{e_1 : \tau_1}\; \inkw\; \discretize{e_2 : \tau_2}$ \\
        \midrule
        ${(\ifkw\; e_1 : \bool\; \thenkw\; e_2 : \tau\; \elsekw\; e_3 : \tau) : \tau}$
        &
        &
        $\ifkw\; \discretize{e_1 : \bool}\; \thenkw\; \discretize{e_2 : \tau}\; \elsekw\; \discretize{e_3 : \tau}$ \\
        \midrule
        \multirow[c]{2}{0.27\textwidth}{\raisebox{-0.8ex}{${ (e_1 : \float[B;V_1] < e_2 : \float[B;V_2]) : \bool }$}}
        & $B \neq \top$
        & $\discretize{e_1 : \float[B;V_1]} \finlt{(|B|+1)} \discretize{e_2 : \float[B;V_2]}$ \\
        \caserule
        & $B = \top$
        & $\discretize{e_1 : \float[B;V_1]} < \discretize{e_2 : \float[B;V_2]}$ \\
        \midrule
        \multirow[c]{2}{0.27\textwidth}{\raisebox{-0.8ex}{${ (e_1 : \float[B;V_1] \leq e_2 : \float[B;V_2]) : \bool }$}}
        & $B \neq \top$
        & $\discretize{e_1 : \float[B;V_1]} \finleq{(|B|+1)} \discretize{e_2 : \float[B;V_2]}$ \\
        \caserule
        & $B = \top$
        & $\discretize{e_1 : \float[B;V_1]} \leq \discretize{e_2 : \float[B;V_2]}$ \\
        \midrule
        \multirow[c]{15}{0.27\textwidth}{{$
		\begin{array}[t]{@{}l@{}}
		\uniform(e_1 : \float[B_1;V_1], \\
		\quad e_2 : \float[B_2;V_2]) : \float[B;V]
		\end{array}$}}
        & $e_1, e_2 \in \mathbb{R},\; e_1 \geq e_2$
        & $\diverge$ \\
        \caserule
        & $B = \top$
        & $\uniform(\discretize{e_1 : \float[B_1;V_1]}, \discretize{e_2 : \float[B_2;V_2]})$ \\
        \caserule
        & $\begin{array}[t]{@{}l@{}}
            e_1, e_2 \in \mathbb{R}, \\
            B = \{\sim_1\!\!b_1, \ldots, \sim_n\!\!b_n\}
          \end{array}$
        & $\begin{array}[t]{@{}l@{}}
            \discrete(p_0, \ldots, p_n) \\
            \quad \text{where } p_k = \mathrm{CDF}(b_{k+1}) - \mathrm{CDF}(b_k)
          \end{array}$ \\
        \caserule
        & $\begin{array}{@{}l@{}}
            B \neq \top, \\
            B_1 \neq \top, \\
            B_2 \neq \top, \\
            V_1 = \{v_1, \ldots, v_n\}, \\
            V_2 = \{w_1, \ldots, w_m\}
          \end{array}$
        & $\begin{array}{@{}l@{}}
            \letkw \; v_1' = \discretize{e_1 : \float[B_1; V_1]} \; \inkw \\
            \letkw \; v_2' = \discretize{e_2 : \float[B_2; V_2]} \; \inkw \\
            \ifkw\; v_1' \fineq{(|B_1|+1)} \discretize{v_1 : \floattype{B_1}{V_1}} \\
            \quad \logand\; v_2' \fineq{(|B_2|+1)} \discretize{w_1 : \floattype{B_2}{V_2}}\; \thenkw \\
            \quad\quad \discretize{\uniform(v_1, w_1) : \float[B;V]} \\
            \ldots ~\text{(enumerate over $V_1 \times V_2$)} ~\ldots \\
            \elsekw\; \ifkw\; v_1' \fineq{(|B_1|+1)} \discretize{v_n : \floattype{B_1}{V_1}} \\
            \quad \logand\; v_2' \fineq{(|B_2|+1)} \discretize{w_m : \floattype{B_2}{V_2}}\; \thenkw \\
            \quad\quad \discretize{\uniform(v_n, w_m) : \float[B;V]} \\
            \elsekw\; \diverge
          \end{array}$ \\
        \bottomrule
        \bottomrule
    \end{tabular}
    \end{adjustbox}
    \caption{Discretization function, where $\mathrm{CDF}(b_{0}) = 0$ and $\mathrm{CDF}(b_{n+1}) = 1$. For brevity, the table writes $\mathrm{CDF}(b_i)$ for each cut $b_i$, by which we mean either $\mathrm{CDF}(b_i^-)$ or $\mathrm{CDF}(b_i)$ depending on if the cut is strict or non-strict. Here, $\mathrm{CDF}(b_i^-)=\Pr[X<b_i]$ denotes the left limit at $b_i$, while $\mathrm{CDF}(b_i)=\Pr[X\leq b_i]$. These coincide for continuous distributions, but may differ for discrete or mixed distributions when $\Pr[X=b_i]>0$.
    For the $\uniform$ distribution, $V = \top$ is implicit, and all remaining cut-set and value-set cases are ruled out by type inference. For the final $\uniform$ case, if $V_1 = \emptyset$ and/or $V_2 = \emptyset$, the case enumeration is empty and thus yields $\diverge$. For all other language constructs, $\discretize{\cdot}$ recurses structurally over them.}
    \label{fig:discretization-code}
\end{figure}
This section presents local transformation rules for our type-directed discretization algorithm, with examples. Our goal is to compute, from a typed expression $e : \bool$ and its type annotations, a \emph{discretized} expression $\discretize{e : \bool} : \bool$ such that $e$ and $\discretize{e : \bool}$ are \emph{semantically equivalent}, i.e., they induce the same distribution. We refer to this semantics-preserving transformation as \emph{exact discretization}, to distinguish it from approaches~\cite{Garg2024BitBlast,Huang2021AQUA} that approximate continuous programs via externally chosen precision parameters. Exactness here is the equality proved by \Cref{thm:maintext:sound}: the transformed program preserves the relevant probabilities of the source program, not merely an approximation with an error bound. Intuitively, when a program uses continuous quantities only through finitely many comparison outcomes, exact inference should reduce to a discrete problem, and exact discretization makes that reduction explicit.

As shown in \Cref{fig:discretization-code}, $\discretize{e : \tau}$ is defined recursively on $e : \tau$ and its type annotations; we focus on continuous uniform distributions, with others handled analogously. The only sub-expressions replaced by $\discretize{\cdot}$ are float constants and sampling instructions. The rule schemata are local, but their selection and parameters come from globally inferred cut/value information; \Cref{sec:soundness} proves that this non-local, type-mediated construction preserves semantics. Before detailing the rules, we outline how type annotations guide discretization. The type system is designed so that every typed sub-expression of the form 
$e = \uniform(e_1, e_2 )$ of type $\float[B;V]$ with $B=\{\sim_1\!\!b_1, \ldots, \sim_n\!\!b_n\} \neq \top$
can be \emph{discretized} to an expression $e' : \fin{n+1}$ such that 
the probability that $e$ evaluates to a value in $\intervals{B}_i$ \emph{coincides} with the probability that $e'$ evaluates to $\finconst{i}{n+1}$. 
In particular, if \emph{all} $\float[B;V]$-typed subexpressions are finite-cut, then the given expression can be \emph{fully discretized}. 

\begin{example}
	\label{ex:discretize_simple}
Consider a uniform distribution whose outcomes are compared to constants:
\begin{lstlisting}[aboveskip=1em,belowskip=1em,escapechar=!,gobble=4]
    let x = uniform(0, 1) : !{\ft{B}}! in
      (x : !{\ft{B}}! < (0.3 : !{\ftb{B}{\{0.3\}}}!)) || 
      (x : !{\ft{B}}! >= (0.5 : !{\ftb{B}{\{0.5\}}}!))
\end{lstlisting}    
where $B= \{ \clt{0.3},\clt{0.5} \}$. The cut set $B$ informs us that the probability of the comparisons evaluating to $\true$ is determined by the probability of $x$ evaluating to one of the intervals 
\[
	\intervals{B}_0 = (-\infty, 0.3) 
	\qquad 
	\intervals{B}_1 =[0.3, 0.5) 
	\qquad 
	\intervals{B}_2 =[0.5, \infty)
\]
induced by $B$. Hence, all expressions of type $\text{\ftb{B}{$\cdot$}}$ can be discretized to expressions of the finite type $\fin{3}$, where $\finconst{k}{3}$ represents the $k$-th interval $\intervals{B}_k$ induced by $B$ for $k \in \{0,1,2\}$. 

Now consider the discretized expression obtained from computing $\discretize{e: \bool}$, given by 
\begin{lstlisting}[aboveskip=1em,belowskip=1em,escapechar=!,mathescape,gobble=4]
    let x = discrete(0.3, 0.2, 0.5) in (x <#3 1#3) || (x >=#3 2#3).
\end{lstlisting}
The probabilities appearing in $\discrete(\ldots)$ are determined by the cumulative distribution function (CDF) of the uniform distribution over $[0,1]$, ensuring that the probability of $x$ evaluating to $\finconst{k}{3}$ in $\discretize{e: \bool}$ \emph{coincides} with the probability of $x$ evaluating to a value in $\intervals{B}_k$ in $e$. Hence, the constants $0.3$ and $0.5$ in $e$ can be replaced by the constants $\finconst{1}{3}$ and $\finconst{2}{3}$  in $\discretize{e: \bool}$.
\end{example}
Crucially, the \emph{answers} constraint in the premise of the typing rules for comparisons (cf.\ \Cref{fig:typing-float}) ensures that $B$ is a sufficiently fine-grained partition of $\Reals$, i.e., there is an injective function
$
	\text{ToInts} \colon \{0.3, 0.5\} \to \intervals{B}
$
that maps each constant $c \in\{0.3,0.5\}$ to the unique interval in $\intervals{B}$ containing $c$. This interval is determined by the cut convention: a cut $\clt{c}$ puts $c$ in the interval to the right of the boundary, while a cut $\cleq{c}$ puts $c$ in the interval to the left. 
\begin{example}
	If the arguments to a uniform sampling are non-constant, the situation becomes more challenging. Consider the following typed expression $e : \bool$ (again only showing relevant types):
\begin{lstlisting}[aboveskip=1em,belowskip=1em,escapechar=!,gobble=4]
    let x = (
      if uniform(0, 1) : !{\ft{\{<0.5\}}}! < (0.5 : !{\ftb{\{<0.5\}}{\{0.5\}}}!) then
        10 : !{\ftb{\{<=10,<=20\}}{\{10\}}}!
      else
        20 : !{\ftb{\{<=10,<=20\}}{\{20\}}}!) : !{\ftb{\{<=10,<=20\}}{\{10,20\}}}! in
    uniform(0 : !{\ftb{\{<=0\}}{\{0\}}}!, x : !{\ftb{\{<=10,<=20\}}{\{10,20\}}}!) : !{\ft{\{<5\}}}!
      < (5 : !{\ftb{\{<5\}}{\{5\}}}!)
\end{lstlisting}	
		Why can this expression be discretized even though the argument to the second uniform sampling instruction is non-constant? The typing information $(x : \text{\ftb{\{<=10,<=20\}}{\{10,20\}}})$ correctly tells us that \emph{$x$ can only take the values $10$ and $20$}. Consequently, $x$ (and the $\ifkw$-expression generating it) can be discretized to a finite type $\fin{3}$ such that the probability of the resulting discrete expression evaluating to $\finconst{0}{3}$ (resp.\ $\finconst{1}{3}$) \emph{coincides} with the probability of the continuous expression evaluating to $10$ (resp.\ $20$). Let us (partially) apply $\discretize{e: \bool}$, resulting in:
\begin{lstlisting}[aboveskip=1em,belowskip=1em,escapechar=!,mathescape,gobble=4]
    let x = if discrete(0.5, 0.5) <#2 1#2 then 0#3 else 1#3 in
    ( let v1' = !$\discretize{\texttt{0 : \ftb{\{<=0\}}{\{0\}}}}$! in
      let v2'  = !$\discretize{\texttt{x : \ftb{\{<=10,<=20\}}{\{10,20\}}}}$! in 
      if v1' ==#2 !$\discretize{\texttt{0 : \ftb{\{<=0\}}{\{0\}}}}$! && 
         v2' ==#3 !$\discretize{\texttt{10 : \ftb{\{<=10,<=20\}}{\{10,20\}}}}$! 
      then !$\discretize{\texttt{\uniform(0,10) : \ft{\{<5\}}}}$!
      else !$\discretize{\texttt{\uniform(0,20) : \ft{\{<5\}}}}$! ) <#2 !$\discretize{\texttt{5 : \ftb{\{<5\}}{\{5\}}}}$!
\end{lstlisting}	
		The continuous $\uniform(0,x)$-expression is replaced by a case distinction on the discretizations of $\uniform(0,10)$ and $\uniform(0,20)$, with each case executed with the correct probability $0.5$. Since this reduces the remaining discretization process to discretizing expressions with \emph{constant parameters only}, the remaining steps are completely analogous to~\Cref{ex:discretize_simple}.
\end{example}
Crucially, the \emph{recoverability} constraints in the \textsc{Continuous}-rules from \Cref{fig:typing-float} ensure that the parameter expressions of the respective sampling expressions can take only finitely many values, and that the cut sets are fine-grained enough to recover the values the parameter expressions evaluate to, as demonstrated in the preceding example.

A similar intuition applies to discretizable comparisons $e_1 < e_2$, where \emph{both} expressions are non-constant but finite-cut. The \emph{answers} constraints in the rule \textsc{Less} (resp.\ \textsc{Less-Equal}) ensure that at least one of $e_1$ and $e_2$ evaluates to only finitely many values, which enables discretization:
\begin{example}
	Consider the following typed expression $(e_1 \leq e_2) : \bool$, given by
\begin{lstlisting}[aboveskip=1em,belowskip=1em,escapechar=!,mathescape,gobble=4]
    (if discrete(0.5, 0.5) <#2 1#2 then 
      uniform(0,10) : !{\ft{B}}! else 2.5 : !{\ftb{B}{\{2.5\}}}!)
    <=
    (if discrete(0.5, 0.5) <#2 1#2 
      then 2.5 : !{\ftb{B}{\{2.5\}}}! else 3.5 : !{\ftb{B}{\{3.5\}}}!)
\end{lstlisting}
	where $B = \{\cleq{2.5}, \cleq{3.5}\}$, and where $e_1 : \text{\ft{B}}$ and $e_2 : \text{\ftb{B}{\{2.5,3.5\}}}$. Intuitively, $e_1$ samples from a mixture of a uniform and a dirac distribution, and $e_2$ samples from a finite distribution. Why can $e_1 \leq e_2$ be discretized even though neither $e_1$ nor $e_2$ is a constant? As suggested by its type, $e_2$ can take only finitely many values, namely $\{2.5,3.5\}$. We apply $\discretize{\cdot}$:
\begin{lstlisting}[aboveskip=1em,belowskip=1em,escapechar=!,mathescape,gobble=4]
    (if discrete(0.5, 0.5) <#2 1#2 then discrete(0.25, 0.1, 0.65) else 0#3)
    <=
    (if discrete(0.5, 0.5) <#2 1#2 then 0#3 else 1#3)
\end{lstlisting}
		Notice that the probability of $e_1$ evaluating to some value in $(-\infty, 2.5]$ indeed \emph{coincides} with the probability of the discretization of $e_1$ evaluating to $\finconst{0}{3}$. For $e_2$, on the other hand, the probability of it evaluating to $2.5$ (resp.\ $3.5$) \emph{coincides} with the probability of its discretization evaluating to $\finconst{0}{3}$ (resp.\ $\finconst{1}{3}$). For this reason, we obtain a semantically equivalent discretization of $e_1 \leq e_2$.
\end{example}

\subsubsection{Discretization of Types}
We conclude by extending $\discretize{\cdot}$ to types. \Cref{fig:discretization-types} shows how type $\tau$ is discretized to type $\discretize{\tau}$. We then obtain the following natural result, proved in~\Cref{proof:type-transformations}:
\begin{figure}[t]
	\centering
	\scalebox{0.9}{
	$
		\discretize{\tau} = 
		\begin{cases}
			\fin{|B|+1} & \text{if } \tau = \float[B; V] \text{ and } B \neq \top \\[0.5em]
			\floattype{B}{V} & \text{if } \tau = \float[B; V] \text{ and } B = \top \\[0.5em]
			\text{(straightforward recursion on type structure)} & \text{otherwise}
		\end{cases}
$
}
	\caption{Transformation of types under $\mathcal{D}$, where $\mathcal{D}$ acts on types.}
	\label{fig:discretization-types}
\end{figure}
\begin{theorem}[Discretization Is Well-Typed]\label{lem:type-transformations}
	Let $\Gamma \vdash e : \tau$. We have
	$
	\discretize{\Gamma} \vdash \discretize{e : \tau} : \discretize{\tau},
	$
	where $\discretize{\Gamma}$ transforms all types in $\Gamma$, i.e., $\discretize{\Gamma}(x) = \discretize{\Gamma(x)}$ for all $x \in \textnormal{dom}(\Gamma)$.
\end{theorem}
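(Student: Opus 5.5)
We prove the statement by induction on the typing derivation of $\Gamma \vdash e : \tau$. The rule \textsc{Sub-Type} is not syntax-directed, so we first need a lemma that $\discretize{\cdot}$ preserves subtyping: if $\tau_1 <: \tau_2$ then $\discretize{\tau_1} <: \discretize{\tau_2}$. This goes by induction on the subtyping derivation. In the only interesting case, \textsc{Sub-Float}, we have $B_1 = B_2$. If $B_1 \neq \top$, both sides become the same $\fin{|B_1|+1}$, and \textsc{Sub-Refl} applies. If $B_1 = \top$, both stay float types and \textsc{Sub-Float} applies again. The cases \textsc{Sub-Pair}, \textsc{Sub-List}, and \textsc{Sub-Arrow} follow from the induction hypothesis, since $\discretize{\cdot}$ on types is structural. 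With this lemma the \textsc{Sub-Type} case is immediate. The \textsc{Var} case holds by definition of $\discretize{\Gamma}$.

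The structural rules are \textsc{Let}, \textsc{If}, \textsc{Observe}, pairs, functions, \textsc{fix}, lists, match, \textsc{Discrete}, \textsc{FinConst}, \textsc{Diverge}, and the Boolean and finite comparisons. They are translated homomorphically, so in each case we apply the induction hypothesis to the premises. For binders (\textsc{Let}, $\funkw$, $\fixkw$, match) we also use that $\discretize{\Gamma, x:\tau_1} = \discretize{\Gamma}, x : \discretize{\tau_1}$, and then reapply the same typing rule.

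The float-specific rules need case analysis.

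\textsc{Float-Const}: if $B = \top$ the output is $c$ at a float type with $B=\top$. Here we must check that $\discretize{\tau}$ keeps the annotation $\float[\top;V]$, so $\has{V}{c}$ still holds. If $B \neq \top$, the intervals $\intervals{B}$ partition $\Reals$, so a unique $k \le |B|$ exists with $c \in \intervals{B}_k$, and \textsc{FinConst} gives $\finconst{k}{(|B|+1)} : \fin{|B|+1}$.

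\textsc{Less} and \textsc{Less-Equal}: both operands share $B$. When $B \neq \top$ the induction hypothesis gives both sides type $\fin{|B|+1}$, and \textsc{FinLess} or \textsc{FinLessEq} applies. When $B = \top$, both sides are floats with cut set $\top$, and $\answerslt{\top}{V_1}{V_2}$ holds trivially, so \textsc{Less} applies again.

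The main obstacle is the continuous-distribution rules, shown here for $\uniform$ (\textsc{Continuous-2}); the other distributions are analogous. We split by the rows of \Cref{fig:discretization-code}.
\begin{enumerate}
\item The $\diverge$ row is typed by \textsc{Diverge} at any type.
\item In the $B=\top$ row, $B_1 \equiv_\top B$ and $B_2 \equiv_\top B$ force $B_1 = B_2 = \top$. The induction hypothesis then types the arguments as floats with cut set $\top$, $\distinguishes{\top}{\cdot}$ holds, and \textsc{Continuous-2} applies with result $\float[\top;\top]$.
\item The constant-parameter row produces $\discrete(p_0,\ldots,p_n)$ with $n = |B|$. We must discharge the premises of \textsc{Discrete}. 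The $p_k$ are differences of a CDF, which is monotone, evaluated at the sorted cuts, so each $p_k \ge 0$. The sum telescopes to $\mathrm{CDF}(b_{n+1}) - \mathrm{CDF}(b_0) = 1$. This gives type $\fin{|B|+1}$.
\item In the enumeration row, $B \neq \top$ forces $B_1, B_2 \neq \top$, and $\distinguishes{B_i}{V_i}$ gives finite $V_i$. The bound variables $v_1', v_2'$ get types $\fin{|B_1|+1}$ and $\fin{|B_2|+1}$ by the induction hypothesis. Each constant $\discretize{v_j : \float[B_1;V_1]}$ is typed by the \textsc{Float-Const} case, since $v_j \in V_1$. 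The desugared $\fineq{}$ and $\logand$ tests are Boolean. Each branch $\discretize{\uniform(v_j,w_l) : \float[B;V]}$ is a constant-parameter instance, typed at $\fin{|B|+1}$ by the previous item; this requires noting that $\uniform(v_j,w_l)$ is itself typable via \textsc{Float-Const} premises, or simply arguing directly about the generated term. The final $\diverge$ matches any type. \textsc{If} then closes the case.
\end{enumerate}
Recoverability is not needed for well-typedness, only finiteness of $V_i$. The one remaining subtlety is that the lemma must be stated for derivations whose annotations are the ones read off by $\discretize{\cdot}$, so that the discretized term is determined by the given typing derivation.
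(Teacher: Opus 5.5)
Your proposal is correct and follows the same route as the paper. Both argue by induction on the typing derivation, splitting the float-specific cases along the rows of the discretization table: constants by whether $B=\top$, comparisons via \textsc{FinLess}/\textsc{FinLessEq}, and $\uniform$ via its $\diverge$, $B=\top$, constant-parameter, and enumeration rows, where you additionally justify the \textsc{Discrete} premises by monotonicity and telescoping of the CDF.

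Your subtyping-preservation lemma for \textsc{Sub-Type} covers a case the paper's proof passes over silently, so it tightens rather than changes the argument. It should be paired with the observation that $\discretize{e:\tau}$ depends on the outer annotation only through its cut sets, which \textsc{Sub-Float} leaves fixed.
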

%

\section{Operational Semantics and Soundness}
\label{sec:soundness}%
We now establish the soundness of our approach in the following sense: For \emph{every} expression $e:\bool$, possibly including \emph{continuous sampling, higher-order functions, recursion, and conditioning},
\begin{center}
	\emph{$e: \bool$ and $\discretize{e : \bool}:\bool$ (obtained from \Cref{fig:discretization-code}) are semantically equivalent.}
\end{center}
%
The first step is to make formal sense of semantic equivalence. For that, we will define a monadic operational semantics for our language in \Cref{sec:opsem}. We will then proceed by proving the above claim. This is non-trivial: discretization decisions are driven by non-local, type-propagated information, so we cannot proceed by a local rule-by-rule argument. To prove the claim rigorously, we relate the executions of $e$ and $\discretize{e}$ through a coupling-style logical relation in \Cref{sec:opsem}. 
\subsection{Operational Semantics}
\label{sec:opsem}
The type-directedness of our discretization function $\discretize{\cdot}$ requires our soundness argument to depend on the types of the given (sub-)expressions as well. We thus start by defining a \emph{type-dependent semantic functional}, i.e., we define for every type $\tau$ a function 
$
	\bigsem{\cdot}\colon Expr_\tau \to \mathcal{D}(Vals_\tau + \bot)~,
$
mapping each \emph{fully type-annotated} expression $e:\tau$ to a  \emph{sub}-probability measure $\bigsem{e : \tau}$ over values of type $\tau$ or $\bot$. In the presence of possibly unbounded recursion \emph{and} conditioning, our semantics distinguishes between non-termination and observation failure, which we realize as follows: The \enquote{missing} probability mass in $\bigsem{e : \tau}$ is the probability that $e$ diverges. The probability of $\bot$ is the probability of encountering an observation failure. With this model, semantic equivalence indeed captures our sought-after correctness claim: If $\bigsem{e:\bool} = \bigsem{\discretize{e:\bool}:\bool}$, then both expressions have the same probability of producing $\true$, $\false$, or an observation failure.

\begin{remark}
Distinguishing between non-termination and observation failure yields a more fine-grained semantics which does, e.g., \emph{not} semantically equate the two expressions 
\begin{align*}
 \letkw \; x = \uniform(0,1) \; \inkw \; \observekw(x>1)
 \qquad\text{and}\qquad
 \diverge~.
\end{align*}
In particular, our soundness theorem does not deem the latter a valid discretization of the former.
\end{remark}

\subsubsection{Small Step Semantics}
The type-dependent semantic functional $\bigsem{\cdot}$ is defined via a \emph{small-step} semantic functional $\sem{\cdot}\colon Expr_\tau \to \mathcal{D}(Expr_\tau + \bot)$. Defining the latter requires us to carefully design a suitable measurable space over the set of type-annotated expressions of type $\tau$ and $\bot$. All details on these measure-theoretic constructions are presented in \Cref{app:measure_space_expressions}. Let us consider the key idea here.

Let $\tau=\bool$ for the sake of concreteness. Care must be taken since, e.g., sub-expressions of type $\floattype{B}{\{3.5,4.5\}}$ cannot evaluate to \emph{arbitrary} reals but are restricted to the values $\{3.5,4.5\}$. We tackle this by extracting from each type-annotated expression a suitable \emph{skeleton}, where float constants are replaced by \enquote{holes} (hence the term \enquote{skeleton}). For instance, the expression
\[
	e \quad = \quad (x : \floattype{\{\clt{3.5}, \clt{4.5}\}}{\top}
	\;<\;
	y : \floattype{\{\clt{3.5}, \clt{4.5}\}}{\{3.5,4.5 \}})
	: \bool
\]
gives rise to the skeleton
\[
	s \quad = \quad (\square : \floattype{\{\clt{3.5}, \clt{4.5}\}}{\top}
	\;<\;
	\square : \floattype{\{\clt{3.5}, \clt{4.5}\}}{\{3.5,4.5 \}})
	: \bool
\]
and the type annotations tell us that the possible values these holes can be filled with are in $\Reals \times \{3.5,4.5 \}$. We can thus inherit a measurable space over all well-typed expressions arising from this particular skeleton $s$ from the standard Borel measurable space over $\Reals \times \{3.5,4.5 \}$. The measurable space over \emph{all} well-typed expressions of type $\bool$ (i.e., over $Expr_\bool$) is then obtained from the measurable spaces of \emph{all} skeletons that give rise to expressions of type $\bool$.

Equipped with this family of measurable spaces, we define the small-step functional $\sem{\cdot}$ in the expected, type-preserving manner (cf.\ \Cref{tab:smallstep-1,tab:smallstep-2}). For instance, if $e_1$ is not a value, then\footnote{Put formally, $(\mu \gg\!= f)(A) = \int f(x)(A) \cdot \mu(dx)$, see \Cref{def:dist-monad} for details. $\dirac{e}$ denotes the Dirac distribution of $e$.}
\[
	\sem{(\letkw \; x = e_1 : \tau_1 \; \inkw \; e_2 : \tau_2) : \tau_2}
	\quad~{}={}~
	\underbrace{\sem{ e_1 : \tau_1 } \gg\!=  \lambda g. 
		\begin{cases}
			\dirac{\bot} & \text{if $g=\bot$} \\
		\dirac{(\letkw \; x = g \; \inkw \; e_2 : \tau_2) : \tau_2} & \text{otherwise}~.
		\end{cases}
	}_{\text{standard probabilistic monadic bind over suitable sub-Markov kernels}}~,
\]
i.e., $\sem{ e_1 : \tau_1}$ yields the sub-distribution over expressions obtained from executing $e_1$ for \emph{one step}, and the monadic bind $\gg\!=$ \enquote{folds} this distribution into the surrounding $\letkw$-statement. The intuition for the remaining statements is analogous. Crucially, $\gg\!=$ propagates the probability of encountering $\bot$ --- an observation failure --- through executions. For instance, assume that $e_1$ from above is given by $\observekw\; (\false : \bool)  : \unit$. Since $e_1$ encounters an observation failure with probability $1$, i.e., $\sem{\observekw\; (\false : \bool)  : \unit}(\{\bot\}) = 1$, the same holds for the surrounding $\letkw$-statement, i.e., also $\sem{(\letkw \; x = e_1 : \tau_1 \; \inkw \; e_2 : \tau_2) : \tau_2}(\{\bot\})=1$.
All details --- including proofs that all monadic binds are well-behaved since they operate on suitable sub-Markov kernels --- are provided in \Cref{app:small-step}.

\subsubsection{From Small-Step to Big-Step Semantics}
\label{sec:smallsem_to_bigsem}
We now define $\bigsem{\cdot}$ via $\sem{\cdot}$. For that, we first lift the one-step semantics $\sem{\cdot}$ to an $n$-step semantics $\bigsemstep{\cdot}{n}$ in the expected way:
\begin{align*}
	\bigsemstep{e : \tau}{0} =\delta_{e : \tau}  
	\qquad\text{and}\qquad
	\bigsemstep{e : \tau}{n+1} ~{}={}~ 
		\bigsemstep{e : \tau}{n} \monbind 
			\lambda e'. 
		\begin{cases}
			\delta_\bot & \text{if $e'=\bot$}\\
		\sem{e' : \tau}  & \text{otherwise}
		\end{cases} 
\end{align*}
This construction is well-defined since the functional $\sem{\cdot}$ itself gives rise to a suitable sub-Markov kernel (\Cref{lem:small-step-kernel}). We then define for every type $\tau$ the functional $\bigsem{\cdot}\colon Expr_\tau \to \mathcal{D}(Vals_\tau + \bot)$ as
$
	\bigsem{e:\tau} = \lambda A. \lim_{n\to\infty} \bigsemstep{e:\tau}{n}(A)~,
$
which is well-defined by the monotone convergence theorem: the sequence $(\bigsemstep{e : \tau}{n})_{n\in\Nats}$ is monotonically increasing for every measurable set $A$ over $Vals_\tau + \bot$ (cf.\ \Cref{lem:bigstepsem}) and upper-bounded by $1$: executing a program for more and more steps can only increase the probability of terminating in a given set of values or an observation failure.

%
%
%

\subsection{Soundness via Probabilistic Couplings}
\label{C}
We first state the central theorem, then outline our coupling-style correctness argument. Throughout, $e : \tau$ denotes a closed, fully annotated expression satisfying $\emptyset \vdash e : \tau$ under the declarative typing rules.
\begin{theorem}
	\label{thm:maintext:sound}
	Let $e:\bool$. Then $\bigsem{e:\bool} = \bigsem{\discretize{e:\bool} : \bool}$.
\end{theorem}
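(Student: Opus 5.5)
We prove \Cref{thm:maintext:sound} with a step-indexed, type-indexed coupling logical relation between source expressions and their discretizations. The relation is stated up to closing substitutions and then applied to closed $e:\bool$.

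\textbf{Value relation.} For each type $\tau$ we define a relation $\mathcal{V}\sem{\tau}$ between source values of type $\tau$ and target values of type $\discretize{\tau}$. The interesting clause is for $\float[B;V]$ with $B\neq\top$. There $v$ is related to $\finconst{k}{|B|+1}$ iff $v\in\intervals{B}_k$ and, when $V\neq\top$, also $v\in V$. For $B=\top$, related values must be identical reals. For $\bool$, $\unit$, $\intty$ and $\fin{n}$, related values must be equal. Pairs and lists are related componentwise. Functions are related if they map related arguments to related computations.

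\textbf{Computation relation.} Two expressions are related at $\tau$ if their big-step subdistributions $\bigsem{e:\tau}$ and $\bigsem{e':\discretize{\tau}}$ admit a coupling: a subdistribution on pairs whose support lies in $\mathcal{V}\sem{\tau}\cup\{(\bot,\bot)\}$ and whose marginals are the two semantics. Both components have the same divergence mass, since missing mass is matched on both sides.

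Because the float clause identifies many source reals with one index, the coupling must be a measurable kernel rather than a bijection. Concretely, for $\float[B;V]$ with $B\neq\top$ we require the two marginals to agree on each interval, i.e.\ $\bigsem{e}(\intervals{B}_k\cap V)=\bigsem{e'}(\{\finconst{k}{}\})$. A coupling then exists by disintegration: pair the source mass on $\intervals{B}_k$ with the target mass on index $k$.

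At type $\bool$ the value relation is equality. Hence a coupling supported on the diagonal plus $(\bot,\bot)$ forces equal marginals, which gives \Cref{thm:maintext:sound} immediately.

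\textbf{Fundamental lemma.} The core step is the lemma: if $\Gamma\vdash e:\tau$ and $\gamma,\gamma'$ are closing substitutions related pointwise, then $e\gamma$ and $\discretize{e:\tau}\gamma'$ are related in the computation relation. We prove it by induction on the typing derivation, together with the following compatibility lemmas.

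\begin{itemize}
\item \emph{Bind.} If $e_1$ is related to $e_1'$, and $e_2$ is related to $e_2'$ for all related values, then the sequenced computations are related. The proof glues the couplings with the monadic bind via a measurable-selection argument, with $\bot$ propagating on both sides.
\item \emph{Recursion and divergence.} We step-index the relation using the $n$-step semantics $\bigsemstep{\cdot}{n}$, show the relation at each finite $n$, and pass to the limit by monotone convergence (\Cref{lem:bigstepsem}).
\item \emph{Subtyping.} $\mathcal{V}\sem{\tau_1}\subseteq\mathcal{V}\sem{\tau_2}$ whenever $\tau_1<:\tau_2$. For floats this holds because $B$ is fixed and $V$ only grows.
\end{itemize}

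The float-specific cases are the following.
\begin{itemize}
\item \textsc{Float-Const} follows from the definition of $\discretize{c}$.
\item \textsc{Less} and \textsc{Less-Equal} use $\answerslt{B}{V_1}{V_2}$ (resp.\ $\answersleq{B}{V_1}{V_2}$). Take case (2) of $\answerslt{B}{V_1}{V_2}$: $x\in V_2$ with $\clt{x}\in B$, so $x$ is the left endpoint of its interval $\intervals{B}_j$. Then for any $y\in\intervals{B}_k$ we have $y<x$ iff $k<j$. Case (3) is symmetric, using that $\cleq{x}$ makes $x$ a right endpoint.
\item \textsc{Continuous} rules use $\distinguishes{B_i}{V_i}$. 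A related index determines the parameter value uniquely, so the enumerated branch in \Cref{fig:discretization-code} selects exactly $\uniform(v,w)$. Its discrete replacement matches interval masses by the definition of $p_k$ via $\mathrm{CDF}$, with left limits for strict cuts. That yields the required interval-wise coupling.
\end{itemize}

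\textbf{Main obstacle.} I expect the hardest part to be the measure-theoretic gluing in the bind and let cases. The couplings are many-to-one between an uncountable space and a finite one. We need the coupling kernels to be measurable with respect to the skeleton-based $\sigma$-algebra on $Expr_\tau$, so that the continuation's couplings can be integrated.

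I would first try to avoid general couplings at ground float types by using the weaker \enquote{equal mass on each related block} formulation. The float relation is a partition into finitely many blocks $\intervals{B}_k\times\{k\}$. A coupling then reduces to equality of finitely many masses, and binding is a finite sum of integrals over each block. At higher types the same block structure is inherited compositionally.
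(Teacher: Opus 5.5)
Your value relation and the local cases (constants, the cut-convention argument for \textsc{Less}/\textsc{Less-Equal}, recoverability for the \textsc{Continuous} rules) are fine. The recursion step, however, fails as stated. Proving the relation ``at each finite $n$'' for $\bigsemstep{\cdot}{n}$ presupposes that $e\gamma$ and $\discretize{e}\gamma'$ advance in lockstep, and they do not. When a distribution parameter is not a syntactic constant, $\discretize{\cdot}$ emits the nested $\letkw$/$\ifkw$/finite-equality cascade of \Cref{fig:discretization-code}, which takes many steps where the source takes one. For $\letkw\;x=0\;\inkw\;\uniform(x,1)<0.5$, the source has reached a boolean with probability $1$ after three steps. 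After three steps the target has only just unfolded its $\letkw$-bindings and has no mass on terminal states, so the $3$-step subdistributions admit no coupling. Moreover, even where couplings $c_n$ of the $n$-step semantics exist, they are merely asserted to exist. Monotonicity of the marginals (\Cref{lem:bigstepsem}) does not make $(c_n)_n$ increasing, so monotone convergence does not give a coupling of the limits. You would need one of two things. One option is a one-sided approximation relation ($n$ steps of one side against the full semantics of the other), proved in both directions and compared only at $\bool$. The other is the paper's device: the auxiliary $\dc$ collapses the cascade into a single step (related to the cascade separately by \Cref{thm:dc-expansion}), so every source step is matched by exactly one target step. The stage-$n$ coupling is then the $n$-th iterate of one fixed coupled kernel on pairs $(e,\discretize{e})$, which is monotone on terminal events by \Cref{lem:bigstepsem-pairing}.

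The measure-theoretic gluing you flag as the main obstacle is the second gap, and the proposed workaround does not close it. Your computation relation only asserts that \emph{some} coupling exists, so the bind and $\letkw$ cases must choose a coupling of the continuations measurably in the related pair $(v,v')$. At arrow types, $(v,v')$ ranges over closures that capture arbitrary sampled reals, related by a clause that quantifies over all related arguments and over couplings. You establish neither the measurability of $\mathcal{V}\sem{\tau_1\to\tau_2}$ (needed even to say that a coupling is supported on it) nor a measurable selection. The reduction to equal masses on finitely many blocks $\intervals{B}_k\times\{k\}$ is sound for $\bool$, $\fin{n}$ and floats with finite $B$, but it is not inherited at arrow types. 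There the blocks are extensional behaviours of closures, of which there are uncountably many: at $\unit\to\float[\top;\top]$, every $\funkw\;u\to v$ with $v\in\Reals$ is its own block. So there is no finite sum, and indexing the blocks measurably is the original problem again. The idea that removes this is to make the coupling canonical instead of existential. The paper pairs each state with its own discretization, so the coupling lives on the graph of $\discretize{\cdot}$. Soundness then reduces to the projection identities $\pi_1(\sem{e,\discretize{e}})=\sem{e}$ and $\pi_2(\sem{e,\discretize{e}})=\sem{\discretize{e}}$ (\Cref{lem:small_step_coupling_props}). Their only nontrivial instance is pushing $\mathrm{Uniform}(c_1,c_2)$ forward along the interval-index map. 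These identities commute with bind by monad associativity (\Cref{lem:bind_project}), so no selection argument and no semantic clause for function types is needed.
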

Intuitively, we prove that the two programs can execute in a synchronized manner such that their probabilities of terminating in value $v$ coincide for each $v \in \{\true,\false,\bot\}$. 

Towards this end, define for every type $\tau$ the set $P_\tau = \{ (e:\tau, \discretize{e : \tau} : \discretize{\tau}) ~|~ e:\tau \}$. We define a \emph{coupled small-step semantics} $\sem{\cdot,\cdot} \colon P_\tau \to \mathcal{D}(P_\tau + \bot)$ (cf.\ \Cref{tab:coupling_smallstep_1,tab:coupling_smallstep_2}), which models the simultaneous execution of $e$ and $\discretize{e:\tau}$. The key property of this coupled semantics is that the semantics of both $e$ and $\discretize{e:\tau}$ can be recovered in a sense we formalize next.

Given $\mu \in \mathcal{D}(P_\tau + \bot)$, we define the \emph{projection of $\mu$ onto the $i$-th component} for $i\in\{ 1,2 \}$ as
\[
	\pi_i(\mu) = \lambda A. \int_{x \in P_\tau + \bot} 
	\begin{cases}
		 \delta_{e_i}(A) & x = (e_1,e_2)  \\
		 \delta_\bot(A)  & x = \bot
	\end{cases}
	\mu(dx)~,
\]
%
With projections, we recover the semantics of $e$ and $\discretize{e:\tau}$ as follows:
\begin{lemma}
	\label{lem:maintext:small_step_coupling_props}
	Let $e:\tau$. We have:
	\begin{enumerate}
		\item\label{lem:maintext:small_step_coupling_props1} 
		$\pi_1(\sem{e:\tau, \discretize{e : \tau} : \discretize{\tau}}) = \sem{e:\tau}$.
		\item\label{lem:maintext:small_step_coupling_props2} 
		$\pi_2(\sem{e:\tau, \discretize{e : \tau} : \discretize{\tau}}) = \sem{\discretize{e : \tau}:\discretize{\tau}}$.
	\end{enumerate}
\end{lemma}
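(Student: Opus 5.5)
The plan is to prove both projection identities together by structural induction on $e:\tau$, following the case split used to define $\sem{\cdot,\cdot}$ in \Cref{tab:coupling_smallstep_1,tab:coupling_smallstep_2}. The small-step semantics $\sem{\cdot}$ and the coupled one are defined case by case in the same shape. In each case we unfold the coupled step, push $\pi_i$ through it, and check that the result is the one-step semantics of $e$ (for $i=1$) or of $\discretize{e:\tau}$ (for $i=2$).

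Two generic facts about projections carry most of the argument. First, $\pi_i$ commutes with the monadic bind when the continuation is itself projected:
\[
  \pi_i(\mu \monbind f) = \pi_i(\mu) \monbind (\pi_i\circ f),
\]
provided $f(x)$ depends on $x=(e_1,e_2)$ only through $e_i$ after projection. Second, $\pi_i(\delta_{(e_1,e_2)})=\delta_{e_i}$ and $\pi_i(\delta_\bot)=\delta_\bot$. Both follow by unfolding the integral definitions (\Cref{def:dist-monad}), using Fubini/Tonelli on the sub-Markov kernels; measurability comes from \Cref{lem:small-step-kernel}.

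With these facts, the congruence cases are immediate from the induction hypothesis applied to the active subexpression. These are the cases where $e$ reduces inside an evaluation context, e.g.\ $\letkw$ with a non-value $e_1$, $\ifkw$ with a non-value guard, and applications. The coupled step binds the coupled step of $(e_1,\discretize{e_1})$, and $\discretize{\cdot}$ commutes with the context because it recurses structurally. The step also uses the fact that substitution commutes with discretization, $\discretize{e[v/x]}=\discretize{e}[\discretize{v}/x]$, for $\beta$-, $\letkw$-, $\fixkw$- and match reductions. I would prove this substitution lemma separately by induction, using that types are preserved by substitution and that $\discretize{\cdot}$ is type-directed.

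The real content lies in the redex cases where the two sides are not syntactically parallel.

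\textbf{Comparisons of values.} Here $c_1 < c_2$ at $\float[B;V_1],\float[B;V_2]$ with $B\neq\top$ is paired with $\finconst{k_1}{} \finlt{} \finconst{k_2}{}$. Each projection is a Dirac on a boolean, and the coupled rule emits a single pair. We must check the two booleans agree, which uses $\answerslt{B}{V_1}{V_2}$: $c_1<c_2$ iff $\intervals{B}$-index of $c_1$ is below that of $c_2$. The $\leq$ case is analogous.

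\textbf{Sampling.} $\uniform(c_1,c_2)$ with $B\neq\top$ steps to a real distributed by the uniform law, while its discretization steps to $\discrete(p_0,\dots,p_n)$. The coupled rule should pair each sample $r$ with $\finconst{k}{}$ where $r\in\intervals{B}_k$. Pushing forward, $\pi_1$ gives the uniform law exactly, and $\pi_2$ gives mass $\Pr[r\in\intervals{B}_k]=p_k$ by the CDF definition of $p_k$. For non-constant parameters, the discretized side first takes several deterministic steps through the $\letkw$/$\ifkw$ enumeration. The coupled rule must pair the single source step with an appropriate multi-step or stuttering target step. Here I would rely on how the coupled table resolves this: most likely by pairing with one step of the target, keeping the source term fixed inside the relation $P_\tau$. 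If $P_\tau$ is strictly the graph of $\discretize{\cdot}$, then the projection identity requires the coupled rule's target component to be literally $\sem{\discretize{e}}$. I would therefore verify that each such target reduct is again of the form $\discretize{e'}$ using $\distinguishes{B_i}{V_i}$, which guarantees the enumeration selects exactly the branch for the actual $(c_1,c_2)$.

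I expect the sampling case with non-constant parameters, together with the measure-theoretic pushforward for the skeleton-based measurable spaces of \Cref{app:measure_space_expressions}, to be the main obstacle. For the constant-parameter case I would first show that the interval-index map is measurable on the skeleton space, so that the coupled kernel is a well-defined pushforward. For the enumeration case I would then reduce to the constant case via recoverability.
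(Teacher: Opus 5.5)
Most of your plan matches the paper's proof: induction on $e:\tau$, projections commuting with $\monbind$ in the congruence cases (the paper's \Cref{lem:bind_project}), and the constant-parameter sampling case via pushing the uniform law forward along the interval-index map. The gap is at exactly the step you flagged, and your proposed fix does not work.

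With the nested $\letkw$/$\ifkw$ encoding of \Cref{fig:discretization-code}, the one-step identities are simply false. Take $e=\uniform(0,e_2):\float[B;\top]$ with $B\neq\top$ and $e_2$ a non-value. Then $\discretize{e}=\letkw\;v_1'=\finconst{k}{m}\;\inkw\;\letkw\;v_2'=\discretize{e_2}\;\inkw\;\ldots$. Its only small step is the administrative substitution of $\finconst{k}{m}$ for $v_1'$, whereas $\sem{e}$ steps inside $e_2$.

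The resulting target term is neither $\discretize{e}$ nor $\discretize{e'}$ for any source reduct $e'$. Those discretizations still begin with $\letkw\;v_1'=\ldots$, or are a $\discrete(\ldots)$ term once $e_2$ has become a constant. Every coupled state is a pair $(e',\discretize{e'})$, so no measure on $P_\tau+\bot$ can have both projections right. Stuttering the source gives $\pi_1=\delta_e\neq\sem{e}$, and letting the target take several steps breaks part 2.

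Recoverability only determines which branch is eventually taken. It cannot place the intermediate target states in the graph of $\discretize{\cdot}$. The same non-compositionality defeats your substitution lemma: $\discretize{\uniform(x,1)}$ is the enumeration, while $\discretize{\uniform(0.5,1)}$ is a $\discrete(\ldots)$ term, so $\discretize{e[v/x]}\neq\discretize{e}[\discretize{v}/x]$ in general.

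The missing idea is the paper's auxiliary construct $\dc$. In the proof, every sampling node with $B\neq\top$ is discretized to the atomic term $\dc(\discretize{e_1},\discretize{e_2})$; this includes constant-parameter nodes, and here $B_1,B_2\neq\top$ and $V_1,V_2$ are finite. Its small-step rule evaluates the arguments in the same order as the source. It then, in a single step, behaves as $\sem{\discretize{\uniform(v_i,w_j)}}$ for the unique pair recovered via $\distinguishes{B_1}{V_1}$ and $\distinguishes{B_2}{V_2}$, and diverges if there is none.

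This restores lockstep execution, so your induction goes through. The only genuinely new redex is $\uniform(c_1,c_2)$ paired with $\dc(k_1,k_2)$, which you already handle. The connection between $\dc(d_1,d_2)$ and the nested $\letkw$/$\ifkw$ term is proved separately at the big-step level (\Cref{thm:dc-expansion}, via a sequential-evaluation lemma), where administrative steps are invisible. A coupling up to administrative steps would also work in principle, but then the statement must be reformulated at the $n$-step or big-step level rather than as these one-step identities.
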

%

Now, using a limit construction analogous to \Cref{sec:smallsem_to_bigsem} for obtaining a coupled \emph{big-step} semantics $\bigsem{\cdot,\cdot} \colon P_\tau \to \mathcal{D}(V_\tau + \bot)$, where $V_\tau = \{ (v:\tau, \discretize{v} : \discretize{\tau}) ~|~ v:\tau~\text{is a value} \}$, we prove a suitable continuity property of projection (cf.\ \Cref{lem:proj_cont}) to conclude that \Cref{lem:maintext:small_step_coupling_props} transfers to $\bigsem{\cdot,\cdot}$. Finally, observing that $V_\bool + \bot = \{(\true,\true), (\false,\false), \bot\}$ yields the desired claim:

\emph{Proof of \Cref{thm:maintext:sound}}. Let $v \in \{\true,\false,\bot\}$. We have
\begin{align*}
	& \bigsem{e : \bool}(v) \\
	{}={}~& \pi_1(\bigsem{e:\bool, \discretize{e : \bool} : {\bool}})(v)
		\tag{\Cref{lem:maintext:small_step_coupling_props}.\ref{lem:maintext:small_step_coupling_props1} for $\bigsem{\cdot,\cdot}$} \\
    {}={}~& \pi_1(\bigsem{e:\bool, \discretize{e : \bool} : {\bool}})(v)
    \tag{$P_\bool$ is discrete} \\
    {}={} & \bigsem{e : \bool, \discretize{e : \bool}:{\bool}}((\true,\true))\cdot \delta_{\true}(v) \\
    &{}+ 
    \bigsem{e : \bool, \discretize{e : \bool}:{\bool}}((\false,\false))\cdot \delta_{\false}(v) \\
    &{}+ 
    \bigsem{e : \bool, \discretize{e : \bool}:{\bool}}(\bot)\cdot \delta_{\bot}(v) 
    \tag{definition of $\pi_1$, argument is discrete} \\
    {}={}~& \pi_2(\bigsem{e:\bool, \discretize{e : \bool} : {\bool}})(v)
    \tag{definition of $\pi_2$, argument is discrete} \\
    {}={}~& \bigsem{\discretize{e : \bool}:{\bool}}(v)~.
    \tag{\Cref{lem:maintext:small_step_coupling_props}.\ref{lem:maintext:small_step_coupling_props2} for $\bigsem{\cdot,\cdot}$}
\end{align*}

\section{Implementation, Case Studies, and Evaluation}\label{sec:implem_eval}

We have implemented the \Slice{} type inference and discretization procedures in OCaml. To compute the interval masses of continuous distributions, we use the OCaml GSL package~\cite{gsl} for cumulative distribution function calculations. All experiments were run on an Apple M3 with 16\,GB of RAM.

This section is driven by two research questions:
\begin{enumerate}
    \item Does \Slice{} enable exact inference on programs combining continuous sampling, higher-order functions, and unbounded recursion?
    \item On programs that are within the scope of existing tools for exact inference, is the combination of \Slice{} and discrete inference backends competitive? 
\end{enumerate}

For the first question, we provide a collection of case studies that involve continuous distributions together with recursion, higher-order functions, lists, or conditioning in \Cref{sec:case-studies}. For these case studies, we use \Slice{} to transform the original programs to finite discrete models and analyze the resulting program with \Storm{} \cite{storm} --- a probabilistic model checker.

For the second question, we evaluate \Slice{}+\Dice{} and \Slice{}+\Roulette{} on two classes of benchmarks in \Cref{sec:synthetic-benchmarks,sec:fairness-benchmarks}: (i) synthetic scaling benchmarks that let us vary program size and dependency structure in a controlled way, and (ii) benchmark suites from prior work \cite{Saad2021SPPL,gehr2016psi}. We compare against \SPPL \cite{Saad2021SPPL}, an exact inference engine supporting continuous probabilistic programs, on benchmarks that \emph{both} \Slice and \SPPL support, and we additionally report \PSI{}~\cite{gehr2016psi} and \Hakaru{}~\cite{Narayanan2016Hakaru} timings for the representative prior-work benchmarks. We remark, however, that \SPPL supports arithmetic, as well as (soft) conditioning on measure zero events, which \Slice does not support. Our evaluation merely serves the purpose of investigating (i) the overhead of discretization and (ii) the benefits of being able to use efficient \emph{BDD-based} inference back-ends (i.e., \Dice and \Roulette) for exact inference on \emph{continuous programs}.

\subsection{\Slicebf+\Stormbf: Case Studies}\label{sec:case-studies}
There is no exact inference system that, to our knowledge, can cover programs combining all three features: continuous sampling, higher-order functions, and unbounded recursion. As a consequence, there is no exact inference engine for the programs in \Cref{sec:examples:7,sec:casestudy:epsball,sec:casestudy:bloom}. Instead, we use \emph{probabilistic model checking}: The small-step semantics of our language (cf.\ \Cref{tab:smallstep-1,tab:smallstep-2}) for discrete programs gives rise to a \emph{Markov chain} modeling the  stochastic process of program execution, inspired by~\cite{Jansen2016BoundedModelCheckingProbPrograms} on imperative programs. If that Markov chain is finite, we can use the probabilistic model checker \Storm \cite{storm} for exact inference. Let us start with an illustrative example.

%

\begin{figure}[!t]
    \centering
    
    \begin{minipage}[t]{0.34\textwidth}
    \vspace{0pt}
    
    \textbf{Continuous Program}
    \begin{lstlisting}[aboveskip=0.4em,belowskip=0.4em,gobble=4,
    escapechar=!,
    numbers=left,
    stepnumber=1,
    numberstyle=\color{gray}\scriptsize,
    numbersep=7pt,
    basicstyle=\ttfamily\footnotesize]
    (fix resample x :=
      let _ = observe(x >= 0.2) in !\linelabel{ln:cr-obs}!
      if x <= 0.5 then true !\linelabel{ln:cr-true}!
      else if x <= 0.8 then resample x !\linelabel{ln:cr-loop}!
      else
        let y = uniform(0,1) in !\linelabel{ln:cr-sample}!
          resample y) 1 !\linelabel{ln:cr-init}!
    \end{lstlisting}
    
    \textbf{Discretized Program}
    \begin{lstlisting}[aboveskip=0.4em,belowskip=0.4em,gobble=4,
    escapechar=!,
    numbers=left,
    stepnumber=1,
    firstnumber=last,
    numberstyle=\color{gray}\scriptsize,
    numbersep=7pt,
    basicstyle=\ttfamily\footnotesize]
    (fix resample x :=
      let _ = observe(x >= 1) in !\linelabel{ln:dcr-obs}!
      if x <= 1 then true !\linelabel{ln:dcr-true}!
      else if x <= 2 then resample x !\linelabel{ln:dcr-loop}!
      else
        let y = discrete(0.2,0.3,0.3,0.2) in !\linelabel{ln:dcr-sample}!
          resample y) 3 !\linelabel{ln:dcr-init}!
    \end{lstlisting}
    \end{minipage}
    \hspace{0.05\textwidth}
    \begin{minipage}[t]{0.52\textwidth}
        \vspace{0pt}
        \raggedright
        \textbf{Markov Chain Construction \& Model Checking}
        
        \vspace{0.5em}
        
        \begin{tikzpicture}[
            scale=0.82,
            transform shape,
            >=Stealth,
            every node/.style={font=\footnotesize},
            state/.style={
                draw,
                rounded corners,
                align=center,
                minimum width=2.15cm,
                minimum height=0.62cm,
                inner sep=2pt
            },
            term/.style={draw, circle, minimum size=7.5mm, inner sep=0pt},
            lab/.style={font=\normalsize, fill=white, inner sep=1pt},
            box/.style={
                draw,
                rounded corners,
                align=center,
                inner sep=6pt
            }
        ]
        
        \node[state] (s3) at (0.4,0) {\texttt{(fix resample x := \dots)\ 3}};
        \node[state] (s0) at (-2.0,-1.35) {\texttt{(fix \dots)\ 0}};
        \node[state] (s1) at (0.4,-1.35) {\texttt{(fix \dots)\ 1}};
        \node[state] (s2) at (2.8,-1.35) {\texttt{(fix \dots)\ 2}};
        \node[term] (fail) at (-2.0,-2.45) {$\bot$};
        \node[term] (true) at (0.4,-2.45) {\texttt{true}};
        
        \draw[->] (0.4,0.75) -- (s3);
        
        \draw[->] ([xshift=-4mm]s3.south) -- node[lab, pos=0.58, left=1pt] {$0.2$} ([xshift=-1mm]s0.north);
        \draw[->] (s3.south) -- node[lab, pos=0.58, right=1pt] {$0.3$} (s1.north);
        \draw[->] ([xshift=4mm]s3.south) -- node[lab, pos=0.58, right=1pt] {$0.3$} ([xshift=1mm]s2.north);
        
        \draw[->] (s3.north west) to[out=145,in=215,looseness=4]
            node[lab, left, xshift=-2pt] {$0.2$} (s3.south west);
        
        \draw[->] (s0) -- node[lab, right, xshift=2pt] {$1$} (fail);
        \draw[->] (s1) -- node[lab, right, xshift=2pt] {$1$} (true);
        
        \draw[->] (s2.north east) to[out=35,in=-35,looseness=4]
            node[lab, right, xshift=2pt] {$1$} (s2.south east);

        \draw[
            -{Implies[length=7mm,width=7mm]},
            double,
            double distance=1.2pt,
            line width=0.9pt
        ] (0.4,-3.15) -- (0.4,-4.0);

        \node[
            draw,
            cloud,
            cloud puffs=14,
            cloud puff arc=120,
            aspect=2.0,
            inner xsep=0.1pt,
            inner ysep=0.1pt,
            text width=1.5cm,
            align=center
        ] (storm) at (0.4,-5.0)
        {
            \textbf{STORM}\\
            {\scriptsize probabilistic model checker}
        };

        \node[font=\small, anchor=west, align=center] at (storm.east)
        {
        \emph{Model Checking Result}: \\
        $\Pr(\lozenge \texttt{true} \mid 
        \square \lnot \bot) = 0.5$
        };
        \end{tikzpicture}
    \end{minipage}
    \caption{Conditional resampling loop and its discretization. The continuous program is transformed into a finite-state discrete program (\Slice{} alone takes less than 0.001s), which induces a finite Markov chain (generated in 0.001s) and can then be analyzed by \Storm (with a model checking time of 0.024s). The $\bot$-state in that Markov chain is entered when encountering an observation failure.}
    \label{fig:conditional-resampling}
\end{figure}

\subsubsection{Conditional Resampling Loop}
Consider the continuous program depicted in \Cref{fig:conditional-resampling}, containing continuous sampling, unbounded recursion, and conditioning inside of the recursion. \Slice automatically discretizes this program, also shown in \Cref{fig:conditional-resampling}. We generate the corresponding Markov chain by unfolding the small-step semantics in a breadth-first manner. In the presence of unbounded recursion, this Markov chain \emph{might} be infinite and this process might thus diverge. If, however, it terminates, the resulting Markov chain is finite and precisely captures the execution behavior of the discretized program. For this example, this is the case: Even though the discretized program contains unbounded recursion, its execution behavior can be modelled as a finite-state Markov chain\footnote{We apply a straightforward optimization: Intermediate execution steps that do not branch probabilistically can be collapsed into a single transition. This is the reason why all states in \Cref{fig:conditional-resampling} correspond to the header of the recursive function.}, where $\bot$ is a special state entered when encountered an observation failure.

Now, using the probabilistic model checker \Storm, we can perform exact inference: It is immediate that the Markov chain's probability $\Pr(\lozenge \texttt{true} \mid 
\square \lnot \bot)$  of eventually reaching $\true$ (in LTL: $\lozenge\true$)  \emph{given} that we never enter the observation-failure state $\bot$ (in LTL: $\square \lnot \bot$) equals the measure of $\true$ in the program's normalized posterior distribution. 

%

\subsubsection{Random Interval Covering of \ensuremath{[0,1]} with Conditioning}
\label{sec:casestudy:epsball}

Our second case study extends the random interval covering example from \Cref{sec:examples:7}  by conditioning inside of the recursion: We condition on \emph{every} sample falling within \emph{at most} one interval. The difference to the program in \Cref{sec:examples:7} is:
%
\begin{center}
\begin{minipage}[t]{0.48\textwidth}
    \vspace{0pt}
    \begin{lstlisting}[aboveskip=0.4em,belowskip=0.4em,gobble=4,
      escapechar=!,
      numbers=left,
      stepnumber=1,
      numberstyle=\color{gray}\scriptsize,
      numbersep=7pt,
      xleftmargin=12.5pt,
      xrightmargin=-12.5pt,
      basicstyle=\ttfamily\footnotesize]
    let non_overlap = fix non_overlap y := !\linelabel{ln:cov-nonoverlap}!
      fun seen -> fun ys -> 
        match ys with
        | nil -> true
        | h :: t ->
            let hit = (snd h) y in
            if hit && seen then false
            else non_overlap y (seen || hit) t
        end in
    \end{lstlisting}
    \end{minipage}
    \hfill
    \begin{minipage}[t]{0.48\textwidth}
    \vspace{0pt}
    \begin{lstlisting}[aboveskip=0.4em,belowskip=0.4em,gobble=4,
      escapechar=!,
      numbers=left,
      stepnumber=1,
      firstnumber=last,
      numberstyle=\color{gray}\scriptsize,
      numbersep=7pt,
      basicstyle=\ttfamily\footnotesize]
    let cover = fix cover st := !\linelabel{ln:cov-cover}!
      if uniform(0,1) < 0.1 then st
      else
        let y = gaussian(0,1) in
        let _ = observe(non_overlap y false st) in !\linelabel{ln:cov-obs}!
        cover (step y st)
    in
    let final = cover init in
    is_covered final !\linelabel{ln:cov-final}!
    \end{lstlisting}
\end{minipage}
\end{center}
\noindent
The key new component is the recursive function \lstinline{non_overlap} (l.\ \ref{ln:cov-nonoverlap}), which checks whether a sampled point $y$ hits more than one interval of the initially sampled intervals. The loop \lstinline{cover} (l.\ \ref{ln:cov-cover}) repeatedly samples a Gaussian point and immediately conditions on this predicate via \lstinline{observe} (l.\ \ref{ln:cov-obs}), so only executions in which the point lies in at most one interval are retained. The measure of $\true$ in the conditional posterior distribution is thus the probability that this unbounded process hits every interval at least once \emph{given} that each sampled point falls in at most one interval.

The program remains discretizable: each sampled point is still used only through comparisons against finitely many sampled interval boundaries, and \lstinline{non_overlap} depends only on the resulting hit pattern, not on the exact real value of $y$. \Slice{} replaces each Gaussian draw by its interval index, yielding a discrete program (shown in \Cref{fig:app:epsball-observe:disc}) with a finite exact state space. \Slice{} takes 0.001s, and we generate a Markov chain consisting of $268878$ states in $80$ seconds. \Storm takes $4.7$ seconds for model checking, and reports that the sought-after probability is 0.000227....

Currently, our Markov chain construction process is rather naive: extending the number of sampling points or intervals yields the Markov chain to be infeasibly large. Developing efficient symbolic discrete inference engines for higher-order programs with unbounded recursion is not in the scope of this paper. However, our approach is an excellent motivation for this future direction. 

\subsubsection{Bloom Filter}
\label{sec:casestudy:bloom}
We model a probabilistically populated variant of a Bloom filter \cite{Bloom1970SpaceTime}, which is a data structure using hash functions for storing data space efficiently: one can insert data into a Bloom filter and subsequently query whether a given data point is stored in it. Due to aggressive compression via hashing, this process can be faulty: while it can never produce false-negatives, it can produce false-positives. Now conisder the following program (full program shown in~\Cref{fig:app:bloom-filter}):
\begin{center}
\begin{minipage}[t]{0.48\textwidth}
\begin{lstlisting}[aboveskip=0.4em,belowskip=0.4em,
  language=OCaml,
  escapechar=!,
  numbers=left,
  stepnumber=1,
  numberstyle=\color{gray}\scriptsize,
  numbersep=7pt,
  xleftmargin=12.5pt,
  xrightmargin=-12.5pt,
  basicstyle=\ttfamily\footnotesize]
let insert = fun q -> fun h1 -> !\linelabel{ln:bloom-insert}!
    fun h2 -> fun bf -> 
  let index1 = h1 q in
  let index2 = h2 q in
  let bf1 = set_true bf index1 in
  set_true bf1 index2 in

let search = fun q -> fun h1 -> !\linelabel{ln:bloom-search}!
    fun h2 -> fun bf -> 
  let index1 = h1 q in
  let index2 = h2 q in
  (get bf index1) && (get bf index2) in

let h1 = fun q -> !\linelabel{ln:bloom-h1}!
  if q <= -2 then 0
  else if q <= 0 then 1
  else if q <= 2 then 2
  else 3 in
\end{lstlisting}
\end{minipage}
\hfill
\begin{minipage}[t]{0.48\textwidth}
\begin{lstlisting}[aboveskip=0.4em,belowskip=0.4em,
  language=OCaml,
  escapechar=!,
  numbers=left,
  stepnumber=1,
  firstnumber=last,
  numberstyle=\color{gray}\scriptsize,
  numbersep=7pt,
  basicstyle=\ttfamily\footnotesize]
let h2 = fun q -> !\linelabel{ln:bloom-h2}!
  if q <= -2 then 3
  else if q <= 0 then 2
  else if q <= 2 then 1
  else 0 in

let init = (false, (false, (false, false))) in !\linelabel{ln:bloom-init}!
let key = gaussian(0,1) in !\linelabel{ln:bloom-key}!

let search_before_full = fix loop bf :=
  if (full bf) then bf
  else if (search key h1 h2 bf) then bf
  else
    let q = gaussian(0,1) in !\linelabel{ln:bloom-samples}!
    loop (insert q h1 h2 bf) in

let bf_final = (search_before_full init) in 
  not (full bf_final) !\linelabel{ln:bloom-final}!
\end{lstlisting}
\end{minipage}
\end{center}
The program samples a Gaussian key (l. \ref{ln:bloom-key}) and starts from an empty four-bit Bloom filter (l. \ref{ln:bloom-init}). It then builds the filter recursively by drawing fresh Gaussian samples (l.\ \ref{ln:bloom-samples}) and inserting each one into the filter by hashing the sampled real twice using the comparison-based hash functions \lstinline{h1} and \lstinline{h2} (l.\ \ref{ln:bloom-h1} and \ref{ln:bloom-h2}), following a locality-sensitive hashing technique that hashes similar input items into the same buckets. Each insertion sets the corresponding bits in the filter using \lstinline{insert} (l.\ \ref{ln:bloom-insert}). The unbounded recursion continues until either the filter becomes full or the key is reported present by \lstinline{search} (l.\ \ref{ln:bloom-search}). The final Boolean result, \lstinline{not (full bf_final)} (l.\ \ref{ln:bloom-final}), therefore asks whether the key was found strictly before the filter became full.

The key simplification is that the hash functions never inspect the exact value of a sample. They only ask which region the sample falls into, using the cut points $-2$, $0$, and $2$. These cuts partition the real line into four regions, which already coincide with the Bloom filter's index space. \Slice{} therefore replaces each Gaussian draw by its bucket index while preserving exactly the probability of every collision pattern relevant to insertion and search. The fully discretized program is shown in~\Cref{fig:app:bloom-filter:disc}. \Slice{} takes less than 0.001s, and we generate a Markov chain consisting of $14$ states in $0.023$ seconds. \Storm takes $0.001$ seconds for model checking, and reports the probability $0.91314...$.

%
%

\subsection{\Slicebf+\Dicebf~and \Slicebf+\Roulettebf: New Benchmarks for Scaling}\label{sec:synthetic-benchmarks} 
In the absence of unbounded recursion, we can use \Dice or \Roulette as inference backends.
We designed scalable benchmarks of conditionally independent continuous programs with varying variable dependencies. Formally, all programs in our benchmark can be described by the structure: $X_i \perp\!\!\!\perp \{\,X_j : j < i, j \neq \pi(i)\,\}
\;\Big|\; X_{\pi(i)}\}$, where $\pi(i) \in \{1, \ldots, i-1\}$ is the parent index of $X_i$. Each variable $X_i$ depends directly on only one earlier variable, namely its parent $X_{\pi(i)}$; conditioned on $X_{\pi(i)}$, it is independent of all other preceding variables. Detailed listings and graphs for variants of these benchmarks appear in \Cref{app:synthetic-sketches}. Results are depicted in \Cref{fig:alt-benchmarks-main}.

We report runtime (seconds) for (i) \SPPL{}, (ii) \Slice{} (discretization only), (iii) \Slice{}+\Dice{} (discretization plus inference via \Dice{}), and (iv) \Slice{}+\Roulette{} (discretization plus inference via \Roulette{}). The timeout is 120 seconds. Program sizes are evaluated in increments of 5, from 0 to 50. For each size, we generate $10$ random instances and report average runtime. We observe that discretization remains small across sizes, and takes less than $0.01$ seconds for each program. End-to-end runtimes for \Slice{} with discrete backends grow more mildly on these families; \SPPL reaches a timeout frontier at larger instances. We conclude that using efficient BDD-based back-ends for exact inference on continuous programs via discretization can pay off for programs whose control flow induces strong conditional independence structure, allowing the resulting discrete models to remain compact even as program size grows.
 

\begin{figure}[!t]
    \centering

    \begin{subfigure}{0.24\textwidth}
        \centering
        \includegraphics[width=\linewidth]{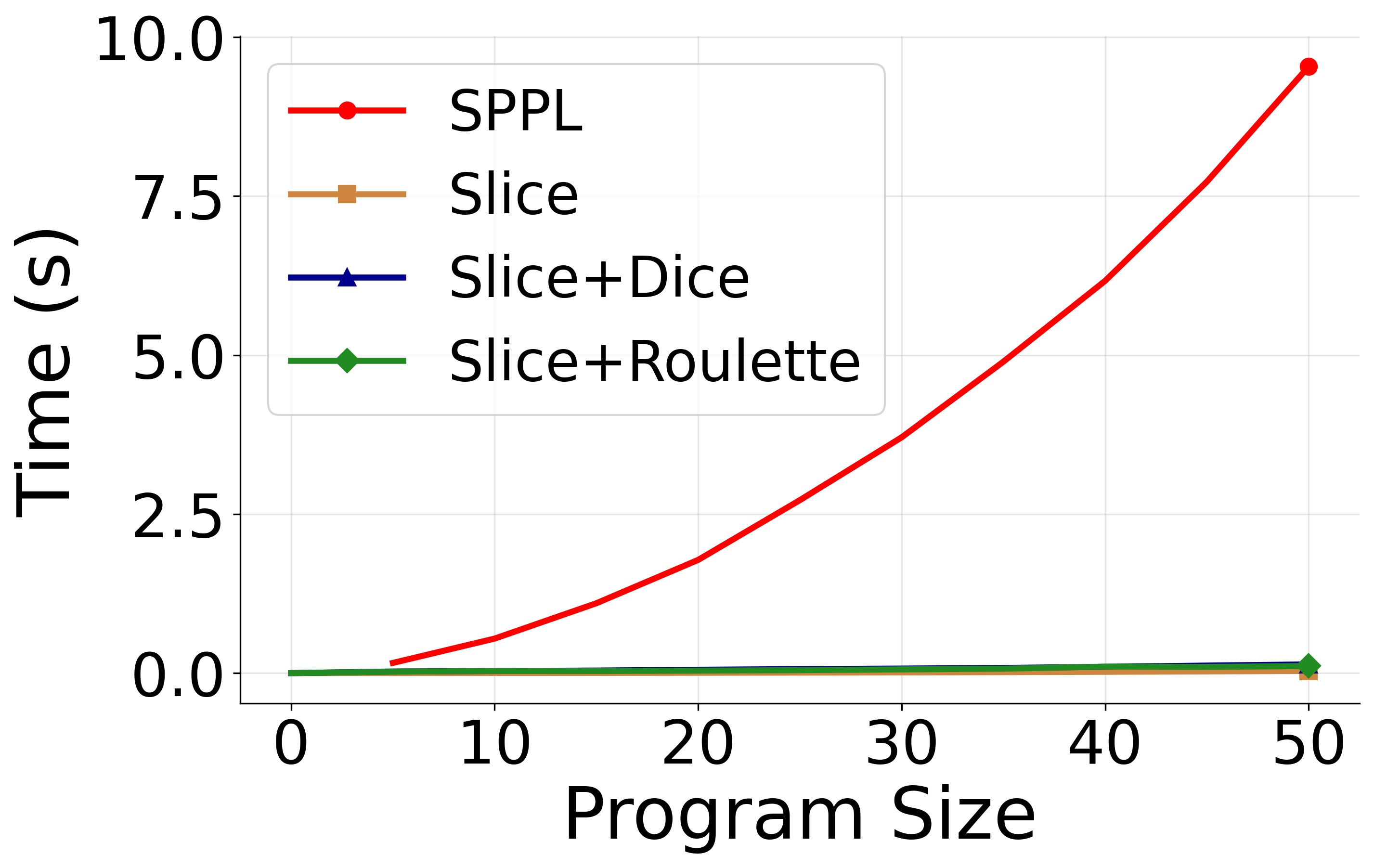}
        \caption{Cond. Ind. -- Fork}
        \label{fig:cond-benchmarks-b-main}
    \end{subfigure}\hfill
    \begin{subfigure}{0.24\textwidth}
        \centering
        \includegraphics[width=\linewidth]{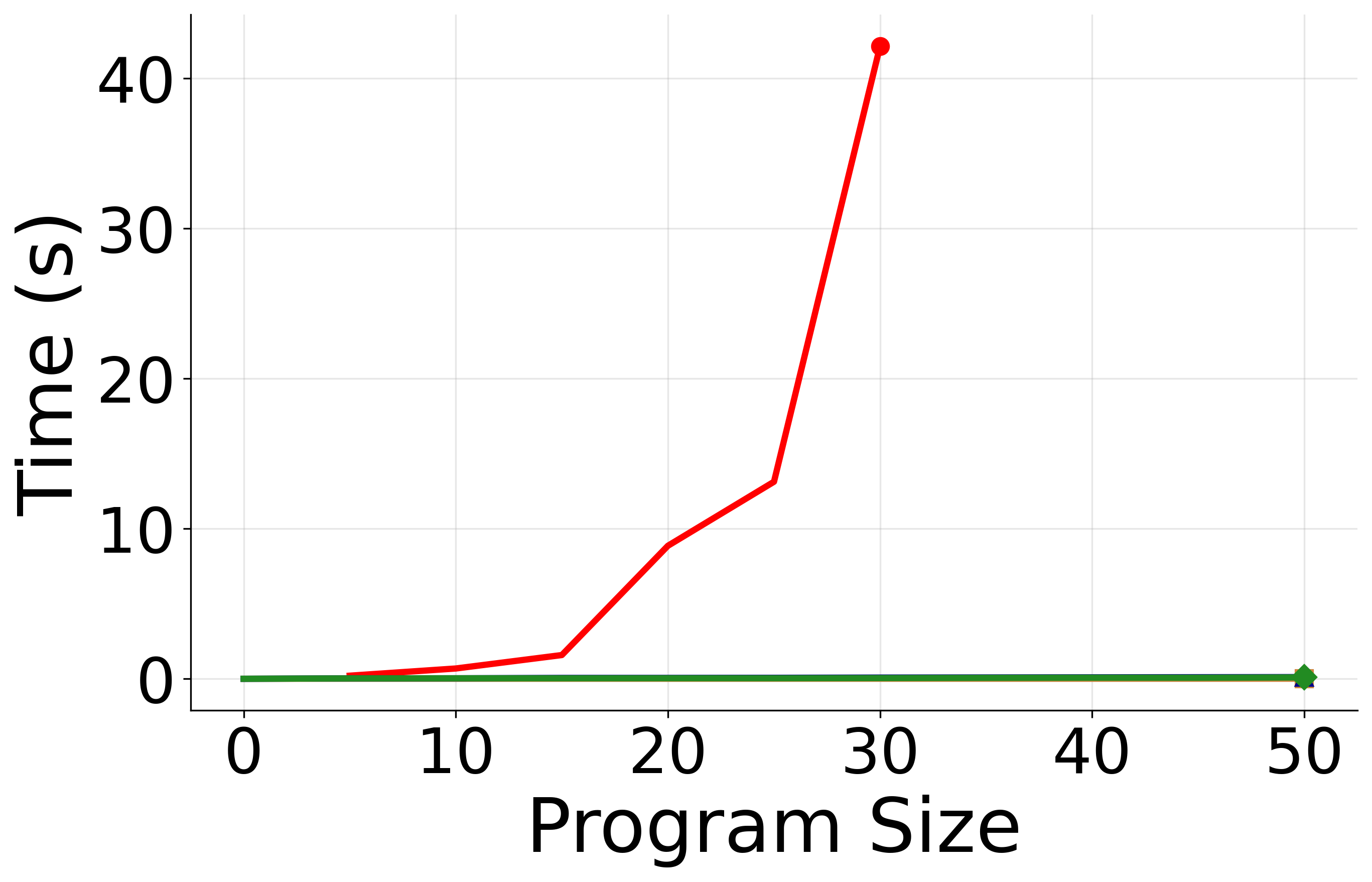}
        \caption{Cond. Ind. -- Rand.}
        \label{fig:cond-benchmarks-c-main}
    \end{subfigure}\hfill
    \begin{subfigure}{0.24\textwidth}
        \centering
        \includegraphics[width=\linewidth]{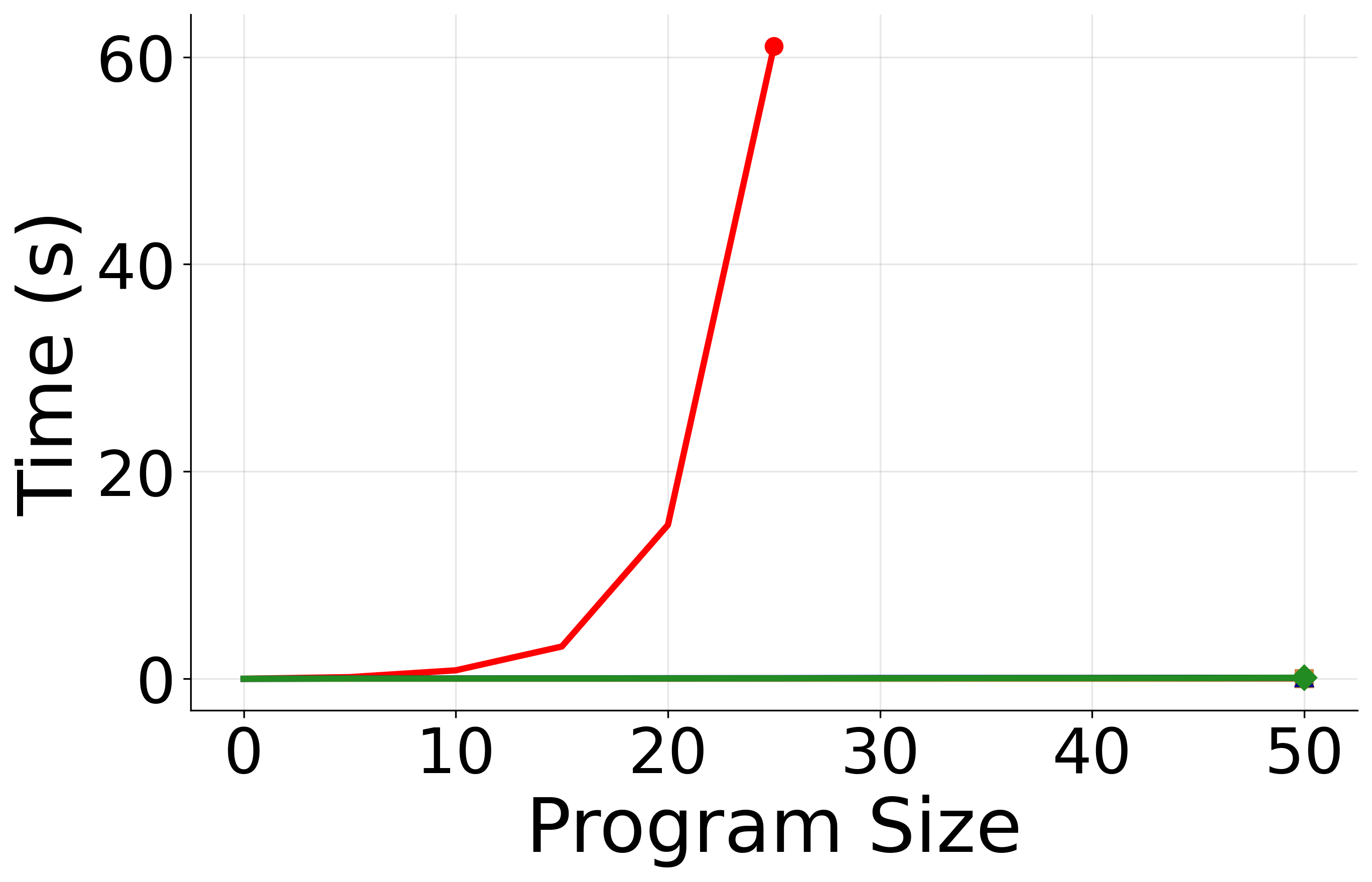}
        \caption{Alt. Guard}
        \label{fig:alt-benchmarks-a-main}
    \end{subfigure}\hfill
    \begin{subfigure}{0.24\textwidth}
        \centering
        \includegraphics[width=\linewidth]{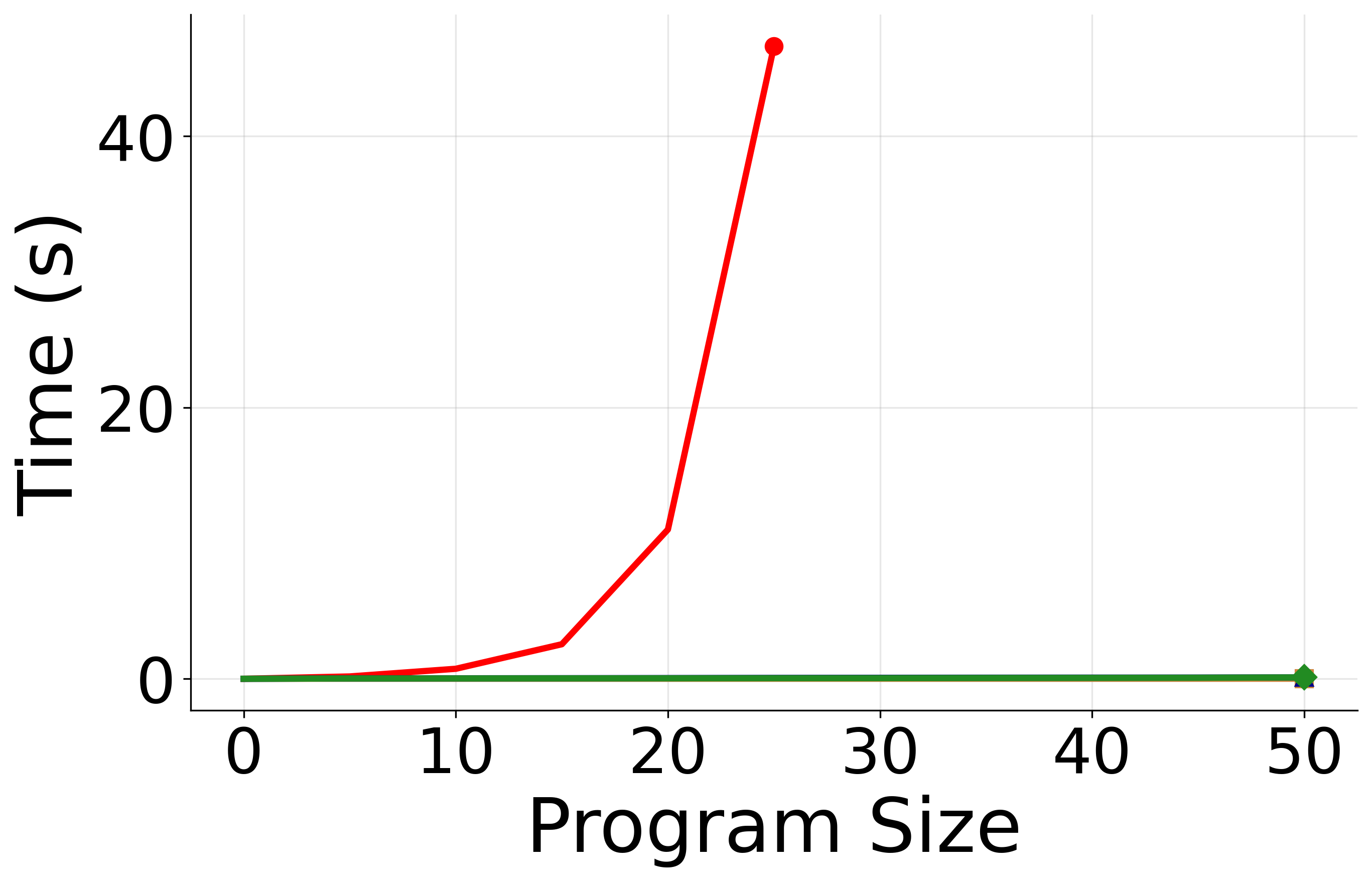}
        \caption{Alt. Guard -- Rand.}
        \label{fig:alt-benchmarks-d-main}
    \end{subfigure}

\caption{Scaling behavior on representative benchmark families (additional families in~\Cref{app:synthetic-sketches}). The plots emphasize asymptotic growth-rate differences and the resulting timeout frontier as program size increases. Time-out=120s; a disappearing line indicates timeout.}
    \label{fig:alt-benchmarks-main}
\end{figure}

\subsection{\Slicebf+\Dicebf~and \Slicebf+\Roulettebf: Existing Benchmarks}\label{sec:fairness-benchmarks}

We next use benchmark suites from prior work, detailed in \Cref{tab:benchmarks_dt,tab:benchmarks_dt_extra,tab:benchmarks_psi} in \Cref{app:benchmarks}.

The decision tree benchmarks from~\cite{albarghouthi2017fairsquare} assess fairness properties under different population models. In~\Cref{tab:benchmarks_dt,tab:benchmarks_dt_extra}, DT$_n$ denotes a tree with $n$ conditionals, the second column specifies the population model, and the remaining columns report runtimes. We split \SPPL{} execution into translation and fairness-judgment phases to make phase costs explicit. We additionally report runtimes for \PSI{} and \Hakaru{}. For \Hakaru{}, we separate Disintegrate, Simplify, and Total runtimes. Hakaru's disintegration phase transforms the original probabilistic program into a new Hakaru program representing a conditional distribution, which is then simplified by Maple to perform exact inference through computer algebra. Thus, the Total column reports the timing for the full exact inference pipeline. We observe that discretization itself (\Slice) becomes expensive for larger $n$, eventually yielding a timeout. Regarding end-to-end runtimes, \Slice{}+\Dice{} mostly outperforms \SPPL{}, whereas \Slice{}+\Roulette{} is mostly inferior to \SPPL{} for larger $n$.

\Cref{tab:benchmarks_psi} reports on benchmarks drawn from the \PSI{} and \SPPL{} benchmark sets. For benchmarks adapted by \SPPL{} from \PSI{}, we use the modified versions provided in the \SPPL{} Github repo, which replace certain continuous choices with finite ones. Translating these versions to \Slice{} enables a direct comparison between \Slice{} and \SPPL{}. For completeness, we also report \PSI{}/\Hakaru{} results on the original, unmodified programs, which remain outside the class supported by both \Slice{} and \SPPL{}. Moreover, programs requiring numeric transformations or arithmetic \emph{remain outside the language features of \Slice}. Like in the decision tree benchmarks, we report runtimes for (i) \Slice{}+\Dice{}, (ii) \Slice{}+\Roulette{}, (iii) \SPPL{}, (iv) \PSI{}, (v) \Hakaru{} (Disintegrate), and (v) \Hakaru{} (Simplify) in seconds. We observe that \Slice{}+\Dice{} and \Slice{}+\Roulette{} are competitive with \SPPL{}, \PSI{}, and \Hakaru{} (Disintegrate + Simplify).

\section{Related and Future Work}
\label{sec:related}
We start with related work and conclude with future work.

\smallskip
\noindent\textbf{Discretization approaches.}
\citet{Holtzen2018Abstraction} are the closest prior work on exact discretization. They develop a probabilistic predicate abstraction framework that, given predicates of interest, constructs an abstract probabilistic program and proves soundness for events expressible in the predicate domain. For a first-order, loop-free fragment of our language, \Slice{} can be viewed similarly: inferred cuts such as $x<c$ and $x\le c$ induce predicates over which the program is discretized. The key difference is that \Slice{} \emph{automatically infers and propagates} the relevant cut/value sets through higher-order functions, recursion, and observations, then compiles to a discrete program for existing exact backends. By contrast, \citet{Holtzen2018Abstraction} assume predicates are supplied a priori, explicitly identify useful predicate discovery as a hard problem, and restrict their semantics to measurable functions, excluding arbitrary unbounded loops and higher-order functions. Beyond this, other discretization methods use approximate discretization~\cite{Huang2021AQUA,Garg2024BitBlast,Claret2013BayesianDataFlow}. Recent work also uses approximate discretization to compute posterior bounds for recursive higher-order programs~\cite{Beutner2022GuaranteedBounds}. Vice versa, Leios~\cite{Laurel2020ContinualizationCorrection} approximates a fully continuous program from a discrete one. 

\smallskip
\noindent\textbf{Discrete-only probabilistic languages.}
Roulette~\cite{Moy2025Roulette} extends Racket with first-class support for finitely-supported distributions and leverages symbolic evaluation to compile queries into weighted model-counting problems, enabling scalable exact inference on discrete programs. Dice~\cite{Holtzen2020Dice} follows a similar philosophy in an OCaml DSL, compiling discrete programs to weighted model counting. 
Pluck~\cite{Bowers2025StochasticLazyKnowledgeCompilation} enables exact and approximate inference in discrete probabilistic programs based on lazy evaluation. Earlier work in probabilistic logic programming, most prominently ProbLog~\cite{DeRaedt2007ProbLog}, annotates Prolog facts with probabilities and reduces inference to weighted Boolean formulas. PERPL is a system that performs exact inference on recursive calls over recursive data~\cite{Chiang2023PERPL}. Caesar~\cite{Schroer2023DeductiveVerificationInfrastructure} is a quantitative program verification infrastructure for discrete probabilistic programs that together with the approach from \cite{Batz2025FoundationsDeductiveVerification} enables the verification of continuous programs. 

\smallskip
\noindent\textbf{Exact inference on continuous programs.} 
Many existing systems perform exact inference directly in continuous/hybrid languages. Psi~\cite{gehr2016psi} and Hakaru~\cite{Narayanan2016Hakaru} support symbolic exact inference capabilities with different algebraic strengths, though symbolic generality can make some tractable instances challenging. Symbo~\cite{Martires2019ExactApproxWMI} leverages weighted model integration~\cite{Belle2015ProbabilisticInferenceHybridWMI} for exact inference in hybrid domains, using sentinel decision diagrams as the representation and the PSI solver to symbolically integrate over continuous variables. SPPL~\cite{Saad2021SPPL} emphasizes efficient exact symbolic inference for its fragment and supports features we do not currently automate, notably numeric transformations and conditioning on measure-zero events. $\lambda$PSI~\cite{Gehr2020LambdaPSI} supports exact inference for higher-order continuous programs with nested inference and bounded recursion. Generating-function systems such as Genfer~\cite{Zaiser2023Genfer}, Geni~\cite{Li2025Geni}, and Prodigy~\cite{Klinkenberg2024Prodigy} collectively support exact inference for infinite-support models and restricted unbounded-control models, but not general higher-order programs. PERPL~\cite{Chiang2023PERPL} supports higher-order functions and unbounded recursion, but not continuous random variables. GuBPI~\cite{Beutner2022GuaranteedBounds} supports recursive higher-order programs with continuous distributions, but computes guaranteed converging bounds rather than exact posteriors. \Slice{} targets the combination needed for our case studies: exact inference for programs that combine continuous sampling, higher-order functions, and unbounded recursion.

\smallskip
\noindent\textbf{Relational proof techniques.}
Our soundness argument uses a coupling-style semantic relation between source and discretized executions. This connects directly to logical-relations and relational-semantics techniques for probabilistic programs~\cite{Bizjak2015Step,Wand2018Contextual,Culpepper2017ContextualScoring}, adapted here to a global, non-local, type-guided transformation. Notably,~\cite{Bizjak2015Step} considers only discrete probabilistic programs without conditioning, while~\cite{Wand2018Contextual,Culpepper2017ContextualScoring} support continuous distributions and scoring. More recently, ~\cite{Zhang2022Reasoning} considered probabilistic programs with nested queries and recursion using a step-indexed model.

\paragraph{Future Work}
We would like to extend our work to support tractable symbolic integration for richer classes of continuous distributions (i.e. conjugate priors or other distributions admitting closed-form expressions), thus broadening the class of programs that \Slice{} can discretize exactly. We also plan to investigate symbolic techniques à la \citet{Wang2024StaticPosterior} to speed-up inference in the presence of recursion.

\section*{Data-Availability Statement}
A reproducible artifact for the evaluation in~\Cref{sec:implem_eval} is available on Zenodo~\cite{Wu2026SliceArtifact}. 

\begin{acks}
We are grateful to our OOPSLA reviewers who helped us improve the paper. We also thank Cheng Zhang for providing feedback on an initial draft of our paper, as well as the Silva Lab retreat guests for helpful discussions. This work was supported by the NSF Graduate Research Fellowship Program under Grant No. DGE-2139899. Additionally, this research was developed with funding from the Defense Advanced Research Projects Agency (DARPA) and the
Office of Naval Research (ONR). The views, opinions, and/or findings contained in this material are those of the authors
and should not be interpreted as representing the official views or policies of the Department of the Navy, DARPA, or the U.S. Government. 
\end{acks}

\bibliography{refs}

\addtocontents{toc}{\protect\setcounter{tocdepth}{2}}
\clearpage
\appendix
\tableofcontents

\section{\Slicebf{} Examples (Continued)}\label{ap:mot}
This section is a continuation of~\Cref{sec:examples}, describing more in detail the following examples: \Cref{sec:examples:4,sec:examples:5,sec:examples:6,sec:examples:7}. Additionally, we include an extra example based on the Indian GPA problem, a canonical example in the probabilistic programming literature~\cite{Wu2018DiscreteContinuousMixtures}.

\subsection{Control Flow}\label{app:examples:4}

We revisit the control-flow example from \Cref{sec:examples:4}:
\begin{lstlisting}[aboveskip=1em,belowskip=1em,escapechar=!,gobble=4]
    let x = if flip() then uniform(0,2) else gaussian(0,1) in
    let y = if 1.5 > x then 1.8 else 0.3 in x <= y
\end{lstlisting}
The main feature of this program is that the comparison $x \leq y$ is governed by control flow: the right-hand side is not a single constant, but may evaluate to either $0.3$ or $1.8$ depending on the outcome of the earlier test $1.5 > x$. Consequently, the discretization of $x$ must account for both these values.

\begin{figure}[ht]
  \centering
  \begin{minipage}[c]{\textwidth}
    \begin{lstlisting}[escapeinside={<@}{@>},aboveskip=0.1em,belowskip=0.1em]
    let x = if flip() <@\textcolor{purple}{then}@> uniform(0,2) <@\textcolor{purple}{else}@> gaussian(0,1) in
    let y = if 1.5 <@\textcolor{red}{>}@> x <@\textcolor{purple}{then}@> 1.8 <@\textcolor{purple}{else}@> 0.3 in x <@\textcolor{orange}{<=}@> y
    \end{lstlisting}      
  \end{minipage}
        
  \vspace{0.8em}

  \begin{minipage}[c]{\textwidth}
  \centering
  \scalebox{0.58}{
  \begin{tikzpicture}[
      floatnode/.style={draw, rounded corners, minimum width=3.3cm, minimum height=1.1cm, align=center},
      flow/.style={->, very thick, draw=purple},
      constraintgt/.style={->, very thick, dashed, draw=red},
      constraintleq/.style={->, very thick, dashed, draw=orange}
  ]


  \node[floatnode] (u) at (-6, 2) {
      $\uniform(0,2)$ \\
      \phantom{$\floattype{\{\cleq{0.3}, \clt{1.5}, \cleq{1.8}\}}{\top}$}
  };

  \node[floatnode] (g) at (-6, -2) {
      $\gaussian(0,1)$ \\
      \phantom{$\floattype{\{\cleq{0.3}, \clt{1.5}, \cleq{1.8}\}}{\top}$}
  };

  \node[floatnode] (x) at (-1, 0) {
      $x$ \\
      \phantom{$\floattype{\{\cleq{0.3}, \clt{1.5}, \cleq{1.8}\}}{\top}$}
  };

  \node[floatnode] (c15) at (-1, -2) {
      1.5 \\
      \phantom{$\floattype{\{\cleq{0.3}, \clt{1.5}, \cleq{1.8}\}}{\{1.5\}}$}
  };

  \node[floatnode] (y) at (8, 0) {
      $y$ \\
      \phantom{$\floattype{\{\cleq{0.3}, \clt{1.5}, \cleq{1.8}\}}{\{0.3,1.8\}}$}
  };

  \node[floatnode] (c18) at (4.0, 2) {
      1.8 \\
      \phantom{$\floattype{\{\cleq{0.3}, \clt{1.5}, \cleq{1.8}\}}{\{1.8\}}$}
  };

  \node[floatnode] (c03) at (4.0, -2) {
      0.3 \\
      \phantom{$\floattype{\{\cleq{0.3}, \clt{1.5}, \cleq{1.8}\}}{\{0.3\}}$}
  };


  \draw[flow] (u) -- node[below left] {\small then-branch flow} (x);
  \draw[flow] (g) -- node[above left] {\small else-branch flow} (x);

  \draw[flow] (c18) -- node[above right] {\small then-branch flow} (y);
  \draw[flow] (c03) -- node[below right] {\small else-branch flow} (y);


  \draw[constraintgt] (c15) -- node[right] {$>$} (x);
  \draw[constraintleq] (x) -- node[above] {$\leq$} (y);

  \end{tikzpicture}
  }
  \end{minipage}
\caption{Constraint graph for float types. Solid arrows represent data flow, and dashed arrows constraints.}
\label{fig:constraint-graph-gaussian}
\end{figure}

We show, at a high-level, how the float type information for this program is propagated. First, \Cref{fig:constraint-graph-gaussian} makes these dependencies explicit using a constraint graph over float-typed subexpressions. Solid edges denote data flow through conditionals, while dashed edges denote comparison constraints. In this example, the comparisons $1.5 > x$ and $x \leq y$ jointly determine the cut structure relevant to $x$. The data flow edges record how continuous samples from $\uniform(0,2)$ and $\gaussian(0,1)$ flow into $x$, while the constants $1.8$ and $0.3$ flow into $y$ through its two branches.

\input{figures/cf-propagation.tex}

The propagation process of float type information proceeds in four stages (\Cref{fig:control-flow-gaussian}). First, literal constants contribute singleton value sets, while continuous samples are assigned value set $\top$ (\Cref{fig:control-flow-gaussian-a}). Second, value sets propagate forward along data-flow edges, yielding in particular that $y$ has value set $\{0.3,1.8\}$ (\Cref{fig:control-flow-gaussian-b}). Third, comparison sites generate cut information: $1.5 > x$ contributes the threshold $1.5$, and $x \leq y$ contributes the possible values of $y$, namely $0.3$ and $1.8$ (\Cref{fig:control-flow-gaussian-c}). Since both operands of a comparison must be discretized against compatible interval boundaries, these cut sets are unified across the compared expressions. Finally, the resulting cut set propagates backward along data flow edges to all contributing subexpressions (\Cref{fig:control-flow-gaussian-d}).

The final cut set is
\[
\{\cleq{0.3}, \clt{1.5}, \cleq{1.8}\},
\]
inducing the intervals $(-\infty,0.3]$, $(0.3,1.5)$, $[1.5,1.8]$, and $(1.8,\infty)$.

\subsection{Discrete Latent Variable}\label{app:examples:5}

We revisit the discrete latent variable example from \Cref{sec:examples:5}:
\begin{lstlisting}[escapeinside={<@}{@>},aboveskip=1em,belowskip=1em]
    let x = if flip() then 0.5 else 1.5 in gaussian(0, x) < 0.5
\end{lstlisting}
The essential feature here is that the parameter of the continuous distribution is itself determined by a discrete choice. Thus, the discretization of the program must also account for the fact that the Gaussian is instantiated at two distinct parameter values.

\begin{figure}[ht]
    \centering
  
    \begin{minipage}[c]{\textwidth}
      \begin{lstlisting}[escapeinside={<@}{@>},aboveskip=0.1em,belowskip=0.1em]
      let x = if flip() <@\textcolor{purple}{then}@> 0.5 <@\textcolor{purple}{else}@> 1.5 in gaussian(0, <@\textcolor{red}{x}@>) <@\textcolor{orange}{<}@> 0.5
      \end{lstlisting}
    \end{minipage}
  
    \vspace{0.8em}
  
    \begin{minipage}[c]{\textwidth}
        \centering
        \scalebox{0.58}{
        \begin{tikzpicture}[
            floatnode/.style={draw, rounded corners, minimum width=3.3cm, minimum height=1.1cm, align=center},
            flow/.style={->, very thick, draw=purple},
            constraintlt/.style={->, very thick, dashed, draw=orange}
        ]
        
        
        \node[floatnode] (v05) at (-6, 2) {
            0.5 : \\
            $\floattype{\{\cleq{0.5}, \cleq{1.5}\}}{\{0.5\}}$
        };
        
        \node[floatnode] (v15) at (-6, -2) {
            1.5 : \\
            $\floattype{\{\cleq{0.5}, \cleq{1.5}\}}{\{1.5\}}$
        };
        
        
        \node[floatnode] (x) at (-1, 0) {
            $x$ : \\
            $\floattype{\{\textcolor{red}{\cleq{0.5}}, \textcolor{red}{\cleq{1.5}}\}}{\{0.5,1.5\}}$
        };
        
        
        \node[floatnode] (g) at (5, 0) {
            $\gaussian(0, x)$ : \\
            $\floattype{\{\clt{0.5}\}}{\top}$
        };
        
        
        \node[floatnode] (c05) at (10, 0) {
            0.5 : \\
            $\floattype{\{\clt{0.5}\}}{\{0.5\}}$
        };
        
        
        \draw[flow] (v05) -- node[below left] {\small then-branch flow} (x);
        \draw[flow] (v15) -- node[above left] {\small else-branch flow} (x);
        
        
        \draw[flow, color=red, dashed, very thick]
            (x.east) .. controls +(0.8,0) and +(0.8,-0.8) ..
            node[right, text=black, inner sep=1pt] {\small recovers}
            (x.south east);
        
        
        \draw[constraintlt] (g) -- node[above] {$<$} (c05);
        
        \end{tikzpicture}
        }
    \end{minipage}
  
    \caption{Constraint graph populated with float type information. As before, solid arrows represent data flow, and dashed arrows represent comparison constraints. Using $x$ as a Gaussian parameter induces a \emph{recovers} constraint, the red dashed self-loop, which ensures a unique discrete index for each value of $x$.}
    \label{fig:control-flow-distinguishes}
 \end{figure}

\Cref{fig:control-flow-distinguishes} presents the corresponding constraint graph. The variable $x$ is formed by control flow from the two constants $0.5$ and $1.5$, and therefore has value set $\{0.5,1.5\}$. Because $x$ appears as a parameter to $\gaussian(0,x)$, \Slice{} must recover which concrete value of $x$ was selected in order to instantiate the appropriate Gaussian distribution. The figure records this dependency by the \emph{recovers} constraint on $x$.

Operationally, the recovers constraint requires that the values in the value set of $x$ be represented by distinct discrete indices. This is why the type system uses non-strict cuts for such values: the cut set $\{\cleq{0.5},\cleq{1.5}\}$ induces a partition in which $0.5$ and $1.5$ occupy distinguishable regions, allowing them to be mapped to separate cases in the discretized program. The final comparison $\gaussian(0,x) < 0.5$ contributes the cut $\clt{0.5}$ for the sampled Gaussian result, while the value-set information on $x$ determines which discretized Gaussian is selected.

\subsection{Indian GPA Problem}\label{ap:gpa}

This example involves both discrete and continuous random variables and models a student's GPA based on nationality (India or USA) and whether they have a perfect record. The GPA is deterministic for perfect students (10.0 for India, 4.0 for USA) but drawn from a uniform distribution otherwise. \Slice{} correctly handles conditioning on a positive-measure atom at 10:

\begin{lstlisting}[aboveskip=1em,belowskip=1em,escapechar=!]
  let nationality = discrete(0.5, 0.5) in
  let perfect = discrete(0.01, 0.99) in
  let gpa = if nationality <= 0 then
          if perfect <= 0 then
            (10.0 : !{\ftb{\{<10,<=10\}}{\{10\}}}!)
          else
            (uniform(0, 10) : !{\ft{\{<10,<=10\}}}!)
        else
          if perfect <= 0 then
            (4.0 : !{\ftb{\{<10,<=10\}}{\{4\}}}!)
          else
            (uniform(0, 4) : !{\ft{\{<10,<=10\}}}!)
        : !{\ft{\{<10,<=10\}}}! in
  let _ = observe(gpa : !{\ft{\{<10,<=10\}}}! <= 10 : !{\ftb{\{<10,<=10\}}{\{10\}}}!
            && gpa : !{\ft{\{<10,<=10\}}}! >= 10 : !{\ftb{\{<10,<=10\}}{\{10\}}}!) in 
  nationality > 0
\end{lstlisting}

\noindent
The type system propagates the cuts $\clt{10}$ and $\cleq{10}$ for all GPA expressions. These cuts induce the three regions:
\[
(-\infty,10), \qquad \{10\}, \qquad (10,\infty).
\]
Under this partition, both continuous branches, \texttt{uniform(0,10)} and \texttt{uniform(0,4)}, place all of their probability mass in the first region. The constant 10.0 is mapped to the singleton region $\{10\}$, while the constant 4.0 is mapped to the region $(-\infty,10)$. The resulting discretized program is therefore:

\begin{lstlisting}[aboveskip=1em,belowskip=1em]
  let nationality = discrete(0.5, 0.5) in
  let perfect = discrete(0.01, 0.99) in
  let gpa = if nationality <= 0 then
          if perfect <= 0 then 1
          else
            discrete(1.0, 0.0, 0.0)
        else
          if perfect <= 0 then 0
          else
            discrete(1.0, 0.0, 0.0) in
  let _ = observe(gpa <= 1 && gpa >= 1)
  nationality > 0
\end{lstlisting}

\noindent
After discretization, the equality test $\texttt{gpa} = 10$ becomes the discrete test that the GPA index is exactly the one corresponding to the singleton region $\{10\}$. 

\subsection{Point and Measure-Zero Observations}\label{ap:measure-zero-observations}

\Slice{} can condition on an exact value when that event has strictly positive probability mass, as in the Indian GPA example in \Cref{ap:gpa}. In that case, conditioning is well-defined and \Slice{} returns the correct answer. The exact set of programs supported by \Slice{} is therefore somewhat larger than one would initially expect, while not covering full point observations on continuous samples.

The mechanism is the same as in \Cref{ap:gpa}: the cuts $\clt{c}$ and $\cleq{c}$ induce the partition
\[
(-\infty,c), \qquad \{c\}, \qquad (c,\infty),
\]
so any probability mass assigned exactly to $c$ is tracked as its own discrete outcome.

More precisely:

\begin{enumerate}
  \item \Slice{} cannot handle programs where every control-flow path conditions on an event of measure zero, and therefore programs with exact observations on continuous samples are in general out of scope, i.e. the programs one typically writes in e.g. Stan are out of scope unless one widens exact conditioning to a small but finite interval.

  \item \Slice{} can handle programs where at least one control-flow path conditions only on strictly-positive-probability events, even if that is a point observation, and even if other paths condition only on measure-zero events.

  \item It does not matter whether an event is measure-zero because it is point-conditioning on a continuous sample, or whether it is measure-zero for another reason, e.g. $\observekw(\uniform(0,1) > 3)$ or simply $\observekw(\false)$. The critical distinction for $\observekw(E)$ is whether $\Pr(E)=0$, not whether $E$ is of the form ``$x=3.2$''.

  \item \Slice{} will never give the wrong answer, even when all paths condition on measure-zero events. It will give the empty subdistribution, indicating that under a rejection-sampling interpretation, all samples are rejected with probability~$1$. Thus the user is notified, and \Slice{} never silently gives a wrong answer.
\end{enumerate}

For instance, this program:
\begin{lstlisting}[aboveskip=1em,belowskip=1em]
  let x = uniform(0,1) in
  if flip() then
    let _ = observe(x >= 0.5 && x <= 0.5) in true
  else
    let _ = observe(x <= 0.3 && x >= 0.3) in false
\end{lstlisting}
results in:
\[
\Pr(\true) = 0
\qquad
\Pr(\false) = 0.
\]

Whereas this program:
\begin{lstlisting}[aboveskip=1em,belowskip=1em]
  let x = uniform(0,1) in
  if flip() then
    let _ = observe(x >= 0.5 && x <= 0.5) in true
  else
    let _ = observe(x <= 0.3001 && x >= 0.3) in false
\end{lstlisting}
results in:
\[
\Pr(\true) = 0
\qquad
\Pr(\false) = 1.
\]
This is the correct behavior: consider repeatedly running the program under a rejection-sampling interpretation until an observation succeeds. With probability~$1$, the first successful test will be $\observekw(x \leq 0.3001 \mathrel{\&\&} x \geq 0.3)$ and the program will therefore return $\false$ with probability~$1$.

\subsection{Random Interval Covering Over $\boldmath{[0,1]}$}\label{ap:epsball}
The discretized program for the case study from \Cref{sec:examples:7} is shown in \Cref{fig:app:epsball}. \Slice{} takes 0.000s, and we generate a Markov chain consisting of $192365$ states in $38.69$ seconds. \Storm takes $0.69$ seconds for model checking, and reports that the sought-after probability is $0.034474...$.

\begin{figure}[t]
    \centering
    \begin{minipage}[t]{0.46\textwidth}
    \vspace{0pt}
    \begin{lstlisting}[aboveskip=0.4em,belowskip=0.4em,
      language=OCaml,
      escapechar=!,
      numbers=right,
      stepnumber=1,
      numberstyle=\color{gray}\scriptsize,
      numbersep=7pt,
      basicstyle=\ttfamily\footnotesize,
      commentstyle=\color{ForestGreen}\itshape,
      gobble=4]
    (* Returns true if y is in [lo,hi) *)
    let within = fun lo -> fun hi -> fun y -> !\linelabel{ap-ln:within}!
      (y >= lo) && (y < hi) in 
    
    (* Function for sampling interval bounds *)
    let b = fun _ ->  !\linelabel{ap-ln:b}!
      discrete(0.18: 0.05, 0.11: 0.24, 
              0.21: 0.37, 0.09: 0.58, 
              0.41: 0.89) in
    
    (* Initialize intervals by sampling their 
    respective (possibly overlapping) bounds *)
    let init =!\linelabel{ap-ln:init}!
      (false, within 1 (b ())) ::
      (false, within (b ()) (b ())) ::
      (false, within (b ()) (b ())) ::
      (false, within (b ()) 8) ::
      nil in 
    
    (* Update flags depending on y falling 
       within the respective interval *)
    let update = fix update y := fun ys -> !\linelabel{ap-ln:update}!
      match ys with 
        | nil -> nil
        | h :: t ->
            ((fst h || ((snd h) y)), snd h)
            :: (update y t)
      end in
    \end{lstlisting}
    \end{minipage}
    \hfill
    \hspace{-4pt}%
    \begin{minipage}[t]{0.46\textwidth}
    \vspace{0pt}
    \begin{lstlisting}[aboveskip=0.4em,belowskip=0.4em,
      language=OCaml,
      escapechar=!,
      numbers=right,
      stepnumber=1,
      firstnumber=last,
      numberstyle=\color{gray}\scriptsize,
      numbersep=7pt,
      basicstyle=\ttfamily\footnotesize,
      commentstyle=\color{ForestGreen}\itshape,
      gobble=4]
    (* Sample from a discretized gaussian, 
        then use the resulting value to update
        flags *)
    let step = fun ys -> 
       update (discrete(0.5, 0.0199388, 
                        0.0748961, 0.0494739, 
                        0.0747339, 0.0632619, 
                        0.0309625, 0.0280777, 
                        0.158655)) ys in
     
     (* Number of samples determined by 
        unbounded recursion *)
     let cover = fix cover st :=
       if discrete(0.1, 0.9) < 1 then st
       else cover (step st) in
    
    (* Have all intervals been covered at 
       least once? *)
    let is_covered = fix is_covered ys := !\linelabel{ap-ln:iscovered}!
      match ys with 
        | nil -> true 
        | h :: t -> (fst h) && (is_covered t)
      end in 
    
    (* Run entire process on sampled intervals *)
    let final = cover init in
      is_covered final !\linelabel{ap-ln:final}!
    \end{lstlisting}
    \end{minipage}%
    \hspace{4pt}
    
    \caption{A \emph{discretized} infinite stochastic process for random interval covering over $[0,1]$, involving continuous sampling, higher-order recursion, and lists.}
    \label{fig:app:epsball}
\end{figure}

\section{\Slicebf{} Case Studies (continued)}

\subsection{Random Interval Covering Over $\boldmath{[0,1]}$ with Conditioning}\label{sec:epsball-observe}
The full program for the case study from \Cref{sec:casestudy:epsball} is shown in \Cref{fig:app:epsball-observe}. The discretized program is shown in \Cref{fig:app:epsball-observe:disc}.

\begin{figure}[t]
    \centering
    \begin{minipage}[t]{0.46\textwidth}
    \vspace{0pt}
    \begin{lstlisting}[aboveskip=0.4em,belowskip=0.4em,
      language=OCaml,
      escapechar=!,
      numbers=right,
      stepnumber=1,
      numberstyle=\color{gray}\scriptsize,
      numbersep=7pt,
      basicstyle=\ttfamily\footnotesize,
      commentstyle=\color{ForestGreen}\itshape]
(* Returns true if y is in [lo,hi) *)
let within = fun lo -> fun hi -> fun y ->
  (y >= lo) && (y < hi) in 

(* Function for sampling interval bounds *)
let b = fun _ ->
  discrete(0.18: 0.05, 0.11: 0.24,
          0.21: 0.37, 0.09: 0.58,
          0.41: 0.89) in

(* Initialize intervals by sampling their
   respective (possibly overlapping) bounds *)
let init =
  (false, within 0.00 (b ())) ::
  (false, within (b ()) (b ())) ::
  (false, within (b ()) (b ())) ::
  (false, within (b ()) 1.00) ::
  nil in 

(* Update flags depending on y falling 
   within the respective interval *)
let update = fix update y := fun ys ->
  match ys with
    | nil -> nil
    | h :: t ->
        ((fst h || ((snd h) y)), snd h)
        :: (update y t)
  end in

(* Ensure sampled y hits at most one 
   interval *)
let non_overlap = 
  fix non_overlap y := fun seen -> fun ys ->
    match ys with
      | nil -> true
      | h :: t ->
          let hit = (snd h) y in
          if hit && seen then false
          else non_overlap y (seen || hit) t
    end in
    \end{lstlisting}
    \end{minipage}
    \hfill
    \hspace{-4pt}%
    \begin{minipage}[t]{0.46\textwidth}
    \vspace{0pt}
    \begin{lstlisting}[aboveskip=0.4em,belowskip=0.4em,
      language=OCaml,
      escapechar=! ,
      numbers=right,
      stepnumber=1,
      firstnumber=last,
      numberstyle=\color{gray}\scriptsize,
      numbersep=7pt,
      basicstyle=\ttfamily\footnotesize,
      commentstyle=\color{ForestGreen}\itshape]
(* Use the sampled value to update flags *)
let step = fun y -> fun ys ->
  update y ys in

(* Number of samples determined by
   unbounded recursion *)
let cover = fix cover st :=
  if uniform(0,1) < 0.1 then st
  else
    let y = gaussian(0,1) in
    let _ = observe(non_overlap y false st) in
    cover (step y st) in

(* Have all intervals been covered
   at least once? *)
let is_covered = fix is_covered ys :=
  match ys with
    | nil -> true
    | h :: t -> (fst h) && (is_covered t)
  end in

(* Run the process and query whether
   the whole interval was covered *)
let final = cover init in
  is_covered final
    \end{lstlisting}
    \end{minipage}%
    \hspace{4pt}
    
\caption{An infinite stochastic process for random interval covering over $[0,1]$, involving conditioning, continuous sampling, higher-order recursion, and lists.}
    \label{fig:app:epsball-observe}
\end{figure}

\begin{figure}[t]
    \centering
    \begin{minipage}[t]{0.46\textwidth}
    \vspace{0pt}
    \begin{lstlisting}[aboveskip=0.4em,belowskip=0.4em,
      language=OCaml,
      escapechar=!,
      numbers=right,
      stepnumber=1,
      numberstyle=\color{gray}\scriptsize,
      numbersep=7pt,
      basicstyle=\ttfamily\footnotesize,
      commentstyle=\color{ForestGreen}\itshape]
(* Returns true if y is in [lo,hi) *)
let within = fun lo -> fun hi -> fun y ->
  (y >= lo) && (y < hi) in 

(* Function for sampling interval bounds *)
let b = fun _ ->
  discrete(0.18: 0.05, 0.11: 0.24,
          0.21: 0.37, 0.09: 0.58,
          0.41: 0.89) in

(* Initialize intervals by sampling their
   respective (possibly overlapping) bounds *)
let init =
  (false, within 1 (b ())) ::
  (false, within (b ()) (b ())) ::
  (false, within (b ()) (b ())) ::
  (false, within (b ()) 8) ::
  nil in 

(* Update flags depending on y falling 
   within the respective interval *)
let update = fix update y := fun ys ->
  match ys with
    | nil -> nil
    | h :: t ->
        ((fst h || ((snd h) y)), snd h)
        :: (update y t)
  end in

(* Ensure sampled y hits at most one 
   interval *)
let non_overlap = 
  fix non_overlap y := fun seen -> fun ys ->
    match ys with
      | nil -> true
      | h :: t ->
          let hit = (snd h) y in
          if hit && seen then false
          else non_overlap y (seen || hit) t
    end in
    \end{lstlisting}
    \end{minipage}
    \hfill
    \hspace{-4pt}%
    \begin{minipage}[t]{0.46\textwidth}
    \vspace{0pt}
    \begin{lstlisting}[aboveskip=0.4em,belowskip=0.4em,
      language=OCaml,
      escapechar=! ,
      numbers=right,
      stepnumber=1,
      firstnumber=last,
      numberstyle=\color{gray}\scriptsize,
      numbersep=7pt,
      basicstyle=\ttfamily\footnotesize,
      commentstyle=\color{ForestGreen}\itshape]
(* Use the sampled value to update flags *)
let step = fun y -> fun ys ->
  update y ys in

(* Number of samples determined by
   unbounded recursion *)
let cover = fix cover st :=
  if discrete(0.1, 0.9) < 1 then st
  else
    let y = discrete(0.5, 0.0199388, 
                    0.0748961, 0.0494739, 
                    0.0747339, 0.0632619, 
                    0.0309625, 0.0280777, 
                    0.158655) in
    let _ = observe(non_overlap y false st) in
    cover (step y st) in

(* Have all intervals been covered
   at least once? *)
let is_covered = fix is_covered ys :=
  match ys with
    | nil -> true
    | h :: t -> (fst h) && (is_covered t)
  end in

(* Run the process and query whether
   the whole interval was covered *)
let final = cover init in
  is_covered final
    \end{lstlisting}
    \end{minipage}%
    \hspace{4pt}
    
    \caption{A \emph{discretized} infinite stochastic process for random interval covering over $[0,1]$, involving conditioning, continuous sampling, higher-order recursion, and lists.}
    \label{fig:app:epsball-observe:disc}
\end{figure}

\subsection{Bloom Filter}\label{ap:casestudy:bloom}
The full program for the case study from \Cref{sec:casestudy:bloom} is shown in \Cref{fig:app:bloom-filter}. The discretized program is shown in \Cref{fig:app:bloom-filter:disc}.

\begin{figure}[t]
    \centering
    \begin{minipage}[t]{0.46\textwidth}
    \vspace{0pt}
    \begin{lstlisting}[aboveskip=0.4em,belowskip=0.4em,
      language=OCaml,
      escapechar=!,
      numbers=right,
      stepnumber=1,
      numberstyle=\color{gray}\scriptsize,
      numbersep=7pt,
      basicstyle=\ttfamily\footnotesize,
      commentstyle=\color{ForestGreen}\itshape]
(* Set bf[idx] := true, where idx is 0..3 and
   bf = (b0, (b1, (b2, b3))) *)
let set_true = fun bf -> fun idx ->
  let b0 = fst bf in
  let tail1 = snd bf in
  let b1 = fst tail1 in
  let tail2 = snd tail1 in
  let b2 = fst tail2 in
  let b3 = snd tail2 in
  if idx <= 0 then
    (true, (b1, (b2, b3)))
  else if idx <= 1 then
    (b0, (true, (b2, b3)))
  else if idx <= 2 then
    (b0, (b1, (true, b3)))
  else
    (b0, (b1, (b2, true))) in

(* Get bf[idx], where idx is 0..3 *)
let get = fun bf -> fun idx ->
  let b0 = fst bf in
  let tail1 = snd bf in
  let b1 = fst tail1 in
  let tail2 = snd tail1 in
  let b2 = fst tail2 in
  let b3 = snd tail2 in
  if idx <= 0 then b0
  else if idx <= 1 then b1
  else if idx <= 2 then b2
  else b3 in

(* Insert element q into bf using hash
   functions h1 and h2 *)
let insert = fun q -> fun h1 
            -> fun h2 -> fun bf ->
  let index1 = h1 q in
  let index2 = h2 q in
  let bf1 = set_true bf index1 in
  let bf2 = set_true bf1 index2 in
  bf2 in

(* Search for q in bf using hash
   functions h1 and h2 *)
let search = fun q -> fun h1 
            -> fun h2 -> fun bf ->
  let index1 = h1 q in
  let index2 = h2 q in
  let b01 = get bf index1 in
  let b02 = get bf index2 in
  b01 && b02 in
    \end{lstlisting}
    \end{minipage}
    \hfill
    \hspace{-4pt}%
    \begin{minipage}[t]{0.46\textwidth}
    \vspace{0pt}
    \begin{lstlisting}[aboveskip=0.4em,belowskip=0.4em,
      language=OCaml,
      escapechar=!,
      numbers=right,
      stepnumber=1,
      firstnumber=last,
      numberstyle=\color{gray}\scriptsize,
      numbersep=7pt,
      basicstyle=\ttfamily\footnotesize,
      commentstyle=\color{ForestGreen}\itshape]
(* Hash functions *)
let h1 = fun q ->
  if q <= -2 then 0
  else if -2 < q && q <= 0 then 1
  else if 0 < q && q <= 2 then 2
  else 3 in

let h2 = fun q ->
  if q <= -2 then 3
  else if -2 < q && q <= 0 then 2
  else if 0 < q && q <= 2 then 1
  else 0 in

(* Test whether the Bloom filter is full *)
let full = fun bf ->
  let b0 = fst bf in
  let t1 = snd bf in
  let b1 = fst t1 in
  let t2 = snd t1 in
  let b2 = fst t2 in
  let b3 = snd t2 in
  b0 && b1 && b2 && b3 in

(* Initial, empty Bloom filter *)
let init = (false, (false, (false, false))) in

(* Query key *)
let key = gaussian(0,1) in

(* Insert gaussian samples until bf is full
   or the key is reported present *)
let search_before_full = fix loop bf :=
  if full bf then bf
  else if search key h1 h2 bf then bf
  else
    let q = gaussian(0,1) in
    loop (insert q h1 h2 bf) in

(* Query whether search terminated before
   the filter became full *)
let bf_final = search_before_full init in
  not (full bf_final)
    \end{lstlisting}
    \end{minipage}%
    \hspace{4pt}
    
    \caption{A 4-bit Bloom filter program with continuous samples, locality-sensitive hash functions, and search and insert functions.}
    \label{fig:app:bloom-filter}
\end{figure}

\begin{figure}[t]
    \centering
    \begin{minipage}[t]{0.46\textwidth}
    \vspace{0pt}
    \begin{lstlisting}[aboveskip=0.4em,belowskip=0.4em,
      language=OCaml,
      escapechar=!,
      numbers=right,
      stepnumber=1,
      numberstyle=\color{gray}\scriptsize,
      numbersep=7pt,
      basicstyle=\ttfamily\footnotesize,
      commentstyle=\color{ForestGreen}\itshape]
(* Set bf[idx] := true, where idx is 0..3 and
   bf = (b0, (b1, (b2, b3))) *)
let set_true = fun bf -> fun idx ->
  let b0 = fst bf in
  let tail1 = snd bf in
  let b1 = fst tail1 in
  let tail2 = snd tail1 in
  let b2 = fst tail2 in
  let b3 = snd tail2 in
  if idx <= 0 then
    (true, (b1, (b2, b3)))
  else if idx <= 1 then
    (b0, (true, (b2, b3)))
  else if idx <= 2 then
    (b0, (b1, (true, b3)))
  else
    (b0, (b1, (b2, true))) in

(* Get bf[idx], where idx is 0..3 *)
let get = fun bf -> fun idx ->
  let b0 = fst bf in
  let tail1 = snd bf in
  let b1 = fst tail1 in
  let tail2 = snd tail1 in
  let b2 = fst tail2 in
  let b3 = snd tail2 in
  if idx <= 0 then b0
  else if idx <= 1 then b1
  else if idx <= 2 then b2
  else b3 in

(* Insert element q into bf using hash
   functions h1 and h2 *)
let insert = fun q -> fun h1 
            -> fun h2 -> fun bf ->
  let index1 = h1 q in
  let index2 = h2 q in
  let bf1 = set_true bf index1 in
  let bf2 = set_true bf1 index2 in
  bf2 in

(* Search for q in bf using hash
   functions h1 and h2 *)
let search = fun q -> fun h1 
            -> fun h2 -> fun bf ->
  let index1 = h1 q in
  let index2 = h2 q in
  let b01 = get bf index1 in
  let b02 = get bf index2 in
  b01 && b02 in
    \end{lstlisting}
    \end{minipage}
    \hfill
    \hspace{-4pt}%
    \begin{minipage}[t]{0.46\textwidth}
    \vspace{0pt}
    \begin{lstlisting}[aboveskip=0.4em,belowskip=0.4em,
      language=OCaml,
      escapechar=!,
      numbers=right,
      stepnumber=1,
      firstnumber=last,
      numberstyle=\color{gray}\scriptsize,
      numbersep=7pt,
      basicstyle=\ttfamily\footnotesize,
      commentstyle=\color{ForestGreen}\itshape]
(* Hash functions *)
let h1 = fun q ->
  if q <= 0 then 0
  else if 0 < q && q <= 1 then 1
  else if 1 < q && q <= 2 then 2
  else 3 in

let h2 = fun q ->
  if q <= 0 then 3
  else if 0 < q && q <= 1 then 2
  else if 1 < q && q <= 2 then 1
  else 0 in

(* Test whether the Bloom filter is full *)
let full = fun bf ->
  let b0 = fst bf in
  let t1 = snd bf in
  let b1 = fst t1 in
  let t2 = snd t1 in
  let b2 = fst t2 in
  let b3 = snd t2 in
  b0 && b1 && b2 && b3 in

(* Initial, empty Bloom filter *)
let init = (false, (false, (false, false))) in

(* Query key *)
let key = discrete(0.0227501, 
                    0.47725, 
                    0.47725, 
                    0.0227501) in

(* Insert discretized gaussian samples until 
    bf is full or the key is reported
    present *)
let search_before_full = fix loop bf :=
  if full bf then bf
  else if search key h1 h2 bf then bf
  else
    let q = discrete(0.0227501, 0.47725, 
                    0.47725, 0.0227501) in
    loop (insert q h1 h2 bf) in

(* Query whether search terminated before
   the filter became full *)
let bf_final = search_before_full init in
  not (full bf_final)
    \end{lstlisting}
    \end{minipage}%
    \hspace{4pt}
    
    \caption{A \emph{discretized} 4-bit Bloom filter program with continuous samples, locality-sensitive hash functions, and search and insert functions.}
    \label{fig:app:bloom-filter:disc}
\end{figure}
\clearpage
\section{List of Supported \Slicebf\xspace Distributions}\label{appen:supported-distributions}
The list of distributions supported by \Slice{}, along with their probability density / mass functions, is given in~\Cref{fig:implemented-distribution-laws}. \Slice{} supports discrete distributions by applying the same cut-set and value-set analysis used for continuous distributions, \emph{provided that the discrete distribution admits a well-defined CDF}. The extension to discrete distributions does not introduce any fundamental changes to the discretization algorithm.

\begin{figure}[t]
\small
\setlength{\tabcolsep}{4pt}
\renewcommand{\arraystretch}{1.05}
\begin{tabularx}{\textwidth}{@{}>{\raggedright\arraybackslash}p{.18\textwidth} >{\raggedright\arraybackslash}p{.35\textwidth} >{\raggedright\arraybackslash}p{.14\textwidth} >{\raggedright\arraybackslash}p{.27\textwidth}@{}}
    \toprule
    \textbf{Source constructor} & \textbf{Density or PMF} & \textbf{Support} & \textbf{Parameter domain and convention} \\
    \midrule
    \multicolumn{4}{@{}l}{\textit{Continuous distributions}} \\
    \addlinespace[2pt]
    \texttt{exponential($\lambda$)} & $\lambda e^{-\lambda x}$ & $[0,\infty)$ & Rate $\lambda>0$ \\
    \texttt{uniform($a,b$)} & $\frac{1}{b-a}\mathbf{1}_{[a,b)}(x)$ & $[a,b]$ & Endpoints $a<b$ \\
    \texttt{laplace($b$)} & $\frac{1}{2b}e^{-|x|/b}$ & $\mathbb R$ & Centered; scale $b>0$ \\
    \texttt{gaussian($\mu,\sigma$)} & $\frac{1}{\sqrt{2\pi}\sigma}e^{-(x-\mu)^2/(2\sigma^2)}$ & $\mathbb R$ & Location $\mu\in\mathbb R$; standard deviation $\sigma>0$ \\
    \texttt{cauchy($\gamma$)} & $\frac{\gamma}{\pi(x^2+\gamma^2)}$ & $\mathbb R$ & Centered; scale $\gamma>0$ \\
    \texttt{beta($\alpha,\beta$)} & $\frac{x^{\alpha-1}(1-x)^{\beta-1}}{B(\alpha,\beta)}$ & $[0,1]$ & Shapes $\alpha,\beta>0$ \\
    \texttt{tdist($\nu$)} & $\frac{\Gamma((\nu+1)/2)}{\sqrt{\nu\pi}\,\Gamma(\nu/2)}(1+x^2/\nu)^{-(\nu+1)/2}$ & $\mathbb R$ & Centered; degrees of freedom $\nu>0$ \\
    \texttt{lognormal($\mu,\sigma$)} & $\frac{1}{x\sigma\sqrt{2\pi}}e^{-(\log x-\mu)^2/(2\sigma^2)}$ & $[0,\infty)$ & Log-location $\mu\in\mathbb R$; log-scale $\sigma>0$ \\
    \texttt{chi2($\nu$)} & $\frac{x^{\nu/2-1}e^{-x/2}}{2^{\nu/2}\Gamma(\nu/2)}$ & $[0,\infty)$ & Real degrees of freedom $\nu>0$ \\
    \texttt{gamma($\alpha,\theta$)} & $\frac{x^{\alpha-1}e^{-x/\theta}}{\Gamma(\alpha)\theta^\alpha}$ & $[0,\infty)$ & Shape $\alpha>0$; scale $\theta>0$ \\
    \texttt{logistic($s$)} & $\frac{e^{-x/s}}{s(1+e^{-x/s})^2}$ & $\mathbb R$ & Centered; scale $s>0$ \\
    \texttt{pareto($a,b$)} & $a b^a x^{-a-1}$ & $[b,\infty)$ & GSL shape $a>0$; threshold/scale $b>0$ \\
    \texttt{rayleigh($\sigma$)} & $\frac{x}{\sigma^2}e^{-x^2/(2\sigma^2)}$ & $[0,\infty)$ & Scale $\sigma>0$ \\
    \texttt{weibull($a,b$)} & $\frac{b}{a}(x/a)^{b-1}e^{-(x/a)^b}$ & $[0,\infty)$ & GSL scale $a>0$; exponent $b>0$ \\
    \texttt{gumbel1($a,b$)} & $a b\,e^{-(b e^{-a x}+a x)}$ & $\mathbb R$ & GSL parameters $a,b>0$ \\
    \texttt{gumbel2($a,b$)} & $a b\,x^{-a-1}e^{-b x^{-a}}$ & $[0,\infty)$ & GSL shape $a>0$; coefficient $b>0$ \\
    \texttt{exppow($a,b$)} & $\frac{1}{2a\Gamma(1+1/b)}e^{-(|x|/a)^b}$ & $\mathbb R$ & Scale $a>0$; exponent $b>0$ \\
    \midrule
    \multicolumn{4}{@{}l}{\textit{Discrete distributions}} \\
    \addlinespace[2pt]
    \texttt{poisson($\mu$)} & $e^{-\mu}\mu^k/k!$ & $\mathbb N_0$ if $\mu>0$; $\{0\}$ if $\mu=0$ & Mean $\mu\geq0$ \\
    \texttt{binomial($p,n$)} & $\binom nk p^k(1-p)^{n-k}$ & $\{0,\ldots,n\}$, with endpoint degeneracies & $0\leq p\leq1$; integer $0\leq n\leq\min(2^{32}-1,\texttt{max\_int})$ \\
    \bottomrule
\end{tabularx}
\caption{List of supported distributions.
The binomial support is $\{0\}$ when $n=0$ or $p=0$, and $\{n\}$ when
$p=1$; otherwise it is $\{0,\ldots,n\}$.}
\label{fig:implemented-distribution-laws}
\end{figure}

\section{Typing Rules for Full \Slicebf{} Language}\label{appen:typing-rules}

The typing rules for general language constructs is given in \Cref{fig:typing-general-all}.

\begin{figure}
\begin{mathpar}
    \inferrule[\textsc{Var}]
    {\ }
    {\Gamma, x: \tau \vdash x : \tau}

    \inferrule[\textsc{Let}]
    {\Gamma \vdash e_1 : \tau_1 \\
     \Gamma, x: \tau_1 \vdash e_2 : \tau_2}
    {\Gamma \vdash \letkw \; x = e_1 \; \inkw \; e_2 : \tau_2}

    \inferrule[\textsc{If}]
    {\Gamma \vdash e_1 : \bool \\
     \Gamma \vdash e_2 : \tau \\
     \Gamma \vdash e_3 : \tau}
    {\Gamma \vdash \ifkw \; e_1 \; \thenkw \; e_2 \; \elsekw \; e_3 : \tau}

    \inferrule[\textsc{Discrete}]
    {n \geq 1 \\
     \forall i \in \{0,\ldots,n-1\}.\; p_i \geq 0 \\
     \sum_{i=0}^{n-1} p_i = 1}
    {\Gamma \vdash \discrete(p_0, \ldots, p_{n-1}) : \fin{n}}

    \inferrule[\textsc{LessEq}]
    {\Gamma \vdash e : \intty}
    {\Gamma \vdash e \leq i : \bool}

    \inferrule[\textsc{Pair}]
    {\Gamma \vdash e_1 : \tau_1 \\
     \Gamma \vdash e_2 : \tau_2}
    {\Gamma \vdash (e_1, e_2) : \tau_1 * \tau_2}

    \inferrule[\textsc{Fst}]
    {\Gamma \vdash e : \tau_1 * \tau_2}
    {\Gamma \vdash \fstkw \; e : \tau_1}

    \inferrule[\textsc{Snd}]
    {\Gamma \vdash e : \tau_1 * \tau_2}
    {\Gamma \vdash \sndkw \; e : \tau_2}

    \inferrule[\textsc{Fun}]
    {\Gamma, x: \tau_1 \vdash e : \tau_2}
    {\Gamma \vdash \funkw \; x \; \rightarrow \; e : \tau_1 \rightarrow \tau_2}

    \inferrule[\textsc{App}]
    {\Gamma \vdash e_1 : \tau_1 \rightarrow \tau_2 \\
     \Gamma \vdash e_2 : \tau_1}
    {\Gamma \vdash e_1 \; e_2 : \tau_2}

    \inferrule[\textsc{True}]
    {\ }
    {\Gamma \vdash \text{true} : \bool}

    \inferrule[\textsc{False}]
    {\ }
    {\Gamma \vdash \text{false} : \bool}

    \inferrule[\textsc{And}]
    {\Gamma \vdash e_1 : \bool \\
     \Gamma \vdash e_2 : \bool}
    {\Gamma \vdash e_1 \logand e_2 : \bool}

    \inferrule[\textsc{Or}]
    {\Gamma \vdash e_1 : \bool \\
     \Gamma \vdash e_2 : \bool}
    {\Gamma \vdash e_1 \logor e_2 : \bool}

    \inferrule[\textsc{Not}]
    {\Gamma \vdash e : \bool}
    {\Gamma \vdash \text{not}\; e : \bool}

    \inferrule[\textsc{Unit}]
    {\ }
    {\Gamma \vdash () : \unit}

    \inferrule[\textsc{FinConst}]
    {0 \leq k < n}
    {\Gamma \vdash \finconst{k}{n} : \fin{n}}

    \inferrule[\textsc{FinLess}]
    {\Gamma \vdash e_1 : \fin{n} \\
     \Gamma \vdash e_2 : \fin{n}}
    {\Gamma \vdash e_1 \finlt{n} e_2 : \bool}

    \inferrule[\textsc{FinLessEq}]
    {\Gamma \vdash e_1 : \fin{n} \\
     \Gamma \vdash e_2 : \fin{n}}
    {\Gamma \vdash e_1 \finleq{n} e_2 : \bool}

    \inferrule[\textsc{Observe}]
    {\Gamma \vdash e : \bool}
    {\Gamma \vdash \observekw\; e : \unit}

    \inferrule[\textsc{Seq}]
    {\Gamma \vdash e_1 : \tau_1 \\
     \Gamma \vdash e_2 : \tau_2}
    {\Gamma \vdash e_1; e_2 : \tau_2}

    \inferrule[\textsc{Fix}]
    {\Gamma, f: \tau_1 \rightarrow \tau_2, x: \tau_1 \vdash e : \tau_2}
    {\Gamma \vdash \text{fix}\; f\; x := e : \tau_1 \rightarrow \tau_2}

    \inferrule[\textsc{Nil}]
    {\ }
    {\Gamma \vdash \text{nil} : \listty{\tau}}

    \inferrule[\textsc{Cons}]
    {\Gamma \vdash e_1 : \tau \\
     \Gamma \vdash e_2 : \listty{\tau}}
    {\Gamma \vdash e_1 :: e_2 : \listty{\tau}}

    \inferrule[\textsc{Match}]
    {\Gamma \vdash e : \listty{\tau_1} \\
     \Gamma \vdash e_1 : \tau_2 \\
     \Gamma, h: \tau_1, t: \listty{\tau_1} \vdash e_2 : \tau_2}
    {\Gamma \vdash \text{match}\; e\; \text{with}\; \text{nil} \rightarrow e_1 \mid h :: t \rightarrow e_2\; \text{end} : \tau_2}




    \inferrule[\textsc{Diverge}]
    {\ }
    {\Gamma \vdash \diverge : \tau}
\end{mathpar}
\caption{Typing rules for general language constructs}
\label{fig:typing-general-all}
\end{figure}



\clearpage
\section{Discretization Proof}
\subsection{Well-Typedness of Discretization}

\begin{proof}[Proof of~\Cref{lem:type-transformations}]\label{proof:type-transformations}
  \setlength{\parskip}{0.3em plus 0.1em}
  We proceed by induction on the structure of the typing derivation $\Gamma \vdash e : \tau$, matching each case of the discretization function $\discretize{\cdot}$ defined in \Cref{fig:discretization-code}.

  \par\medskip\noindent
  \textbf{Case} $e = x$.
  \par\nopagebreak
  Since $\Gamma(x) = \tau$, we have $\discretize{\Gamma}(x) = \discretize{\tau}$. Moreover, $\discretize{x : \tau} = x$, so $\discretize{\Gamma} \vdash x : \discretize{\tau}$.

  \par\medskip\noindent
  \textbf{Case} $e = (c : \float[B;V])$.
  \par\nopagebreak
  The \textsc{Float-Const} rule was applied, so $\has{V}{c}$ holds.
  \begin{itemize}[leftmargin=1.5em, itemsep=0.35em, topsep=0.35em]
      \item If $B = \top$: then $\discretize{c : \float[B;V]} = c$ and $\discretize{\float[B;V]} = \float[B;V]$. Since $\has{V}{c}$ still holds, $c$ is well-typed at $\float[B;V]$ by \textsc{Float-Const}.
      \item If $B \neq \top$: then $\discretize{c : \float[B;V]} = \finconst{k}{|B|+1}$ for the unique $k$ such that $c \in \intervals{B}_k$, and $\discretize{\float[B;V]} = \fin{|B|+1}$. Since $0 \leq k < |B|+1$, the finite constant $\finconst{k}{|B|+1}$ is well-typed at $\fin{|B|+1}$ by \textsc{FinConst}.
  \end{itemize}

  \par\medskip\noindent
  \textbf{Case} $e = (\letkw\; x = e_1 : \tau_1 \;\inkw\; e_2 : \tau_2)$.
  \par\nopagebreak
  By the inductive hypothesis,
  \[
      \discretize{\Gamma} \vdash \discretize{e_1 : \tau_1} : \discretize{\tau_1}
      \quad\text{and}\quad
      \discretize{\Gamma, x : \tau_1} \vdash \discretize{e_2 : \tau_2} : \discretize{\tau_2}.
  \]
  Note that $\discretize{\Gamma, x : \tau_1} = \discretize{\Gamma}, x : \discretize{\tau_1}$. Applying \textsc{Let} gives
  \[
      \discretize{\Gamma} \vdash \letkw\; x = \discretize{e_1 : \tau_1} \;\inkw\; \discretize{e_2 : \tau_2} : \discretize{\tau_2}.
  \]

  \par\medskip\noindent
  \textbf{Case} $e = (\ifkw\; e_1 : \bool \;\thenkw\; e_2 : \tau \;\elsekw\; e_3 : \tau)$.
  \par\nopagebreak
  By the inductive hypothesis, $\discretize{\Gamma} \vdash \discretize{e_i : \tau_i} : \discretize{\tau_i}$ for $i = 1, 2, 3$, where $\tau_1 = \bool$ and $\tau_2 = \tau_3 = \tau$. Since $\discretize{\bool} = \bool$, applying \textsc{If} gives
  \[
      \discretize{\Gamma} \vdash \ifkw\; \discretize{e_1 : \bool} \;\thenkw\; \discretize{e_2 : \tau} \;\elsekw\; \discretize{e_3 : \tau} : \discretize{\tau}.
  \]

  \par\medskip\noindent
  \textbf{Case} $e = (e_1 : \float[B; V_1] < e_2 : \float[B; V_2])$.
  \par\nopagebreak
  The \textsc{Less} rule was applied, so both subexpressions share the same cut set $B$ and $\answerslt{B}{V_1}{V_2}$ holds. By the inductive hypothesis, $\discretize{\Gamma} \vdash \discretize{e_i : \float[B;V_i]} : \discretize{\float[B;V_i]}$ for $i = 1, 2$.
  \begin{itemize}[leftmargin=1.5em, itemsep=0.35em, topsep=0.35em]
      \item If $B = \top$: then $\discretize{\float[B;V_i]} = \float[B;V_i]$ for both $i$, and the standard float comparison gives type $\bool = \discretize{\bool}$.
      \item If $B \neq \top$: then both subexpressions are discretized to type $\fin{|B|+1}$, and the finite comparison $<_{\#(|B|+1)}$ is well-typed at $\bool = \discretize{\bool}$ by \textsc{FinLess}.
  \end{itemize}

  \par\medskip\noindent
  \textbf{Case} $e = (e_1 : \float[B; V_1] \leq e_2 : \float[B; V_2])$.
  \par\nopagebreak
  Analogous to the $<$ case, using $\answersleq{B}{V_1}{V_2}$ and \textsc{FinLessEq}.

  \par\medskip\noindent
  \textbf{Case} $e = (\uniform(e_1 : \float[B_1; V_1],\; e_2 : \float[B_2; V_2]) : \float[B; \top])$.
  \par\nopagebreak
  The \textsc{Continuous-2} rule was applied, so $\distinguishes{B_1}{V_1}$ and $\distinguishes{B_2}{V_2}$ hold. By the inductive hypothesis, $\discretize{\Gamma} \vdash \discretize{e_i : \float[B_i;V_i]} : \discretize{\float[B_i;V_i]}$ for $i = 1, 2$. Also, $B_i \equiv_{\top} B$, so $B_i=\top$ iff $B=\top$. We consider each sub-case of the discretization function:
  \begin{itemize}[leftmargin=1.5em, itemsep=0.5em, topsep=0.35em]
      \item $e_1, e_2 \in \mathbb{R}$ and $e_1 \geq e_2$: then $\discretize{\uniform(\ldots) : \float[B;\top]} = \diverge$, which is well-typed at any type by \textsc{Diverge}.
  
      \item $B = \top$: then $\discretize{\float[B;\top]} = \float[B;\top]$, and
      \[
          \discretize{\uniform(e_1, e_2) : \float[B;\top]} = \uniform\!\bigl(\discretize{e_1 : \float[B_1;V_1]},\, \discretize{e_2 : \float[B_2;V_2]}\bigr).
      \]
	      Here $B_1=B_2=\top$, so the inductive hypotheses give float-typed arguments; \textsc{Continuous-2} yields $\float[\top;\top]$.
  
      \item $B \neq \top$ and $e_1, e_2 \in \mathbb{R}$: then $\discretize{\float[B;\top]} = \fin{|B|+1}$, and
      \[
          \discretize{\uniform(e_1, e_2) : \float[B;\top]} = \discrete(p_0, \ldots, p_{|B|})
      \]
      where $p_k = \mathrm{CDF}(b_{k+1}) - \mathrm{CDF}(b_k)$. Each $p_k \in [0,1]$ and $\sum_k p_k = 1$, so this is well-typed at $\fin{|B|+1}$ by \textsc{Discrete}.
  
      \item If $B,B_1,B_2 \neq \top$ and $V_1=\emptyset$ or $V_2=\emptyset$, then the enumeration over $V_1\times V_2$ is empty, so the generated expression evaluates $\discretize{e_1}$ and $\discretize{e_2}$ through its two outer $\letkw$-bindings and then yields $\diverge$; the bindings are well-typed by the inductive hypotheses and weakening, and their body is well-typed at $\fin{|B|+1}$ by \textsc{Diverge}.

      \item $B,B_1,B_2 \neq \top$, $V_1,V_2\neq\emptyset$, and
      $e_1\notin\mathbb{R}$ or $e_2\notin\mathbb{R}$: the algorithm generates conditionals over all pairs $(v_i, w_j)$, each branch dispatching to a constant parameter in $\uniform$. The inductive hypotheses and weakening type the two outer $\letkw$-bindings. In each guard, both operands of each finite equality have the same finite type; since finite equality is desugared using \textsc{FinLessEq}, each equality has type $\bool$, as does their conjunction. By a previous subcase, each branch $\discretize{\uniform(v_i, w_j) : \float[B;\top]}$ is well-typed at $\fin{|B|+1} = \discretize{\float[B;\top]}$. The final $\elsekw\;\diverge$ branch has the same type by \textsc{Diverge}; hence the conditional structure is well-typed by \textsc{If} (as in the $\ifkw$ case above), and the full expression is well-typed at $\discretize{\float[B;\top]}$.
  \end{itemize}

  \par\medskip\noindent
  \textbf{Case} a one-argument continuous distribution. Analogous, using $B_1 \equiv_{\top} B$.

  \par\medskip\noindent
  \textbf{Case} $e$ is any other language construct (functions, pairs, lists, \emph{etc.}).
  \par\nopagebreak
  By definition of $\discretize{\cdot}$, the discretization recurses structurally over $e$ and $\tau$. That is, $\discretize{e : \tau}$ is formed by applying $\discretize{\cdot}$ to each subexpression and subtype of $e : \tau$, and $\discretize{\tau}$ is formed by applying $\discretize{\cdot}$ to each component type of $\tau$. The result then follows immediately by applying the inductive hypothesis to each subterm, analogous to the above cases.

  \par\medskip\noindent
  In every case, the typing derivation is preserved under $\discretize{\cdot}$. Therefore,
  \[
      \discretize{\Gamma} \vdash \discretize{e : \tau} : \discretize{\tau}. \qedhere
  \]
\end{proof}

\section{Soundness Proof}
\label{sec:app:soundness_new}

\subsection{Measure Theory Preliminaries}

We first give a brief overview of basic measure-theoretic terms and lemmas, and refer the reader to standard textbooks for further background~\cite{SteinShakarchi2005RealAnalysis,Tao2011MeasureTheory}.

\paragraph{Topological space} A topological space is a pair $(X, \mathcal{T})$ where $\mathcal{T}$ is a collection of subsets of $X$, called open sets, such that $\emptyset, X \in \mathcal{T}$, and $\mathcal{T}$ is closed under arbitrary unions and finite intersections.

\paragraph{$\sigma$-algebra} A $\sigma$-algebra on a set $X$ is a collection $\Sigma$ of subsets of $X$ that includes $X$ itself and is closed under complement and under countable unions. By implication, a $\sigma$-algebra includes the empty set $\emptyset$ and is closed under countable intersections. 
Given a topological space $(X, \mathcal{T})$, the Borel $\sigma$-algebra $\mathcal{B}(X)$ is the smallest $\sigma$-algebra containing all open sets in $\mathcal{T}$.

\paragraph{Measurable space} A measurable space is a pair $(X,\Sigma)$ consisting of a set $X$ and a $\sigma$-algebra $\Sigma$ on $X$.
When $\Sigma$ is the Borel $\sigma$-algebra on a topological space $X$, we call $(X, \Sigma)$ a Borel measurable space.

\paragraph{Measure} Let $(X,\Sigma)$ be a measurable space. A function $\mu : \Sigma \to [0,\infty]$ is called a measure if it satisfies the following properties: nonnegativity, $\mu(\emptyset) = 0$, and countable additivity.
When $\Sigma$ is a Borel $\sigma$-algebra, $\mu$ is called a Borel measure.

\paragraph{Measure space} The triple $(X, \Sigma, \mu)$ is called a measure space.
 When $\Sigma$ is the Borel $\sigma$-algebra and $\mu$ is a Borel measure, $(X, \Sigma, \mu)$ is called a Borel measure space.

\paragraph{Subprobability measure} Let $(X, \Sigma, \mu)$ be a measure space. When $\mu(X) \leq 1$, $\mu$ is called a (Borel) subprobability measure. When $\mu(X) = 1$, $\mu$ is called a (Borel) probability measure and $(X, \Sigma, \mu)$ a (Borel) probability space.

\paragraph{Product measurable space} The product of two measurable spaces $(X_1, \Sigma_1)$ and $(X_2, \Sigma_2)$ is also a measurable space, namely $(X_1 \times X_2, \Sigma_1 \times \Sigma_2)$, where $\Sigma_1 \times \Sigma_2$ is the $\sigma$-algebra generated by sets of the form $A_1 \times A_2$ with $A_1 \in \Sigma_1$ and $A_2 \in \Sigma_2$.

\paragraph{Generated $\sigma$-algebra}
Let $X$ be a set and let $\mathcal{G} \subseteq \mathcal{P}(X)$ be a collection of subsets of $X$. The $\sigma$-algebra generated by $\mathcal{G}$, denoted $\sigma(\mathcal{G})$, is the smallest $\sigma$-algebra on $X$ that contains $\mathcal{G}$. Equivalently, it is the intersection of all $\sigma$-algebras on $X$ that contain $\mathcal{G}$.

\paragraph{Subprobability distribution} Let $(X,\Sigma)$ be a measurable space. We write $\mathcal{D}(X)$ for the set of all subprobability measures on $(X,\Sigma)$.

\paragraph{Dirac measure} 
Let $(X,\Sigma)$ be a measurable space and let $x \in X$. The
	\emph{Dirac measure} at $x$ is the probability measure
	\[
		\delta_x : \Sigma \to [0,1]
	\]
	defined by
	\[
		\delta_x(A) =
		\begin{cases}
		1 & \text{if } x \in A \\
		0 & \text{if } x \notin A.
		\end{cases}
	\]

\paragraph{Measurable function} A function $f \colon X \to Y$ between measurable spaces $(X, \mathcal{F}_X)$ and $(Y, \mathcal{F}_Y)$ is called measurable if the preimage of any measurable set in $Y$ is measurable in $X$, i.e. if 
\[
	f^{-1}(A) = \{x \in X \mid f(x) \in A\} \in \mathcal{F}_X
\]
for all $A \in \mathcal{F}_Y$.

\begin{definition}[Subspace $\sigma$-algebra]\label{def:subspace-sigma-algebra}
	Let $(X, \Sigma)$ be a Borel measurable space and let $Y \subseteq X$. 
	The subspace (or trace) $\sigma$-algebra on $Y$ is defined as
	\[
	\Sigma_Y = \{ Y \cap A \mid A \in \Sigma \}.
	\]
	Then $(Y, \Sigma_Y)$ is a Borel measurable space.
\end{definition}

\begin{definition}[$\sigma$-algebra generated by a function]\label{def:pullback}
	Let $f \colon X \to Y$ be a function and let $(Y, \Sigma_Y)$ be a measurable space. The $\sigma$-algebra generated by $f$ is defined by:

	\[
		f^{-1}(\Sigma_Y) := \{f^{-1}(A) \mid A \in \Sigma_Y\}.
	\]

	Then $(X, f^{-1}(\Sigma_Y))$ is a measurable space.
\end{definition}

\begin{lemma}[Bijection induces isomorphism]\label{lem:pullback-iso}
	Let $f : X \to Y$ be a bijection and let $(Y,\Sigma_Y)$ be a measurable space.
	Equip $X$ with the $\sigma$-algebra generated by $f$:
	\[
	\Sigma_X \;:=\; f^{-1}(\Sigma_Y) \;=\; \{ f^{-1}(A) \mid A \in \Sigma_Y\}.
	\]

	Then,
	\begin{enumerate}
		\item $(X, \Sigma_X)$ is a measurable space, and
		\item $f$ induces an isomorphism of measurable spaces: 
		\[
			(X,\Sigma_X) \cong (Y,\Sigma_Y).
		\]
	\end{enumerate}
\end{lemma}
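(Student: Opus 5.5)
We verify the two claims directly from the definitions; the only tools needed are that preimages commute with set operations, and bijectivity of $f$.

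\emph{(1) $\Sigma_X$ is a $\sigma$-algebra.} We check the three axioms, each by rewriting a set operation on preimages as the preimage of a set operation in $Y$.
\begin{itemize}
\item $X = f^{-1}(Y)$ and $Y \in \Sigma_Y$, so $X \in \Sigma_X$.
\item If $B = f^{-1}(A)$ with $A \in \Sigma_Y$, then $X \setminus B = f^{-1}(Y \setminus A)$. Since $Y \setminus A \in \Sigma_Y$, we get $X \setminus B \in \Sigma_X$.
\item For a countable family $B_i = f^{-1}(A_i)$ with $A_i \in \Sigma_Y$, we have $\bigcup_i B_i = f^{-1}\bigl(\bigcup_i A_i\bigr)$. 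Since $\bigcup_i A_i \in \Sigma_Y$, the union lies in $\Sigma_X$.
\end{itemize}
These identities hold for arbitrary functions, so bijectivity is not needed for this part. This is exactly \Cref{def:pullback}, here verified explicitly.

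\emph{(2) $f$ is an isomorphism of measurable spaces.} We must show that $f$ is measurable and that its inverse $g := f^{-1}\colon Y \to X$, which exists because $f$ is a bijection, is also measurable.
\begin{itemize}
\item Measurability of $f$ is immediate: for every $A \in \Sigma_Y$, the preimage $f^{-1}(A)$ lies in $\Sigma_X$ by definition.
\item For $g$, take any $B \in \Sigma_X$ and write $B = f^{-1}(A)$ for some $A \in \Sigma_Y$. Then $g^{-1}(B) = f(B) = f\bigl(f^{-1}(A)\bigr)$.
\item Since $f$ is surjective, $f\bigl(f^{-1}(A)\bigr) = A$, which lies in $\Sigma_Y$. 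Hence $g$ is measurable.
\end{itemize}
So $f$ and $g$ are mutually inverse measurable maps, which gives the isomorphism $(X,\Sigma_X) \cong (Y,\Sigma_Y)$.

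As a consequence, $A \mapsto f^{-1}(A)$ is a bijection $\Sigma_Y \to \Sigma_X$ that preserves complements and countable unions. This is the form in which the lemma will be used to transport Borel structure onto skeleton hole spaces.

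There is no real obstacle here. The one point needing care is the measurability of $g$, which is the only step that uses surjectivity (through $f(f^{-1}(A)) = A$). Injectivity is used only to guarantee that the inverse $g$ exists.
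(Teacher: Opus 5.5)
Your proof is correct and follows the paper's argument step for step. You verify the $\sigma$-algebra axioms via the preimage identities for $Y$, complements, and countable unions, then get measurability of $f$ by definition and of $f^{-1}$ via $f(f^{-1}(A)) = A$; your remarks that part (1) needs no bijectivity and that part (2) uses surjectivity only through $f(f^{-1}(A)) = A$ are accurate refinements.
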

	
\begin{proof}
	We first verify that i) $(X, \Sigma_X)$ is a measurable space, then show that ii) $f$ induces an isomorphism of measurable spaces $(X,\Sigma_X) \cong (Y,\Sigma_Y)$.

	\medskip

	\emph{Proof of i)}
	We show that 
	$\Sigma_X = f^{-1}(\Sigma_Y)$ is a $\sigma$-algebra on $X$.

	\begin{itemize}
		\item \textbf{(Contains $X$):} $X = f^{-1}(Y) \in \Sigma_X$ since $Y \in \Sigma_Y$.
		\item \textbf{(Complements):} If $f^{-1}(A) \in \Sigma_X$, then 
			$X \setminus f^{-1}(A) = f^{-1}(Y \setminus A) \in \Sigma_X$ 
			since $Y \setminus A \in \Sigma_Y$.
		\item \textbf{(Countable unions):} If $f^{-1}(A_i) \in \Sigma_X$ for 
			$i \in \mathbb{N}$, then $\bigcup_i f^{-1}(A_i) = 
			f^{-1}\!\left(\bigcup_i A_i\right) \in \Sigma_X$ since 
			$\bigcup_i A_i \in \Sigma_Y$.
	\end{itemize}

	Hence $(X, \Sigma_X)$ is a measurable space.

	\medskip

	\emph{Proof of ii)}
	Two measurable spaces $(X,\Sigma_X)$ and $(Y,\Sigma_Y)$ are isomorphic iff there exists a bijection $f \colon X \to Y$ such that $f$ and $f^{-1}$ are measurable.

	We first show that $f$ is measurable. Measurability of $f$ means that $f^{-1}(A)\in \Sigma_X$ for every $A\in \Sigma_Y$. Let $A \in \Sigma_Y$ be arbitrary. By definition of $\Sigma_X$,
	\[
	f^{-1}(A) \in f^{-1}(\Sigma_Y) = \Sigma_X.
	\]
	It follows that $f$ is measurable.
	
	Next we show that $f^{-1}$ is measurable. Measurability of $f^{-1}$ means that $(f^{-1})^{-1}(B)\in \Sigma_Y$ for every $B\in \Sigma_X$. Let $B \in \Sigma_X$ be arbitrary. By definition of $\Sigma_X$, there exists some $A \in \Sigma_Y$ such that $B = f^{-1}(A)$.

	Observe that:
	\begin{align*}
		(f^{-1})^{-1}(B) &= f(B) \tag{preimage of $f^{-1}$ is $f$} \\
		&= f(f^{-1}(A)) \tag{substituting $B = f^{-1}(A)$} \\
		&= A \tag{$f$ is a bijection}
	\end{align*}
	Since $A \in \Sigma_Y$, it follows that $f^{-1}$ is measurable.
	
	Therefore $f$ is a measurable bijection with measurable inverse, so it is an isomorphism of measurable spaces.
\end{proof}

\begin{lemma}[Minimality of generated $\sigma$-algebras]\label{lem:minimality-sigma-algebra}
	Let $X$ be a set and let $\mathcal{G} \subseteq \mathcal{P}(X)$. 
	Suppose $\mathcal{C}$ is a $\sigma$-algebra on $X$ such that $\mathcal{G} \subseteq \mathcal{C}$. 
	Then
	\[
	\sigma(\mathcal{G}) \subseteq \mathcal{C}.
	\]
\end{lemma}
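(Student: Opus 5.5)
The plan is to argue directly from the definition of $\sigma(\mathcal{G})$ as the intersection of all $\sigma$-algebras on $X$ that contain $\mathcal{G}$. Write
\[
\sigma(\mathcal{G}) = \bigcap \{\, \mathcal{A} \mid \mathcal{A} \text{ is a } \sigma\text{-algebra on } X,\ \mathcal{G} \subseteq \mathcal{A} \,\}.
\]
The hypothesis says precisely that $\mathcal{C}$ is one of the sets being intersected. An intersection of a family is contained in each member of that family, so $\sigma(\mathcal{G}) \subseteq \mathcal{C}$ follows at once.

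For completeness, one background fact should be recorded: the intersection above is itself a $\sigma$-algebra. This is what justifies the ``equivalently'' in the paper's definition of the generated $\sigma$-algebra as the \emph{smallest} one. The family being intersected is nonempty, since $\mathcal{P}(X)$ belongs to it. The intersection then contains $X$, and it is closed under complements and countable unions, because each of these properties holds in every member of the family.

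If instead one takes ``smallest $\sigma$-algebra containing $\mathcal{G}$'' as the primary definition, the claim is just a restatement of minimality and needs no further argument.

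There is no real obstacle here. The only point requiring care is making sure the definition being used is the intersection characterization, or equivalently the minimality characterization, so that the argument does not become circular.
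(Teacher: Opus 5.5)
Your proof is correct and matches the paper's: both take $\sigma(\mathcal{G})$ as the intersection of all $\sigma$-algebras on $X$ containing $\mathcal{G}$ and note that $\mathcal{C}$ is one of the intersected members. Your added remark, that the family is nonempty because it contains $\mathcal{P}(X)$ and that the intersection is itself a $\sigma$-algebra, makes explicit a point the paper leaves implicit.
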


\begin{proof}
	Recall that $\sigma(\mathcal{G})$ is defined as the smallest $\sigma$-algebra 
	on $X$ containing $\mathcal{G}$, given concretely by the intersection of all 
	$\sigma$-algebras on $X$ containing $\mathcal{G}$:
	\[
	\sigma(\mathcal{G})
	=
	\bigcap_{\substack{\mathcal{F} \text{ is a } \sigma\text{-algebra on } X \\ \mathcal{G} \subseteq \mathcal{F}}}
	\mathcal{F}.
	\]
	Since $\mathcal{C}$ is a $\sigma$-algebra on $X$ containing $\mathcal{G}$, it 
	appears in this intersection. Therefore,
	\[
	\sigma(\mathcal{G}) \subseteq \mathcal{C}.
	\]
\end{proof}

\begin{definition}[Sub-Markov kernels]
Given measurable spaces $(X,\Sigma_X)$ and $(Y,\Sigma_Y)$, a
\emph{sub-Markov kernel} from $X$ to $Y$ is a function
\[
P:X\times\Sigma_Y\to[0,1]
\]
such that:
\begin{enumerate}
    \item for every $x\in X$, $P(x,-)$ is a subprobability measure on $Y$;
    \item for every $A\in\Sigma_Y$, the map $x\mapsto P(x,A)$ is measurable.
\end{enumerate}
A Markov kernel is a sub-Markov kernel satisfying $P(x,Y)=1$ for every
$x\in X$.
\end{definition}


\begin{definition}[Projection with error]
	Let $(X + \bot, \Sigma_{X + \bot})$ and $(Y + \bot, \Sigma_{Y + \bot})$ be measurable spaces, with $(X, \Sigma_X)$ and $(Y, \Sigma_Y)$ measurable spaces as well. Let $\mu \in \mathcal{D}((X \times Y) + \bot)$. We define the first and second projection measures as:
	\begin{enumerate}
	\item $\pi_1(\mu)(A) := \mu(A \times Y)$ for every measurable $A \in \Sigma_X$, and $\pi_1(\mu)({\bot}) := \mu({\bot})$
	\item $\pi_2(\mu)(B) := \mu(X \times B)$ for every measurable $B \in \Sigma_Y$, and $\pi_2(\mu)({\bot}) := \mu({\bot})$
	\end{enumerate}
	Then, $\pi_1(\mu)$ is a measure on $(X + \bot, \Sigma_{X + \bot})$ and $\pi_2(\mu)$ is a measure on $(Y + \bot, \Sigma_{Y + \bot})$.
\end{definition}

\subsection{Measure Space on Expressions}
\label{app:measure_space_expressions}
To make the probabilistic semantics precise, we construct an appropriate measure space structure on the set of expressions by factoring out real numbers using the notion of what we call \emph{expression skeletons}.

\begin{definition}[Expression Skeleton]
	An expression skeleton $s$ is a syntactic expression obtained from a program by replacing each occurrence of a float constant with a hole of the form $\square$. The set of skeletons is generated by the grammar:
	\begin{align*}
		s ::= \; 
		&\square \mid x \mid \true \mid \false \mid \finconst{k}{n} \mid \diverge \mid () \\
		&\mid \letkw \; x = s_1 \; \inkw \; s_2 \mid \ifkw \; s_1 \; \thenkw \; s_2 \; \elsekw \; s_3 \\
		&\mid s_1 < s_2 \mid s_1 \leq s_2 \mid s_1 \finlt{n} s_2 \mid s_1 \finleq{n} s_2 \\
		&\mid \uniform(s_1, s_2) \mid \gaussian(s_1, s_2) \mid \exponential(s) \mid \betafn(s_1, s_2) \mid \discrete(p_0, \ldots, p_{n-1}) \\
		&\mid s_1\; \logand\; s_2 \mid s_1\; \logor\; s_2 \mid \lognot\; s \mid \observekw(s) \mid s_1; s_2 \\
		&\mid (s_1, s_2) \mid \fstkw \; s \mid \sndkw \; s \\
		&\mid \funkw \; x \to s \mid s_1 \; s_2 \mid \fixkw\; f\; x := s \\
		&\mid \textnormal{\ttfamily nil} \mid s_1 :: s_2 \\
		&\mid (\matchkw \; s \; \withkw \matchcase \textnormal{\ttfamily nil} \rightarrow s_{\textnormal{nil}} \matchcase h::t \rightarrow s_{\textnormal{cons}} \; {\textnormal{\ttfamily\bfseries end}})
	\end{align*}
	We denote by $\skel$ the set of all expression skeletons, and by $\holes(s)$ the number of distinct holes in skeleton $s$.
\end{definition}

\begin{example}
	The expression $\uniform(0.0, 1.0) < 0.5$ has the skeleton $\uniform(\square, \square) < \square$ with three holes. 
\end{example}

\begin{definition}[Filling a skeleton]
	Let $s \in \skel$ and let $\vec{r} = (r_1, \ldots, r_{\holes(s)}) \in \mathbb{R}^{\holes(s)}$. Define the expression $s[\vec{r}] \in Expr$ to be the result of replacing each hole $\square_i$ in $s$ with the corresponding real constant $r_i$, for $1 \le i \le \holes(s)$.
\end{definition}

To recover a \emph{typed} notion of skeletons, we extend the typing judgment to
skeletons by adding a distinguished typing rule for holes.

\begin{definition}[Well-typed expression skeleton]
	Define a typing judgment for skeletons, written $\Gamma \vdash s : \tau$, with the understanding that all typing rules from expressions are preserved for skeletons, and that holes are typed as $\floattype{B}{V}$ according to the rule below:

	\[
		\inferrule{ }{\square : \floattype{B}{V}}
	\]

	We denote by $\skel_{\tau}$ the set of all well-typed expression skeletons, where:

	\[
		\skel_{\tau} := \{s \in \skel \mid s : \tau\}.
	\]
\end{definition}

\begin{remark}
	Both $\skel_\tau$ and $Expr_\tau$ are to be understood as the set of all well-typed skeletons and expressions, respectively, \emph{equipped
	with their full subtype information}.

	For example,
	\[
	\square : \floattype{B}{\{0.5\}} < \square : \floattype{B}{\{0.5\}}
	\quad\text{and}\quad
	\square : \floattype{B}{\{0.2\}} < \square : \floattype{B}{\{0.8\}}
	\]
	are distinct elements of $\skel_{\bool}$, while
	\[
	0.5 : \floattype{B}{\{0.5\}} < 0.5 : \floattype{B}{\{0.5\}}
	\quad\text{and}\quad
	0.2 : \floattype{B}{\{0.2\}} < 0.8 : \floattype{B}{\{0.8\}}
	\]
	are distinct elements of $Expr_{\bool}$.

	In particular, both $\skel_\tau$ and $Expr_\tau$ are in general
	uncountable, since float subtype annotations of the form
	$\floattype{B}{V}$ may vary over uncountably many real values.
\end{remark}

\begin{definition}[Decomposing an expression]
	The decomposition function
	\[
		\decompose \colon Expr_{\tau}
		\;\longrightarrow\;
		\bigcup_{s \in Skel_{\tau}} \left( \{s\} \times \mathbb{R}^{\holes(s)} \right)
	\]
	is defined as the inverse of filling a skeleton, i.e., for every expression $e : \tau$,
	\[
		\decompose(e) = (s, \vec{r})
		\quad\text{such that}\quad
		e = s[\vec{r}].
	\]
	The explicit recursive definition of $\decompose$ is shown in \Cref{fig:decompose-definition-1,fig:decompose-definition-2}.
\end{definition}

\begin{figure}
    \small
	\centering
	\begin{align*}
		\decompose(e : \tau) =
		\begin{cases}
			(x : \tau, ())
				& \text{if } e = x : \tau \\[8pt]
			(\true : \bool, ())
				& \text{if } e = \true : \bool \\[8pt]
			(\false : \bool, ())
				& \text{if } e = \false : \bool \\[8pt]
			(\finconst{k}{n} : \fin{n}, ())
				& \text{if } e = \finconst{k}{n} : \fin{n} \\[8pt]
			(\diverge : \tau, ())
				& \text{if } e = \diverge : \tau \\[8pt]
			(() : \unit, ())
				& \text{if } e = () : \unit \\[8pt]
			((\square, \floattype{B}{V}) : \floattype{B}{V}, c)
				& \text{if } e = c : \floattype{B}{V} \\[8pt]
			(\discrete(p_0,\ldots,p_{n-1}) : \fin{n}, ())
				& \text{if } e = \discrete(p_0,\ldots,p_{n-1}) : \fin{n} \\[8pt]
			(\uniform(\decompose(e_1)_1, \decompose(e_2)_1) : \floattype{B}{V}, \\
			(\decompose(e_1)_2, \decompose(e_2)_2))
				& \text{if } e = \uniform(e_1, e_2) : \floattype{B}{V} \\[8pt]
			(\gaussian(\decompose(e_1)_1, \decompose(e_2)_1) : \floattype{B}{V}, \\
			(\decompose(e_1)_2, \decompose(e_2)_2))
				& \text{if } e = \gaussian(e_1, e_2) : \floattype{B}{V} \\[8pt]
			(\exponential(\decompose(e_1)_1) : \floattype{B}{V}, \\
            \decompose(e_1)_2)
				& \text{if } e = \exponential(e_1) : \floattype{B}{V} \\[8pt]
			(\betafn(\decompose(e_1)_1, \decompose(e_2)_1) : \floattype{B}{V}, \\
			(\decompose(e_1)_2, \decompose(e_2)_2))
				& \text{if } e = \betafn(e_1, e_2) : \floattype{B}{V} \\[8pt]
			(\decompose(e_1)_1 \sim \decompose(e_2)_1 : \bool, \\
			(\decompose(e_1)_2, \decompose(e_2)_2))
				& \text{if } e = e_1 \sim e_2 : \bool, \sim = <, \leq, \finlt{n}, \finleq{n} \\[8pt]
            (\ifkw\; \decompose(e_1)_1\; \thenkw\; \decompose(e_2)_1\; \elsekw\; \decompose(e_3)_1 : \tau, \\
            (\decompose(e_1)_2, \decompose(e_2)_2, \decompose(e_3)_2))
                & \text{if } e = \ifkw\; e_1\; \thenkw\; e_2\; \elsekw\; e_3 : \tau \\[8pt]
            (\letkw\; x = \decompose(e_1)_1\; \inkw\; \decompose(e_2)_1 : \tau, \\
            (\decompose(e_1)_2, \decompose(e_2)_2))
                & \text{if } e = \letkw\; x = e_1\; \inkw\; e_2 : \tau \\[8pt]
            (\observekw(\decompose(e_1)_1 : \bool) : \unit, \decompose(e_1)_2)
                & \text{if } e = \observekw(e_1 : \bool) : \unit \\[8pt]
		\end{cases}
	\end{align*}
	\caption{Recursive definition of $\decompose$, a decomposition function, over typed expressions. $\decompose(\cdot)_1$ and $\decompose(\cdot)_2$ denote first and second projections of $\decompose$, respectively. $\decompose(\cdot)_2$ is canonically bijective to a flat vector in $\R^n$ via coordinate concatenation over holes.}
	\label{fig:decompose-definition-1}
\end{figure}

\begin{figure}
    \small
	\centering
	\begin{align*}
		\decompose(e : \tau) =
		\begin{cases}
            (\decompose(e_1)_1 \logand \decompose(e_2)_1 : \bool, \\
			(\decompose(e_1)_2, \decompose(e_2)_2))
				& \text{if } e = e_1 \logand e_2 : \bool \\[8pt]
			(\decompose(e_1)_1 \logor \decompose(e_2)_1 : \bool, \\
			(\decompose(e_1)_2, \decompose(e_2)_2))
				& \text{if } e = e_1 \logor e_2 : \bool \\[8pt]
			(\lognot\; \decompose(e_1)_1 : \bool, \decompose(e_1)_2)
				& \text{if } e = \lognot\; e_1 : \bool \\[8pt]
			(\decompose(e_1)_1; \decompose(e_2)_1 : \tau_2, \\
			(\decompose(e_1)_2, \decompose(e_2)_2))
				& \text{if } e = e_1; e_2 : \tau_2 \\[8pt]
			((\decompose(e_1)_1, \decompose(e_2)_1) : \tau_1 * \tau_2, \\
			(\decompose(e_1)_2, \decompose(e_2)_2))
				& \text{if } e = (e_1, e_2) : \tau_1 * \tau_2 \\[8pt]
			(\fstkw\; \decompose(e_1)_1 : \tau_1, \decompose(e_1)_2)
				& \text{if } e = \fstkw\; e_1 : \tau_1 \\[8pt]
			(\sndkw\; \decompose(e_1)_1 : \tau_2, \decompose(e_1)_2)
				& \text{if } e = \sndkw\; e_1 : \tau_2 \\[8pt]
			(\funkw\; x \to \decompose(e_1)_1 : \tau_1 \to \tau_2, \decompose(e_1)_2)
				& \text{if } e = \funkw\; x \to e_1 : \tau_1 \to \tau_2 \\[8pt]
			(\fixkw\; f\; x := \decompose(e_1)_1 : \tau_1 \to \tau_2, \decompose(e_1)_2)
				& \text{if } e = \fixkw\; f\; x := e_1 : \tau_1 \to \tau_2 \\[8pt]
			(\decompose(e_1)_1\; \decompose(e_2)_1 : \tau_2, \\
			(\decompose(e_1)_2, \decompose(e_2)_2))
				& \text{if } e = (e_1 : \tau_1 \to \tau_2\;\; e_2 : \tau_1) : \tau_2 \\[8pt]
			(\textnormal{\ttfamily nil} : \listty{\tau_1}, ())
				& \text{if } e = \textnormal{\ttfamily nil} : \listty{\tau_1} \\[8pt]
			(\decompose(e_1)_1 :: \decompose(e_2)_1 : \listty{\tau_1}, \\
			(\decompose(e_1)_2, \decompose(e_2)_2))
				& \text{if } e = e_1 :: e_2 : \listty{\tau_1} \\[8pt]
			(\matchkw\; \decompose(e_0)_1\; \withkw\; \matchcase \textnormal{\ttfamily nil} \rightarrow \decompose(e_{\textnormal{nil}})_1 \\
			\matchcase h::t \rightarrow \decompose(e_{\textnormal{cons}})_1\; {\textnormal{\ttfamily\bfseries end}} : \tau_2, 
                & \text{if } e = (\matchkw\; e_0 : \listty{\tau_1}\; \withkw\; \matchcase \textnormal{\ttfamily nil} \rightarrow e_{\textnormal{nil}} : \tau_2 \\
			(\decompose(e_0)_2, \decompose(e_{\textnormal{nil}})_2, \decompose(e_{\textnormal{cons}})_2))
                &\qquad\qquad\;\;\;\matchcase h::t \rightarrow e_{\textnormal{cons}} : \tau_2\; {\textnormal{\ttfamily\bfseries end}}) : \tau_2
		\end{cases}
	\end{align*}
	\caption{Recursive definition of $\decompose$ over typed expressions (continued).}
	\label{fig:decompose-definition-2}
\end{figure}

\begin{definition}[Hole assignments of a skeleton]
	For a given $s : \tau \in \skel_{\tau}$, define the set of hole assignments of $s$ as:
	\[
		A_{s : \tau} := 
		\bigcup_{e : \tau \in Expr_{\tau}} \left\{ \decompose(e : \tau)_2 \mid \decompose(e : \tau)_1 = s : \tau \right\}
		\subseteq \R^{\holes(s)}
	\]

	Intuitively, $A_{s;\tau}$ is the set of all real tuples that can be substituted into the holes of $s$ to yield a well-typed expression of type $\tau$.
\end{definition}

\begin{example}
	Consider the skeleton
	\[
		s : \tau 
		=
		(\square : \floattype{\{\clt{3.5}, \clt{4.5}\}}{\top}
		\;<\;
		\square : \floattype{\{\clt{3.5}, \clt{4.5}\}}{\{3.5,4.5\}})
		: \bool.
	\]
	
	This skeleton has two holes:
	\begin{itemize}
		\item The left hole has type 
		$\floattype{\{\clt{3.5}, \clt{4.5}\}}{\top}$,
		so it may be instantiated by any real number.
		\item The right hole has type 
		$\floattype{\{\clt{3.5}, \clt{4.5}\}}{\{3.5,4.5\}}$,
		so it may only be instantiated by one of the values 
		$3.5$ or $4.5$.
	\end{itemize}
	
	An expression $e : \bool$ has skeleton $s$ iff it is of the form
	\[
		e = c_1 < c_2
	\]
	where $c_1 \in \mathbb{R}$ and 
	$c_2 \in \{3.5,4.5\}$.
	
	Therefore, the admissible parameter tuples are precisely
	\[
		A_{s:\tau}
		=
		\{ (c_1,c_2) \mid c_1 \in \mathbb{R},\; c_2 \in \{3.5,4.5\} \}.
	\]
	
	Equivalently,
	\[
		A_{s:\tau}
		=
		\mathbb{R} \times \{3.5,4.5\}
		\subseteq
		\mathbb{R}^2.
	\]
	
	Thus, expressions with this syntactic shape are parameterized by a two-dimensional space whose first coordinate ranges over $\mathbb{R}$ and whose second coordinate ranges over a finite set.
\end{example}


\begin{definition}[$\sigma$-algebra construction on $A_{s;\tau}$]\label{def:sigma-algebra-construction-holes}
	Let $s : \tau \in \skel_{\tau}$. Endow $A_{s;\tau}$ with the subspace $\sigma$-algebra inherited from the standard Borel $\sigma$-algebra on $\R^{\holes(s)}$:

	\[
		\mathcal{F}_{s;\tau} := \left\{A_{s;\tau} \cap M \mid M \in \mathcal{B}(\R^{\holes(s)})\right\}.
	\]
\end{definition}

\begin{lemma}\label{lem:measurable-a-st}
	$(A_{s;\tau}, \mathcal{F}_{s;\tau})$ is a Borel measurable space.
\end{lemma}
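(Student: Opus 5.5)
This is an immediate instance of \Cref{def:subspace-sigma-algebra}. The plan is to take $X = \R^{\holes(s)}$ with its standard Borel $\sigma$-algebra $\mathcal{B}(\R^{\holes(s)})$ and $Y = A_{s;\tau} \subseteq \R^{\holes(s)}$. This inclusion holds by the definition of hole assignments, since every $\decompose(e:\tau)_2$ is a tuple in $\R^{\holes(s)}$. Then $\mathcal{F}_{s;\tau}$ is literally the trace $\sigma$-algebra $\Sigma_Y$.

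For completeness, I would still verify the $\sigma$-algebra axioms directly rather than only cite the definition.
\begin{itemize}
\item \emph{Whole space.} $A_{s;\tau} = A_{s;\tau} \cap \R^{\holes(s)}$, and $\R^{\holes(s)}$ is Borel.
\item \emph{Complements.} If $A_{s;\tau}\cap M$ is in $\mathcal{F}_{s;\tau}$, then its complement in $A_{s;\tau}$ is $A_{s;\tau} \cap (\R^{\holes(s)}\setminus M)$, and the Borel sets are closed under complement.
\item \emph{Countable unions.} $\bigcup_i (A_{s;\tau}\cap M_i) = A_{s;\tau}\cap \bigcup_i M_i$, and the Borel sets are closed under countable unions.
\end{itemize}
Hence $(A_{s;\tau},\mathcal{F}_{s;\tau})$ is a measurable space.

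To justify the word ``Borel'', I would observe that $\mathcal{F}_{s;\tau}$ coincides with the Borel $\sigma$-algebra of $A_{s;\tau}$ under the subspace topology. This is the standard fact that the trace of a generated $\sigma$-algebra is generated by the traces of the generators. Write $\iota$ for the inclusion map. The open sets of the subspace are exactly the sets $A_{s;\tau}\cap U$ with $U$ open, and these lie in $\mathcal{F}_{s;\tau}$. So by \Cref{lem:minimality-sigma-algebra}, $\mathcal{B}(A_{s;\tau})\subseteq\mathcal{F}_{s;\tau}$.

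For the converse, consider the collection $\{M \in \mathcal{B}(\R^{\holes(s)}) \mid A_{s;\tau}\cap M \in \mathcal{B}(A_{s;\tau})\}$. This collection is a $\sigma$-algebra, by the same intersection identities as above, and it contains all open sets. Applying \Cref{lem:minimality-sigma-algebra} again shows it is all of $\mathcal{B}(\R^{\holes(s)})$, which gives $\mathcal{F}_{s;\tau}\subseteq\mathcal{B}(A_{s;\tau})$.

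The only step needing care is this second inclusion, since it uses the good-sets argument. Everything else is routine bookkeeping. No property specific to skeletons or types is needed, because $A_{s;\tau}$ is treated as an arbitrary subset of Euclidean space.
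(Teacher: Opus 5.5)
Your proposal is correct and follows the paper's route. The paper's proof is a one-line appeal to \Cref{def:subspace-sigma-algebra} with $X=\R^{\holes(s)}$ and $Y=A_{s;\tau}$, exactly as in your first paragraph. Your additional checks of the $\sigma$-algebra axioms, together with the good-sets argument identifying $\mathcal{F}_{s;\tau}$ with the Borel $\sigma$-algebra of the subspace topology, are sound. They justify the ``Borel'' part of the claim, which the paper simply asserts inside that definition without proof.
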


\begin{proof}
	For a fixed $s : \tau \in \skel_{\tau}$, $(\R^{\holes(s)}, \mathcal{B}(\R^{\holes(s)}))$ is a Borel measurable space and $A_{s; \tau} \subseteq \R^{\holes(s)}$. Thus, by \Cref{def:subspace-sigma-algebra}, $(A_{s;\tau}, \{A_{s;\tau} \cap M \mid M \in \mathcal{B}(\R^{\holes(s)})\}) = (A_{s;\tau}, \mathcal{F}_{s;\tau})$ is a Borel measurable space.
\end{proof}

Now, we map this measurable space to the space of actual expressions that share that skeleton. 

\begin{definition}[Expressions of a skeleton]
	For a given $s : \tau \in \skel_{\tau}$, define the set of well-typed expressions having skeleton $s$ as:
	\[
		Expr_{s;\tau}
		:= 
		\left\{\, e : \tau \in Expr_{\tau} \mid \decompose(e : \tau)_1 = s \,\right\}
		\subseteq
		Expr_\tau.
	\]
\end{definition}

Intuitively, $Expr_{s;\tau}$ consists exactly of those expressions whose syntactic shape is fixed by $s$, and whose real constants vary according to the admissible hole assignments in $A_{s;\tau}$.

\begin{example}
	Consider the skeleton
	\[
		s : \tau 
		=
		(\square : \floattype{\{\clt{3.5}, \clt{4.5}\}}{\top}
		\;<\;
		\square : \floattype{\{\clt{3.5}, \clt{4.5}\}}{\{3.5,4.5\}})
		: \bool.
	\]
	
	An expression $e : \bool$ belongs to $Expr_{s;\tau}$ iff it is of the form
	\[
		e = c_1 < c_2
	\]
	where $c_1 \in \mathbb{R}$ and $c_2 \in \{3.5,4.5\}$.
	
	Therefore,
	\[
		Expr_{s;\tau}
		=
		\left\{
			c_1 < c_2
			\;\middle|\;
			c_1 \in \mathbb{R},\; c_2 \in \{3.5,4.5\}
		\right\}.
	\]
\end{example}

\begin{definition}[$\sigma$-algebra construction on $Expr_{s;\tau}$]\label{def:sigma-algebra-construction-skeletons}
	Let $s : \tau \in \skel_{\tau}$. Endow $Expr_{s;\tau}$ with the following $\sigma$-algebra:

	\[
		\mathcal{F}_{Expr_{s;\tau}}
		:=
		\left\{
			\{e : \tau \in Expr_{s;\tau} \mid \decompose(e : \tau)_2 \in B\} \mid B \in 
			\mathcal{F}_{s;\tau}
		\right\}.
	\]
\end{definition}

\begin{lemma}\label{lem:decompose-restriction-bijection}
	For a fixed $s : \tau \in \skel_{\tau}$, the map 
	\[
		\decompose_{s;\tau} \colon Expr_{s;\tau} \to A_{s;\tau},
		\qquad
		e \mapsto \decompose(e)_2.
	\]
	is a bijection.
\end{lemma}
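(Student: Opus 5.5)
We show that $\decompose_{s;\tau}$ is well defined into $A_{s;\tau}$, then that it is surjective, then that it is injective. The inverse will be hole filling, $\vec r \mapsto s[\vec r]$.

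\emph{Well-definedness and surjectivity.} Both follow directly from the definitions. If $e\in Expr_{s;\tau}$, then $\decompose(e)_1=s$. Hence $\decompose(e)_2$ belongs to the union defining $A_{s;\tau}$, so the map lands in $A_{s;\tau}$. Conversely, let $\vec r\in A_{s;\tau}$. By definition there is some $e:\tau\in Expr_\tau$ with $\decompose(e)_1=s$ and $\decompose(e)_2=\vec r$. Then $e\in Expr_{s;\tau}$ and $\decompose_{s;\tau}(e)=\vec r$, which gives surjectivity.

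\emph{Injectivity.} This carries the real content. We prove the auxiliary claim that for every typed expression $e$,
\[ e = (\decompose(e)_1)[\decompose(e)_2]. \]
The proof is by structural induction on $e$, following the clauses of \Cref{fig:decompose-definition-1,fig:decompose-definition-2}.
\begin{itemize}
\item In the base cases (variables, booleans, finite constants, $\diverge$, unit, $\discrete$), the skeleton is $e$ itself with no holes. The identity is then immediate.
\item In the float-constant case, $\decompose(c:\float[B;V]) = (\square:\float[B;V],\,c)$. Filling the hole returns $c$ together with its original annotation.
\item In every compound case, the skeleton is the same constructor applied to the component skeletons. The hole vector is the concatenation of the component vectors, in a fixed left-to-right order. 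By the induction hypothesis and the definition of filling, filling this concatenated vector reproduces each component.
\end{itemize}
The one point needing care is that the length of each block is determined by the skeleton alone. Concretely, $\holes(\decompose(e_i)_1)$ equals the length of $\decompose(e_i)_2$, which is itself a short induction. This guarantees that the flat vector splits back uniquely into the component blocks.

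Given the claim, suppose $e,e'\in Expr_{s;\tau}$ and $\decompose(e)_2=\decompose(e')_2=\vec r$. Then
\[ e=s[\vec r]=e'. \]
Hence $\decompose_{s;\tau}$ is injective, and its inverse is $\vec r\mapsto s[\vec r]$.

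The main obstacle is type annotations. Elements of $Expr_\tau$ and $\skel_\tau$ carry full subtype information, so we must check that filling a typed skeleton recovers the same annotations on every subterm, including the $\float[B;V]$ attached to each constant. This holds because $\decompose$ copies every annotation verbatim into the skeleton and records only the real numbers in $\vec r$. The induction therefore goes through with the annotations left untouched, and only the concatenation and splitting of hole vectors needs to be tracked.
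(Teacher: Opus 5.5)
Your proof is correct and follows the same route as the paper. Surjectivity comes directly from the definition of $A_{s;\tau}$, and injectivity comes from $e = s[\vec r]$ for any $e$ with skeleton $s$ and hole vector $\vec r$. The paper takes $e = \decompose(e)_1[\decompose(e)_2]$ as given, because it defines $\decompose$ as the inverse of filling; you instead prove it by structural induction on the explicit recursive clauses and also check that the map lands in $A_{s;\tau}$, which makes your argument somewhat more careful.
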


\begin{proof}
	Let $s : \tau \in \skel_{\tau}$. 

	\paragraph{(i) Injectivity:}

	Suppose $e_1, e_2 \in Expr_{s;\tau}$ satisfy $\decompose_{s;\tau}(e_1) = \decompose_{s;\tau}(e_2)$. By definition of $\decompose_{s;\tau}$, $\decompose_{s;\tau}(e_1) = \decompose_{s;\tau}(e_2) = \vec{r}$ for some $\vec{r} \in A_{s;\tau}$. Since both $e_1$ and $e_2$ have skeleton $s$ by assumption, we have $e_1 = s[\vec{r}]$ and $e_2 = s[\vec{r}]$, hence $e_1 = e_2$.

	\paragraph{(ii) Surjectivity:}

	Let $\vec{r} \in A_{s;\tau}$. By definition of $A_{s;\tau}$, there exists some expression $e \in Expr_{\tau}$ such that $\decompose(e) = (s,\vec{r})$. Then $e \in Expr_{s;\tau}$ by construction of $Expr_{s;\tau}$, and $\decompose_{s;\tau}(e) = \vec{r}$.

	\medskip

	$\decompose_{s;\tau}$ is injective and surjective, hence a bijection.
\end{proof}

\begin{lemma}[Isomorphism property]\label{lem:isomorphism}
	For $s \in \skel_{\tau}$,
	\[
		(Expr_{s;\tau}, \mathcal{F}_{Expr_{s;\tau}})
		\cong
		(A_{s;\tau}, \mathcal{F}_{s;\tau}).
	\]
\end{lemma}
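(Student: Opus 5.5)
The plan is to obtain the isomorphism directly from \Cref{lem:pullback-iso}, applied to the bijection of \Cref{lem:decompose-restriction-bijection}. Concretely, take $X = Expr_{s;\tau}$, $Y = A_{s;\tau}$ with $\Sigma_Y = \mathcal{F}_{s;\tau}$ (a $\sigma$-algebra by \Cref{lem:measurable-a-st}), and $f = \decompose_{s;\tau}$.

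The first step is to check that $\mathcal{F}_{Expr_{s;\tau}}$ from \Cref{def:sigma-algebra-construction-skeletons} is exactly the pullback $\sigma$-algebra $f^{-1}(\mathcal{F}_{s;\tau})$. Unfolding the definition, a generic element of $\mathcal{F}_{Expr_{s;\tau}}$ has the form $\{e \in Expr_{s;\tau} \mid \decompose(e)_2 \in B\}$ for some $B \in \mathcal{F}_{s;\tau}$. Since every $e \in Expr_{s;\tau}$ has $\decompose(e)_1 = s$, we have $\decompose(e)_2 = \decompose_{s;\tau}(e)$, so this set is precisely $\decompose_{s;\tau}^{-1}(B)$. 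Hence $\mathcal{F}_{Expr_{s;\tau}} = \decompose_{s;\tau}^{-1}(\mathcal{F}_{s;\tau})$ as families of sets.

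With this identification, \Cref{lem:pullback-iso} gives two things at once: $(Expr_{s;\tau}, \mathcal{F}_{Expr_{s;\tau}})$ is a measurable space, and $\decompose_{s;\tau}$ is a measurable bijection with measurable inverse (the filling map $\vec r \mapsto s[\vec r]$). This is the claimed isomorphism.

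There is no real obstacle. The one point needing care is the identification $\decompose(e)_2 = \decompose_{s;\tau}(e)$ on $Expr_{s;\tau}$, together with the requirement that the index set $B$ ranges over the trace $\sigma$-algebra $\mathcal{F}_{s;\tau}$ rather than all of $\mathcal{B}(\R^{\holes(s)})$. If it were phrased with Borel sets of $\R^{\holes(s)}$, I would first intersect with $A_{s;\tau}$, using that $\decompose_{s;\tau}$ has image inside $A_{s;\tau}$. This leaves the preimage unchanged.
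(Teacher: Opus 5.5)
Your proposal is correct and follows the paper's proof step for step. You identify $\mathcal{F}_{Expr_{s;\tau}}$ with the pullback $\decompose_{s;\tau}^{-1}(\mathcal{F}_{s;\tau})$, then combine the bijectivity of $\decompose_{s;\tau}$ (\Cref{lem:decompose-restriction-bijection}) with \Cref{lem:pullback-iso} to obtain the isomorphism.
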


\begin{proof}
	We know $(A_{s;\tau}, \mathcal{F}_{s;\tau})$ is a Borel measurable space by~\Cref{lem:measurable-a-st}. First, verify that $\mathcal{F}_{Expr_{s;\tau}}$ coincides with the $\sigma$-algebra generated by $\decompose_{s;\tau}$, i.e. $\mathcal{F}_{Expr_{s;\tau}} = \decompose_{s;\tau}^{-1}(\mathcal{F}_{s;\tau})$:
	\begin{align*}
		\mathcal{F}_{Expr_{s;\tau}} := \quad&\left\{
			\{e : \tau \in Expr_{s;\tau} \mid \decompose(e : \tau)_2 \in B\} \mid B \in 
			\mathcal{F}_{s;\tau}
		\right\} \tag{\Cref{def:sigma-algebra-construction-skeletons}} \\
		:= \quad&\left\{
			\decompose_{s;\tau}^{-1}(B) \mid B \in \mathcal{F}_{s;\tau}
		\right\} \tag{unfolding the preimage of $\decompose_{s;\tau}$} \\
		:= \quad&\decompose_{s;\tau}^{-1}(\mathcal{F}_{s;\tau}) \tag{\Cref{def:pullback}}
	\end{align*}
	By~\Cref{lem:decompose-restriction-bijection}, $\decompose_{s;\tau} \colon Expr_{s;\tau} \to A_{s;\tau}$ is a bijection. By~\Cref{lem:pullback-iso}, $\decompose_{s;\tau}$ induces an isomorphism $(Expr_{s;\tau}, \mathcal{F}_{Expr_{s;\tau}}) \cong (A_{s;\tau}, \mathcal{F}_{s;\tau})$.
\end{proof}

\begin{lemma}\label{lem:measurable-expr-st}
	$(Expr_{s;\tau}, \mathcal{F}_{Expr_{s;\tau}})$ is a Borel measurable space.
\end{lemma}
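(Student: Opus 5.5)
The claim follows by transporting the Borel structure of $(A_{s;\tau}, \mathcal{F}_{s;\tau})$ along the isomorphism of \Cref{lem:isomorphism}. The paper's notion of a ``Borel measurable space'' is inherited through subspaces (\Cref{def:subspace-sigma-algebra}), so I read the statement as follows: $\mathcal{F}_{Expr_{s;\tau}}$ is a $\sigma$-algebra, and it is the image of a Borel (subspace) $\sigma$-algebra under a measurable isomorphism. Equivalently, $\mathcal{F}_{Expr_{s;\tau}}$ is the Borel $\sigma$-algebra of the topology on $Expr_{s;\tau}$ obtained by pulling back the subspace topology of $A_{s;\tau}\subseteq\R^{\holes(s)}$ along $\decompose_{s;\tau}$.

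The first step is to note that, by the computation in the proof of \Cref{lem:isomorphism}, $\mathcal{F}_{Expr_{s;\tau}} = \decompose_{s;\tau}^{-1}(\mathcal{F}_{s;\tau})$. By \Cref{lem:pullback-iso}(i), together with \Cref{lem:decompose-restriction-bijection}, this is a $\sigma$-algebra, so $(Expr_{s;\tau}, \mathcal{F}_{Expr_{s;\tau}})$ is a measurable space.

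The second step supplies the Borel part. Define the topology $\mathcal{T} := \{\decompose_{s;\tau}^{-1}(U) \mid U \text{ open in } A_{s;\tau}\}$; this is a topology because preimages commute with unions and intersections. It remains to show $\sigma(\mathcal{T}) = \decompose_{s;\tau}^{-1}(\mathcal{B}(A_{s;\tau}))$, and I will check both inclusions.
\begin{itemize}
\item[$\subseteq$:] $\mathcal{T}$ is contained in the right-hand side, which is a $\sigma$-algebra, so \Cref{lem:minimality-sigma-algebra} gives the inclusion.
\item[$\supseteq$:] Consider $\{M \mid \decompose_{s;\tau}^{-1}(M)\in\sigma(\mathcal{T})\}$. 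This collection is a $\sigma$-algebra on $A_{s;\tau}$ and contains every open set, so by \Cref{lem:minimality-sigma-algebra} it contains $\mathcal{B}(A_{s;\tau})$.
\end{itemize}
Finally, I need $\mathcal{B}(A_{s;\tau}) = \mathcal{F}_{s;\tau}$, i.e.\ that the Borel $\sigma$-algebra of the subspace topology equals the trace of $\mathcal{B}(\R^{\holes(s)})$. This is the standard trace lemma and is proved by the same two-sided minimality argument, applied to the inclusion map.

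The main obstacle is this last identification of the subspace Borel $\sigma$-algebra with the trace $\sigma$-algebra. Everything else is bookkeeping with preimages under the bijection. I would prove the trace lemma explicitly via the good-sets argument above rather than cite it.
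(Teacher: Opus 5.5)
Your proof is correct and takes the same route as the paper. The paper's entire proof is the one-line appeal to \Cref{lem:isomorphism}, i.e., transporting the structure of $(A_{s;\tau},\mathcal{F}_{s;\tau})$ along the bijection $\decompose_{s;\tau}$. Your pulled-back topology $\mathcal{T}$, and your good-sets proof that the trace $\sigma$-algebra equals the Borel $\sigma$-algebra of the subspace topology, make explicit what the paper leaves implicit (it asserts the latter without proof in \Cref{def:subspace-sigma-algebra}). As a result, $\mathcal{F}_{Expr_{s;\tau}}$ is shown to be literally the Borel $\sigma$-algebra of a topology, rather than merely isomorphic to one.
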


\begin{proof}
	By~\Cref{lem:isomorphism}.
\end{proof}

Now we combine the measurable spaces for each distinct skeleton into a single measurable space of all well-typed expressions of type $\tau$. The set of all well-typed expressions of type $\tau$ is thus the disjoint union:
\[
	Expr_{\tau} = \bigsqcup_{s : \tau \in \skel_{\tau}} Expr_{s;\tau}.
\]
This union is disjoint because every typed expression $e : \tau$ has a unique skeleton, namely $\decompose(e)_1$, so $e$ belongs to exactly one set $Expr_{s;\tau}$.

\begin{definition}[$\sigma$-algebra construction on $Expr_{\tau}$]\label{def:sigma-algebra-construction-expressions}
	Endow $Expr_\tau$ with the following $\sigma$-algebra: 
	\[
		\mathcal{F}_{Expr_{\tau}}
		:= \{A \subseteq Expr_{\tau} \mid \forall s \in \skel_{\tau} .\; A \cap Expr_{s;\tau} \in \mathcal{F}_{Expr_{s;\tau}}\}
	\]
\end{definition}

\begin{lemma}\label{lem:measurable-expr-t}
	$(Expr_{\tau}, \mathcal{F}_{Expr_{\tau}})$ is a measurable space.
\end{lemma}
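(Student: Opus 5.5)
We verify directly that $\mathcal{F}_{Expr_{\tau}}$, as given in \Cref{def:sigma-algebra-construction-expressions}, satisfies the three $\sigma$-algebra axioms on $Expr_\tau$. The argument is a coordinatewise check: each condition is tested on every fibre $Expr_{s;\tau}$ separately. Two facts established earlier do all the work. First, $Expr_\tau = \bigsqcup_{s \in \skel_\tau} Expr_{s;\tau}$ is a disjoint union, since every typed expression has the unique skeleton $\decompose(e)_1$. Second, each $\mathcal{F}_{Expr_{s;\tau}}$ is a $\sigma$-algebra on $Expr_{s;\tau}$ by \Cref{lem:measurable-expr-st}, which in turn rests on \Cref{lem:isomorphism} and \Cref{lem:pullback-iso}. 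In other words, $\mathcal{F}_{Expr_\tau}$ is the coproduct (disjoint-union) $\sigma$-algebra, and we check it directly.

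\emph{Whole space.} For every $s$ we have $Expr_\tau \cap Expr_{s;\tau} = Expr_{s;\tau}$. This set lies in $\mathcal{F}_{Expr_{s;\tau}}$ because every $\sigma$-algebra contains its carrier. Concretely, it is $\decompose_{s;\tau}^{-1}(A_{s;\tau})$, and $A_{s;\tau} = A_{s;\tau} \cap \R^{\holes(s)} \in \mathcal{F}_{s;\tau}$. Hence $Expr_\tau \in \mathcal{F}_{Expr_\tau}$.

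\emph{Complements.} Let $A \in \mathcal{F}_{Expr_\tau}$. For each $s$,
\[
(Expr_\tau \setminus A) \cap Expr_{s;\tau} = Expr_{s;\tau} \setminus (A \cap Expr_{s;\tau}).
\]
This is the relative complement, within $Expr_{s;\tau}$, of a set in $\mathcal{F}_{Expr_{s;\tau}}$. It therefore lies in $\mathcal{F}_{Expr_{s;\tau}}$, by closure of that $\sigma$-algebra under complement.

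\emph{Countable unions.} Let $A_i \in \mathcal{F}_{Expr_\tau}$ for $i \in \Nats$. Intersection distributes over union, so for each $s$
\[
\Bigl(\bigcup_i A_i\Bigr) \cap Expr_{s;\tau} = \bigcup_i (A_i \cap Expr_{s;\tau}).
\]
This is a countable union of members of $\mathcal{F}_{Expr_{s;\tau}}$, and hence a member of it.

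The only point needing care is that $\skel_\tau$ may be uncountable: the remark notes that type annotations $\float[B;V]$ range over uncountably many reals. This causes no difficulty, because the definition quantifies over all $s$ pointwise, and each axiom is checked one fibre at a time. At no stage do we form a union over skeletons. Nor do we need $Expr_{s;\tau} \in \mathcal{F}_{Expr_\tau}$ for the statement, although it does hold: $Expr_{s;\tau} \cap Expr_{s';\tau}$ is either all of $Expr_{s;\tau}$ (when $s = s'$) or empty (when $s \neq s'$), and both sets are measurable.
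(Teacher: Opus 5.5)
Your proof is correct and follows the paper's argument: both verify the three $\sigma$-algebra axioms fibre by fibre on each $Expr_{s;\tau}$, using that $\mathcal{F}_{Expr_{s;\tau}}$ is a $\sigma$-algebra, the identity $(Expr_\tau\setminus A)\cap Expr_{s;\tau} = Expr_{s;\tau}\setminus(A\cap Expr_{s;\tau})$, and distributivity of $\cap$ over countable $\cup$. Your remark that no union over the possibly uncountable $\skel_\tau$ is ever formed is correct, and the paper does not spell it out.
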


\begin{proof}
	We verify that $\mathcal{F}_{Expr_{\tau}}$ is indeed a $\sigma$-algebra on $Expr_{\tau}$.

	\paragraph{(i) Contains $Expr_{\tau}$:}
	
	We show $Expr_{\tau} \in \mathcal{F}_{Expr_\tau}$. By \Cref{def:sigma-algebra-construction-expressions}, it suffices to show that $\forall s \in \skel_\tau$, $Expr_{\tau} \cap Expr_{s;\tau} \in \mathcal{F}_{Expr_{s;\tau}}$.

	Observe that:
	\begin{align*}
		&Expr_{\tau} \cap Expr_{s;\tau} \\
		=\; &Expr_{s;\tau} \tag{since $Expr_{s;\tau} \subseteq Expr_{\tau}$}
	\end{align*}

	Because $\mathcal{F}_{Expr_{s;\tau}}$ is a $\sigma$-algebra, it contains the set $Expr_{s;\tau}$. Hence,
	\[
		Expr_{s;\tau} \in \mathcal{F}_{Expr_{s;\tau}}.
	\]
	Therefore, $Expr_{\tau} \cap Expr_{s;\tau} \in \mathcal{F}_{Expr_{s;\tau}}$ for all $s \in \skel_{\tau}$. Therefore, $Expr_{\tau} \in \mathcal{F}_{Expr_\tau}$.

	\paragraph{(ii) Closure under complements:}

	Let $A \in \mathcal{F}_{Expr_\tau}$. We show that $Expr_\tau \setminus A \in \mathcal{F}_{Expr_\tau}$. By \Cref{def:sigma-algebra-construction-expressions}, it suffices to show that $\forall s \in \skel_\tau$, $(Expr_\tau \setminus A) \cap Expr_{s;\tau} \in \mathcal{F}_{Expr_{s;\tau}}$.
	
	Fix $s \in \skel_\tau$.  Observe that:
	\begin{align*}
		&(Expr_\tau \setminus A) \cap Expr_{s;\tau} \\
		=\; &(Expr_\tau \cap Expr_{s;\tau}) \setminus (A \cap Expr_{s;\tau}) \tag{$\cap$ distributes over $\setminus$} \\
		=\; &Expr_{s;\tau} \setminus (A \cap Expr_{s;\tau}). \tag{since $Expr_{s;\tau} \subseteq Expr_\tau$, so $Expr_\tau \cap Expr_{s;\tau} = Expr_{s;\tau}$}
	\end{align*}

	Since $A \in \mathcal{F}_{Expr_\tau}$ by assumption, by \Cref{def:sigma-algebra-construction-expressions}, we have $A \cap Expr_{s;\tau} \in \mathcal{F}_{Expr_{s;\tau}}$, and because $\mathcal{F}_{Expr_{s;\tau}}$ is a $\sigma$-algebra, it is closed under complements relative to $Expr_{s;\tau}$. Hence
	\[
		Expr_{s;\tau} \setminus (A \cap Expr_{s;\tau}) \in \mathcal{F}_{Expr_{s;\tau}}.
	\]
	Therefore $(Expr_\tau \setminus A) \cap Expr_{s;\tau} \in \mathcal{F}_{Expr_{s;\tau}}$ for all $s \in \skel_{\tau}$. Therefore, $Expr_\tau \setminus A \in \mathcal{F}_{Expr_\tau}$.

	\paragraph{(iii) Closure under countable unions:}

	Let $(A_n)_{n \in \Nats}$ be a sequence with $A_n \in \mathcal{F}_{Expr_\tau}$ for all $n$. We show that $\bigcup_{n \in \Nats} A_n \in \mathcal{F}_{Expr_\tau}$. By \Cref{def:sigma-algebra-construction-expressions}, it suffices to show that $\forall s \in \skel_\tau$,
	$\left(\bigcup_{n \in \Nats} A_n\right) \cap Expr_{s;\tau} \in \mathcal{F}_{Expr_{s;\tau}}$.
	
	Fix $s \in \skel_\tau$. Using distributivity of $\cap$ over $\cup$, we have
	\[
		\left(\bigcup_{n \in \Nats} A_n\right) \cap Expr_{s;\tau}
		=
		\bigcup_{n \in \Nats} \left(A_n \cap Expr_{s;\tau}\right).
	\]
	Since each $A_n \in \mathcal{F}_{Expr_\tau}$ by assumption, by \Cref{def:sigma-algebra-construction-expressions}, we have $A_n \cap Expr_{s;\tau} \in \mathcal{F}_{Expr_{s;\tau}}$ for all $n$. Because $\mathcal{F}_{Expr_{s;\tau}}$ is a $\sigma$-algebra, it is closed under countable unions, hence
	\[
		\bigcup_{n \in \Nats} \left(A_n \cap Expr_{s;\tau}\right) \in \mathcal{F}_{Expr_{s;\tau}}.
	\]
	Therefore $\left(\bigcup_{n \in \Nats} A_n\right) \cap Expr_{s;\tau} \in \mathcal{F}_{Expr_{s;\tau}}$ for all $s$, and thus $\bigcup_{n \in \Nats} A_n \in \mathcal{F}_{Expr_\tau}$.

	\medskip
	Since $\mathcal{F}_{Expr_{\tau}}$ is a $\sigma$-algebra on $Expr_{\tau}$, $(Expr_{\tau}, \mathcal{F}_{Expr_{\tau}})$ is a measurable space.
\end{proof}

\begin{definition}[Extension with error element]\label{def:meas-space-with-error}
	Let $(Expr_\tau,\mathcal F_{Expr_\tau})$ be the measurable space of well-typed
	expressions of type $\tau$. Define
	\[
	Expr_\tau + \bot := Expr_\tau \cup \{\bot\},
	\]
	equipped with the $\sigma$-algebra
	\[
	\mathcal F_{Expr_\tau+\bot}
	:=
	\{A,\; A\cup\{\bot\} \mid A\in \mathcal F_{Expr_\tau}\}.
	\]
\end{definition}

\begin{lemma}\label{lem:measurable-expr-t-error}
    $(Expr_\tau+\bot,\mathcal F_{Expr_\tau+\bot})$ is a measurable space.
\end{lemma}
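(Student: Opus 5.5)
We verify directly that $\mathcal F_{Expr_\tau+\bot}$ is a $\sigma$-algebra on $Expr_\tau+\bot$. We use only \Cref{lem:measurable-expr-t}, i.e.\ that $\mathcal F_{Expr_\tau}$ is a $\sigma$-algebra on $Expr_\tau$, together with the fact that $\bot\notin Expr_\tau$. Every element of $\mathcal F_{Expr_\tau+\bot}$ has the form $A\cup S$ with $A\in\mathcal F_{Expr_\tau}$ and $S\subseteq\{\bot\}$. This representation is unique: $A=(A\cup S)\cap Expr_\tau$ and $S=(A\cup S)\cap\{\bot\}$. Working with the pair $(A,S)$ will make each closure property reduce to the corresponding property of $\mathcal F_{Expr_\tau}$, plus a trivial check on the $\bot$-component.

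\textbf{Whole space.} Since $Expr_\tau\in\mathcal F_{Expr_\tau}$, we have $Expr_\tau\cup\{\bot\}\in\mathcal F_{Expr_\tau+\bot}$.

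\textbf{Complements.} Given $A\cup S$, its complement in $Expr_\tau+\bot$ is $(Expr_\tau\setminus A)\cup(\{\bot\}\setminus S)$. This uses $\bot\notin Expr_\tau$, so the two components can be complemented independently. The first component lies in $\mathcal F_{Expr_\tau}$ by closure of that $\sigma$-algebra under complements. The second is again a subset of $\{\bot\}$.

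\textbf{Countable unions.} Given $(A_n\cup S_n)_{n\in\Nats}$, we have
\[
\bigcup_n (A_n\cup S_n)=\Bigl(\bigcup_n A_n\Bigr)\cup\Bigl(\bigcup_n S_n\Bigr).
\]
The first set is in $\mathcal F_{Expr_\tau}$ by closure under countable unions, and the second is $\emptyset$ or $\{\bot\}$.

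There is no real obstacle here; the only point needing care is keeping the $\bot$-part disjoint from $Expr_\tau$. This is exactly why $Expr_\tau+\bot$ is read as a disjoint union. If desired, we can note in passing that this $\sigma$-algebra is the coproduct $\sigma$-algebra of $(Expr_\tau,\mathcal F_{Expr_\tau})$ and the one-point space.
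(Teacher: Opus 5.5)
Your proof is correct and follows the paper's argument: the same three checks (whole space, complements, countable unions), each reduced to the corresponding closure property of $\mathcal F_{Expr_\tau}$ together with a trivial check on the $\bot$-component. Writing elements as $A\cup S$ with $S\subseteq\{\bot\}$ simply merges the paper's two complement cases into one line. Your explicit use of $\bot\notin Expr_\tau$ states what the paper uses implicitly.
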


\begin{proof}
	We verify that $\mathcal{F}_{Expr_\tau+\bot}$ is indeed a $\sigma$-algebra on $Expr_\tau+\bot$.

	By construction, every element of $\mathcal{F}_{Expr_\tau+\bot}$ is of the form $A$ or $A \cup \{\bot\}$ for some $A \in \mathcal{F}_{Expr_{\tau}}$.

	\paragraph{(i) Contains $Expr_{\tau} + \bot$:}

	Take $A = Expr_{\tau} \in \mathcal{F}_{Expr_{\tau}}$. Then, 
	\begin{align*}
		A \cup \{\bot\} = Expr_{\tau} \cup \{\bot\} = Expr_{\tau} + \bot \in \mathcal{F}_{Expr_{\tau} + \bot}.
	\end{align*}

	\paragraph{(ii) Closure under complements:}
	Let $C \in \mathcal{F}_{Expr_{\tau} + \bot}$. There are two cases:

	\begin{itemize}
		\item If $C = A$: Then, $(Expr_{\tau} + \bot) \setminus A = (Expr_{\tau} \setminus A) \cup \{\bot\}$. Since $\mathcal{F}_{Expr_{\tau}}$ is a $\sigma$-algebra, $Expr_{\tau} \setminus A \in \mathcal{F}_{Expr_{\tau}}$, so $(Expr_{\tau} \setminus A) \cup \{\bot\}$ is of the form $A \cup \{\bot\}$ for some $A \in \mathcal{F}_{Expr_{\tau}}$ and hence lies in $\mathcal{F}_{Expr_{\tau} + \bot}$.
		
		\item If $C = A \cup \{\bot\}$: Then, $(Expr_{\tau} + \bot) \setminus (A \cup \{\bot\}) = Expr_{\tau} \setminus A$. Since $\mathcal{F}_{Expr_{\tau}}$ is a $\sigma$-algebra, $Expr_{\tau} \setminus A \in \mathcal{F}_{Expr_{\tau}}$, so $Expr_{\tau} \setminus A$ is of the form $A$ for some $A \in \mathcal{F}_{Expr_{\tau}}$ and hence lies in $\mathcal{F}_{Expr_{\tau} + \bot}$.
	\end{itemize}

	In both cases the complement lies in $\mathcal{F}_{Expr_{\tau} + \bot}$.

	\paragraph{(iii) Closure under countable unions:}

	Let $(C_n)_{n \in \Nats}$ be a sequence with $C_n\in\mathcal{F}_{Expr_\tau+\bot}$ for all $n$. For each $n$, define $A_n := C_n \cap Expr_\tau \in \mathcal{F}_{Expr_\tau}$, where the latter holds since $C_n$ is of the form $A$ or $A \cup \{\bot\}$ for some $A \in \mathcal{F}_{Expr_\tau}$, and in either case $C_n \cap Expr_\tau = A$.

	Let $A := \bigcup_{n\in\mathbb{N}} A_n$. Since $\mathcal{F}_{Expr_\tau}$ is a $\sigma$-algebra, $A\in\mathcal{F}_{Expr_\tau}$. Now observe:
	\[
		\bigcup_{n\in\mathbb{N}} C_n = A \cup \left(\bigcup_{n\in\mathbb{N}} (C_n\setminus Expr_\tau)\right).
	\]
	The second factor is either $\emptyset$ (if every $C_n = A_n$) or $\{\bot\}$ (if some $C_n = A_n \cup \{\bot\}$). In the first sub-case, $\bigcup_n C_n = A$ is of the form $A$ for some $A \in \mathcal{F}_{Expr_{\tau}}$; in the second, $\bigcup_n C_n = A\cup\{\bot\}$ is of the form $A \cup \{\bot\}$ for some $A \in \mathcal{F}_{Expr_{\tau}}$. In both cases $\bigcup_n C_n \in \mathcal{F}_{Expr_\tau+\bot}$.
\end{proof}


Finally, we can equip this measurable space $(Expr_{\tau} + \bot, \mathcal{F}_{Expr_{\tau} + \bot})$ with a subprobability measure such that $\mu(Expr_{\tau} + \bot) \leq 1$. 

We give some basic lemmas:

\begin{lemma}\label{lem:measurable-subsets}
	Let $(Expr_{\tau} + \bot, \mathcal{F}_{Expr_{\tau} + \bot})$ be the Borel measurable space of well-typed expressions with error from \Cref{def:meas-space-with-error}. Let $\vals_{\tau} \subset Expr_{\tau}$ be the set of all well-typed value expressions. Let $N_{\tau} \subset Expr_{\tau}$, defined as $N = Expr_{\tau} \setminus \vals_{\tau}$, be the set of all well-typed non-value expressions. Let $\{\bot\}$ be the singleton set containing the error element $\bot$. The following are true:
	\begin{enumerate}
		\item\label{lem:measurable-vals} The set $\vals_{\tau} \in \mathcal{F}_{Expr_{\tau} + \bot}$.
		\item\label{lem:measurable-error} The set $\{\bot\} \in \mathcal{F}_{Expr_{\tau} + \bot}$.
		\item\label{lem:measurable-nonvals} The set $\nonvals_{\tau} \in \mathcal{F}_{Expr_{\tau} + \bot}$.
	\end{enumerate}
\end{lemma}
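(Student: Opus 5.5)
The plan is to reduce each of the three claims to the defining condition of $\mathcal{F}_{Expr_\tau+\bot}$ from \Cref{def:meas-space-with-error}. There, a set $C \subseteq Expr_\tau + \bot$ is measurable iff $C \cap Expr_\tau \in \mathcal{F}_{Expr_\tau}$. By \Cref{def:sigma-algebra-construction-expressions}, this in turn holds iff $C \cap Expr_{s;\tau} \in \mathcal{F}_{Expr_{s;\tau}}$ for every skeleton $s \in \skel_\tau$. So I will prove (\ref{lem:measurable-error}) first, then (\ref{lem:measurable-vals}), and obtain (\ref{lem:measurable-nonvals}) from the other two by closure properties.

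For (\ref{lem:measurable-error}), take $A = \emptyset \in \mathcal{F}_{Expr_\tau}$. Then $\{\bot\} = \emptyset \cup \{\bot\}$ has the required form $A \cup \{\bot\}$, so it belongs to $\mathcal{F}_{Expr_\tau+\bot}$ directly.

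For (\ref{lem:measurable-vals}), the key observation is that \emph{being a value is determined by the skeleton alone}. Whether $e = s[\vec r]$ is a value depends only on the syntactic constructors of $s$: holes correspond to float constants, which are themselves values, and no value/non-value distinction inspects the actual real filling a hole. This matches the small-step semantics, where a float constant $c$ is a value regardless of $c$.

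Hence for each $s \in \skel_\tau$, the set $\vals_\tau \cap Expr_{s;\tau}$ is either all of $Expr_{s;\tau}$ or $\emptyset$. I would justify this by a short structural induction on $s$ over the grammar of values: constants, $\square$, $\funkw$, $\fixkw$, pairs of values, $\textnormal{\ttfamily nil}$, and cons of values. In both cases the set lies in $\mathcal{F}_{Expr_{s;\tau}}$. It is $\decompose_{s;\tau}^{-1}(A_{s;\tau})$ or $\decompose_{s;\tau}^{-1}(\emptyset)$, and both $A_{s;\tau}$ and $\emptyset$ are in $\mathcal{F}_{s;\tau}$, taking $M = \R^{\holes(s)}$ or $M = \emptyset$ in \Cref{def:sigma-algebra-construction-holes}. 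Therefore $\vals_\tau \in \mathcal{F}_{Expr_\tau}$, and hence $\vals_\tau \in \mathcal{F}_{Expr_\tau+\bot}$ (taking the form $A$ with $A = \vals_\tau$).

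For (\ref{lem:measurable-nonvals}), note that $\nonvals_\tau = Expr_\tau \setminus \vals_\tau$. Its complement in $Expr_\tau+\bot$ is $\vals_\tau \cup \{\bot\}$, which is measurable as a union of the two sets already shown measurable. By closure under complements, established in \Cref{lem:measurable-expr-t-error}, $\nonvals_\tau \in \mathcal{F}_{Expr_\tau+\bot}$. Alternatively, $\nonvals_\tau$ is of the form $A$ with $A = Expr_\tau \setminus \vals_\tau \in \mathcal{F}_{Expr_\tau}$.

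The only step needing care is the skeleton-invariance of value-hood in (\ref{lem:measurable-vals}). I expect it to be routine once the value grammar is fixed, since the grammar never mentions the specific real in a float constant.
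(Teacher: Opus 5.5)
Your proof is correct and follows essentially the same route as the paper. The paper gets $\{\bot\}$ by taking $A=\emptyset$. It gets $\vals_\tau$ from the observation that value-hood depends only on the skeleton, so each slice $\vals_\tau \cap Expr_{s;\tau}$ is either $Expr_{s;\tau}$ or $\emptyset$. It gets $\nonvals_\tau$ as the complement of $\vals_\tau$ in $Expr_\tau$, which is exactly your ``alternative'' for the third item. Your structural induction and the explicit preimage description of the two slices fill in steps the paper leaves implicit.
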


\begin{proof}
	We prove the measurability of $\vals_\tau$, $\{\bot\}$, and $N_{\tau}$ in separate parts. For each, we show that it belongs to the $\sigma$-algebra $\mathcal{F}_{Expr_{\tau} + \bot}$ on the extended expression space $Expr_{\tau} + \bot$.

	\begin{enumerate}
		\item \textbf{(Measurability of $\vals_{\tau}$):} Every expression skeleton $s \in \skel_{\tau}$ is either a value skeleton or a non-value skeleton, depending on whether the syntactic form of $s$ is a value. An expression $e \in Expr_{\tau}$ is a value iff its skeleton is a value skeleton.
		
		To show $\vals_{\tau} \in \mathcal{F}_{Expr_{\tau}}$, by~\Cref{def:sigma-algebra-construction-expressions}, it suffices to show that $\forall s \in \skel_{\tau}$, $\vals_{\tau} \cap Expr_{s;\tau} \in \mathcal{F}_{Expr_{s;\tau}}$.

		Fix $s \in \skel_{\tau}$:
		\begin{itemize}
			\item If $s$ is a value skeleton, then every expression in $Expr_{s;\tau}$ is a value, so $\vals_{\tau} \cap Expr_{s;\tau} = Expr_{s;\tau} \in \mathcal{F}_{Expr_{s;\tau}}$.
			\item If $s$ is a non-value skeleton, then no expression in $Expr_{s;\tau}$ is a value, so $\vals_{\tau} \cap Expr_{s;\tau} = \emptyset \in \mathcal{F}_{Expr_{s;\tau}}$.
		\end{itemize} 

		Thus, $\vals_{\tau} \cap Expr_{s;\tau} \in \mathcal{F}_{Expr_{s;\tau}}$ for any $s \in \skel_{\tau}$, so $\vals_{\tau} \in \mathcal{F}_{Expr_{\tau}} \subseteq \mathcal{F}_{Expr_{\tau} + \bot}$.

		\item \textbf{(Measurability of $\{\bot\}$):} Taking $A = \emptyset$ in 
	        \Cref{def:meas-space-with-error} gives $\{\bot\} \in \mathcal{F}_{Expr_\tau + \bot}$.

        \item \textbf{(Measurability of $N_{\tau}$):} $N_{\tau} = Expr_\tau \setminus \vals_{\tau}$ is the 
        complement of $\vals_{\tau}$ in $Expr_\tau$. Since $\vals_{\tau} \in \mathcal{F}_{Expr_\tau}$ 
        and $\mathcal{F}_{Expr_\tau}$ is a $\sigma$-algebra, $N_{\tau} \in 
        \mathcal{F}_{Expr_\tau} \subseteq \mathcal{F}_{Expr_\tau + \bot}$.
	\end{enumerate}
\end{proof}

Note that in general, given $(X, \Sigma_X)$ and $(Y, \Sigma_Y)$ to be measurable spaces, $((X \times Y) + \bot)$ forms a measurable space, since the product $(X \times Y)$ is measurable and for any measurable space $(Z, \Sigma_Z)$, the space $(Z + \bot, \Sigma_{Z + \bot})$ with $\Sigma_{Z + \bot}$ generated by $\Sigma_Z \cup {{\bot}}$ is measurable. 

\subsection{Distribution Monad with Error}

Let $X$ be a set equipped with a suitable measurable space so that $\mathcal{D}(X + \bot)$ denotes the set of subprobability measures over $X + \bot$. We will make heavy use of the following generic monad construction, for which one easily verifies the monad laws:

\begin{definition}[Distribution Monad with Error]
	\label{def:dist-monad}
	We define 
	\begin{align*}
		& \text{return} \colon X \to \mathcal{D}(X + \bot), \quad \text{return}(x) = \delta_x \\
		& \text{bind} \colon \mathcal{D}(X + \bot) \to ((X + \bot) \to \mathcal{D}(Y + \bot)) \to \mathcal{D}(Y + \bot), \\
		&\text{bind}(\mu)(f)(A) = \int_{X + \bot} f(x)(A) \cdot \mu(dx)~,
	\end{align*}
	where $\delta_x$ denotes the Dirac measure for $x$, and $f$ is a sub-Markov kernel.
%
\end{definition}

\begin{definition}[$\bot$-extension]\label{def:bot-extension}
	Define the $\bot$-extension of $g : X \to \mathcal{D}(Y + \bot)$ to be $g_{\bot} \colon (X + \bot) \to \mathcal{D}(Y + \bot)$, obtained from $g$ by:
	\[
	g_\bot(e) =
	\begin{cases}
		\delta_\bot & \text{if } e =\bot \\
		g(e) & \text{otherwise}~.
	\end{cases}
	\]
\end{definition}

\begin{lemma}[$\bot$-extension is a sub-Markov kernel]\label{lem:bot-extension-kernel}
	Given a sub-Markov kernel $f : X \to \mathcal{D}(Y + \bot)$, its $\bot$ extension $f_{\bot} : (X + \bot) \to \mathcal{D}(Y + \bot)$ is also a sub-Markov kernel.
\end{lemma}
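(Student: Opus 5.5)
We must verify the two defining conditions of a sub-Markov kernel for $f_\bot$. We use the measurable structure of $X+\bot$ given by \Cref{def:meas-space-with-error}: a set is measurable iff it has the form $A$ or $A\cup\{\bot\}$ with $A\in\Sigma_X$. The $\sigma$-algebra on $Y+\bot$ is taken analogously, and throughout we regard $f$ as the map $x\mapsto f(x)(-)$ on $X$.

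\emph{Pointwise condition.} Fix $e\in X+\bot$.
\begin{itemize}
\item If $e=\bot$, then $f_\bot(e)=\delta_\bot$. This is a probability measure on $Y+\bot$, because $\{\bot\}$ is measurable (as in \Cref{lem:measurable-subsets}), and hence in particular a subprobability measure.
\item If $e\in X$, then $f_\bot(e)=f(e)$. This is a subprobability measure on $Y+\bot$ by the hypothesis that $f$ is a sub-Markov kernel.
\end{itemize}

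\emph{Measurability condition.} Fix a measurable $C\subseteq Y+\bot$ and set $h(e):=f_\bot(e)(C)$. We show $h$ is measurable, which amounts to showing $h^{-1}(M)$ is measurable in $X+\bot$ for every Borel $M\subseteq[0,1]$. Split the preimage along the two summands:
\[
h^{-1}(M) \;=\; \{x\in X \mid f(x)(C)\in M\}\;\cup\;\bigl(\{\bot\}\cap h^{-1}(M)\bigr).
\]
The first part is in $\Sigma_X$ because $x\mapsto f(x)(C)$ is measurable by assumption. The second part is either $\emptyset$ or $\{\bot\}$, according to whether $\delta_\bot(C)\in M$. Therefore $h^{-1}(M)$ has the form $A$ or $A\cup\{\bot\}$ with $A\in\Sigma_X$, and is measurable by \Cref{def:meas-space-with-error}.

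The only point needing care is the choice of $\sigma$-algebra on $X+\bot$. We use the coproduct $\sigma$-algebra of \Cref{def:meas-space-with-error} rather than an arbitrary one, so that adjoining or removing the single point $\bot$ preserves measurability, and so that measurability on $X+\bot$ reduces to measurability of the restriction to $X$. Once this is fixed, the remaining steps are case analysis, and no limit or integral arguments are needed.
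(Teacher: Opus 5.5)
Your proof is correct and follows the paper's argument. Both check the pointwise condition by the same two cases, and both split the preimage into an $X$-part, measurable by the hypothesis on $f$, and a $\bot$-part that is either $\emptyset$ or $\{\bot\}$. The one difference is that you handle an arbitrary Borel $M\subseteq[0,1]$ directly, whereas the paper reduces to the generating sets $(a,\infty)\cap[0,1]$ and case-splits on $a$; yours is a harmless streamlining. Separately, your appeal to measurability of $\{\bot\}$ in the pointwise case is unnecessary, since $\delta_\bot$ is a probability measure for any $\sigma$-algebra.
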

	
\begin{proof}
	We verify the two conditions for $f_{\bot}$ to be a sub-Markov kernel.
	
	\paragraph{(1) Subprobability measure:}
	We show for each $e \in X + \bot$, the map 
	\[
		A \mapsto f_{\bot}(e)(A)
	\]
	is a subprobability measure.

	Fix $e \in X + \bot$. 
	\begin{itemize}
	\item If $e = \bot$, then $f_{\bot}(\bot) = \delta_{\bot}$, the Dirac measure at $\bot$, which is a subprobability measure on $Y + \bot$.
	
	\item If $e \in X$, then $f_{\bot}(e) = f(e)$, and since $f$ is a sub-Markov kernel by assumption, $f(e)$ is a subprobability measure on $Y + \bot$.
	\end{itemize}
	
	Thus, for all $e \in X + \bot$, the map $A \mapsto f_{\bot}(e)(A)$ is a subprobability measure.
	
	\paragraph{(2) Measurable function:}
	We show for each $A \in \mathcal{F}_{Y + \bot}$, the map $e \mapsto f_{\bot}(e)(A)$ is a measurable function.

	Fix $A \in \mathcal{F}_{Y + \bot}$. Define
	\[
		g \colon (X + \bot) \to [0,1], \qquad g(e) := f_{\bot}(e)(A).
	\]
	We prove $g$ is measurable by showing $g^{-1}(B) \in \mathcal{F}_{X + \bot}$ for every $B \in \mathcal{B}([0,1])$. It suffices to prove that for every $a \in \R$,
	\[
		g^{-1}((a,\infty) \cap [0,1]) \in \mathcal{F}_{X + \bot}
	\]
	because the family of sets $\{(a,\infty) \cap [0,1] : a \in \R\}$ generates $\mathcal{B}([0,1])$. 

	Because $g(e)\in[0,1]$ for all $e\in X+\bot$ by definition, we have for every $a\in\mathbb{R}$,
	\begin{align*}
		g^{-1}\big((a,\infty)\cap[0,1]\big)
		&= \{e\in X+\bot \mid g(e)\in (a,\infty)\cap[0,1]\} \\
		&= \{e\in X+\bot \mid (g(e)\in (a,\infty)) \wedge (g(e)\in [0,1])\} \\
		&= \{e\in X+\bot \mid (g(e)>a) \wedge (g(e)\in [0,1])\} \\
		&= \{e\in X+\bot \mid g(e)>a\},
	\end{align*}
	where the last step uses that $g(e)\in[0,1]$ always holds.
	Hence it suffices to show that $\{e\in X+\bot \mid g(e)>a\}\in\mathcal{F}_{X+\bot}$
	for every $a\in\mathbb{R}$.

	\begin{itemize}
		\item If $a \geq 1$: $\{e\in X+\bot \mid g(e)>a\} = \emptyset \in \mathcal{F}_{X + \bot}$.
		\item If $a < 0$: $\{e\in X+\bot \mid g(e)>a\} = X + \bot \in \mathcal{F}_{X + \bot}$.
		\item If $0 \leq a < 1$: $\{e \in X + \bot \mid g(e) > a \}$.
	\end{itemize}

	The only nontrivial case is the last. It splits into two sets:
	\begin{align*}
		\{e \in X + \bot \mid g(e) > a \} = &\{e \in X + \bot \mid f_{\bot}(e)(A) > a \} \\
		= &\{e \in X \mid f_{\bot}(e)(A) > a \} \cup \{\bot \mid f_{\bot}(e)(A) > a \} \\
		= &\{e \in X \mid f(e)(A) > a \} \cup \{\bot \mid \dirac{\bot}(A) > a \}
	\end{align*}

	The first set is measurable in $X$ (and hence $X + \bot$) since $f$ is a sub-Markov kernel by assumption. The second set is either $\emptyset \in \mathcal{F}_{X + \bot}$ or $\{\bot\} \in \mathcal{F}_{X + \bot}$. Since $ \mathcal{F}_{X + \bot}$ is closed under countable unions, then $g^{-1}(B) \in \mathcal{F}_{X + \bot}$ for every $B \in \mathcal{B}([0,1])$. Thus, for all $A \in \mathcal{F}_{Y + \bot}$, the map $e \mapsto f_{\bot}(e)(A)$ is a measurable function.
	
	\medskip
	
	Since both conditions hold, $f_{\bot}$ is a sub-Markov kernel.
\end{proof}

\begin{lemma}[Dirac return is a sub-Markov kernel]\label{lem:dirac-return-kernel}
	For any measurable space $(X,\mathcal{F}_X)$, the map
	\[
		\mathsf{return} : X \to \mathcal{D}(X+\bot),
		\qquad x \mapsto \delta_x .
	\]
	is a sub-Markov kernel.
\end{lemma}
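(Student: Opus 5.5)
We verify directly the two clauses in the definition of a sub-Markov kernel, viewing $\mathsf{return}$ as the map $P(x,A) := \delta_x(A)$ for $x \in X$ and $A \in \mathcal{F}_{X+\bot}$. Here $X+\bot$ carries the $\sigma$-algebra $\{A, A\cup\{\bot\} \mid A \in \mathcal{F}_X\}$, constructed exactly as in \Cref{def:meas-space-with-error}.

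\textbf{Subprobability measure for fixed $x$.} Fix $x \in X$. The Dirac measure $\delta_x$ on $(X+\bot, \mathcal{F}_{X+\bot})$ is a probability measure. It is nonnegative and satisfies $\delta_x(\emptyset)=0$. For countable additivity, take disjoint $A_n$: the point $x$ lies in at most one $A_n$, so $\delta_x(\bigcup_n A_n) = \sum_n \delta_x(A_n)$. Finally $\delta_x(X+\bot)=1 \le 1$, so $P(x,-)$ is a subprobability measure.

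\textbf{Measurability for fixed $A$.} Fix $C \in \mathcal{F}_{X+\bot}$. By construction, $C = A$ or $C = A \cup \{\bot\}$ for some $A \in \mathcal{F}_X$. The map $g(x) := \delta_x(C)$ takes only the values $0$ and $1$, and it equals the indicator $\mathbf{1}_{C \cap X} = \mathbf{1}_A$. This is because $x$ ranges over $X$, so whether $\bot \in C$ is irrelevant.

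We argue as in the proof of \Cref{lem:bot-extension-kernel}, testing preimages of the generating sets $(a,\infty)\cap[0,1]$:
\begin{itemize}
\item for $a \ge 1$, the preimage is $\emptyset$;
\item for $a < 0$, the preimage is $X$;
\item for $0 \le a < 1$, the preimage is $A$.
\end{itemize}
Each of these lies in $\mathcal{F}_X$, so $g$ is measurable.

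\textbf{Main obstacle.} The only delicate point is bookkeeping: the codomain space is $X+\bot$ while the domain is $X$. We must therefore reduce $C$ to its trace $C\cap X$, and this trace is in $\mathcal{F}_X$ precisely by the definition of the extended $\sigma$-algebra. Once that identification is made, both clauses are immediate, and $\mathsf{return}$ is a (in fact Markov) sub-Markov kernel.
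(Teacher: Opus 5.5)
Your proposal is correct and follows essentially the same route as the paper. The paper also notes that $\delta_x$ is a probability measure and checks measurability of $x \mapsto \delta_x(C)$ on the generators $(a,\infty)\cap[0,1]$, where the only nontrivial case $0 \le a < 1$ yields the trace $C \cap X$. The one cosmetic difference is the justification that $C\cap X \in \mathcal{F}_X$: you use the explicit form $\{A, A\cup\{\bot\}\}$ of the extended $\sigma$-algebra, while the paper appeals to $\mathcal{F}_X$ being the trace of $\mathcal{F}_{X+\bot}$ on $X$, which is the same fact.
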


\begin{proof}
	We verify the two conditions for $\mathsf{return}$ to be a sub-Markov kernel.

	\paragraph{(1) Subprobability measure:}
	We show for each $x \in X$, the map 
	\[
		A \mapsto \mathsf{return}(x)(A) = \delta_x(A)
	\]
	is a subprobability measure.

	Fix $x \in X$. $\dirac{x}$ is the Dirac measure at $x$, hence a probability measure and therefore a subprobability measure on $X+\bot$. Thus, for every $x\in X$, the map $A \mapsto \delta_x(A)$ is a subprobability measure.

	\paragraph{(2) Measurable function:}
	We show that for each $A \in \mathcal{F}_{X+\bot}$, the map
	\[
		x \mapsto \mathsf{return}(x)(A)=\delta_x(A)
	\]
	is a measurable function.

	Fix $A \in \mathcal{F}_{X+\bot}$. Define
	\[
		g : X \to [0,1], \qquad g(x) := \delta_x(A).
	\]
	We prove $g$ is measurable by showing $g^{-1}(B)\in\mathcal{F}_X$ for every
	$B\in\mathcal{B}([0,1])$. It suffices to prove that for every $a\in\mathbb{R}$,
	\[
		g^{-1}\big((a,\infty)\cap[0,1]\big)\in\mathcal{F}_X,
	\]
	because the family $\{(a,\infty)\cap[0,1] : a\in\mathbb{R}\}$ generates
	$\mathcal{B}([0,1])$.

	Because $g(e)\in[0,1]$ for all $e\in X+\bot$, we have for every $a\in\mathbb{R}$,
	\begin{align*}
		g^{-1}\big((a,\infty)\cap[0,1]\big)
		&= \{e\in X \mid g(e)\in (a,\infty)\cap[0,1]\} \\
		&= \{e\in X \mid (g(e)\in (a,\infty)) \wedge (g(e)\in [0,1])\} \\
		&= \{e\in X \mid (g(e)>a) \wedge (g(e)\in [0,1])\} \\
		&= \{e\in X \mid g(e)>a\},
	\end{align*}
	where the last step uses that $g(e)\in[0,1]$ always holds.
	Hence it suffices to show that $\{e\in X \mid g(e)>a\}\in\mathcal{F}_{X}$
	for every $a\in\mathbb{R}$.

	\begin{itemize}
		\item If $a \geq 1$: $\{x\in X \mid g(x)>a\} = \emptyset \in \mathcal{F}_{X}$.
		\item If $a < 0$: $\{x\in X \mid g(x)>a\} = X \in \mathcal{F}_{X}$.
		\item If $0 \leq a < 1$: $\{x \in X \mid g(x) > a \}$.
	\end{itemize}

	The only nontrivial case is the last. Using $g(x)=\delta_x(A)$ and the fact that
	$\delta_x(A)\in\{0,1\}$ for all $x\in X$, we have for $0\le a<1$:
	\begin{align*}
		\{x \in X \mid g(x) > a \}
		&= \{x \in X \mid \delta_x(A) > a \} \\
		&= \{x \in X \mid \delta_x(A) = 1 \} \\
		&= \{x \in X \mid x \in A\} \\
		&= A \cap X.
	\end{align*}
	Since $A\in \mathcal{F}_{X+\bot}$ and $\mathcal{F}_X = \{X \cap A \mid A \in \mathcal{F}_{X+\bot}\}$ is the subspace $\sigma$-algebra
	on $X\subseteq X+\bot$, it follows that $A\cap X \in \mathcal{F}_X$.

	Therefore $\{x\in X \mid g(x)>a\}\in\mathcal{F}_X$ for all $a\in\mathbb{R}$, so
	$g^{-1}\big((a,\infty)\cap[0,1]\big)\in\mathcal{F}_X$ for all $a$.
	Hence $g^{-1}(B)\in\mathcal{F}_X$ for every $B\in\mathcal{B}([0,1])$, and $g$ is measurable.
\end{proof}

\begin{lemma}[Zero kernel is a sub-Markov kernel]\label{lem:zero-kernel}
    Let $(X, \mathcal{F}_X)$ and $(Y, \mathcal{F}_Y)$ be measurable spaces. The constant-zero map
    \[
        f : X \to \mathcal{D}(Y), \qquad f(x)(A) := 0
    \]
    is a sub-Markov kernel.
\end{lemma}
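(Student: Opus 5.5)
We verify the two defining conditions of a sub-Markov kernel directly for $f$, following the pattern of \Cref{lem:dirac-return-kernel} and \Cref{lem:bot-extension-kernel}.

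\emph{Subprobability measure.} Fix $x \in X$ and check that $A \mapsto f(x)(A) = 0$ is a subprobability measure on $(Y,\mathcal{F}_Y)$. It is nonnegative and $f(x)(\emptyset) = 0$. Countable additivity holds because for any pairwise disjoint family $(A_n)_{n\in\Nats}$ in $\mathcal{F}_Y$ both $f(x)(\bigcup_n A_n)$ and $\sum_n f(x)(A_n)$ equal $0$. Finally $f(x)(Y) = 0 \leq 1$, so the total mass bound holds.

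\emph{Measurability.} Fix $A \in \mathcal{F}_Y$ and set $g(x) := f(x)(A) = 0$. As in the earlier lemmas, it suffices to check preimages of the generating sets $(a,\infty)\cap[0,1]$. Since $g$ is constant, $\{x \in X \mid g(x) > a\}$ equals $X$ if $a < 0$ and $\emptyset$ if $a \geq 0$. Both sets lie in $\mathcal{F}_X$, so $g$ is measurable. (Equivalently, every constant map is measurable, because its preimages are only $\emptyset$ or $X$.)

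The argument has no real obstacle. The only care needed is the bookkeeping that the generating family argument applies, and this is identical to the proof of \Cref{lem:dirac-return-kernel}. The same argument applies verbatim when the codomain is $Y + \bot$, which is the form in which the zero kernel will presumably be used, e.g., for the semantics of $\diverge$.
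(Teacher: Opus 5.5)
Your proof is correct and matches the paper's. Both show that $f(x)$ is the zero measure, which is trivially $\sigma$-additive with total mass $0 \leq 1$, and that $x \mapsto f(x)(A) = 0$ is constant and hence measurable; your pass through the generating sets $(a,\infty)\cap[0,1]$ is harmless but, as your parenthetical remark notes, not needed.
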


\begin{proof}
    We verify the two kernel conditions.

    \paragraph{(1) Subprobability measure.}
	We show for each $x \in X$, the map
	\[
		A \mapsto f(x)(A)
	\]
	is a subprobability measure.

    Fix any $x \in X$. Then $f(x)$ is the zero measure: it satisfies 
    $\sigma$-additivity since
    \[
        f(x)\!\left(\bigcup_{i} A_i\right)
        = 0 = \sum_i 0
        = \sum_i f(x)(A_i)
    \]
    for every pairwise disjoint family $(A_i)_i$, and total mass 
    $f(x)(Y) = 0 \le 1$.

    \paragraph{(2) Measurable function.}
    For any $A \in \mathcal{F}_Y$, the map $x \mapsto f(x)(A) = 0$ is the 
    constant-zero function on $X$, hence measurable.

    Thus both conditions hold, so $f$ is a sub-Markov kernel.
\end{proof}

\begin{lemma}[Composition of sub-Markov kernels]\label{lem:kernel-composition}
	Let $(X, \Sigma_X)$, $(Y, \Sigma_Y)$, and $(Z, \Sigma_Z)$ be measurable spaces.
	If 
	\[
		P : X \to \mathcal{D}(Y)
		\quad\text{and}\quad
		Q : Y \to \mathcal{D}(Z)
	\]
	are sub-Markov kernels, then their composition
	\[
		(P \monbind Q)(x,A)
		\;:=\;
		\int_Y Q(y,A)\,P(x,dy)
	\]
	defines a sub-Markov kernel
	\[
		P \monbind Q : X \to \mathcal{D}(Z).
	\]
\end{lemma}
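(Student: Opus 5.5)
We verify the two defining conditions of a sub-Markov kernel for $P \monbind Q$ separately: first that each $(P\monbind Q)(x,-)$ is a subprobability measure on $Z$, then that each map $x \mapsto (P\monbind Q)(x,A)$ is measurable. Throughout, for fixed $A\in\Sigma_Z$ the integrand $y\mapsto Q(y,A)$ is a measurable function $Y\to[0,1]$ by the kernel property of $Q$. Hence the integral $\int_Y Q(y,A)\,P(x,dy)$ is well defined as a Lebesgue integral of a bounded nonnegative measurable function against the finite measure $P(x,-)$.

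\emph{Step 1 (subprobability measure).} Fix $x\in X$. Nonnegativity is immediate, and $(P\monbind Q)(x,\emptyset)=\int_Y 0\,P(x,dy)=0$. For countable additivity, take pairwise disjoint $(A_i)_i$ in $\Sigma_Z$. Since each $Q(y,-)$ is a measure, $Q(y,\bigcup_i A_i)=\sum_i Q(y,A_i)$ pointwise in $y$. The partial sums are nonnegative, measurable and increasing, so the monotone convergence theorem lets us exchange sum and integral, which gives additivity. For total mass, $Q(y,Z)\le 1$ gives $(P\monbind Q)(x,Z)\le \int_Y 1\,P(x,dy)=P(x,Y)\le 1$.

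\emph{Step 2 (measurability in $x$).} Fix $A\in\Sigma_Z$ and set $h(y):=Q(y,A)$, a measurable $[0,1]$-valued function. We prove by the standard machine that $x\mapsto\int_Y h\,dP(x,-)$ is measurable for every nonnegative measurable $h$:
\begin{enumerate}
\item For an indicator $h=\mathbf 1_B$ with $B\in\Sigma_Y$, the map is $x\mapsto P(x,B)$, which is measurable by the kernel property of $P$.
\item For a simple function $h=\sum_{k=1}^m c_k\mathbf 1_{B_k}$, the map is a finite nonnegative linear combination of measurable functions, hence measurable.
\item For general $h$, choose simple $0\le h_n\uparrow h$. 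By monotone convergence, $\int h_n\,dP(x,-)\uparrow\int h\,dP(x,-)$ for every $x$, and a pointwise limit of measurable real functions is measurable.
\end{enumerate}
Applying this to $h=Q(-,A)$ gives measurability of $x\mapsto (P\monbind Q)(x,A)$.

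The main obstacle is Step 2, specifically the bootstrapping from indicators to general measurable integrands. Everything hinges on the kernel property of $P$ holding for all of $\Sigma_Y$ rather than only on a generating class, so no $\pi$--$\lambda$ argument is needed. Beyond that, the pointwise-limit-of-measurable-functions step is routine. One bookkeeping detail: when this lemma is later applied with error-extended spaces $Y+\bot$ and $Z+\bot$, the hypotheses hold after invoking \Cref{lem:bot-extension-kernel}, so no extra care is required here.
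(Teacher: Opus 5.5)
Your proof is correct and has the same structure as the paper's. You check that each $(P \monbind Q)(x,-)$ is a subprobability measure, with the identical total-mass bound via $Q(y,Z)\le 1$ and $P(x,Y)\le 1$, and that $x \mapsto (P\monbind Q)(x,A)$ is measurable. The only difference is that you prove inline, via monotone convergence and the indicator--simple--monotone-limit argument, what the paper delegates to ``the standard integral construction'' and a citation of Kallenberg's Lemma~1.38.
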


\begin{proof}
	For each $x \in X$, the map $A \mapsto (P \monbind Q)(x,A)$ is a measure by the standard integral construction. Its total mass is bounded by
	\[
		(P \monbind Q)(x,Z)
		=
		\int_Y Q(y,Z)\,P(x,dy)
		\leq
		\int_Y 1\,P(x,dy)
		=
		P(x,Y)
		\leq 1.
	\]
	For each $A \in \Sigma_Z$, measurability of $y \mapsto Q(y,A)$ and the standard measurability theorem for integration against a kernel imply that $x \mapsto \int_Y Q(y,A)\,P(x,dy)$ is measurable; see~\cite[Lemma 1.38]{kallenberg1997foundations}. Hence the composition is a sub-Markov kernel.
\end{proof}

\subsection{Small-Step Semantics}
\label{app:small-step}

\begin{table}[t]
	\allowdisplaybreaks
	\caption{Small-step operational semantics. In the last case, the existence of suitable $v_i$ and $w_j$ is guaranteed since $\distinguishes{B_i}{V_i}$ and $e_i\in V_i$ for $i \in \{1,2\}$. Here, every function of the form $f(g) = \lambda g. \dirac{\ldots}$ is extended to $f_{\bot}$ in the sense of ~\Cref{def:bot-extension}.}
	\label{tab:smallstep-1}%
	\begin{center}
		\begin{adjustbox}{width=\textwidth}
		\begin{tabular}{l l l l l}
			\toprule
			\toprule
			$\mathbf{e \boldsymbol{:} \boldsymbol{\tau}}$ & \textbf{constraints}  & $\boldsymbol{\sem{e : \tau}}$   \\
			\hline
			%
			\multirow{1}{*}{$\diverge : \tau$} 
			
			& & $\lambda A.0$  \\
			\midrule
			%
		    \multirow{1}{*}{$v : \tau$} 
			
			& & $\dirac{v:\tau}$ \\
			\midrule
			%
			%
			%
			\multirow{2}{*}{$\letkw \; x = e_1 : \tau_1 \; \inkw \; e_2 : \tau_2 : \tau_2$} 
			
			& $e_1=v$ & $\dirac{e_2[v/x] : \tau_2}$ \\
			& $e_1$ no val.\ & $\sem{ e_1 : \tau_1 } \gg\!=  \lambda g. \dirac{ \letkw \; x = g \; \inkw \; e_2 : \tau_2 : \tau_2} $ \\
			\midrule 
			%
			\multirow{3}{*}{$\ifkw \; e_1 : \bool \; \thenkw \; e_2 : \tau \; \elsekw \; e_3 : \tau : \tau$} 
			
			& $e_1=\true$ & $\dirac{\expr_2 : \tau} $ \\
			& $e_1=\false$ & $\dirac{\expr_3 : \tau} $ \\
			& o.w.\ & $\sem{ e_1 : \bool} \gg\!= \lambda g. \dirac{ \ifkw \; g \; \thenkw \; e_2 : \tau \; \elsekw \; e_3 : \tau : \tau} $ \\
			\midrule 
			%
			\multirow{3}{*}{$\observekw\; (e : \bool)  : \unit$} 
			
			& $e = \true$ & $\dirac{\unit}$ \\
			& $e=\false$ & $\dirac{\bot}$ \\
			& o.w.\ & $\sem{e: \bool} \gg\!= \lambda g. \dirac{\observekw\; (g : \bool)  : \unit}$ \\
			\midrule
			%
			\multirow{4}{*}{\makecell{$e_1 : \tau_1 < e_2 : \tau_2 : \bool$ \\
			for $\tau_1 = \floattype{B}{V_1}, \tau_2 = \floattype{B}{V_2} $ \\
			or \\
			$\tau_1 = \tau_2 = \fin{n} $ }} 
			 
			& $e_1 = v_1, e_2 = v_2, v_1 < v_2 $ & $\dirac{\true : \bool}$ \\
			& $e_1 = v_1, e_2 = v_2, v_1 \geq v_2$ & $\dirac{\false : \bool}$  \\
			&$e_1 = v_1, e_2 \neq v_2$ & $\sem{ e_2 : \tau_2 } \gg\!= \lambda g. \dirac{ v_1 : \tau_1 < g : \bool}$ \\
			& o.w.\ & $\sem{ e_1 : \tau_1 } \gg\!= \lambda g. \dirac{ g < e_2 : \tau_2 : \bool}$  \\
			\midrule
			%
			\multirow{3}{*}{$\uniform(e_1 : \tau_1, e_2 : \tau_2) : \floattype{B}{V}$} 
			
			& $e_1 = v_1, e_2 = v_2, v_1 < v_2$ & $\text{Uniform}(v_1, v_2) \gg\!= \lambda v. \dirac{(v : \floattype{B}{V})}$ \\
			& $e_1 = v_1, e_2 = v_2, v_1 \geq v_2$ & $\sem{\diverge : \tau}$ \\
			& $e_1 = v_1, e_2 \neq v_2$ & $\sem{ e_2 : \tau_2 } \gg\!= \lambda g. \dirac{ \uniform(v_1 : \tau_1, g) : \tau }$ \\
			& o.w.\ & $\sem{ e_1 : \tau_1 } \gg\!= \lambda g. \dirac{ \uniform(g, e_2 : \tau_2) : \tau } $ \\
			\midrule
			%
			\multirow{1}{*}{$\discrete(v_0, \ldots, v_{n-1}) : \fin{n}$} 
			
			& & $\text{Discrete}{(v_0, \ldots, v_{n-1})}$  \\
			%
			%
			\midrule 
			%
			\multirow{5}{*}{\makecell{$\dc(e_1 : \tau_1, e_2 : \tau_2) : \fin{|B| + 1}$ \\
			$\tau = \fin{|B| + 1}$, \\
			$\tau_1 = \fin{|B_1| + 1}, \tau_2 = \fin{|B_2| + 1}$,\\
			$V_1 = \{v_1,\ldots,v_n\}, V_2 = \{w_1,\ldots,w_m\}$}} 
			
			& \makecell[l]{$e_1=v',\ e_2=w'$,\\ $V_1=\emptyset \lor V_2=\emptyset$} & $\sem{\diverge : \tau}$ \\
			& \makecell[l]{$e_1 = \discretize{v_i : \floattype{B_1}{V_1}}$,\\ $e_2 = \discretize{w_j : \floattype{B_2}{V_2}}$} & $\sem{\discretize{\uniform(v_i, w_j) : \floattype{B}{\top}} : \tau}$ \\
			& $ e_1 = v', e_2 \neq w'$ & $\sem{ e_2 : \tau_2 } \gg\!= \lambda g. \dirac{ \dc(v' : \tau_1, g) }$ \\
			& o.w.\ & $\sem{ e_1 : \tau_1 } \gg\!= \lambda g. \dirac{ \dc(g, e_2 : \tau_2) }$ \\
			& & \\
			%
			\bottomrule
			\bottomrule
		\end{tabular}
	\end{adjustbox}
	\end{center}
\end{table}%

\begin{table}[t]
	\allowdisplaybreaks
	\caption{Small-step operational semantics (continued). Here, every function of the form $f(g) = \lambda g. \dirac{\ldots}$ is extended to $f_{\bot}$ in the sense of \Cref{def:bot-extension}.}
	\label{tab:smallstep-2}%
	\begin{center}
		\begin{adjustbox}{width=\textwidth}
		\begin{tabular}{l l l l l}
			\toprule
			\toprule
			$\mathbf{e \boldsymbol{:} \boldsymbol{\tau}}$ & \textbf{constraints}  & $\boldsymbol{\sem{e : \tau}}$   \\
			\hline 
			\multirow{4}{*}{\makecell{$e_1 : \tau_1 \logand e_2 : \tau_2 : \bool$ \\
			$\tau_1 = \bool, \tau_2 = \bool$}} 
			
			& $e_1 = b_1 = \true, e_2 = b_2 = \true $ & $\dirac{\true : \bool}$ \\
			& $e_1 = b_1, e_2 = b_2 $ & $\dirac{\false : \bool}$  \\
			&$e_1 = b_1, e_2 \neq b_2$ & $\sem{ e_2 : \tau_2 } \gg\!= \lambda g. \dirac{ b_1 : \tau_1 \logand g : \bool}$ \\
			& o.w.\ & $\sem{ e_1 : \tau_1 } \gg\!= \lambda g. \dirac{ g \logand e_2 : \tau_2 : \bool}$  \\
			\midrule
			%
			\multirow{4}{*}{\makecell{$e_1 : \tau_1 \logor e_2 : \tau_2 : \bool$ \\
			$\tau_1 = \bool, \tau_2 = \bool$}} 
			
			& $e_1 = b_1 = \false, e_2 = b_2 = \false $ & $\dirac{\false : \bool}$ \\
			& $e_1 = b_1, e_2 = b_2 $ & $\dirac{\true : \bool}$  \\
			&$e_1 = b_1, e_2 \neq b_2$ & $\sem{ e_2 : \tau_2 } \gg\!= \lambda g. \dirac{ b_1 : \tau_1 \logor g : \bool}$ \\
			& o.w.\ & $\sem{ e_1 : \tau_1 } \gg\!= \lambda g. \dirac{ g \logor e_2 : \tau_2 : \bool}$  \\
			\midrule
			%
			\multirow{3}{*}{\makecell{$\lognot\; e : \tau : \bool$ \\
			$\tau = \bool$}} 
			
			& $e = \true $ & $\dirac{\false : \bool}$ \\
			& $e = \false $ & $\dirac{\true : \bool}$  \\
			& o.w.\ & $\sem{ e : \tau } \gg\!= \lambda g. \dirac{ \lognot\; g : \bool}$  \\
			\midrule
			%
			\multirow{3}{*}{\makecell{$(e_1 : \tau_1, e_2 : \tau_2) : \tau_1 * \tau_2$}} 
			
			& $e_1 = v_1, e_2 = v_2 $ & $\dirac{(v_1,v_2) : \tau_1 * \tau_2}$ \\
			&$e_1 = v_1, e_2 \neq v_2$ & $\sem{ e_2 : \tau_2 } \gg\!= \lambda g. \dirac{ (v_1 : \tau_1, f) : \tau_1 * \tau_2}$ \\
			& o.w.\ & $\sem{ e_1 : \tau_1 } \gg\!= \lambda g. \dirac{ (f, e_2 : \tau_2) : \tau_1 * \tau_2}$  \\
			\midrule
			%
			\multirow{2}{*}{\makecell{$\fstkw\; (e : \tau_1 * \tau_2) : \tau_1$ }} 
			
			& $e = (v_1, v_2)$ & $\dirac{v_1: \tau_1}$ \\ 
			& o.w. & $\sem{e : \tau_1 * \tau_2} \monbind \lambda g. \dirac{\fstkw\; e : \tau_1}$ \\
			\midrule
			%
			\multirow{2}{*}{\makecell{$\sndkw\; (e : \tau_1 * \tau_2) : \tau_2$ }} 
			
			& $e = (v_1, v_2)$ & $\dirac{v_2: \tau_2}$ \\ 
			& o.w. & $\sem{e : \tau_1 * \tau_2} \monbind \lambda g. \dirac{\sndkw\; e : \tau_2}$ \\
			\midrule
			%
			\multirow{3}{*}{\makecell{$(e_1 : \tau_1 \to \tau_2 \; e_2 : \tau_1) : \tau_2$ }} 
			
			& $e_1 = \funkw\; x \to e \text{ or } \fixkw\; g \; x := e, e_2 = v_2$ & $\dirac{e[v_2/x] : \tau_2}$ \\
			& $e_1 = \funkw\; x \to e \text{ or } \fixkw\; g \; x := e, e_2 \neq v_2$ & $\sem{e_2 : \tau_2} \monbind \lambda g. \dirac{e_1 : \tau_1 \to \tau_2 \; g : \tau_2}$ \\
			& o.w. & $\sem{e_1 : \tau_1 \to \tau_2} \monbind \lambda g. \dirac{f\; e_2 : \tau_1 : \tau_2}$ \\
			\midrule
			%
			\multirow{1}{*}{\makecell{$\text{nil} : \listty{\tau}$ }} 
			
			& & $\dirac{\text{nil} : \listty{\tau}}$ \\
			\midrule
			%
			\multirow{2}{*}{\makecell{$(e_1 : \tau_1 :: e_2 : \listty{\tau}) : \listty{\tau}$ }} 
			
			& $e_1 = v_1, e_2 = v_2$ & $\dirac{v_1::v_2 : \listty{\tau}}$ \\
			& $e_1 = v_1, e_2 \neq v_2$ & $\sem{e_2 : \listty{\tau}} \monbind \lambda g.\dirac{v_1 : \tau_1 ::g : \listty{\tau}}$ \\
			& o.w. & $\sem{e_1 : \tau_1} \monbind \lambda g.\dirac{f::e_2 : \listty{\tau} : \listty{\tau}}$ \\
			\midrule
			%
			\multirow{3}{*}{\makecell{$( \matchkw \; e : \listty{\tau_1}\; \withkw$ \\ 
			$\matchcase \text{nil} \rightarrow e_1 : \tau_2$ \\
			$\matchcase h :: t \rightarrow e_2 : \tau_2$ \\
			${\textnormal{\ttfamily\bfseries end}}) : \tau_2$ }} 
			
			& $e = \text{nil}$ & $\dirac{e_1 : \tau_2}$ \\
			& $e = v_h :: v_t $ & $\dirac{e_2[v_h/h, v_t/t] : \tau_2}$  \\
			& o.w. & $\sem{e : \listty{\tau_1}} \monbind \lambda g.\dirac{\matchkw \; g \; \withkw ... : \tau_2}$  \\

			& & \\
			%
			\bottomrule
			\bottomrule
		\end{tabular}
	\end{adjustbox}
	\end{center}
\end{table}%

Throughout this section, we denote by $e :\tau$ a well-typed expression of type $\tau$ and implicitly assume that all sub-expressions are annotated with types, as well. Moreover, for the typed expression $ e: \tau = \uniform(e_1 : \float[B_1; V_1], e_2 :\float[B_2;V_2]) : \float[B,V]$ with $B,B_1,B_2 \neq \top$ and $V_1,V_2\neq\top$, we introduce an auxiliary statement $\dc$ and let 
\[
	\discretize{ e: \tau} = \dc(\discretize{e_1 : \float[B_1; V_1]}, \discretize{e_2 :\float[B_2;V_2]})~.
\]
We define the semantics of $\dc$ in such a way that it behaves like the discretized program in the last case of \Cref{fig:discretization-code} but resolves the let-bindings and the conditional choices in a single step, which will simplify our proof. We equip this statement with the following typing rule :
\begin{align*}
	\inferrule[\textsc{TDiscreteCase}]
	{e_1 : \fin{|B_1| + 1} \\
		e_2 : \fin{|B_2| + 1} \\ \distinguishes{B_1}{V_1} \\ \distinguishes{B_2}{V_2}}
	{\dc(e_1, e_2) : \fin{|B|+1}}
\end{align*}

%
%
%
%
%
Equipped with these notions, we define the small-step semantics of our language:
\begin{definition}
	The small-step semantics $\sem{e :\tau} \in \mathcal{D}(Expr_{\tau} + \bot)$ is defined recursively on the structure of $e:\tau$ by the rules in \Cref{tab:smallstep-1} and \Cref{tab:smallstep-2}. 
\end{definition}
The missing probability mass in $\sem{e :\tau}$ is the probability of encountering an error due to ill-formed parameters fed into distributions (e.g., $\uniform(0.8,0.1)$). $\sem{e :\tau}(\bot)$ is the probability of encountering an observation failure.


\begin{lemma}[Measurability of constant extension map]\label{lem:constant-extension-measurable}
	Let $A \subseteq \R^n$ be equipped with the subspace $\sigma$-algebra
	\[
	\mathcal{F}_A := \{A \cap M \mid M \in \mathcal{B}(\R^n)\},
	\]
	and let $B \subseteq \R^d$ be equipped with the subspace $\sigma$-algebra
	\[
	\mathcal{F}_B := \{B \cap N \mid N \in \mathcal{B}(\R^d)\}.
	\]
	
	Fix constant vectors $c_1 \in \R^{m_1},\dots,c_k \in \R^{m_k}$.
	Define
	\[
	g : \R^n \to \R^d
	\]
	to be any map obtained by concatenating blocks consisting of
	
	\begin{itemize}
	\item copies of the input vector $\vec r \in \R^n$, and
	\item the constant vectors $c_1,\dots,c_k$,
	\end{itemize}
	
	in some fixed order. That is, $g(\vec r)$ is formed by inserting the vectors
	$\vec r$ and $c_j$ into a fixed sequence of blocks.
	
	Assume $g(A) \subseteq B$ and define $\phi := g|_A : A \to B$.
	Then $\phi$ is measurable.
\end{lemma}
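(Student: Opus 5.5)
The plan is to factor $\phi$ through the ambient Euclidean spaces. First show that $g\colon \R^n \to \R^d$ is Borel measurable. Then use the fact that restricting to a subspace $\sigma$-algebra, and corestricting to a subspace containing the image, both preserve measurability.

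For the first step, observe that $g$ is continuous. Each output coordinate of $g$ is either a fixed input coordinate $r_i$ (a projection, hence continuous) or a fixed real constant taken from some $c_j$. A map into $\R^d$ is continuous iff each coordinate is continuous, so $g$ is continuous. Continuous maps between Euclidean spaces are Borel measurable, since the preimage of any open set is open and open sets generate $\mathcal{B}(\R^d)$. Thus $g^{-1}(N)\in\mathcal{B}(\R^n)$ for every $N\in\mathcal{B}(\R^d)$. One could alternatively argue via the $\pi$-system of boxes $\prod_k (a_k,b_k]$: the preimage of such a box under $g$ is either empty or a product of intervals in the copied input coordinates. Continuity is the cleaner route.

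For the second step, take an arbitrary measurable set $B\cap N\in\mathcal{F}_B$ with $N\in\mathcal{B}(\R^d)$. Since $g(A)\subseteq B$,
\[
\phi^{-1}(B\cap N) = \{\vec r\in A \mid g(\vec r)\in B\cap N\} = A\cap g^{-1}(N).
\]
Here $g^{-1}(N)\in\mathcal{B}(\R^n)$ by the first step, so $A\cap g^{-1}(N)\in\mathcal{F}_A$ by definition of the subspace $\sigma$-algebra (\Cref{def:subspace-sigma-algebra}). Hence $\phi$ is measurable.

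No step is a genuine obstacle; the argument is routine. The only point needing care is the precise meaning of ``copies of the input vector''. If a block may be a duplicated or permuted copy of $\vec r$, each output coordinate is still a single input coordinate or a constant, so continuity is unaffected. The hypothesis $g(A)\subseteq B$ is used exactly once, in the equality above. Without it $\phi$ would not even be well-defined as a map into $B$.
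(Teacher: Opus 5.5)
Your proposal is correct and follows essentially the same route as the paper: you use $g(A)\subseteq B$ to get $\phi^{-1}(B\cap N)=A\cap g^{-1}(N)$, and then note that $g$ is continuous (its components are projections/identity and constants), hence Borel measurable, so this set lies in $\mathcal{F}_A$. The only cosmetic difference is that you argue continuity coordinate-by-coordinate rather than block-by-block, which also transparently covers duplicated or permuted copies of $\vec r$.
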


\begin{example}[Constant extension maps]\label{ex:coord-wiring}
	The constant extension map allows constants to be inserted and the input vector
	$\vec r \in \R^n$ to appear multiple times. The following are examples of such maps:
	\begin{itemize}
		\item $g : \R^n \to \R^n, \qquad g(\vec r) = \vec r$

		\item $g : \R^n \to \R^{n + m_1}, \qquad g(\vec r) = (\vec r, c_1), \quad c_1 \in \R^{m_1}$

		\item $g : \R^n \to \R^{m_1 + n + m_2}, \qquad g(\vec r) = (c_1, \vec r, c_2), \quad c_1 \in \R^{m_1},\; c_2 \in \R^{m_2}$

		\item $g : \R^n \to \R^{n + m_2}, \qquad g(\vec r) = (\vec r, c_2), \quad c_2 \in \R^{m_2}$

		\item $g : \R^n \to \R^{m_1 + n + m_2 + n + m_3 + n}, \qquad g(\vec r) = (c_1, \vec r, c_2, \vec r, c_3, \vec r), \quad c_i \in \R^{m_i}$
	\end{itemize}
\end{example}

\begin{proof}
	We show that $\phi^{-1}(S) \in \mathcal{F}_A$ for each $S \in \mathcal{F}_B$. 

	Fix $S \in \mathcal{F}_B$. By definition of $\mathcal{F}_B$, there exists
	$N \in \mathcal{B}(\R^d)$ such that $S = B \cap N$. We compute:
    \begin{align*}
        \phi^{-1}(S)
        &= \{\vec{r} \in A \mid \phi(\vec{r}) \in B \cap N\} \\
		&= \{\vec{r} \in A \mid \phi(\vec{r}) \in B \cap \phi(\vec{r}) \in N\} \tag{definition of $\cap$} \\
        &= \{\vec{r} \in A \mid \phi(\vec{r}) \in N\} \tag{$\phi(\vec{r}) \in B$ for all $\vec{r} \in A$}\\
		&= \{\vec{r} \in A \mid g(\vec{r}) \in N\} \tag{definition of $\phi$}\\
		&= \{\vec{r} \in A \mid g(\vec{r}) \in N\} \cap \R^n \tag{$\vec{r}$ ranges over $A \subseteq \R^n$}\\
        &= A \cap \{\vec{r} \in \R^n \mid g(\vec{r}) \in N\} \tag{rewrite as $\cap$ with $A$}.
    \end{align*}
    It remains to show that $\{\vec{r} \in \R^n \mid (\vec{r},\vec{c}) \in N\} \in \mathcal{B}(\R^n)$.
    Define the total constant-extension map
    \[
        g \colon \R^n \to \R^d.
    \]
    We claim $g$ is continuous. A map into a product space is continuous if and only
    if each coordinate function is continuous~\cite[Theorem~19.6]{Munkres2000Topology}. The coordinate functions of $g$ are:
    \begin{itemize}
        \item the identity $\mathrm{id} \colon \R^n \to \R^n$, $\vec{r} \mapsto \vec{r}$, and
        \item the constant map $\kappa_{\vec{c}_i} \colon \R^n \to \R^{m_i}$, $\vec{r} \mapsto \vec{c}_i$.
    \end{itemize}
    Both are continuous: for any open $U \subseteq \R^n$, $\mathrm{id}^{-1}(U) = U$ is open;
    and for any open $V \subseteq \R^{m_i}$,
    \[
        \kappa_{\vec{c}_i}^{-1}(V) =
        \begin{cases} \R^n & \text{if } \vec{c}_i \in V, \\ \emptyset & \text{if } \vec{c}_i \notin V, \end{cases}
    \]
    both of which are open. Hence $g$ is continuous, and therefore Borel measurable.
    Applying this to $N \in \mathcal{B}(\R^d)$:
    \[
        \{\vec{r} \in \R^n \mid g(\vec{r}) \in N\}
        = g^{-1}(N) \in \mathcal{B}(\R^n).
    \]
    Hence $\phi_{\vec{c}}^{-1}(S) = A \cap g^{-1}(N) \in \mathcal{F}_A$
    by definition, so $\phi_{\vec{c}}$ is measurable.
\end{proof}

\begin{lemma}[Substitution is measurable]\label{lem:subst-measurable}
    Fix a typed expression $e : \tau_2$ and a variable $x : \tau_1$. The substitution map
    \[
        \mathrm{Sub}_{e,x} \colon Val_{\tau_1} \to Expr_{\tau_2},
        \qquad
        \mathrm{Sub}_{e,x}(v) := e[v/x],
    \]
    is measurable, where $Val_{\tau_1} \subseteq Expr_{\tau_1}$ carries the subspace $\sigma$-algebra inherited
    from $(Expr_{\tau_1}, \mathcal{F}_{Expr_{\tau_1}})$.
\end{lemma}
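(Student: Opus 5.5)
The plan is to reduce measurability of $\mathrm{Sub}_{e,x}$ to measurability on each skeleton piece, and then to \Cref{lem:constant-extension-measurable}. By \Cref{def:sigma-algebra-construction-expressions}, a set $S \subseteq Expr_{\tau_2}$ is measurable iff $S \cap Expr_{s';\tau_2} \in \mathcal{F}_{Expr_{s';\tau_2}}$ for every skeleton $s'$. Likewise, the subspace $\sigma$-algebra on $Val_{\tau_1}$ is determined skeleton-wise: a set $T \subseteq Val_{\tau_1}$ is measurable iff each $T \cap Expr_{s;\tau_1}$ is measurable, since $Val_{\tau_1}$ is a union of whole skeleton classes (\Cref{lem:measurable-subsets}). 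It therefore suffices to fix a value skeleton $s : \tau_1$ and show that the restriction $\mathrm{Sub}_{e,x}|_{Expr_{s;\tau_1}}$ is measurable. Note that this must be checked as a map into all of $Expr_{\tau_2}$, so preimages are unions over $s'$ of preimages of $S\cap Expr_{s';\tau_2}$.

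The key observation is that substitution does not change skeletons in a way that depends on the real values. If $v = s[\vec r]$, then $e[v/x] = s^\ast[g(\vec r)]$, where $s^\ast := \decompose(e)_1[s/x]$ is the skeleton obtained by substituting $s$ for $x$ in the skeleton of $e$. Here $g(\vec r)$ concatenates the fixed constants of $e$, namely $\decompose(e)_2$, split into blocks $c_1,\dots,c_k$, with one copy of $\vec r$ for each free occurrence of $x$ in $e$, in the left-to-right hole order. I would prove this by a routine structural induction on $e$ using the clauses of $\decompose$ in \Cref{fig:decompose-definition-1,fig:decompose-definition-2}. The variable case $x$ yields $(s,\vec r)$, binders shadowing $x$ leave their body untouched, and float constants contribute constant blocks. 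Type annotations are preserved because substitution respects typing, so $s^\ast \in \skel_{\tau_2}$.

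Consequently $\mathrm{Sub}_{e,x}$ maps $Expr_{s;\tau_1}$ into the single piece $Expr_{s^\ast;\tau_2}$. Under the isomorphisms of \Cref{lem:isomorphism}, this map is conjugate to $\phi = g|_{A_{s;\tau_1}} : A_{s;\tau_1} \to A_{s^\ast;\tau_2}$, which is well defined because $g(A_{s;\tau_1}) \subseteq A_{s^\ast;\tau_2}$ (the image terms are well typed). \Cref{lem:constant-extension-measurable} then gives that $\phi$ is measurable, so the restriction is measurable into $Expr_{s^\ast;\tau_2}$. For a measurable $S$, the preimage of $S$ intersected with $Expr_{s;\tau_1}$ equals the preimage of $S \cap Expr_{s^\ast;\tau_2}$, which is measurable. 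Gluing over all $s$ completes the argument.

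The main obstacle is making the skeleton-substitution identity precise. The hole ordering produced by $\decompose$ must be shown to be exactly a fixed interleaving of constant blocks and copies of $\vec r$, independent of $\vec r$. One also has to confirm that type annotations inside $v$ (for example value sets $V$ containing specific reals) live in the skeleton, not in the holes, so that every $\vec r \in A_{s;\tau_1}$ lands in the same target skeleton $s^\ast$. I would handle both by carrying the statement ``$\decompose(e[s[\vec r]/x]) = (s^\ast, g(\vec r))$'' as the induction hypothesis, with the block structure of $g$ made explicit.
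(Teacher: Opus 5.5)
Your proposal is correct and follows essentially the paper's proof. The paper also works skeleton by skeleton and forms the skeleton $t$ obtained by plugging $s$ into each occurrence of $x$ in $e$ (your $s^\ast$). It writes $e[s[\vec r]/x] = t[\psi_s(\vec r)]$ for a constant-extension map $\psi_s$, measurable by \Cref{lem:constant-extension-measurable}, and concludes that $R_s = \psi_s^{-1}(A_t)$ is measurable. The differences are only cosmetic: the paper shows $C=\{u\in Expr_{\tau_1}\mid e[u/x]\in A\}$ is measurable over all skeletons and then intersects with $Val_{\tau_1}$, rather than restricting to value skeletons. It also takes the decomposition identity to hold by construction, where you propose to verify it by structural induction.
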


\begin{proof}
	We show that $\mathrm{Sub}_{e,x}^{-1}(A) \in \mathcal{F}_{Val_{\tau_1}}$ for all $A \in \mathcal{F}_{Expr_{\tau_2}}$. 
	
	Fix $A \in \mathcal{F}_{Expr_{\tau_2}}$. We know:
	\begin{align*}
		\mathrm{Sub}_{e,x}^{-1}(A) = &\{v \in Val_{\tau_1} \mid e[v/x] \in A\} \\
		= &\{v \in Val_{\tau_1} \mid e[v/x] \in A\} \cap Expr_{\tau_1} \tag{$v$ ranges over $Val_{\tau_1} \subseteq Expr_{\tau_1}$}\\
		= &Val_{\tau_1} \cap \{u \in Expr_{\tau_1} \mid e[u/x] \in A\} \tag{rewrite as $\cap$ with $Val_{\tau_1}$}\\
	\end{align*}
	Now, define:
    \[
        C := \{u \in Expr_{\tau_1} \mid e[u/x] \in A\}.
    \]
	
	We claim proving $C \in \mathcal{F}_{Expr_{\tau_1}}$ is sufficient because by~\Cref{def:subspace-sigma-algebra}, this gives $\mathrm{Sub}_{e,x}^{-1}(A) = Val_{\tau_1} \cap C \in \mathcal{F}_{Val_{\tau_1}}$. To show $C \in \mathcal{F}_{Expr_{\tau_1}}$, by~\Cref{def:sigma-algebra-construction-expressions}, it suffices to show that $\forall s \in \skel_{\tau_1}, C \cap Expr_{s;\tau_1} \in \mathcal{F}_{Expr_{s;\tau_1}}$. 
	
	Fix $s \in \skel_{\tau_1}$. We must prove:
	\begin{align*}
		&C \cap Expr_{s;\tau_1} \in \mathcal{F}_{Expr_{s;\tau_1}} \\
		\iff &\{u \in Expr_{\tau_1} \mid e[u/x] \in A\} \cap Expr_{s;\tau_1}\in \mathcal{F}_{Expr_{s;\tau_1}} \tag{definition of $C$} \\
		\iff &\{u \in Expr_{s; \tau_1} \mid e[u/x] \in A\} \in \mathcal{F}_{Expr_{s;\tau_1}}  \tag{$Expr_{s;\tau_1} \subseteq Expr_{\tau_1}$}\\
		\iff &\{s[\vec{r}] \in Expr_{s; \tau_1} \mid e[s[\vec{r}]/x] \in A\} \in \mathcal{F}_{Expr_{s;\tau_1}} \tag{rewrite $u$}\\
		\iff &\{\vec{r} \in A_{s;\tau_1} \mid e[s[\vec{r}]/x] \in A\} \in \mathcal{F}_{s;\tau_1} \tag{\Cref{lem:isomorphism}}
	\end{align*}
	Define
    \[
        R_s := \{\vec{r} \in A_{s;\tau_1} \mid e[s[\vec{r}]/x] \in A\}.
    \]

    Let $t \in \skel_{\tau_2}$ be the skeleton obtained from $e$ by replacing each
    occurrence of $x$ with $s$. Define a constant extension map:
	\[
		\psi_s \colon A_{s;\tau_1} \to A_{t;\tau_2}, \qquad \psi_s(\vec{r}) = \phi(\vec{r}, \vec{c_1}, ..., \vec{c_k})
	\]
	where $\vec{c_1}, ..., \vec{c_k} \in \R$ are the float constants appearing in $e$ (other than at occurrences of $x$), and $\phi$ permutes/interleaves the blocks into the order dictated by a left-to-right traversal of the holes of $t$. 

	More precisely, since $t$ is obtained from $e$ by replacing each free occurrence of $x$ with $s$, the holes of $t$ consist exactly of:

	\begin{itemize}
		\item the holes of $s$, repeated once per occurrence of $x$ in $e$, and
		\item the holes coming from the float constants of $e$ itself.
	\end{itemize}
	
	By~\Cref{lem:constant-extension-measurable}, $\psi_s$ is measurable.

    By construction, for all $\vec{r} \in A_{s;\tau_1}$:
    \[
        e[s[\vec{r}]/x] = t[\psi_s(\vec{r})].
    \]

    Define the measurable set of hole-assignments corresponding to $A$ at skeleton $t$:
    \begin{align*}
		A_t := &\{\vec{u} \in A_{t;\tau_2} \mid t[\vec{u}] \in A\} \\
		:= &\{\vec{u} \in A_{t;\tau_2} \mid t[\vec{u}] \in A \cap Expr_{t;\tau_2}\} 
        \tag{$t[\vec{u}]$ ranges over $Expr_{t;\tau_2}$}
	\end{align*}

	    By~\Cref{def:sigma-algebra-construction-expressions}, $A \cap Expr_{t;\tau_2} \in \mathcal{F}_{Expr_{t;\tau_2}}$, and by \Cref{lem:pullback-iso}, $A_t \in \mathcal{F}_{t;\tau_2}$. We compute:
    \begin{align*}
        R_s
        &= \{\vec{r} \in A_{s;\tau_1} \mid e[s[\vec{r}]/x] \in A\} \\
        &= \{\vec{r} \in A_{s;\tau_1} \mid t[\psi_s(\vec{r})] \in A\} \\
        &= \{\vec{r} \in A_{s;\tau_1} \mid \psi_s(\vec{r}) \in A_t\} \\
        &= \psi_s^{-1}(A_t).
    \end{align*}
    Since $\psi_s$ is measurable and $A_t \in \mathcal{F}_{t;\tau_2}$, we have
    $R_s = \psi_s^{-1}(A_t) \in \mathcal{F}_{s;\tau_1}$.

    Since $R_s \in \mathcal{F}_{s;\tau_1}$ for every $s \in \skel_{\tau_1}$, we get
    $C \in \mathcal{F}_{Expr_{\tau_1}}$, and therefore
    \[
        \mathrm{Sub}_{e,x}^{-1}(A) = C \cap Val_{\tau_1} \in \mathcal{F}_{Val_{\tau_1}}.
    \]
    Hence $\mathrm{Sub}_{e,x}$ is measurable.
\end{proof}

\begin{lemma}[Monadic bind is a sub-Markov kernel]\label{lem:bind-kernel}
	For every occurrence of a monadic bind $\mu \monbind f$ given in the small-step operational semantics in~\Cref{tab:smallstep-1,tab:smallstep-2}, $f : (Expr_{\tau_1} + \bot) \to \mathcal{D}(Expr_{\tau_2} + \bot)$ is a sub-Markov kernel.
\end{lemma}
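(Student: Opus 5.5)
Every continuation $f$ occurring in a bind in \Cref{tab:smallstep-1,tab:smallstep-2} has the shape $f = h_\bot$, where $h(g) = \dirac{C[g]}$ for a fixed one-hole evaluation context $C$. Examples are $C = \letkw\; x = [\cdot]\;\inkw\; e_2$, $\ifkw\;[\cdot]\;\thenkw\; e_2\;\elsekw\; e_3$, $v_1 < [\cdot]$, $\uniform(v_1,[\cdot])$, $\dc(v',[\cdot])$, $\fstkw\,[\cdot]$, $[\cdot]::e_2$, and so on. The only exceptions are the sampling binds such as $\text{Uniform}(v_1,v_2) \monbind \lambda v.\,\dirac{v : \floattype{B}{V}}$, where the continuation is the map $r \mapsto \dirac{r}$ from $\R$ into $Expr_{\floattype{B}{V}}$.

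By \Cref{lem:bot-extension-kernel} it suffices to show that each $h$ is a sub-Markov kernel on $Expr_{\tau_1}$. Each $h(g)$ is a Dirac measure, so it is a (sub)probability measure. The real work is measurability of $g \mapsto \dirac{C[g]}(A) = \mathbf{1}_A(C[g])$. This reduces to measurability of the plugging map $P_C \colon Expr_{\tau_1} \to Expr_{\tau_2}$, $g \mapsto C[g]$, because $g \mapsto \mathbf{1}_A(C[g]) = \mathbf{1}_{P_C^{-1}(A)}(g)$.

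Measurability of $P_C$ follows the same pattern as \Cref{lem:subst-measurable}, and really $P_C$ is a special case of it. Write $C[g] = e_C[g/y]$ for a fresh variable $y$ occurring exactly once. \Cref{lem:subst-measurable} is stated on values, so I would redo its argument on all of $Expr_{\tau_1}$; the proof never uses that $v$ is a value.

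Concretely, fix $A \in \mathcal{F}_{Expr_{\tau_2}}$ and a skeleton $s \in \skel_{\tau_1}$. Let $t$ be the skeleton of $C$ with $s$ plugged into the hole. The map $\vec r \mapsto$ (holes of $t$) inserts the fixed float constants of $C$ around $\vec r$, so it is a constant extension map. By \Cref{lem:constant-extension-measurable} it is measurable from $A_{s;\tau_1}$ to $A_{t;\tau_2}$. Then $P_C^{-1}(A)\cap Expr_{s;\tau_1}$ corresponds, under the isomorphism of \Cref{lem:isomorphism}, to the preimage of the measurable set $\{\vec u \in A_{t;\tau_2} \mid t[\vec u]\in A\}$. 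This is measurable for every $s$, so $P_C^{-1}(A) \in \mathcal{F}_{Expr_{\tau_1}}$ by \Cref{def:sigma-algebra-construction-expressions}.

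The remaining cases are the sampling continuations. The skeleton of $v : \floattype{B}{V}$ is $\square$ with one hole, so $r \mapsto r : \floattype{B}{V}$ is the inverse of the isomorphism of \Cref{lem:isomorphism}, restricted to Borel reals. Hence it is measurable, and the same indicator argument applies. The $\dc$-case and the structural recursion in the "o.w." rows also fit the context pattern: the continuation only wraps $g$ in a fixed context whose other subexpressions are fixed.

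I expect the main obstacle to be bookkeeping. One must check that $t$ is indeed a well-typed skeleton in $\skel_{\tau_2}$ with $A_{s;\tau_1}$ mapped into $A_{t;\tau_2}$, and that the hole ordering is a fixed interleaving independent of $\vec r$. This holds because plugging preserves types, and because type annotations, including value sets, live in the skeleton rather than in the holes.
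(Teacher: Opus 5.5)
Your proposal is correct and is essentially the paper's proof. The paper works out only the $\letkw$-context in detail (skeleton $\letkw\; y = s\;\inkw\; s_2$, constant-extension map $\vec r \mapsto (\vec r,\vec c)$, the skeleton isomorphism, then \Cref{lem:bot-extension-kernel}) and calls the remaining cases analogous; your one-hole-context argument makes that uniformly precise. Your extra treatment of the sampling continuations $r \mapsto \dirac{r}$ is sound, because their result type always has $V=\top$ and so every real is an admissible filling, although those continuations have domain $\R$ rather than $Expr_{\tau_1}+\bot$ and so lie outside the statement as phrased.
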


\begin{proof}
	We first prove that $f : Expr_{\tau_1} \to \mathcal{D}(Expr_{\tau_2} + \bot)$ is a sub-Markov kernel. Let $e$ be an expression whose evaluation rule contains a monadic bind $\mu \monbind f$. We proceed by case analysis on $e$:

	\begin{itemize}
		\item Consider $e = (\letkw\; y = x : \tau_1 \; \inkw\; e_2 : \tau_2) : \tau_2$. 
		
		This means proving that $\lambda x.\, \dirac{\letkw\; y = x\; \inkw\; e_2}$ is a sub-Markov kernel. Formally, define
		\[
			f \colon Expr_{\tau_1} \to \mathcal{D}(Expr_{\tau_2} + \bot),
			\qquad
			f(x) = \dirac{\letkw\; y = x\; \inkw\; e_2}.
		\]
		We verify the two conditions for $f$ to be a sub-Markov kernel.
	
		\paragraph{(i) Subprobability measure:} We show for each $x \in Expr_{\tau_1}$, the map $A \mapsto f(x)(A)$ is a subprobability measure.
	
		\medskip
	
		Fix $x \in Expr_{\tau_1}$. The map $A \mapsto f(x)(A)$ equals $A \mapsto \dirac{\letkw\; y = x\; \inkw\; e_2}(A)$, which satisfies:
		\[
			f(x)(A) =
			\begin{cases}
				1 & \text{if } \letkw\; y = x\; \inkw\; e_2 \in A, \\
				0 & \text{otherwise.}
			\end{cases}
		\]
		A Dirac measure is a subprobability measure, so this condition holds.
	
		\paragraph{(ii) Measurable function:} 
	
		We show for each $A \in \mathcal{F}_{Expr_{\tau_2} + \bot}$, the map $x \mapsto f(x)(A)$ is a measurable function.
	
		\medskip

		Fix $A \in \mathcal{F}_{Expr_{\tau_2} + \bot}$. Define
		\[
			g \colon Expr_{\tau_1} \to [0,1],
			\qquad
			g(x) := f(x)(A) = \dirac{\letkw\; y = x\; \inkw\; e_2}(A).
		\]
		We prove $g$ is measurable by showing $g^{-1}(B)\in\mathcal{F}_{Expr_{\tau_1}}$ for every
		$B\in\mathcal{B}([0,1])$. It suffices to prove that for every $a\in\mathbb{R}$,
		\[
			g^{-1}\big((a,\infty)\cap[0,1]\big)\in\mathcal{F}_{Expr_{\tau_1}},
		\]
		because the family $\{(a,\infty)\cap[0,1] : a\in\mathbb{R}\}$ generates
		$\mathcal{B}([0,1])$.

		Because $g(x)\in[0,1]$ for all $x\in Expr_{\tau_1}$, we have for every $a\in\mathbb{R}$,
		\begin{align*}
			g^{-1}\big((a,\infty)\cap[0,1]\big)
			&= \{x\in Expr_{\tau_1} \mid g(x)\in (a,\infty)\cap[0,1]\} \\
			&= \{x\in Expr_{\tau_1} \mid (g(x)\in (a,\infty)) \wedge (g(x)\in [0,1])\} \\
			&= \{x\in Expr_{\tau_1} \mid (g(x)>a) \wedge (g(x)\in [0,1])\} \\
			&= \{x\in Expr_{\tau_1} \mid g(x)>a\},
		\end{align*}
		where the last step uses that $g(x)\in[0,1]$ always holds.
		Hence it suffices to show that $\{x\in X \mid g(x)>a\}\in\mathcal{F}_{Expr_{\tau_1}}$
		for every $a\in\mathbb{R}$.

		\begin{itemize}
			\item If $a \geq 1$: $\{x\in Expr_{\tau_1} \mid g(x)>a\} = \emptyset \in \mathcal{F}_{Expr_{\tau_1}}$.
			\item If $a < 0$: $\{x\in Expr_{\tau_1} \mid g(x)>a\} = Expr_{\tau_1} \in \mathcal{F}_{Expr_{\tau_1}}$.
			\item If $0 \leq a < 1$: $\{x \in Expr_{\tau_1} \mid g(x) > a \}$.
		\end{itemize}

		The only nontrivial case is the last. Using $g(x)=\dirac{\letkw\; y = x\; \inkw\; e_2}$ and the fact that
		$g(x) \in\{0,1\}$ for all $x\in Expr_{\tau_1}$, we have for $0\le a<1$:
		\begin{align*}
			\{x \in Expr_{\tau_1} \mid g(x) > a \}
			&= \{x \in Expr_{\tau_1} \mid g(x) = 1 \} \\
		\end{align*}

		Define
		\[
			B := \{x \in Expr_{\tau_1} \mid \letkw\; y = x\; \inkw\; e_2 \in A\}.
		\]
		We must show $B \in \mathcal{F}_{Expr_{\tau_1}}$. By~\Cref{def:sigma-algebra-construction-expressions}, it suffices to show that $\forall s \in \skel_{\tau_1},\; B \cap Expr_{s;\tau_1} \in \mathcal{F}_{Expr_{s;\tau_1}}$.

		Fix $s \in \skel_{\tau_1}$. We must prove:
		\begin{align*}
			&B \cap Expr_{s;\tau_1} \in \mathcal{F}_{Expr_{s;\tau_1}} \\
			\iff \quad&\{x \in Expr_{\tau_1} \mid \letkw\; y = x\; \inkw\; e_2 \in A\} \cap Expr_{s;\tau_1} \in \mathcal{F}_{Expr_{s;\tau_1}} \tag{Definition of $B$}\\
			\iff \quad&\{x \in Expr_{s;\tau_1} \mid \letkw\; y = x\; \inkw\; e_2 \in A\} \in \mathcal{F}_{Expr_{s;\tau_1}} \tag{$Expr_{s;\tau_1} \subseteq Expr_{\tau_1}$}\\
			\iff \quad&\{s[\vec{r}] \in Expr_{s;\tau_1} \mid \letkw\; y = s[\vec{r}]\; \inkw\; e_2 \in A\} \in \mathcal{F}_{Expr_{s;\tau_1}} \tag{rewrite $x$}\\
			\iff \quad&\{\vec{r} \in A_{s;\tau_1} \mid \letkw\; y = s[\vec{r}]\; \inkw\; e_2 \in A\} \in \mathcal{F}_{s;\tau_1} \tag{\Cref{lem:isomorphism}}
		\end{align*}
		Define
		\[
			R_s := \{\vec{r} \in A_{s;\tau_1} \mid \letkw\; y = s[\vec{r}]\; \inkw\; e_2 \in A\}.
		\]
		
		\medskip

		Let $s_2$ be the skeleton of $e_2$, with $m := \holes(s_2)$, and let $\vec{c} \in A_{s_2;\tau_2} \subseteq \R^m$ be the unique tuple such that $e_2 = s_2[\vec{c}]$. Define the combined skeleton
		\[
			s' := \letkw\; y = s\; \inkw\; s_2 \in \skel_{\tau_2},
		\]
		whose holes consist of those from $s$ (filled by $\vec{r} \in \R^n$) followed by those from $s_2$ (filled by $\vec{c} \in \R^m$), so that $A_{s';\tau_2} \subseteq \R^{n+m}$. Thus, for every $\vec{r} \in A_{s;\tau_1}$:
		\[
			\letkw\; y = s[\vec{r}]\; \inkw\; e_2 = s'[(\vec{r}, \vec{c})].
		\]
	
		\medskip
		Define a constant-extension map
		\[
			\phi_{\vec{c}} \colon A_{s;\tau_1} \to A_{s';\tau_2},
			\qquad
			\phi_{\vec{c}}(\vec{r}) := (\vec{r}, \vec{c}).
		\]
			By \Cref{lem:constant-extension-measurable}, $\phi_{\vec{c}}$ is measurable.

		Define the measurable set of hole-assignments corresponding to $A$ at skeleton $s'$:
		\begin{align*}
			A_{s'} := &\{(\vec{r}, \vec{r}\,') \in A_{s';\tau_2} \mid s'[\vec{r}, \vec{r}\,'] \in A\} \\
			:= &\{(\vec{r}, \vec{r}\,') \in A_{s';\tau_2} \mid s'[\vec{r}, \vec{r}\,'] \in A \cap Expr_{\tau_2}\} \tag{\Cref{def:subspace-sigma-algebra}}\\
			:= &\{(\vec{r}, \vec{r}\,') \in A_{s';\tau_2} \mid s'[\vec{r}, \vec{r}\,'] \in A \cap Expr_{s';\tau_2}\} \tag{$s'[\vec{r}, \vec{r}\,']$ ranges over $Expr_{s';\tau_2}$}
		\end{align*}
		Since $A \cap Expr_{s';\tau_2} \in \mathcal{F}_{Expr_{s';\tau_2}}$, by~\Cref{lem:isomorphism}, $A_{s'} \in \mathcal{F}_{s';\tau_2}$.
	
		\medskip
		Then:
		\begin{align*}
			R_s &= \{\vec{r} \in A_{s;\tau_1} \mid \letkw\; y = s[\vec{r}]\; \inkw\; e_2 \in A\} \\
			&= \{\vec{r} \in A_{s;\tau_1} \mid s'[(\vec{r},\vec{c})] \in A\} \\
			&= \{\vec{r} \in A_{s;\tau_1} \mid s'[\phi_{\vec{c}}(\vec{r})] \in A\} \\
			&= \{\vec{r} \in A_{s;\tau_1} \mid \phi_{\vec{c}}(\vec{r}) \in A_{s'}\} \\
			&= \phi_{\vec{c}}^{-1}(A_{s'}).
		\end{align*}
		Since $A_{s'} \in \mathcal{F}_{s';\tau_2}$ and $\phi_{\vec{c}}$ is measurable, we conclude
		\[
			\phi_{\vec{c}}^{-1}(A_{s'}) \in \mathcal{F}_{s;\tau_1},
		\]
		and therefore $B \cap Expr_{s;\tau_1} \in \mathcal{F}_{Expr_{s;\tau_1}}$ for all $s \in \skel_{\tau_1}$, giving $B \in \mathcal{F}_{Expr_{\tau_1}}$ as required.

		\medskip

		Hence, $\{x \in Expr_{\tau_1} \mid g(x) > a\} \in \mathcal{F}_{Expr_{\tau_1}}$ for all $a \in \R$, so $g^{-1}((a,\infty) \cap [0,1]) \in \mathcal{F}_{Expr_{\tau_1}}$ for all $a \in \R$. Hence $g^{-1}(B) \in \mathcal{F}_{Expr_{\tau_1}}$ for every $B \in \mathcal{B}([0,1])$, and $g$ is measurable.

		\medskip

		Thus, $f \colon Expr_{\tau_1} \to \mathcal{D}(Expr_{\tau_2} + \bot)$ is a sub-Markov kernel.

		\item All other cases analogous. \todo[inline]{Maybe show the comparison case}
	\end{itemize}
	By~\Cref{lem:bot-extension-kernel}, the $\bot$-extension $f_\bot \colon (Expr_{\tau_1} + \bot) \to \mathcal{D}(Expr_{\tau_2} + \bot)$ is also a sub-Markov kernel, as required.
\end{proof}

\begin{lemma}[Small-step semantics is a sub-Markov kernel]\label{lem:small-step-kernel}
    For each type $\tau$, the small-step operational semantics $\sem{\cdot} \colon Expr_{\tau} \to \mathcal{D}(Expr_{\tau} + \bot)$ defines a sub-Markov kernel.
\end{lemma}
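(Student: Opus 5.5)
We prove \Cref{lem:small-step-kernel} by structural induction on the skeleton of $e:\tau$, establishing the two sub-Markov kernel conditions separately. Condition (1), that $\sem{e:\tau}$ is a subprobability measure for each fixed $e$, is the easier one. We check it row by row in \Cref{tab:smallstep-1,tab:smallstep-2}.
\begin{itemize}
\item The constant rows give the zero measure, Dirac measures, $\text{Discrete}(\ldots)$, or pushforwards of $\text{Uniform}(v_1,v_2)$ (and the other base distributions) along $v\mapsto (v:\float[B;V])$.
\item The recursive rows have the form $\sem{e_i}\monbind f_\bot$. Here $\sem{e_i}$ is a subprobability measure by the induction hypothesis, and $f_\bot$ is a sub-Markov kernel by \Cref{lem:bind-kernel}. \Cref{lem:kernel-composition} then bounds the total mass by $1$.
\item The $\dc$ rows reduce to $\sem{\diverge}$, a discretized uniform (again a \textsc{Discrete} distribution), or a bind.
\end{itemize}

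Condition (2) is measurability of $e\mapsto\sem{e:\tau}(A)$ for fixed $A\in\mathcal{F}_{Expr_\tau+\bot}$, and this is the real content. By \Cref{def:sigma-algebra-construction-expressions} it suffices to show, for each skeleton $s\in\skel_\tau$, that $\vec r\mapsto \sem{s[\vec r]:\tau}(A)$ is Borel on $A_{s;\tau}$; we then transport along the isomorphism of \Cref{lem:isomorphism}. The reason this works is that a countable union over skeletons of sets, each measurable within its own skeleton component, is measurable by the very definition of $\mathcal{F}_{Expr_\tau}$. Within a fixed skeleton the applicable rule is not yet determined, because the side conditions can depend on $\vec r$ (for example $v_1<v_2$ versus $v_1\ge v_2$). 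We therefore partition $A_{s;\tau}$ into finitely many Borel pieces, one per row of the table, and prove measurability on each piece:
\begin{itemize}
\item The conditions $v_1<v_2$ and $v_1\le v_2$ cut out half-spaces in the relevant coordinates.
\item Conditions such as ``$e_1$ is a value'' or ``$e_1=\discretize{v_i}$'' depend only on the skeleton, so they select either the whole piece or nothing.
\end{itemize}

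On each piece we argue by the form of the row.
\begin{itemize}
\item \textbf{Dirac rows.} We show that $\vec r\mapsto \delta_{t(\vec r)}(A)$ is measurable by writing the target expression as $s'[g(\vec r)]$ for a fixed skeleton $s'$ and a constant extension map $g$. \Cref{lem:constant-extension-measurable} gives measurability of $g$. For the substitution rows ($\letkw$ with a value, application, $\matchkw$), we use \Cref{lem:subst-measurable} instead.
\item \textbf{Sampling rows.} We need $(v_1,v_2)\mapsto \text{Uniform}(v_1,v_2)(\{v: (v:\float[B;V])\in A\})$. The inner set pulls back to a Borel set $N\subseteq\R$. The map $(a,b,N)\mapsto \frac{1}{b-a}\lambda(N\cap[a,b))$ is Borel in $(a,b)$, because $\frac{1}{b-a}\mathbf 1_{[a,b)}(x)$ is jointly measurable and Tonelli applies. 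The same argument works for every distribution in \Cref{appen:supported-distributions} that has a jointly measurable density or CDF.
\item \textbf{Bind rows.} We have $\sem{e}(A)=\int f_\bot(g)(A)\,\sem{e_i}(dg)$, where $\vec r$ splits into the hole vector $\vec r_i$ of $e_i$ and the remaining coordinates $\vec r'$, which feed into the kernel $f$. The induction hypothesis makes $\vec r_i\mapsto\sem{e_i}$ a kernel. We then need the integrand to be jointly measurable in $(\vec r', g)$, not merely in $g$. To get this, we strengthen the proof of \Cref{lem:bind-kernel} to the statement that the ``contexting'' map $(\vec r', g)\mapsto C[\vec r'][g]$ is measurable; this follows by the same constant-extension argument with $\vec r'$ treated as a variable block rather than a constant. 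A standard parametrized version of \cite[Lemma 1.38]{kallenberg1997foundations} then gives measurability of the integral in $(\vec r_i,\vec r')$.
\end{itemize}

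The main obstacle is this bind case. \Cref{lem:bind-kernel} only treats the continuation for a fixed context, whereas here the context's own float constants vary together with the hole vector of $e_i$. We therefore expect to have to prove the joint, parametric version of that lemma explicitly. The $\dc$ rows are comparatively easy, since their arguments have finite type and so the branch selection is skeleton-determined.
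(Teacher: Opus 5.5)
Your route differs from the paper's and is more careful. The paper inducts on the typing derivation of a single $e$. In each bind row it composes the induction-hypothesis kernel for $e_1$ with the fixed-context continuation of \Cref{lem:bind-kernel} via \Cref{lem:kernel-composition}, and for the substitution rows it uses \Cref{lem:subst-measurable}, again with $e_2$ fixed. Read literally, this gives measurability only in the holes of $e_1$ (or of $v$) with the surrounding float constants frozen, not in all of $e$ as the statement requires. Your skeleton-wise reduction, the Borel partition of $A_{s;\tau}$ by the rows' side conditions, the explicit parameter-measurability of the sampling rows, and the joint version of the bind lemma are what closes this. \Cref{lem:subst-measurable} has the same frozen-context limitation you diagnosed for the bind lemma. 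So the substitution rows should go through your generic Dirac-row argument instead: target skeleton $s_2[s_v/x]$, with a coordinate map that permutes and duplicates blocks of $\vec r$, which is continuous.

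One step fails as written. Skeletons are not countable: annotations $\floattype{B}{V}$ contain arbitrary reals, as the paper remarks, so $\mathcal{F}_{Expr_\tau}$ is a coproduct over uncountably many components. Your reduction to single skeletons is unaffected, since it follows from the definition for any index set (your ``countable union'' is inaccurate but not needed there). The bind step, however, breaks.

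\begin{itemize}
\item The constant-extension argument shows that $(\vec r',g)\mapsto C[\vec r'][g]$ is measurable on each $A'\times Expr_t$ separately, i.e.\ for the coproduct $\sigma$-algebra on $A'\times Expr_{\tau_i}$.
\item The parametrized Kallenberg lemma needs measurability for $\mathcal{B}(A')\otimes\mathcal{F}_{Expr_{\tau_i}}$.
\item With uncountably many components the product $\sigma$-algebra is strictly coarser. Sections of a product-measurable set have uniformly bounded Borel rank. But take $g$ ranging over the skeletons $(\square:\floattype{\{\cleq{c}\}}{\{c\}})<(\square:\floattype{\{\cleq{c}\}}{\top})$, $c\in\R$, under a context such as $\ifkw\,[\cdot]\,\thenkw\,\square\,\elsekw\,\square$. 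Then you can choose $A\in\mathcal{F}_{Expr_\tau}$ whose traces on the resulting components have unbounded rank in $\vec r'$.
\end{itemize}

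The fix is an extra induction invariant. For each skeleton $s$ there is a countable (for the rows in the tables, finite) set $\mathcal{S}(s)$ of skeletons such that every $\sem{s[\vec r]}$ with $\vec r\in A_{s;\tau}$ is concentrated on $\bigsqcup_{t\in\mathcal{S}(s)}Expr_t+\bot$. This is immediate row by row:
\begin{itemize}
\item Dirac rows have a fixed target skeleton on each piece.
\item Sampling rows land in $\square:\floattype{B}{V}$ or in finitely many $\finconst{k}{n}$.
\item Bind rows push $\mathcal{S}(s_i)$ through the fixed context skeleton.
\end{itemize}
With this invariant the integral splits into countably many integrals over $Expr_t\cong A_t\subseteq\R^{k}$. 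On each $A'\times A_t$, product measurability is ordinary Borel measurability, and your coordinate-map argument then applies unchanged.
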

\begin{proof}
    By induction on the typing derivation of $e : \tau$. We show in each case that the map $e \mapsto \sem{e : \tau}$ is a sub-Markov kernel.

	\begin{itemize}
		\item Consider $e = (\letkw\; y = x : \tau_1 \; \inkw\; e_2 : \tau_2) : \tau_2$. The small-step semantics is:
		\[
			\sem{\letkw \; x = e_1 \; \inkw \; e_2 : \tau_2} =
			\begin{cases}
				\dirac{e_2[v/x]} & \text{if } e_1 = v \\
				\sem{e_1 : \tau_1} \gg\!\!= \lambda g.\; \dirac{\letkw \; x = g \; \inkw \; e_2} & \text{if } e_1 \neq v
			\end{cases}
		\]
		\begin{itemize}
			\item If $e_1 = v$, By~\Cref{lem:dirac-return-kernel} and~\Cref{lem:subst-measurable}, $e \mapsto \dirac{e_2[v/y] : \tau_2}$ is a sub-Markov kernel. Hence, $\sem{\letkw \; x = e_1 \; \inkw \; e_2 : \tau_2} \colon Expr_{\tau_2} \to \mathcal{D}(Expr_{\tau_2} + \bot)$ is a sub-Markov kernel.
			
			\item By the induction hypothesis, 
			\[
				\sem{e_1 : \tau_1} \colon Expr_{\tau_1} \to \mathcal{D}(Expr_{\tau_1} + \bot)
			\]
			is a sub-Markov kernel. By~\Cref{lem:bind-kernel},
			\[
				\lambda g.\; \dirac{\letkw \; x = g \; \inkw \; e_2} \colon 
				(Expr_{\tau_1} + \bot) \to \mathcal{D}(Expr_{\tau_2} + \bot)
			\]
			is a sub-Markov kernel. Hence by~\Cref{lem:kernel-composition}, their composition
			\[
				\sem{e_1 : \tau_1} \gg\!\!= \lambda g.\; \dirac{\letkw \; x = g \; \inkw \; e_2}
				\colon Expr_{\tau_1} \to \mathcal{D}(Expr_{\tau_2} + \bot)
			\]
			is a sub-Markov kernel.
			
			Thus, $\sem{\letkw \; x = e_1 \; \inkw \; e_2 : \tau_2} \colon Expr_{\tau_2} \to \mathcal{D}(Expr_{\tau_2} + \bot)$ is a sub-Markov kernel.
		\end{itemize}

		\item Consider $e = \diverge : \tau$.
		
		The small-step semantics is:
		\[
			\sem{\diverge : \tau} = \lambda A.\, 0.
		\]

		By~\Cref{lem:zero-kernel},
		\[
			\lambda A.\, 0
		\]
		is a sub-Markov kernel.

		Thus, $\sem{\diverge : \tau} \colon Expr_{\tau} \to \mathcal{D}(Expr_{\tau} + \bot)$ is a sub-Markov kernel.
		
		\item All other cases are analogous.
	\end{itemize}
\end{proof}

\subsection{Small-Step Coupling Semantics}

\begin{table}[t]
	\caption{Small-step coupling semantics. Notice that $\discretize{e} : \discretize{\tau}$ is uniquely determined by $e \boldsymbol{:} \boldsymbol{\tau}$ so that it suffices to indicate the latter in the left-hand column. Here, every function of the form $f(g,g') = \lambda (g,g').\dirac{\ldots}$ is extended to $f_{\bot}$ in the sense of \Cref{def:bot-extension}.}
	\label{tab:coupling_smallstep_1}%
	\begin{center}
		\begin{adjustbox}{width=\textwidth}
		\begin{tabular}{l l l l l}
			\toprule
			\toprule
						$\mathbf{e \boldsymbol{:} \boldsymbol{\tau}}$ & \textbf{constraints}  & $\boldsymbol{\sem{e : \tau, \discretize{e} : \discretize{\tau}}}$   \\
			\hline 
			%
			\multirow{1}{*}{$\diverge : \tau$} 
			
			& & $\lambda A.0$  \\
			\midrule
			%
			$v : \tau$ & & $\dirac{(v:\tau,\discretize{v:\tau} : \discretize{\tau})}$ \\[0.5ex]
			\midrule
			%
			\multirow{2}{*}{$\letkw \; x = e_1 : \tau_1 \; \inkw \; e_2 : \tau_2 : \tau_2$} 
			
			& $e_1=v$ & $\dirac{(e_2[v/x] : \tau_2, \discretize{e_2 : \tau_2}[\discretize{v : \tau_1}/x] : \discretize{\tau_2})}$ \\
			& $e_1$ no val.\ & \makecell[l]{$\sem{ e_1 : \tau_1, \discretize{e_1 : \tau_1} : \discretize{\tau_1}}$ \\
			$\gg\!= \lambda (g,g'). \dirac{(\letkw \; x = g\; \inkw \; e_2 : \tau_2 : \tau_2,
			\letkw \; x = g' \; \inkw \; \discretize{e_2 : \tau_2} : \discretize{\tau_2} : \discretize{\tau_2})}$} \\
			\midrule 
			%
			\multirow{3}{*}{$\ifkw \; e_1 : \bool \; \thenkw \; e_2 : \tau \; \elsekw \; e_3 : \tau : \tau$} 
			
			& $e_1=\true$ & $\dirac{(e_2 : \tau, \discretize{e_2 : \tau} : \discretize{\tau})} $ \\
			& $e_1=\false$ & $\dirac{(e_3 : \tau, \discretize{e_3 : \tau} : \discretize{\tau})} $ \\
			& o.w.\ & \makecell[l]{$\sem{ e_1 : \bool, \discretize{e_1 : \bool} : \discretize{\bool}}$ \\
			$\gg\!= \lambda (g,g'). \delta_{\begin{subarray}{l}(\ifkw \; g\; \thenkw \; e_2 : \tau \; \elsekw \; e_3 : \tau : \tau, \\ \ifkw \; g' \; \thenkw \; \discretize{e_2 : \tau} : \discretize{\tau} \; \elsekw \; \discretize{e_3 : \tau} : \discretize{\tau} : \discretize{\tau})\end{subarray}}$} \\
			\midrule 
			%
			\multirow{3}{*}{$\observekw\; (e : \bool)  : \unit$} 
			
			& $e = \true$ & $\dirac{(\unit, \unit)}$ \\[0.5ex]
			& $e=\false$ & $\dirac{\bot}$ \\
			& o.w.\ & \makecell[l]{$\sem{e: \bool, \discretize{e}: \bool}$ \\
			$\gg\!= \lambda (g,g'). \dirac{(\observekw\; (g: \bool)  : \unit, \observekw\; (g' : \bool)  : \unit)}$} \\
			\midrule
			%
			\multirow{4}{*}{\makecell{$e_1 : \tau_1 < e_2 : \tau_2 : \bool$ \\
			for $\tau_1 = \floattype{B}{V_1}, \tau_2 = \floattype{B}{V_2} $ \\
			$B = \top$}} 
			
			& $e_1 = v_1, e_2 = v_2, v_1 < v_2 $ & $\dirac{(\true : \bool, \true : \bool)}$ \\
			& $e_1 = v_1, e_2 = v_2, v_1 \geq v_2$ & $\dirac{(\false : \bool, \false : \bool)}$  \\
			&$e_1 = v_1, e_2 \neq v_2$ & \makecell[l]{$\sem{ e_2 : \tau_2, \discretize{e_2 : \tau_2} : \discretize{\tau_2}}$ \\
			$\gg\!= \lambda (g,g'). \dirac{(v_1 : \tau_1 < g: \bool, \discretize{v_1 : \tau_1} : \discretize{\tau_1} < g' : \discretize{\bool}) }$} \\
			& o.w.\ & \makecell[l]{ $\sem{ e_1 : \tau_1, \discretize{e_1 : \tau_1} : \discretize{\tau_1}}$ \\
			$\gg\!= \lambda (g,g'). \dirac{ (g< e_2 : \tau_2 : \bool, g' < \discretize{e_2 : \tau_2} : \discretize{\tau_2} : \discretize{\bool})}$}  \\
			\midrule
			%
			\multirow{4}{*}{\makecell{$e_1 : \tau_1 < e_2 : \tau_2 : \bool$ \\
			for $\tau_1 = \floattype{B}{V_1}, \tau_2 = \floattype{B}{V_2} $ \\
			$B \neq \top$}} 
			
			& $e_1 = v_1, e_2 = v_2, v_1 < v_2 $ & $\dirac{(\true : \bool, \true : \bool)}$ \\
			& $e_1 = v_1, e_2 = v_2, v_1 \geq v_2$ & $\dirac{(\false : \bool, \false : \bool)}$  \\
			&$e_1 = v_1, e_2 \neq v_2$ & \makecell[l]{$\sem{ e_2 : \tau_2, \discretize{e_2 : \tau_2} : \discretize{\tau_2} }$ \\
			$\gg\!= \lambda (g,g'). \dirac{ (v_1 : \tau_1 < g: \bool, \discretize{v_1 : \tau_1} : \discretize{\tau_1} \finlt{(|B|+1)} g' : \discretize{\bool}) }$} \\
			& o.w.\ & \makecell[l]{ $\sem{ e_1 : \tau_1, \discretize{e_1 : \tau_1} : \discretize{\tau_1}}$ \\
			$\gg\!= \lambda (g,g'). \dirac{ (g< e_2 : \tau_2 : \bool, g' \finlt{(|B|+1)} \discretize{e_2 : \tau_2} : \discretize{\tau_2} : \discretize{\bool})}$}  \\
			\midrule
			%
			\multirow{5}{*}{\makecell{$\uniform(e_1 : \tau_1, e_2 : \tau_2) : \floattype{B}{V}$ \\
			$B=\top$}} 
			
			& $e_1 = v_1 < v_2 = e_2$ &
		    $\text{Uniform}{(v_1, v_2)} \gg\!= \lambda v. \dirac{(v : \floattype{B}{V}, v : \floattype{B}{V})}$  \\
			& $e_1 = v_1 \geq v_2 = e_2$ &
		    $\sem{\diverge : \tau, \discretize{\uniform(v_1 : \tau_1,v_2 : \tau_2)} : \discretize{\tau}}$  \\
			& $e_1=v_1$, $e_2$ no val.\ & \makecell[l]{$\sem{e_2:\tau_2, \discretize{e_2} : \discretize{\tau_2}}$ \\ $\gg\!= \lambda (g,g'). \dirac{(\uniform(v_1:\tau_1, g:\tau_2 ) : \tau, \uniform(\discretize{v_1}:\discretize{\tau_1}, g': \discretize{\tau_2} ) : \discretize{\tau})}$} \\
			& $e_1$ no val.\ & \makecell[l]{$\sem{e_1:\tau_1, \discretize{e_1} : \discretize{\tau_1}}$ \\ $\gg\!= \lambda (g,g'). \dirac{(\uniform(g:\tau_1, e_2:\tau_2) : \tau, \uniform({g'}:\discretize{\tau_1}, \discretize{e_2} : \discretize{\tau_2} ) : \discretize{\tau})}$} \\
			%
			%
			%
			\midrule 
			%
			\multirow{5}{*}{\makecell{$\uniform(e_1 : \tau_1, e_2 : \tau_2) : \floattype{B}{V}$ \\
			$B \neq \top$ \\
			$\tau_1 = \float[B_1 \neq \top,\{v_1,\ldots,v_n\}]$ \\
			$\tau_2 = \float[B_2 \neq \top,\{w_1,\ldots,w_m\}]$}} 
			
			&  $e_1 = c_1 < c_2 = e_2$ values &  \makecell[l]{
			$\text{Uniform}(c_1, c_2) \gg\!= \lambda v.\dirac{(v : \floattype{B}{V}, v_i : \fin{|B| + 1})}$ \\
			where $v_i$ is unique number s.t. $v \in \intervals{B}_{v_i}$} \\
			& $e_1 = v_1 \geq v_2 = e_2$ &
		    $\sem{\diverge : \tau, \discretize{\uniform(v_1 : \tau_1,v_2 : \tau_2)} : \discretize{\tau}}$  \\
			& $e_1=c_1$, $e_2$ no val.\ & \makecell[l]{$\sem{e_2:\tau_2, \discretize{e_2} : \discretize{\tau_2}}$ \\ $\gg\!= \lambda (g,g'). \dirac{(\uniform(v_1:\tau_1, g:\tau_2 ) : \tau, \dc(\discretize{v_1}:\discretize{\tau_1}, g': \discretize{\tau_2} ) : \discretize{\tau})}$} \\
			& $e_1$ no val.\ & \makecell[l]{$\sem{e_1:\tau_1, \discretize{e_1} : \discretize{\tau_1}}$ \\ $\gg\!= \lambda (g,g'). \dirac{(\uniform(g:\tau_1, e_2:\tau_2) : \tau, \dc({g'}:\discretize{\tau_1}, \discretize{e_2} : \discretize{\tau_2} ) : \discretize{\tau})}$} \\
			\bottomrule
			\bottomrule
		\end{tabular}
	\end{adjustbox}
	\end{center}
\end{table}%

\begin{table}[t]
	\caption{Small-step coupling semantics (continued). Here, every function of the form $f(g,g') = \lambda (g,g').\dirac{\ldots}$ is extended to $f_{\bot}$ in the sense of \Cref{def:bot-extension}.}
	\label{tab:coupling_smallstep_2}
	\begin{center}
		\begin{adjustbox}{width=\textwidth}
		\begin{tabular}{l l l l l}
			\toprule
			\toprule
			$\mathbf{e \boldsymbol{:} \boldsymbol{\tau}}$ & \textbf{constraints}  & $\boldsymbol{\sem{e : \tau, \discretize{e} : \discretize{\tau}}}$   \\
			\hline
			%
			\multirow{4}{*}{\makecell{$e_1 : \tau_1 \logand e_2 : \tau_2 : \bool$ \\
			$\tau_1 = \bool, \tau_2 = \bool$}}
			& $e_1 = b_1 = \true,\, e_2 = b_2 = \true$ & $\dirac{(\true : \bool, \true : \bool)}$ \\
			& $e_1 = b_1,\, e_2 = b_2$ & $\dirac{(\false : \bool, \false : \bool)}$ \\
			& $e_1 = b_1,\, e_2 \neq b_2$ &
			\makecell[l]{$\sem{ e_2 : \tau_2, \discretize{e_2 : \tau_2} : \discretize{\tau_2}}$ \\
			$\gg\!= \lambda (g,g'). \dirac{( b_1 : \tau_1 \logand g: \bool,\; b_1 : \discretize{\tau_1} \logand g' : \discretize{\bool})}$} \\
			& o.w.\ &
			\makecell[l]{$\sem{ e_1 : \tau_1, \discretize{e_1 : \tau_1} : \discretize{\tau_1}}$ \\
			$\gg\!= \lambda (g,g'). \dirac{( g\logand e_2 : \tau_2 : \bool,\; g' \logand \discretize{e_2 : \tau_2} : \discretize{\tau_2} : \discretize{\bool})}$} \\
			\midrule
			%
			\multirow{4}{*}{\makecell{$e_1 : \tau_1 \logor e_2 : \tau_2 : \bool$ \\
			$\tau_1 = \bool, \tau_2 = \bool$}}
			& $e_1 = b_1 = \false,\, e_2 = b_2 = \false$ & $\dirac{(\false : \bool, \false : \bool)}$ \\
			& $e_1 = b_1,\, e_2 = b_2$ & $\dirac{(\true : \bool, \true : \bool)}$ \\
			& $e_1 = b_1,\, e_2 \neq b_2$ &
			\makecell[l]{$\sem{ e_2 : \tau_2, \discretize{e_2 : \tau_2} : \discretize{\tau_2}}$ \\
			$\gg\!= \lambda (g,g'). \dirac{( b_1 : \tau_1 \logor g: \bool,\; b_1 : \discretize{\tau_1} \logor g' : \discretize{\bool})}$} \\
			& o.w.\ &
			\makecell[l]{$\sem{ e_1 : \tau_1, \discretize{e_1 : \tau_1} : \discretize{\tau_1}}$ \\
			$\gg\!= \lambda (g,g'). \dirac{( g\logor e_2 : \tau_2 : \bool,\; g' \logor \discretize{e_2 : \tau_2} : \discretize{\tau_2} : \discretize{\bool})}$} \\
			\midrule
			%
			\multirow{3}{*}{\makecell{$\lognot\; e : \tau : \bool$ \\
			$\tau = \bool$}}
			& $e = \true$ & $\dirac{(\false : \bool, \false : \bool)}$ \\
			& $e = \false$ & $\dirac{(\true : \bool, \true : \bool)}$ \\
			& o.w.\ &
			\makecell[l]{$\sem{ e : \tau, \discretize{e : \tau} : \discretize{\tau}}$ \\
			$\gg\!= \lambda (g,g'). \dirac{(\lognot\; g: \bool,\; \lognot\; g' : \discretize{\bool})}$} \\
			\midrule
			%
			\multirow{3}{*}{\makecell{$(e_1 : \tau_1, e_2 : \tau_2) : \tau_1 * \tau_2$}}
			& $e_1 = v_1,\, e_2 = v_2$ &
			$\dirac{((v_1,v_2) : \tau_1 * \tau_2,\; (\discretize{v_1}:\discretize{\tau_1}, \discretize{v_2}:\discretize{\tau_2}) : \discretize{\tau_1} * \discretize{\tau_2})}$ \\
			& $e_1 = v_1,\, e_2 \neq v_2$ &
			\makecell[l]{$\sem{ e_2 : \tau_2, \discretize{e_2 : \tau_2} : \discretize{\tau_2} }$ \\
			$\gg\!= \lambda (g,g'). \dirac{( (v_1 : \tau_1, g: \tau_2) : \tau_1 * \tau_2,\; (\discretize{v_1}:\discretize{\tau_1}, g' : \discretize{\tau_2}) : \discretize{\tau_1} * \discretize{\tau_2})}$} \\
			& o.w.\ &
			\makecell[l]{$\sem{ e_1 : \tau_1, \discretize{e_1 : \tau_1} : \discretize{\tau_1}}$ \\
			$\gg\!= \lambda (g,g'). \dirac{( (g, e_2 : \tau_2) : \tau_1 * \tau_2,\; (g', \discretize{e_2 : \tau_2}) : \discretize{\tau_1} * \discretize{\tau_2})}$} \\
			\midrule
			%
			\multirow{2}{*}{\makecell{$\fstkw\; (e : \tau_1 * \tau_2) : \tau_1$}}
			& $e = (v_1, v_2)$ & $\dirac{(v_1 : \tau_1,\; \discretize{v_1:\tau_1} : \discretize{\tau_1})}$ \\
			& o.w.\ &
			\makecell[l]{$\sem{ e : \tau_1 * \tau_2, \discretize{e} : \discretize{\tau_1 * \tau_2}}$ \\
			$\gg\!= \lambda (g,g'). \dirac{(\fstkw\; (g: \tau_1 * \tau_2) : \tau_1,\; \fstkw\; (g' : \discretize{\tau_1 * \tau_2}) : \discretize{\tau_1})}$} \\
			\midrule
			%
			\multirow{2}{*}{\makecell{$\sndkw\; (e : \tau_1 * \tau_2) : \tau_2$}}
			& $e = (v_1, v_2)$ & $\dirac{(v_2 : \tau_2,\; \discretize{v_2:\tau_2} : \discretize{\tau_2})}$ \\
			& o.w.\ &
			\makecell[l]{$\sem{ e : \tau_1 * \tau_2, \discretize{e} : \discretize{\tau_1 * \tau_2}}$ \\
			$\gg\!= \lambda (g,g'). \dirac{(\sndkw\; (g: \tau_1 * \tau_2) : \tau_2,\; \sndkw\; (g' : \discretize{\tau_1 * \tau_2}) : \discretize{\tau_2})}$} \\
			\midrule
			%
			\multirow{1}{*}{\makecell{$\funkw \; x \to e : \tau_1 \to \tau_2$}}
			& & $\dirac{(\funkw \; x \to e : \tau_1 \to \tau_2,\; \funkw \; x \to \discretize{e} : \discretize{\tau_1} \to \discretize{\tau_2})}$ \\
			\midrule
			%
			\multirow{1}{*}{\makecell{$\fixkw \; g\; x := e : \tau_1 \to \tau_2$}}
			& & $\dirac{(\fixkw \; g\; x := e : \tau_1 \to \tau_2,\; \fixkw \; g\; x := \discretize{e} : \discretize{\tau_1} \to \discretize{\tau_2})}$ \\
			\midrule
			%
			\multirow{3}{*}{\makecell{$(e_1 : \tau_1 \to \tau_2\;\; e_2 : \tau_1) : \tau_2$}}
			& \makecell[l]{$e_1 = \funkw\; x \to e \text{ or } \fixkw\; f\; x := e,$\\ $e_2 = v_2$} &
			\makecell[l]{$\dirac{(e[v_2/x] : \tau_2,\; \discretize{e}[\discretize{v_2} / x] : \discretize{\tau_2})}$} \\
			& \makecell[l]{$e_1 = \funkw\; x \to e \text{ or } \fixkw\; f\; x := e,$\\ $e_2 \neq v_2$} &
			\makecell[l]{$\sem{ e_2 : \tau_1, \discretize{e_2 : \tau_1} : \discretize{\tau_1}}$ \\
			$\monbind \lambda (g,g'). \dirac{( e_1 : \tau_1 \to \tau_2\;\; g : \tau_1 : \tau_2,\; \discretize{e_1} : \discretize{\tau_1} \to \discretize{\tau_2}\;\; g' : \discretize{\tau_1} : \discretize{\tau_2})}$} \\
			& o.w.\ &
			\makecell[l]{$\sem{ e_1 : \tau_1 \to \tau_2, \discretize{e_1} : \discretize{\tau_1 \to \tau_2}}$ \\
			$\monbind \lambda (g,g'). \dirac{( g\; e_2 : \tau_1 : \tau_2,\; g'\; \discretize{e_2} : \discretize{\tau_1} : \discretize{\tau_2})}$} \\
			\midrule
			%
			\multirow{1}{*}{\makecell{$\text{nil} : \listty{\tau}$}}
			& & $\dirac{(\text{nil} : \listty{\tau},\; \text{nil} : \listty{\discretize{\tau}})}$ \\
			\midrule
			%
			\multirow{3}{*}{\makecell{$(e_1 : \tau :: e_2 : \listty{\tau}) : \listty{\tau}$}}
			& $e_1 = v_1,\, e_2 = v_2$ &
			$\dirac{(v_1::v_2 : \listty{\tau},\; \discretize{v_1}::\discretize{v_2} : \listty{\discretize{\tau}})}$ \\
			& $e_1 = v_1,\, e_2 \neq v_2$ &
			\makecell[l]{$\sem{ e_2 : \listty{\tau}, \discretize{e_2} : \listty{\discretize{\tau}}}$ \\
			$\monbind \lambda (g,g'). \dirac{( v_1::g: \listty{\tau},\; \discretize{v_1}::g' : \listty{\discretize{\tau}})}$} \\
			& o.w.\ &
			\makecell[l]{$\sem{ e_1 : \tau, \discretize{e_1} : \discretize{\tau}}$ \\
			$\monbind \lambda (g,g'). \dirac{( g::e_2 : \listty{\tau} : \listty{\tau},\; g'::\discretize{e_2} : \listty{\discretize{\tau}} : \listty{\discretize{\tau}})}$} \\
			\midrule
			%
			\multirow{3}{*}{\makecell{$(\matchkw \; e : \listty{\tau_1}\; \withkw$ \\
			$\matchcase \text{nil} \rightarrow e_1 : \tau_2$ \\
			$\matchcase h :: t \rightarrow e_2 : \tau_2$ \\
			${\textnormal{\ttfamily\bfseries end}}) : \tau_2$}}
			& $e = \text{nil}$ &
			$\dirac{(e_1 : \tau_2,\; \discretize{e_1 : \tau_2} : \discretize{\tau_2})}$ \\
			& $e = v_h :: v_t$ &
			$\dirac{(e_2[v_h/h, v_t/t] : \tau_2,\; \discretize{e_2 : \tau_2}[\discretize{v_h}/h, \discretize{v_t}/t] : \discretize{\tau_2})}$ \\
			& o.w.\ &
			\makecell[l]{%
			$\sem{ e : \listty{\tau_1}, \discretize{e} : \listty{\discretize{\tau_1}} }$ \\
			$\monbind \lambda (g,g').\dirac{\begin{subarray}{l}
			(\matchkw\; g\; \withkw\;
				\matchcase \text{nil} \rightarrow e_1 : \tau_2\;
				\matchcase h::t \rightarrow e_2 : \tau_2\;
			{\textnormal{\ttfamily\bfseries end}} : \tau_2,\\
			\matchkw\; g' \; \withkw\;
				\matchcase \text{nil} \rightarrow \discretize{e_1 : \tau_2} : \discretize{\tau_2}\;
				\matchcase h::t \rightarrow \discretize{e_2 : \tau_2} : \discretize{\tau_2}\;
			{\textnormal{\ttfamily\bfseries end}} : \discretize{\tau_2})
			\end{subarray}}$} \\
			\bottomrule
			\bottomrule
		\end{tabular}
	\end{adjustbox}
	\end{center}
\end{table}

Define $P_\tau = \{ (e:\tau, \discretize{e:\tau} : \discretize{\tau}) ~|~ e:\tau \}$. To define subprobability measures and kernels on the coupling space, we
must first equip it with a $\sigma$-algebra. Since
\[
P_\tau\subseteq Expr_\tau\times Expr_{\discretize{\tau}},
\]
we use the subspace $\sigma$-algebra from
\Cref{def:subspace-sigma-algebra}, inherited from the product measurable
space. We then extend this space with the isolated error element
$\bot$, as for $Expr_\tau+\bot$.

\begin{definition}[$\sigma$-algebra construction on $P_\tau$]
Let
\[
	P_\tau
	=
	\{(e:\tau,\discretize{e}:\discretize{\tau})\mid e:\tau\}
	\subseteq
	Expr_\tau\times Expr_{\discretize{\tau}}.
\]
We equip $P_\tau$ with the subspace $\sigma$-algebra
\[
	\mathcal F_{P_\tau}
	:=
	\left\{
	P_\tau\cap A
	\;\middle|\;
	A\in
	\mathcal F_{Expr_\tau}
	\otimes
	\mathcal F_{Expr_{\discretize{\tau}}}
	\right\}.
\]
We equip $P_\tau+\bot$ with
\[
	\mathcal F_{P_\tau+\bot}
	:=
	\{A,\ A\cup\{\bot\}\mid A\in\mathcal F_{P_\tau}\}.
\]
\end{definition}

\begin{lemma}\label{lem:measurable-coupling-space}
	$(P_\tau,\mathcal F_{P_\tau})$
	and $(P_\tau+\bot,\mathcal F_{P_\tau+\bot})$
	are measurable spaces.
\end{lemma}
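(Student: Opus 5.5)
The plan is to obtain both claims from general facts already established in this appendix, namely the subspace construction (\Cref{def:subspace-sigma-algebra}) and the argument of \Cref{lem:measurable-expr-t-error}, applied to the product of two measurable spaces.

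\emph{The product.} By \Cref{lem:measurable-expr-t}, applied once to type $\tau$ and once to type $\discretize{\tau}$, both $(Expr_\tau,\mathcal F_{Expr_\tau})$ and $(Expr_{\discretize{\tau}},\mathcal F_{Expr_{\discretize{\tau}}})$ are measurable spaces. The product $\sigma$-algebra $\mathcal F_{Expr_\tau}\otimes\mathcal F_{Expr_{\discretize{\tau}}}$ is by definition the $\sigma$-algebra generated by measurable rectangles, so it is a $\sigma$-algebra on $Expr_\tau\times Expr_{\discretize{\tau}}$.

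\emph{The subspace $P_\tau$.} By \Cref{lem:type-transformations}, $P_\tau\subseteq Expr_\tau\times Expr_{\discretize{\tau}}$, since $\discretize{e:\tau}$ is well-typed at $\discretize{\tau}$. I then verify directly that the trace family $\{P_\tau\cap A\}$ is a $\sigma$-algebra on $P_\tau$, rather than relying on \Cref{def:subspace-sigma-algebra}, which is phrased for Borel spaces.
\begin{enumerate}
  \item It contains $P_\tau$, because $P_\tau = P_\tau\cap(Expr_\tau\times Expr_{\discretize{\tau}})$.
  \item It is closed under relative complements, because $P_\tau\setminus(P_\tau\cap A)=P_\tau\cap A^c$ and $A^c$ lies in the product $\sigma$-algebra.
  \item It is closed under countable unions, because $\bigcup_n(P_\tau\cap A_n)=P_\tau\cap\bigcup_n A_n$.
\end{enumerate}

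\emph{The extension $P_\tau+\bot$.} Here I repeat the proof of \Cref{lem:measurable-expr-t-error} with $Expr_\tau$ replaced by $P_\tau$; that proof used only that the base family is a $\sigma$-algebra. It contains the whole space, since $P_\tau\cup\{\bot\}$ has the required form. For complements, $A\mapsto (P_\tau\setminus A)\cup\{\bot\}$ and $A\cup\{\bot\}\mapsto P_\tau\setminus A$. For countable unions, the union is $\bigcup_n A_n$ when no member contains $\bot$, and $\bigcup_n A_n\cup\{\bot\}$ otherwise. One implicit point should be stated: $\bot\notin P_\tau$, which holds because $\bot$ is a fresh element.

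I expect no step to be a real obstacle. The only subtlety is ensuring $P_\tau$ lies inside the product space, and \Cref{lem:type-transformations} settles that.
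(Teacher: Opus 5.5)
Your proposal is correct and follows the paper's route. The paper's proof simply cites the subspace (trace) $\sigma$-algebra construction for $P_\tau$ and reuses the error-extension argument of \Cref{lem:measurable-expr-t-error} for $P_\tau+\bot$; you spell out both verifications, and checking the trace family directly instead of invoking \Cref{def:subspace-sigma-algebra} (which is stated for Borel spaces) is a slightly more careful way to make the same point.
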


\begin{proof}
	The first claim follows from
	\Cref{def:subspace-sigma-algebra}. The second follows from the same
	error-extension construction used for \(Expr_\tau+\bot\).
\end{proof}

Now, exploiting that $e:\tau$ is a value iff $\discretize{e} : \discretize{\tau}$ is a value, we define:
\begin{definition}
	The small-step coupling  $\sem{e :\tau, \discretize{e}: \discretize{\tau}} \in \mathcal{D}(P_{\tau} + \bot)$ is defined recursively on the structure of $e:\tau$ by the rules in \Cref{tab:coupling_smallstep_1,tab:coupling_smallstep_2}. 
\end{definition}
%
%

\begin{lemma}[Monadic bind is a sub-Markov kernel]\label{lem:bind-kernel-coupling}
	For every occurrence of a monadic bind $\mu \monbind (f,f')$ given in the small-step operational coupling semantics in~\Cref{tab:coupling_smallstep_1,tab:coupling_smallstep_2}, $(f,f') \colon (P_{\tau_1} + \bot) \to \mathcal{D}(P_{\tau_2} + \bot)$ is a sub-Markov kernel.
\end{lemma}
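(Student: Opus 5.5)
The plan mirrors the proof of \Cref{lem:bind-kernel}, now in the coupled setting. The $\bot$-extension is handled by \Cref{lem:bot-extension-kernel}, whose proof is generic in the underlying measurable space. So it suffices to show that each unextended map $(f,f')\colon P_{\tau_1}\to\mathcal{D}(P_{\tau_2}+\bot)$ is a sub-Markov kernel. Every such map in \Cref{tab:coupling_smallstep_1,tab:coupling_smallstep_2} has the form $(g,g')\mapsto\dirac{(C[g],C'[g'])}$, where $C,C'$ are fixed one-hole contexts. Concretely, $C'$ is the discretization of $C$, for example $C=\letkw\;x=\square\;\inkw\;e_2$ and $C'=\letkw\;x=\square\;\inkw\;\discretize{e_2}$. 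Each value is a Dirac measure, so condition (1) is immediate. As in \Cref{lem:dirac-return-kernel}, condition (2) reduces to showing that
\[
\{(g,g')\in P_{\tau_1}\mid (C[g],C'[g'])\in A\}\in\mathcal F_{P_{\tau_1}}
\]
for every $A\in\mathcal F_{P_{\tau_2}}$. The cases $0\le a<1$, $a<0$ and $a\ge1$ are the same thresholding argument as before.

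The main step is to show that the plugging map $\Phi(g,g')=(C[g],C'[g'])$ is measurable from $P_{\tau_1}$ to $Expr_{\tau_2}\times Expr_{\discretize{\tau_2}}$. Its image lies in $P_{\tau_2}$ because $\discretize{\cdot}$ is compositional, with \Cref{lem:type-transformations} supplying the typing. Hence its restriction to $P_{\tau_2}$, equipped with the subspace $\sigma$-algebra, is measurable. Since $\Phi$ is a product map $\Phi_1\times\Phi_2$, and $P_{\tau_1}$ carries the trace of the product $\sigma$-algebra, it suffices to check measurability on rectangles. The preimage of $A_1\times A_2$ is $\Phi_1^{-1}(A_1)\times\Phi_2^{-1}(A_2)$, intersected with $P_{\tau_1}$. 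Measurability of each component $\Phi_i$ is exactly the argument of \Cref{lem:bind-kernel}. Fix a skeleton $s$ of the argument and form the combined skeleton $s'=C[s]$. Write the preimage as $\phi_{\vec c}^{-1}(A_{s'})$, where $\phi_{\vec c}$ is the constant-extension map of \Cref{lem:constant-extension-measurable}, and use \Cref{lem:isomorphism} to move between expressions and hole assignments. Because rectangles generate the product $\sigma$-algebra, \Cref{lem:minimality-sigma-algebra} extends the conclusion from rectangles to all of $\mathcal F_{Expr_{\tau_2}}\otimes\mathcal F_{Expr_{\discretize{\tau_2}}}$.

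The cases are then enumerated uniformly: let, if, observe, the comparison cases with $B=\top$ and $B\ne\top$, pairs, projections, application, cons and match. In each, $C$ and $C'$ are read off the table, and they may differ only in that $<$ is replaced by $\finlt{(|B|+1)}$ and so on. For the non-discretized $\uniform$ rows, $C'$ is $\uniform(\square,\discretize{e_2})$. For the discretized $\uniform$ rows, $C'$ is $\dc(\square,\discretize{e_2})$, which contains no float holes at all, so the second component ranges over a space that is countable modulo its skeleton.

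I expect the main obstacle to be the rows whose bind continuation is not a Dirac: the $\text{Uniform}(c_1,c_2)\monbind\lambda v.\dirac{(v,v_i)}$ case, where $v_i$ is the interval index of $v$. There I would show that $v\mapsto(v,\finconst{k}{n})$ with $k$ the index satisfying $v\in\intervals{B}_k$ is measurable from $\Reals$. The map $v\mapsto k$ is a step function constant on the finitely many intervals of $\intervals{B}$, each of which is Borel. Its composition with the pairing is therefore measurable piecewise, and a finite union of measurable pieces is measurable. A Dirac of a measurable map is a kernel, following \Cref{lem:dirac-return-kernel}. The $B=\top$ diagonal $v\mapsto(v,v)$ is continuous.
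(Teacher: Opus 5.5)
Your proposal is correct and follows the paper's route: the paper's proof just says it is analogous to \Cref{lem:bind-kernel}, and your reduction of the coupled plugging map $(g,g')\mapsto(C[g],C'[g'])$ to its two components, via measurable rectangles and the trace $\sigma$-algebra on $P_{\tau_1}$, is exactly what makes that analogy precise. One harmless slip: $\discretize{e_2}$ can still contain float constants coming from subterms with cut set $\top$, so the context $\dc(\square,\discretize{e_2})$ does not avoid float holes in general; your skeleton and constant-extension argument never relies on that remark, so nothing breaks.
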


\begin{proof}
	Analogous to proof of~\Cref{lem:bind-kernel}.
\end{proof}

\begin{lemma}[Small-step coupling semantics is a sub-Markov kernel]\label{lem:small-step-coupling-kernel}
	For each type $\tau$, the small-step operational coupling semantics $\sem{\cdot, \cdot} \colon P_{\tau} \to \mathcal{D}(P_\tau + \bot)$ defines a sub-Markov kernel.
\end{lemma}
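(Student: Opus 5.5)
We mirror the proof of \Cref{lem:small-step-kernel} and argue by induction on the typing derivation of $e : \tau$, equivalently on the structure of the first component of a pair in $P_\tau$. In each case we show two things. First, $(e,\discretize{e}) \mapsto \sem{e:\tau, \discretize{e}:\discretize{\tau}}$ yields a subprobability measure on $(P_\tau+\bot,\mathcal F_{P_\tau+\bot})$. Second, for every $A \in \mathcal F_{P_\tau+\bot}$ the map $(e,\discretize{e}) \mapsto \sem{e,\discretize{e}}(A)$ is measurable with respect to $\mathcal F_{P_\tau}$.

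A useful preliminary observation is that the projection $p_1\colon P_\tau \to Expr_\tau$ is a bijection, since the second component is determined by the first. Moreover, $p_1$ is measurable, because it is the restriction of a product projection. Measurability of a function of $(e,\discretize{e})$ can therefore often be reduced to a skeleton-wise argument on $e$. For each skeleton $s$ we work on $Expr_{s;\tau} \cong A_{s;\tau}$ via \Cref{lem:isomorphism}. On such a slice, $\discretize{s[\vec r]}$ depends on $\vec r$ only through finitely many interval indices, so it is piecewise constant on Borel pieces of $A_{s;\tau}$.

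\textbf{Base cases.} These are handled by the elementary kernel lemmas. For $\diverge$ we get the zero kernel by \Cref{lem:zero-kernel}. For values and the other Dirac cases (let with a value, if on true/false, the observe cases, comparisons on values, pair projections, application by substitution, match) we use \Cref{lem:dirac-return-kernel}. Here we must check that the map $e \mapsto (e', \discretize{e'})$ is measurable into $P_\tau$. To do so, we use \Cref{lem:subst-measurable} for the first component. For the second component we use the fact that $\discretize{\cdot}$ commutes with substitution, namely $\discretize{e_2}[\discretize{v}/x] = \discretize{e_2[v/x]}$, which follows by induction on $e_2$. The comparison cases split the hole space into finitely many Borel sets, such as $\{v_1<v_2\}$, and the map is constant on each of them.

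\textbf{Congruence cases.} These take the form $\sem{e_1,\discretize{e_1}} \monbind \lambda(g,g').\delta_{(\ldots)}$. The inner semantics is a kernel by the induction hypothesis. The continuation is a kernel by \Cref{lem:bind-kernel-coupling} together with its $\bot$-extension, \Cref{lem:bot-extension-kernel}. Their composition is then a kernel by \Cref{lem:kernel-composition}. One subtlety is that the kernel's input is the whole expression $e$, not just $e_1$. We therefore precompose with the measurable map $e \mapsto (e_1,\discretize{e_1})$, which extracts a subterm and whose measurability follows from a constant-extension argument as in \Cref{lem:constant-extension-measurable}.

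\textbf{Main obstacle: the sampling cases.} The hardest case is the sampling step for $B\neq\top$:
\[
\mathrm{Uniform}(c_1,c_2) \monbind \lambda v.\,\delta_{(v,\,\finconst{i(v)}{|B|+1})}.
\]
Here we must show two things. First, the parameter map $(c_1,c_2) \mapsto \mathrm{Uniform}(c_1,c_2)$ is a kernel on $\{c_1<c_2\}$; this holds because the CDF is continuous in its parameters, so $\mathrm{Uniform}(c_1,c_2)(M)$ is Borel in $(c_1,c_2)$ by a monotone class argument starting from intervals. Second, $v \mapsto (v, i(v))$ is measurable, which is true because $i$ is constant on each interval of $\intervals{B}$ and these intervals are finitely many Borel sets. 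The $\dc$ case is handled similarly. The case distinction over $V_1\times V_2$ is finite, and the \emph{recovers} condition makes each index pair identify a unique parameter pair. Since each case reduces to a constant-parameter instance, the resulting map is a finite sum of kernels restricted to measurable pieces. I expect this case, together with the compatibility of $\discretize{\cdot}$ with substitution and its measurability on skeleton slices, to require the most care. The remaining constructors (lists, pairs, booleans) follow the same congruence pattern.
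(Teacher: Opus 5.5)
Your overall route is the paper's. Its proof of this lemma is only ``analogous to \Cref{lem:small-step-kernel}'': zero and Dirac kernels for the base cases, and bind, $\bot$-extension and kernel composition for the congruence cases. Your skeleton-slice bookkeeping is a reasonable way to fill in what the paper leaves implicit. That includes reducing measurability on $P_\tau$ to measurability on $Expr_\tau$ via the first projection, once $e\mapsto\discretize{e}$ is measurable, and a $\pi$--$\lambda$ argument for $(c_1,c_2)\mapsto\mathrm{Uniform}(c_1,c_2)$. One small correction: $\discretize{s[\vec r]}$ is not piecewise constant in $\vec r$, because holes with cut set $\top$ are copied verbatim. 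It is piecewise a constant-extension map, which is still measurable by \Cref{lem:constant-extension-measurable}.

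The genuine gap is the identity $\discretize{e_2}[\discretize{v}/x]=\discretize{e_2[v/x]}$, which you say follows by induction on $e_2$. It is false for $\discretize{\cdot}$ as defined in \Cref{fig:discretization-code}, because the sampling rules branch on whether the arguments are \emph{literal} reals.
\begin{itemize}
\item With $x:\floattype{\top}{\top}$ we have $\discretize{\uniform(x,1)}=\uniform(x,1)$, so substituting $2$ gives $\uniform(2,1)$, whereas $\discretize{\uniform(2,1)}=\diverge$.
\item For $B\neq\top$, $\uniform(0,x)$ discretizes to the case expression (or $\dc(\cdot,\cdot)$), while $\uniform(0,c)$ discretizes to $\discrete(p_0,\ldots,p_n)$. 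The paper's own \texttt{gaussian(0,x)} example shows this right after the \texttt{let} step.
\end{itemize}
Hence the Dirac targets of the let, application and match steps need not lie in $P_\tau$. The same happens to congruence targets such as $(\uniform(c_1,g),\dc(\ldots,g'))$ once $g$ becomes a literal.

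Since the lemma presupposes that $\sem{\cdot,\cdot}$ maps into $\mathcal{D}(P_\tau+\bot)$, this is not cosmetic. The paper's coupling tables tacitly assume the same closure property, so it must be fixed rather than cited. There are two ways to do so.
\begin{itemize}
\item Make $\discretize{\cdot}$ substitution-stable by dropping the literal shortcuts. Always produce $\uniform(\discretize{e_1},\discretize{e_2})$ when $B=\top$ and $\dc(\discretize{e_1},\discretize{e_2})$ when $B\neq\top$. These forms have the same one-step semantics as the shortcut forms: $\uniform(c_1,c_2)$ with $c_1\geq c_2$ steps to the zero measure, and $\dc$ applied to values steps directly to the $\discrete$ distribution. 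After this change your commutation claim really is a structural induction.
\item Alternatively, replace the graph $P_\tau$ by a coarser relation that also relates each literal sampling node to its unexpanded form, and check that the coupled rules preserve that relation.
\end{itemize}
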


\begin{proof}
	Analogous to proof of~\Cref{lem:small-step-kernel}.
\end{proof}

We now define suitable projection functions on measures $\mu \in \mathcal{D}(P_{\tau} + \bot)$ by 
\begin{align}
	 \pi_1(\mu) &{}={} \mu \gg\!= \lambda(e_1,e_2). \delta_{e_1} \\
	 \pi_2(\mu) &{}={} \mu \gg\!= \lambda(e_1,e_2). \delta_{e_2}~.
\end{align}
We have the following key property:
\begin{lemma}[Small-step coupling projection]
	\label{lem:small_step_coupling_props}
	Let $e:\tau$. We have:
	\begin{enumerate}
		\item\label{lem:small_step_coupling_props1} $\pi_1(\sem{e:\tau, \discretize{e} : \discretize{\tau}}) = \sem{e:\tau}$.
		\item\label{lem:small_step_coupling_props2} $\pi_2(\sem{e:\tau, \discretize{e} : \discretize{\tau}}) = \sem{\discretize{e}:\discretize{\tau}}$.
	\end{enumerate}
\end{lemma}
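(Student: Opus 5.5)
The plan is to argue by structural induction on $e:\tau$, the same induction that defines $\sem{\cdot}$ and $\sem{\cdot,\cdot}$ in \Cref{tab:smallstep-1,tab:smallstep-2,tab:coupling_smallstep_1,tab:coupling_smallstep_2}, and to treat the two claims in parallel, one case per row of the coupling tables. Two general facts about projections will do most of the work. The first is that projection commutes with Dirac measures: $\pi_i(\dirac{(e_1,e_2)}) = \dirac{e_i}$ and $\pi_i(\dirac{\bot})=\dirac{\bot}$. The second is a naturality law: if $\mu\in\mathcal{D}(P_{\tau_1}+\bot)$ and $h(g,g') = \dirac{(C[g],C'[g'])}$ for evaluation contexts $C, C'$ (extended to $\bot$ as in \Cref{def:bot-extension}), then
\[
\pi_1(\mu\monbind h) = \pi_1(\mu)\monbind \lambda g.\dirac{C[g]},
\qquad
\pi_2(\mu\monbind h)=\pi_2(\mu)\monbind\lambda g'.\dirac{C'[g']}.
\]
The naturality law follows from associativity of bind in the monad of \Cref{def:dist-monad}, together with the definition $\pi_i(\mu)=\mu\monbind\lambda(e_1,e_2).\delta_{e_i}$. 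I will use the sub-Markov kernel lemmas (\Cref{lem:bind-kernel,lem:bind-kernel-coupling,lem:small-step-kernel,lem:small-step-coupling-kernel}) only to justify that every bind and integral involved is well-defined.

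The congruence rows (let, if, observe, comparisons, pairs, projections, application, cons, match, the non-value cases of uniform, and so on) then follow uniformly. The inner $\sem{e_1,\discretize{e_1}}$ projects to $\sem{e_1}$ and $\sem{\discretize{e_1}}$ by the induction hypothesis. Naturality turns the outer bind into exactly the corresponding row of the single-program semantics, because $\discretize{\cdot}$ is compositional: the second component of each coupled context is $\discretize{}$ of the first. For the redex rows, I need the left component to match $\sem{e}$ and the right component to match $\sem{\discretize{e}}$. Here I rely on a substitution lemma, $\discretize{e_2[v/x]} = \discretize{e_2}[\discretize{v}/x]$, proved by an easy induction on $e_2$. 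I also rely on the fact that $e$ is a value iff $\discretize{e}$ is. For value pairs in comparisons, I need that $v_1<v_2$ iff $\discretize{v_1}\finlt{(|B|+1)}\discretize{v_2}$ (and the analogue for $\leq$). This is exactly where $\answerslt{B}{V_1}{V_2}$ and $\answersleq{B}{V_1}{V_2}$ enter. Suppose, for instance, that $\clt{v_2}\in B$. Then $v_2$ is the left endpoint of its interval $\intervals{B}_k$, so $v_1<v_2$ iff $v_1$ lies in an interval with index $<k$. The other case of $\answerslt{}$ and the $\leq$ variant are symmetric, using the cut convention.

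The main obstacle is the sampling rows with $B\neq\top$, where the coupling genuinely changes the distribution. For constant parameters $c_1<c_2$, the first projection is $\text{Uniform}(c_1,c_2)\monbind\lambda v.\dirac{v}$, which is the single-program rule. The second projection is the pushforward of $\text{Uniform}(c_1,c_2)$ under the interval-index map $v\mapsto k$ with $v\in\intervals{B}_k$. This map is measurable because the intervals are Borel. Its pushforward assigns to $k$ the mass $\mathrm{CDF}(b_{k+1})-\mathrm{CDF}(b_k)$, with strict or non-strict limits chosen according to the cuts. That is precisely $\text{Discrete}(p_0,\ldots,p_{|B|}) = \sem{\discretize{\uniform(c_1,c_2)}}$, since discrete is a one-step redex.

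For non-constant parameters, the right component is $\dc(\ldots)$, and once its arguments are values $\discretize{v_i}$ and $\discretize{w_j}$, the $\dc$ rule reduces in one step to $\sem{\discretize{\uniform(v_i,w_j)}}$. Here $\distinguishes{B_1}{V_1}$ and $\distinguishes{B_2}{V_2}$ guarantee that $\discretize{\cdot}$ is injective on $V_1$ and $V_2$, so that the recovered $(v_i,w_j)$ are exactly the original constants. The congruence steps inside $\dc$ are handled by naturality, as before. The degenerate $v_1\geq v_2$ row gives the zero measure on both sides. I would first verify that $\discretize{\uniform(v_1,v_2)}$ also yields $\lambda A.0$ in one step via the $\diverge$ clause. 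The remaining distributions are treated in the same way as uniform.
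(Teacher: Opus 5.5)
Your proposal is correct and follows essentially the same route as the paper. Both argue by induction over the typed expression, discharge every congruence row by commuting the projections through bind (your naturality law is the paper's \Cref{lem:bind_project}), and treat sampling with $B\neq\top$ as the one substantive case, computing it as the pushforward of the continuous law onto interval indices. You are also more explicit than the paper on points it treats as trivial: the comparison redexes, where $\mathsf{answers}$ makes $v_1<v_2$ agree with the index comparison; the use of $\mathsf{recovers}$ and $c_i\in V_i$ to identify the branch taken by $\dc$; and the interval masses, which you correctly give as CDF differences rather than the paper's closed form $(b_{k+1}-b_k)/(c_2-c_1)$, valid only for cuts inside $[c_1,c_2]$.
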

\begin{proof}
	By induction on the type derivation of $e:\tau$. The most interesting case is 
	\[
	    e:\tau = \uniform(c_1 : \floattype{B_1}{V_1}, c_2 : \floattype{B_2}{V_2}) : \floattype{B}{V}~,
	\]
	where
	\begin{enumerate}
		\item $c_1 < c_2$,
		\item $B =  \{\sim_1\!\!b_1, \ldots, \sim_n\!\!b_n\}$,
		\item $V=\top$, 
		\item $V_1 = \{v_1,\ldots, v_{n_1}\}$ with $\distinguishes{B_1}{V_1}$ and $c_1 \in V_1$ due to $\has{V_1}{c_1}$ by I.H. 
		\item $V_2 = \{w_1,\ldots, w_{n_2}\}$ with $\distinguishes{B_2}{V_2}$ and  $c_2 \in V_2$ due to $\has{V_2}{c_2}$ by I.H. 
	\end{enumerate}
	First observe that we have 
	\begin{align*}
		&\discretize{ e:\tau} : \discretize{\tau} \\
		{}={}& \dc(\discretize{c_1 : \floattype{B_1}{V_1}}, \discretize{c_2 :  \floattype{B_2}{V_2}}) : \fin{|B| + 1}\\
		{}={}&\dc( k_1 : \fin{|B_1| + 1}, k_2:\fin{|B_2|+1} ) : \fin{|B|+1}~,
	\end{align*}
	where $k_i$ is the unique number such that $c_i \in \intervals{B_i}_{k_i}$. By \Cref{tab:smallstep-1} and \Cref{fig:discretization-code}, this yields
	\begin{align*}
		 & \sem{\discretize{e:\tau}} \\
		 {}={}& \sem{\discrete(p_0,\ldots,p_n)}~\text{where}~p_k = \text{CDF}(b_{k+1}) - \text{CDF}(b_k) = \frac{b_{k+1} - c_1}{c_2-c_1} - \frac{b_{k} - c_1}{c_2-c_1}  =  \frac{b_{k+1} - b_k}{c_2-c_1}~. 
	\end{align*}

	For \Cref{lem:small_step_coupling_props}.\ref{lem:small_step_coupling_props2}, consider the following:
	\begin{align*}
		 & \pi_1(\sem{e:\tau, \discretize{e} : \discretize{\tau}}) \\
		 ~{}={}~ &  \sem{e:\tau, \discretize{e} : \discretize{\tau}} \gg\!= \lambda(e_1,e_2). \delta_{e_1}  
		 \tag{definition} \\
		 ~{}={}~ & (\text{Uniform}(c_1, c_2) \gg\!= \lambda v.\delta_{(v : \floattype{B}{V}, v_i : \fin{|B| + 1})})\gg\!= \lambda(e_1,e_2). \delta_{e_1}  
		 \tag{\Cref{tab:coupling_smallstep_1}, $v_i$ is unique number s.t. $v \in \intervals{B}_{v_i}$} \\
		 ~{}={}~ & \text{Uniform}(c_1, c_2) \gg\!= \lambda v.(\delta_{(v : \floattype{B}{V}, v_i : \fin{|B| + 1})} \gg\!= \lambda(e_1,e_2). \delta_{e_1}  )
		 \tag{monad assoc., $v_i$ is unique number s.t. $v \in \intervals{B}_{v_i}$} \\
		 ~{}={}~ & \text{Uniform}(c_1, c_2) \gg\!= \lambda v. \delta_{v : \floattype{B}{V}}  
		 \\
		 ~{}={}~ & \text{Uniform}(c_1, c_2) \\
		 ~{}={}~ & \sem{e:\tau}~.
		 \tag{\Cref{tab:smallstep-1}}
	\end{align*}
	
	For \Cref{lem:small_step_coupling_props}.\ref{lem:small_step_coupling_props2}, consider the following:
	\begin{align*}
		& \pi_2(\sem{e:\tau, \discretize{e} : \discretize{\tau}}) \\
		~{}={}~ &  \sem{e:\tau, \discretize{e} : \discretize{\tau}} \gg\!= \lambda(e_1,e_2). \delta_{e_2}  
		\tag{definition} \\
		~{}={}~ & (\text{Uniform}(c_1, c_2) \gg\!= \lambda v.\delta_{(v : \floattype{B}{V}, v_i : \fin{|B| + 1})})\gg\!= \lambda(e_1,e_2). \delta_{e_2}  
		\tag{\Cref{tab:coupling_smallstep_1}, $v_i$ is unique number s.t. $v \in \intervals{B}_{v_i}$} \\
		~{}={}~ & \text{Uniform}(c_1, c_2) \gg\!= \lambda v.(\delta_{(v : \floattype{B}{V}, v_i : \fin{|B| + 1})} \gg\!= \lambda(e_1,e_2). \delta_{e_2}  )
		\tag{monad assoc., $v_i$ is unique number s.t. $v \in \intervals{B}_{v_i}$} \\
		~{}={}~ & \text{Uniform}(c_1, c_2) \gg\!= \lambda v. \delta_{v_i : \fin{|B| + 1}}  
		\tag{monad return, $v_i$ is unique number s.t. $v \in \intervals{B}_{v_i}$}
		\\
		~{}={}~ & \sem{\discrete(p_0,\ldots,p_n)}~\text{where}~p_k = \text{CDF}(b_{k+1}) - \text{CDF}(b_k) = \frac{b_{k+1} - c_1}{c_2-c_1} - \frac{b_{k} - c_1}{c_2-c_1}  =  \frac{b_{k+1} - b_k}{c_2-c_1} 
		\tag{by construction} \\
		~{}={}~ & \sem{\discretize{e}:\discretize{\tau}}~.
		\tag{\Cref{tab:smallstep-1}}
	\end{align*}
	
	Now, the remaining cases are either (i) trivial base cases or (ii) immediately follow from the I.H.\ and \Cref{lem:bind_project} further below. We demonstrate this for the case 
	\[
	e:\tau = \uniform(e_1 : \floattype{B_1}{V_1}, e_2 : \floattype{B_2}{V_2}) : \floattype{B}{V}~,
	\]
	where
	\begin{enumerate}
		\item $e_1 = c_1$ is a value but $e_2$ is no value,
		\item $B =  \{\sim_1\!\!b_1, \ldots, \sim_n\!\!b_n\}$,
		\item $V=\top$, 
		\item $V_1 = \{v_1,\ldots, v_{n_1}\}$ with $\distinguishes{B_1}{V_1}$ and $c_1 \in V_1$ due to $\has{V_1}{c_1}$ by I.H. 
		\item $V_2 = \{w_1,\ldots, w_{n_2}\}$ with $\distinguishes{B_2}{V_2}$ and  $c_2 \in V_2$ due to $\has{V_2}{c_2}$ by I.H. 
	\end{enumerate}
	For \Cref{lem:small_step_coupling_props}.\ref{lem:small_step_coupling_props2}, consider the following:
	\begin{align*}
		& \pi_1(\sem{e:\tau, \discretize{e} : \discretize{\tau}}) \\
		~{}={}~ &  \sem{e:\tau, \discretize{e} : \discretize{\tau}} \gg\!= \lambda(e_1,e_2). \delta_{e_1}  
		\tag{definition} \\
		~{}={}~ & \big( \sem{e_2:\tau_2, \discretize{e_2} : \discretize{\tau_2}} \\
		& \gg\!= \lambda (f,f'). \delta_{(\uniform(v_1:\tau_1, f:\tau_2 ) : \tau, \dc(\discretize{v_1}:\discretize{\tau_1}, f': \discretize{\tau_2} ) : \discretize{\tau})} \big)
		 \gg\!= \lambda(e_1,e_2). \delta_{e_1}   \\
		 ~{}={}~ &  \sem{e_2:\tau_2, \discretize{e_2} : \discretize{\tau_2}} \\
		 & \gg\!= \lambda (f,f'). \big(\delta_{(\uniform(v_1:\tau_1, f:\tau_2 ) : \tau, \dc(\discretize{v_1}:\discretize{\tau_1}, f': \discretize{\tau_2} ) : \discretize{\tau})} 
		 \gg\!= \lambda(e_1,e_2). \delta_{e_1} \big)
		 \tag{monad assoc.} \\
		 ~{}={}~ &  \sem{e_2:\tau_2, \discretize{e_2} : \discretize{\tau_2}} 
		  \gg\!= \lambda (f,f'). \delta_{\uniform(v_1:\tau_1, f:\tau_2 ) : \tau} 
		 \tag{monad return} \\
		 ~{}={}~ &  \pi_1(\sem{e_2:\tau_2, \discretize{e_2} : \discretize{\tau_2}})
		 \gg\!= \lambda f. \delta_{\uniform(v_1:\tau_1, f:\tau_2 ) : \tau} 
		 \tag{\Cref{lem:bind_project}.\ref{lem:bind_project1}} \\
		 ~{}={}~ &  \sem{e_2:\tau_2}
		 \gg\!= \lambda f. \delta_{\uniform(v_1:\tau_1, f:\tau_2 ) : \tau} ~.
		 \tag{$ \sem{e_2:\tau_2, \discretize{e_2} : \discretize{\tau_2}}  =  \sem{e_2:\tau_2}$ by I.H.} \\
		 ~{}={}~ &  \sem{e : \tau}
		 \tag{\Cref{tab:smallstep-1}} 
	\end{align*}
\end{proof}

\begin{lemma}
	\label{lem:bind_project}
	Let $F\colon P_\tau \to \mathcal{D}(P_\tau + \bot)$, let $G\colon Expr_\tau \to \mathcal{D}(Expr_\tau + \bot)$, and let $G' \colon Expr_{\discretize{\tau}} \to \mathcal{D}(Expr_{\discretize{\tau}} + \bot)$. We have:
	\begin{enumerate}
		\item\label{lem:bind_project1} If  $\forall (f,f') \in P_\tau\colon F(f,f') = G(f)$, then for all $\mu \in \mathcal{D}(P_\tau + \bot)$,
		\[
		\pi_1(\mu) \gg\!= \lambda f. G(f)
		~{}={}~
		\mu \gg\!= \lambda (f,f'). F(f,f')~.
		\]
		\item\label{lem:bind_project2} If  $\forall (f,f') \in P_\tau\colon F(f,f') = G'(f)$, then for all $\mu \in \mathcal{D}(P_\tau + \bot)$,
		\[
		\pi_2(\mu) \gg\!= \lambda f. G'(f)
		~{}={}~
		\mu \gg\!= \lambda (f,f'). F(f,f')~.
		\]
	\end{enumerate}
\end{lemma}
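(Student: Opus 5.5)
The plan is to unfold both sides using the definitions of $\pi_1$, $\pi_2$ and bind as integrals, and then reduce the claim to Fubini-style associativity of the monad in \Cref{def:dist-monad}. I prove item~\ref{lem:bind_project1}; item~\ref{lem:bind_project2} is symmetric, with $e_2$ and $G'$ in place of $e_1$ and $G$. Throughout, $F$, $G$ (and $G'$) are understood as extended to $\bot$ in the sense of \Cref{def:bot-extension}, so both binds are defined on $P_\tau+\bot$ and $Expr_\tau+\bot$ respectively. By \Cref{lem:bot-extension-kernel}, these extensions are sub-Markov kernels provided $G$ and $F$ are.

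By definition, $\pi_1(\mu) = \mu \gg\!= \lambda(e_1,e_2).\,\delta_{e_1}$, where $\bot$ is sent to $\delta_\bot$. The projection kernel $p_1 \colon (e_1,e_2)\mapsto \delta_{e_1}$ is measurable: $P_\tau$ carries the subspace $\sigma$-algebra of the product $\mathcal F_{Expr_\tau}\otimes\mathcal F_{Expr_{\discretize{\tau}}}$, so the first coordinate map is measurable, and composing it with return gives a kernel by \Cref{lem:dirac-return-kernel}.

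Monad associativity, which is the standard Fubini/kernel-composition identity behind \Cref{lem:kernel-composition}, then gives
\[
\pi_1(\mu)\gg\!= G_\bot \;=\; \mu \gg\!= \lambda x.\,\bigl(p_1(x)\gg\!= G_\bot\bigr).
\]
The left unit law reduces the inner expression pointwise: for $x=(f,f')$ it is $G(f)$, and for $x=\bot$ it is $\delta_\bot$. By hypothesis $G(f)=F(f,f')$, and on $\bot$ both $F_\bot$ and $G_\bot$ return $\delta_\bot$. The two integrands therefore coincide on all of $P_\tau+\bot$, and the integrals agree for every measurable $A$.

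The main obstacle is justifying associativity rigorously for sub-probability kernels on these bespoke spaces. Concretely, for every measurable $A$ I need
\[
\int_{Expr_\tau+\bot} G_\bot(y)(A)\,\pi_1(\mu)(dy) \;=\; \int_{P_\tau+\bot} G_\bot(p_1(x))(A)\,\mu(dx).
\]
This is the change-of-variables formula for the pushforward measure $\pi_1(\mu)$ along the measurable map $p_1$. I would prove it first for indicator functions $G_\bot(\cdot)(A)=\mathbf 1_C$, where the identity reduces to the definition of the pushforward. Linearity then extends it to simple functions, and monotone convergence extends it to the bounded measurable function $y\mapsto G_\bot(y)(A)$; measurability of this function is exactly the sub-Markov kernel property of $G_\bot$. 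No further measure-theoretic input beyond these kernel properties is needed.
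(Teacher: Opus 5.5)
Your argument is correct and matches the paper's proof: unfold $\pi_1(\mu)$ as $\mu$ bound to the Dirac projection kernel, apply monad associativity and then the left unit law, and finish with the hypothesis $F(f,f') = G(f)$. Your pushforward change-of-variables justification and your explicit handling of $\bot$ via the $\bot$-extension make rigorous what the paper simply cites as the monad laws.
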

\begin{proof}
	We only prove \Cref{lem:bind_project}.\ref{lem:bind_project1} because the reasoning for \Cref{lem:bind_project}.\ref{lem:bind_project2} is completely analogous. We have
	\begin{align*}
		& \pi_1(\mu) \gg\!= \lambda f. G(f) \\
		{}={}~& (\mu \gg\!= \lambda (f,f'). \delta_f) \gg\!= \lambda f. G(f) 
		\tag{definition} \\
		{}={}~& \mu \gg\!= \lambda (f,f'). (\delta_f \gg\!= \lambda f. G(f))
		\tag{monad assoc.} \\
		{}={}~& \mu \gg\!= \lambda (f,f'). G(f)
		\tag{monad neutral} \\
		{}={}~& \mu \gg\!= \lambda (f,f'). F(f,f')~.
		\tag{assumption} 
	\end{align*}
\end{proof}

\subsection{Big-Step Semantics and Correctness of Discretization}
We extend both of the small-step semantics canonically to an $n$-step semantics by%
\begin{align*}
		\bigsemstep{e : \tau}{n} &{}= 
		\begin{cases}
			\delta_{e : \tau} & \text{if $n = 0$} \\
			\bigsemstep{e : \tau}{n-1} \monbind  \lambda e'. \sem{e' : \tau} & \text{if $n > 0$}~.
		\end{cases} \\
		\bigsemstep{e : \tau, \discretize{e} : \discretize{\tau}}{n} &{}= 
		\begin{cases}
			\delta_{(e : \tau, \discretize{e} : \discretize{\tau})} & \text{if $n = 0$} \\
			\bigsemstep{e : \tau, \discretize{e} : \discretize{\tau}}{n-1} \monbind  \lambda (e', e''). \sem{e' : \tau, e'' : \discretize{\tau}} & \text{if $n > 0$}~.
		\end{cases}
\end{align*}
As a sanity check, we show that the $n$-step semantics defines a sub-Markov kernel.

\begin{lemma}[$n$-step semantics is a sub-Markov kernel]\label{lem:big-step-n-kernel}
	For all $n \in \Nats$, $\bigsemstep{\cdot}{n} : Expr_{\tau} \to \mathcal{D}(Expr_{\tau} + \bot)$ is a sub-Markov kernel.
\end{lemma}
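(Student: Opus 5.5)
We argue by induction on $n$, proving for each $n$ that the map $e \mapsto \bigsemstep{e:\tau}{n}$ is a sub-Markov kernel from $Expr_\tau$ to $Expr_\tau + \bot$. Note that the displayed recursion binds against $\lambda e'.\sem{e':\tau}$, which is only defined on $Expr_\tau$. Consistent with the definition in \Cref{sec:smallsem_to_bigsem}, we therefore read it as its $\bot$-extension in the sense of \Cref{def:bot-extension}, so that $\bot \mapsto \delta_\bot$.

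\emph{Base case $n=0$.} Here $\bigsemstep{e:\tau}{0} = \delta_{e:\tau}$, so the map is exactly $\mathsf{return}$. By \Cref{lem:dirac-return-kernel} it is a sub-Markov kernel $Expr_\tau \to \mathcal{D}(Expr_\tau+\bot)$.

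\emph{Inductive step.} Assume $P_{n-1} := \bigsemstep{\cdot}{n-1}$ is a sub-Markov kernel $Expr_\tau \to \mathcal{D}(Expr_\tau+\bot)$. By \Cref{lem:small-step-kernel}, $\sem{\cdot}\colon Expr_\tau \to \mathcal{D}(Expr_\tau+\bot)$ is a sub-Markov kernel. By \Cref{lem:bot-extension-kernel}, its $\bot$-extension $Q := \sem{\cdot}_\bot\colon Expr_\tau+\bot \to \mathcal{D}(Expr_\tau+\bot)$ is also a sub-Markov kernel. Now $\bigsemstep{e:\tau}{n} = P_{n-1}(e) \monbind Q$, which is precisely the kernel composition $(P_{n-1}\monbind Q)(e,A) = \int Q(y,A)\,P_{n-1}(e,dy)$ with $Y = Expr_\tau+\bot$ and $Z = Expr_\tau+\bot$. \Cref{lem:kernel-composition} then gives that this is a sub-Markov kernel $Expr_\tau \to \mathcal{D}(Expr_\tau+\bot)$, which closes the induction.

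\emph{Main obstacle.} The composition lemma requires the intermediate kernel to act on the measurable space $(Expr_\tau+\bot,\mathcal{F}_{Expr_\tau+\bot})$ that $P_{n-1}$ maps into. This holds by \Cref{lem:measurable-expr-t-error}, and the $\bot$-extension places $Q$ on exactly that space. The real content is therefore just \Cref{lem:small-step-kernel}, which we take as given. As a side remark, the same argument verbatim, using \Cref{lem:small-step-coupling-kernel} and \Cref{lem:measurable-coupling-space}, shows that the coupled $n$-step semantics is a sub-Markov kernel too.
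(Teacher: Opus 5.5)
Your proof is correct and matches the paper's argument step for step. The structure is induction on $n$, with the base case via \Cref{lem:dirac-return-kernel}, and the inductive step combining the induction hypothesis with \Cref{lem:small-step-kernel}, the $\bot$-extension lemma \Cref{lem:bot-extension-kernel}, and kernel composition \Cref{lem:kernel-composition}. Your explicit reading of the bind as using the $\bot$-extended small-step kernel on $(Expr_\tau+\bot,\mathcal{F}_{Expr_\tau+\bot})$ is exactly how the paper handles it too.
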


\begin{proof}
	By induction on $n$. 

	Base case: $n = 0$. By definition, 
	
	\[
		\bigsemstep{e : \tau}{n} = \dirac{e : \tau}
	\]
	i.e. the map $e \mapsto \dirac{e}$. By~\Cref{lem:dirac-return-kernel}, this is a sub-Markov kernel.

	Inductive step: $n > 0$. By definition, 
	\[
		\bigsemstep{e : \tau}{n} = \bigsemstep{e : \tau}{n-1} \monbind  \lambda e'. \sem{e' : \tau}
	\]
	
	By the I.H., $\bigsemstep{\cdot}{n-1} \colon Expr_{\tau} \to \mathcal{D}(Expr_{\tau} + \bot)$ is a sub-Markov kernel. By~\Cref{lem:small-step-kernel}, $\lambda e' .\sem{e' : \tau} : Expr_{\tau} \to \mathcal{D}(Expr_{\tau} + \bot)$ is a sub-Markov kernel. By~\Cref{lem:bot-extension-kernel}, its $\bot$-extension from $(Expr_{\tau} + \bot)$ to $\mathcal{D}(Expr_{\tau} + \bot)$ is also a sub-Markov kernel. By~\Cref{lem:kernel-composition}, $\bigsemstep{e : \tau}{n-1} \monbind  \lambda e'. \sem{e' : \tau}$ is a sub-Markov kernel.
\end{proof}

\begin{lemma}[$n$-step coupling semantics is a sub-Markov kernel]\label{lem:big-step-n-coupling-kernel}
	For all $n \in \Nats$, $\bigsemstep{\cdot, \cdot}{n} : P_{\tau} \to \mathcal{D}(P_{\tau} + \bot)$ is a sub-Markov kernel.
\end{lemma}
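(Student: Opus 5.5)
The plan is to mirror the proof of \Cref{lem:big-step-n-kernel} exactly, by induction on $n$, replacing the expression space $Expr_\tau$ by the coupling space $P_\tau$ (equipped with $\mathcal F_{P_\tau}$ and its error extension $\mathcal F_{P_\tau+\bot}$, which are measurable spaces by \Cref{lem:measurable-coupling-space}). Throughout, $\bigsemstep{\cdot,\cdot}{n}$ is regarded as the map $(e,\discretize{e}) \mapsto \bigsemstep{e:\tau,\discretize{e}:\discretize{\tau}}{n}$ from $P_\tau$ to $\mathcal{D}(P_\tau+\bot)$.

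\emph{Base case} $n=0$. Here $\bigsemstep{e:\tau,\discretize{e}:\discretize{\tau}}{0} = \dirac{(e,\discretize{e})}$, so the map is the Dirac return on $P_\tau$. \Cref{lem:dirac-return-kernel} is stated for an arbitrary measurable space $(X,\mathcal F_X)$. Instantiating it with $X = P_\tau$ shows that this is a sub-Markov kernel. The only thing to check is that $\mathcal F_{P_\tau}$ is the trace of $\mathcal F_{P_\tau+\bot}$ on $P_\tau$, which holds by construction since every set in $\mathcal F_{P_\tau+\bot}$ has the form $A$ or $A\cup\{\bot\}$ with $A\in\mathcal F_{P_\tau}$.

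\emph{Inductive step} $n>0$. By definition,
$\bigsemstep{\cdot,\cdot}{n} = \bigsemstep{\cdot,\cdot}{n-1} \monbind \lambda(e',e'').\sem{e',e''}$. The induction hypothesis gives that $\bigsemstep{\cdot,\cdot}{n-1}\colon P_\tau \to \mathcal{D}(P_\tau+\bot)$ is a sub-Markov kernel. \Cref{lem:small-step-coupling-kernel} gives that $\sem{\cdot,\cdot}\colon P_\tau\to\mathcal{D}(P_\tau+\bot)$ is a sub-Markov kernel. \Cref{lem:bot-extension-kernel} lifts the latter to a kernel on $P_\tau+\bot$, sending $\bot$ to $\delta_\bot$; this is the implicit $\bot$-extension used in the definition of the $n$-step semantics. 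Finally, \Cref{lem:kernel-composition} shows that the bind of these two kernels is again a sub-Markov kernel.

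The one point needing care, and the expected main obstacle, is a domain mismatch. \Cref{lem:kernel-composition} composes a kernel $P\colon X\to\mathcal D(Y)$ with a kernel $Q\colon Y\to\mathcal D(Z)$, but the induction hypothesis yields a kernel from $P_\tau$ into $P_\tau+\bot$, whose codomain includes $\bot$. I would take $Y = Z = P_\tau+\bot$ and use the $\bot$-extended small-step coupling kernel as $Q$. This is exactly how the same issue is resolved in \Cref{lem:big-step-n-kernel}, so no new measurability argument beyond \Cref{lem:small-step-coupling-kernel} is required.
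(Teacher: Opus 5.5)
Your proposal is correct and follows the paper's proof, which simply declares the result analogous to \Cref{lem:big-step-n-kernel}. That proof is an induction on $n$: the base case uses the Dirac return kernel, and the inductive step combines the induction hypothesis, \Cref{lem:small-step-coupling-kernel}, the $\bot$-extension lemma, and kernel composition with $Y = Z = P_\tau + \bot$. Your explicit check that $\mathcal{F}_{P_\tau}$ is the trace of $\mathcal{F}_{P_\tau+\bot}$ on $P_\tau$ supplies a detail the paper leaves implicit.
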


\begin{proof}
	Analogous to proof of~\Cref{lem:big-step-n-kernel}.
\end{proof}

For every type $\tau$, we denote by $\vals_\tau$ the set of values of
type $\tau$. By~\Cref{lem:measurable-subsets} and closure under finite unions,
$\vals_\tau+\bot$ and $Expr_\tau\setminus\vals_\tau$ are measurable
subsets of $Expr_\tau+\bot$, which we equip with their subspace
$\sigma$-algebras. 

Furthermore, let
\[
K_\tau:(Expr_\tau+\bot)\to\mathcal D(Expr_\tau+\bot)
\]
be the \(\bot\)-extension of the small-step kernel:
\[
K_\tau(x,A)
=
\begin{cases}
\sem{x:\tau}(A) & x\in Expr_\tau,\\
\delta_\bot(A)  & x=\bot,
\end{cases}
\qquad
A\in\mathcal F_{Expr_\tau+\bot}.
\]
Thus, for every \(e:\tau\),
\[
\bigsemstep{e:\tau}{0}=\delta_e
\]
and
\[
\bigsemstep{e:\tau}{n+1}(A)
=
\int_{Expr_\tau+\bot}
K_\tau(x,A)\,
\bigsemstep{e:\tau}{n}(dx).
\]

\begin{lemma}[Monotonicity of n-step semantics]
\label{lem:bigstepsem}
For every \(e:\tau\), \(n\in\Nats\), and
\(A\in\mathcal F_{\vals_\tau+\bot}\),
\[
\bigsemstep{e:\tau}{n}(A)
\leq
\bigsemstep{e:\tau}{n+1}(A).
\]
\end{lemma}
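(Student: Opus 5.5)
The plan is to prove that values and $\bot$ are absorbing for the kernel $K_\tau$, and then obtain monotonicity by splitting the integral defining $\bigsemstep{e:\tau}{n+1}(A)$ into its values-and-$\bot$ part and its non-value part.

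\emph{Absorption.} We claim that for every $x\in\vals_\tau+\bot$ and every $A\in\mathcal F_{\vals_\tau+\bot}$ we have $K_\tau(x,A)=\delta_x(A)$.
\begin{itemize}
\item If $x=\bot$, this holds by the definition of the $\bot$-extension.
\item If $x=v$ is a value, the row ``$v:\tau$'' of \Cref{tab:smallstep-1} gives $\sem{v:\tau}=\delta_v$.
\end{itemize}
Here $A$ is viewed as a measurable subset of $Expr_\tau+\bot$. This is justified because $\vals_\tau+\bot$ is measurable by \Cref{lem:measurable-subsets}, so its subspace $\sigma$-algebra consists of measurable subsets of the ambient space.

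\emph{Splitting.} Write $Expr_\tau+\bot=(\vals_\tau+\bot)\sqcup N_\tau$, where both pieces are measurable. Then
\begin{align*}
\bigsemstep{e:\tau}{n+1}(A)
&=\int_{\vals_\tau+\bot}K_\tau(x,A)\,\bigsemstep{e:\tau}{n}(dx)+\int_{N_\tau}K_\tau(x,A)\,\bigsemstep{e:\tau}{n}(dx)\\
&=\int_{\vals_\tau+\bot}\delta_x(A)\,\bigsemstep{e:\tau}{n}(dx)+\int_{N_\tau}K_\tau(x,A)\,\bigsemstep{e:\tau}{n}(dx)\\
&=\bigsemstep{e:\tau}{n}(A)+\int_{N_\tau}K_\tau(x,A)\,\bigsemstep{e:\tau}{n}(dx).
\end{align*}
The second equality uses absorption. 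The third uses $A\subseteq\vals_\tau+\bot$, so that $\int_{\vals_\tau+\bot}\mathbf 1_A\,d\mu=\mu(A)$.

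\emph{Conclusion.} The remaining integral is nonnegative, because $K_\tau(x,A)\ge0$ and $x\mapsto K_\tau(x,A)$ is measurable by \Cref{lem:small-step-kernel} and \Cref{lem:bot-extension-kernel}. Hence $\bigsemstep{e:\tau}{n}(A)\le\bigsemstep{e:\tau}{n+1}(A)$.

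\emph{Main obstacle.} The only delicate step is absorption for values. We must make sure that no rule in \Cref{tab:smallstep-1,tab:smallstep-2} fires on a value other than the Dirac rule, for instance the pair, cons, $\funkw$ and $\fixkw$ rows. Those rows also return $\delta$ of the same value once all components are values, so they agree with the Dirac rule. We also need the restriction $A\subseteq\vals_\tau+\bot$: for general measurable $A$, mass can leave $A$ through non-value steps, and monotonicity fails.
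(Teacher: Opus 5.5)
Your proof is correct and follows essentially the same route as the paper. Both arguments show that every element of $\vals_\tau+\bot$ is absorbing for $K_\tau$, split the integral defining $\bigsemstep{e:\tau}{n+1}(A)$ over $\vals_\tau+\bot$ and the non-values, use $A\subseteq\vals_\tau+\bot$ to collapse the first part to $\bigsemstep{e:\tau}{n}(A)$, and drop the nonnegative remainder; your added remarks on measurability of $x\mapsto K_\tau(x,A)$ and on why monotonicity needs $A\subseteq\vals_\tau+\bot$ are accurate details the paper leaves implicit.
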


\begin{proof}
Fix \(e:\tau\), \(n\in\Nats\), and
\(A\in\mathcal F_{\vals_\tau+\bot}\). Write
\[
\mu_n:=\bigsemstep{e:\tau}{n}.
\]
Every \(x\in\vals_\tau+\bot\) is absorbing:
\[
K_\tau(x,-)=\delta_x.
\]
Indeed, \(\sem{v:\tau}=\delta_v\) for every
\(v\in\vals_\tau\), while the \(\bot\)-extension maps
\(\bot\) to \(\delta_\bot\). Therefore,
\begin{align*}
\bigsemstep{e:\tau}{n+1}(A)
&=
\int_{Expr_\tau+\bot}K_\tau(x,A)\,\mu_n(dx)
&& \tag{definition of the \(n\)-step semantics}
\\
&=
\int_{\vals_\tau+\bot}K_\tau(x,A)\,\mu_n(dx)
+
\int_{Expr_\tau\setminus\vals_\tau}K_\tau(x,A)\,\mu_n(dx)
&& \tag{partition of \(Expr_\tau+\bot\)}
\\
&=
\int_{\vals_\tau+\bot}\delta_x(A)\,\mu_n(dx)
+
\int_{Expr_\tau\setminus\vals_\tau}K_\tau(x,A)\,\mu_n(dx)
&& \tag{terminal states are absorbing}
\\
&=
\int_{\vals_\tau+\bot}\mathbf{1}_A(x)\,\mu_n(dx)
+
\int_{Expr_\tau\setminus\vals_\tau}K_\tau(x,A)\,\mu_n(dx)
&& \tag{\(\delta_x(A)=\mathbf{1}_A(x)\)}
\\
&=
\mu_n(A)
+
\int_{Expr_\tau\setminus\vals_\tau}K_\tau(x,A)\,\mu_n(dx)
&& \tag{\(A\subseteq\vals_\tau+\bot\)}
\\
&\geq
\mu_n(A)
&& \tag{nonnegativity}
\\
&=
\bigsemstep{e:\tau}{n}(A).
&& \tag{definition of \(\mu_n\)}
\end{align*}
\end{proof}

\begin{lemma}
\label{lem:bigstep-well-defined}
For every \(e:\tau\),
\[
\bigsem{e:\tau}\in\mathcal D(\vals_\tau+\bot).
\]
That is, \(\bigsem{e:\tau}\) is a subprobability measure on
\(\vals_\tau+\bot\).
\end{lemma}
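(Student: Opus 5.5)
We need to show that $\bigsem{e:\tau}$, defined as $A\mapsto\lim_{n\to\infty}\bigsemstep{e:\tau}{n}(A)$ for $A\in\mathcal F_{\vals_\tau+\bot}$, is a subprobability measure on $\vals_\tau+\bot$. The plan has three steps: show the limit exists with values in $[0,1]$, check that it vanishes on the empty set, and prove countable additivity by exchanging the limit with a sum.

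\emph{Existence and range.} Write $\mu_n:=\bigsemstep{e:\tau}{n}$. By \Cref{lem:big-step-n-kernel}, each $\mu_n$ is a subprobability measure on $Expr_\tau+\bot$. Its restriction to the measurable subset $\vals_\tau+\bot$ (measurable by \Cref{lem:measurable-subsets}) is therefore a subprobability measure on that subspace. For a fixed $A\in\mathcal F_{\vals_\tau+\bot}$, \Cref{lem:bigstepsem} shows that $(\mu_n(A))_n$ is nondecreasing. It is also bounded by $\mu_n(Expr_\tau+\bot)\le 1$. Hence the limit exists and lies in $[0,1]$. Nonnegativity is inherited from the $\mu_n$, and $\bigsem{e:\tau}(\emptyset)=\lim_n 0=0$.

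\emph{Countable additivity.} Let $(A_i)_{i\in\Nats}$ be pairwise disjoint sets in $\mathcal F_{\vals_\tau+\bot}$. For each $n$, countable additivity of $\mu_n$ gives $\mu_n(\bigcup_i A_i)=\sum_i\mu_n(A_i)$. It remains to show that $\lim_n$ commutes with $\sum_i$. View each sum as an integral of $i\mapsto\mu_n(A_i)$ against counting measure on $\Nats$. For every fixed $i$, these functions are nonnegative and pointwise nondecreasing in $n$, again by \Cref{lem:bigstepsem}. The monotone convergence theorem for counting measure then yields
\[
\lim_{n\to\infty}\sum_i\mu_n(A_i)=\sum_i\lim_{n\to\infty}\mu_n(A_i)=\sum_i\bigsem{e:\tau}(A_i).
\]
An elementary alternative is to bound $\sum_{i\le N}\mu_n(A_i)\le\mu_n(\bigcup_i A_i)$, let $n\to\infty$ and then $N\to\infty$ for one inequality, and use $\mu_n(A_i)\le\bigsem{e:\tau}(A_i)$ for the other.

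\emph{Total mass.} Finally, $\bigsem{e:\tau}(\vals_\tau+\bot)=\lim_n\mu_n(\vals_\tau+\bot)\le 1$. Together with the previous steps, this shows $\bigsem{e:\tau}\in\mathcal D(\vals_\tau+\bot)$.

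The main obstacle is the exchange of limit and infinite sum in the additivity step. A general pointwise limit of measures need not be countably additive, so the argument depends essentially on monotonicity in $n$ from \Cref{lem:bigstepsem}, which holds only for sets inside $\vals_\tau+\bot$. For this reason the measure is defined, and all sets $A_i$ are taken, within the subspace $\sigma$-algebra of $\vals_\tau+\bot$ rather than on all of $Expr_\tau+\bot$.
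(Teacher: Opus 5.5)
Your proposal is correct and follows essentially the same route as the paper's proof. Both show vanishing on $\emptyset$, obtain countable additivity by exchanging $\lim_n$ and $\sum_i$ via monotone convergence on $\Nats$ (justified by the monotonicity from \Cref{lem:bigstepsem}), and bound total mass by $\mu_n(\vals_\tau+\bot)\le 1$; you additionally make the existence of the limit explicit and supply an elementary alternative for the limit--sum exchange.
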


\begin{proof}
Fix \(e:\tau\), and write
\[
\mu_n:=\bigsemstep{e:\tau}{n}.
\]
Each \(\mu_n\) is a subprobability measure on \(Expr_\tau+\bot\). Since
\(\vals_\tau+\bot\) is measurable, every
\(A\in\mathcal F_{\vals_\tau+\bot}\) is also measurable in
\(Expr_\tau+\bot\).

First,
\begin{align*}
\bigsem{e:\tau}(\emptyset)
&=
\lim_{n\to\infty}\mu_n(\emptyset)
&& \tag{definition of \(\bigsem{e:\tau}\)}
\\
&=
\lim_{n\to\infty}0
&& \tag{\(\mu_n\) is a measure}
\\
&=0.
\end{align*}

Now let
\[
(A_i)_{i\in\Nats}\subseteq\mathcal F_{\vals_\tau+\bot}
\]
be pairwise disjoint. Then
\begin{align*}
\bigsem{e:\tau}\!\left(\bigcup_{i\in\Nats}A_i\right)
&=
\lim_{n\to\infty}
\mu_n\!\left(\bigcup_{i\in\Nats}A_i\right)
&& \tag{definition of \(\bigsem{e:\tau}\)}
\\
&=
\lim_{n\to\infty}
\sum_{i\in\Nats}\mu_n(A_i)
&& \tag{countable additivity of \(\mu_n\)}
\\
&=
\sum_{i\in\Nats}
\lim_{n\to\infty}\mu_n(A_i)
&& \tag{monotone convergence on \(\Nats\)}
\\
&=
\sum_{i\in\Nats}\bigsem{e:\tau}(A_i).
&& \tag{definition of \(\bigsem{e:\tau}\)}
\end{align*}
The monotone-convergence step applies because
\[
\mu_n(A_i)\leq\mu_{n+1}(A_i)
\]
for every \(i,n\in\Nats\), by~\Cref{lem:bigstepsem}. Thus
\(\bigsem{e:\tau}\) is countably additive and therefore a measure on
\(\vals_\tau+\bot\).

Finally,
\begin{align*}
\bigsem{e:\tau}(\vals_\tau+\bot)
&=
\lim_{n\to\infty}\mu_n(\vals_\tau+\bot)
&& \tag{definition of \(\bigsem{e:\tau}\)}
\\
&\leq 1.
&& \tag{\(\mu_n\) is a subprobability measure}
\end{align*}
Therefore \(\bigsem{e:\tau}\) is a subprobability measure on
\(\vals_\tau+\bot\).
\end{proof}

Moreover, let
\[
V_\tau
=
\{(v:\tau,\discretize{v}:\discretize{\tau})\mid v:\tau\}.
\]
We equip $V_\tau+\bot$ with the subspace $\sigma$-algebra inherited
from $P_\tau+\bot$.

\begin{lemma}[Monotonicity of n-step coupling semantics]
\label{lem:bigstepsem-pairing}
For every $(e:\tau,\discretize{e}:\discretize{\tau})\in P_\tau$,
$n\in\Nats$, and $A\in\mathcal F_{V_\tau+\bot}$,
\[
\bigsemstep{e:\tau,\discretize{e}:\discretize{\tau}}{n}(A)
\leq
\bigsemstep{e:\tau,\discretize{e}:\discretize{\tau}}{n+1}(A).
\]
\end{lemma}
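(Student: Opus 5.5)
The plan mirrors the proof of \Cref{lem:bigstepsem}, transported to the coupling space $P_\tau+\bot$. Let
\[
K^{P}_\tau:(P_\tau+\bot)\to\mathcal D(P_\tau+\bot)
\]
be the $\bot$-extension (\Cref{def:bot-extension}) of the coupled small-step kernel $\sem{\cdot,\cdot}$. By \Cref{lem:small-step-coupling-kernel} and \Cref{lem:bot-extension-kernel}, $K^P_\tau$ is a sub-Markov kernel. Writing $\nu_n:=\bigsemstep{e:\tau,\discretize{e}:\discretize{\tau}}{n}$, the definition of the $n$-step coupling semantics gives
\[
\nu_{n+1}(A)=\int_{P_\tau+\bot}K^P_\tau(x,A)\,\nu_n(dx).
\]
This integral is well defined because $\nu_n$ is a subprobability measure (\Cref{lem:big-step-n-coupling-kernel}) and $x\mapsto K^P_\tau(x,A)$ is measurable.

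The key structural fact is that every state in $V_\tau+\bot$ is absorbing, i.e.\ $K^P_\tau(x,-)=\delta_x$ for all such $x$. For $x=\bot$ this is exactly the $\bot$-extension. For $x=(v:\tau,\discretize{v}:\discretize{\tau})\in V_\tau$, the value row of \Cref{tab:coupling_smallstep_1} gives $\sem{v:\tau,\discretize{v:\tau}:\discretize{\tau}}=\delta_{(v,\discretize{v})}$. The coupled semantics is defined by cases on the first component $e$, and the paper notes that $e$ is a value iff $\discretize{e}$ is, so the value row is the one that fires. The paper lists separate rows for $\funkw$, $\fixkw$ and $\text{nil}$, but each of those is also a Dirac at the pair itself, so they are consistent with this.

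I also need $V_\tau+\bot\in\mathcal F_{P_\tau+\bot}$, so that $P_\tau+\bot$ can be split measurably. I will write $V_\tau=P_\tau\cap(\vals_\tau\times Expr_{\discretize{\tau}})$. Since $\vals_\tau\in\mathcal F_{Expr_\tau}$ by \Cref{lem:measurable-subsets}, this set lies in the subspace $\sigma$-algebra, and adding $\bot$ keeps it measurable by construction of $\mathcal F_{P_\tau+\bot}$. Because $V_\tau+\bot$ carries the subspace $\sigma$-algebra, any $A\in\mathcal F_{V_\tau+\bot}$ is then measurable in $P_\tau+\bot$ and satisfies $A\subseteq V_\tau+\bot$.

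With these facts the computation copies that of \Cref{lem:bigstepsem}. I split the integral over the partition $(V_\tau+\bot)\sqcup(P_\tau\setminus V_\tau)$. On the first piece the integrand is $\delta_x(A)=\mathbf 1_A(x)$, and since $A\subseteq V_\tau+\bot$ this integrates to $\nu_n(A)$. On the second piece the integrand $K^P_\tau(x,A)$ is nonnegative. Together these give $\nu_{n+1}(A)\ge\nu_n(A)$.

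The only step that needs real care is the absorbing property, specifically checking that no row of \Cref{tab:coupling_smallstep_1,tab:coupling_smallstep_2} other than a Dirac at the pair itself applies to a value pair. That amounts to confirming that $\discretize{\cdot}$ maps values to values, which follows from the structural definition in \Cref{fig:discretization-code}: constants go to constants or finite constants, and pairs, lists, $\funkw$ and $\fixkw$ recurse structurally.
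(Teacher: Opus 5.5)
Your proposal is correct and is essentially the paper's proof. The paper also argues that every element of $V_\tau+\bot$ is absorbing under the $\bot$-extended coupled small-step kernel, and then reuses the integral-splitting computation of \Cref{lem:bigstepsem} with $Expr_\tau+\bot$ and $\vals_\tau+\bot$ replaced by $P_\tau+\bot$ and $V_\tau+\bot$; your extra checks, notably that $V_\tau=P_\tau\cap(\vals_\tau\times Expr_{\discretize{\tau}})$ is measurable so the split is legitimate, fill in details the paper leaves implicit.
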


\begin{proof}
The coupled small-step semantics maps every element of $V_\tau$ to
its Dirac measure, while the $\bot$-extension maps $\bot$ to
$\delta_\bot$. Thus every element of $V_\tau+\bot$ is absorbing. The
remainder is analogous to the proof of~\Cref{lem:bigstepsem}, replacing
$Expr_\tau+\bot$ by $P_\tau+\bot$ and $\vals_\tau+\bot$ by
$V_\tau+\bot$.
\end{proof}

\begin{lemma}
\label{lem:bigstep-coupling-well-defined}
For every $(e:\tau,\discretize{e}:\discretize{\tau})\in P_\tau$,
\[
\bigsem{e:\tau,\discretize{e}:\discretize{\tau}}
\in\mathcal D(V_\tau+\bot).
\]
That is, the coupled big-step semantics is a subprobability measure on
$V_\tau+\bot$.
\end{lemma}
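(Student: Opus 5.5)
The argument mirrors the proof of \Cref{lem:bigstep-well-defined}, with the coupling space $P_\tau+\bot$ in place of $Expr_\tau+\bot$ and $V_\tau+\bot$ in place of $\vals_\tau+\bot$. Fix $(e:\tau,\discretize{e}:\discretize{\tau})\in P_\tau$ and write $\mu_n:=\bigsemstep{e:\tau,\discretize{e}:\discretize{\tau}}{n}$. The coupled big-step semantics is defined, as in \Cref{sec:smallsem_to_bigsem}, by $\bigsem{e:\tau,\discretize{e}:\discretize{\tau}}(A)=\lim_{n\to\infty}\mu_n(A)$ for $A\in\mathcal F_{V_\tau+\bot}$.

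First I need each $\mu_n$ to be a subprobability measure on $P_\tau+\bot$. This is exactly \Cref{lem:big-step-n-coupling-kernel}, applied at the fixed point $(e,\discretize{e})$. Second, I need every $A\in\mathcal F_{V_\tau+\bot}$ to be measurable in $P_\tau+\bot$. Since $V_\tau+\bot$ carries the subspace $\sigma$-algebra, this follows once $V_\tau+\bot\in\mathcal F_{P_\tau+\bot}$. I would show this by writing
\[
V_\tau = P_\tau\cap(\vals_\tau\times Expr_{\discretize{\tau}}).
\]
This identity holds because $e$ is a value iff $\discretize{e}$ is a value, a fact already used when defining the coupled semantics. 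The set $\vals_\tau\times Expr_{\discretize{\tau}}$ is a measurable rectangle by \Cref{lem:measurable-subsets}, so $V_\tau$ is measurable; adding $\{\bot\}$ preserves measurability by construction of $\mathcal F_{P_\tau+\bot}$. This measurability step is the only place the argument departs from \Cref{lem:bigstep-well-defined}, and I expect it to be the main obstacle, since everything else is verbatim.

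Finally I check the measure axioms. For the empty set, $\lim_n\mu_n(\emptyset)=0$. For countable additivity, take pairwise disjoint $A_i\in\mathcal F_{V_\tau+\bot}$ and compute
\[
\lim_n\mu_n\Big(\bigcup_i A_i\Big)=\lim_n\sum_i\mu_n(A_i)=\sum_i\lim_n\mu_n(A_i).
\]
The first equality is countable additivity of $\mu_n$. The interchange of limit and sum is justified by monotone convergence for series, which applies because $\mu_n(A_i)$ is nondecreasing in $n$ by \Cref{lem:bigstepsem-pairing}. For the mass bound, $\lim_n\mu_n(V_\tau+\bot)\le 1$ because each $\mu_n$ is a subprobability measure.
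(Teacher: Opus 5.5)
Your proof is correct and follows the paper's argument. The paper only says ``analogous to \Cref{lem:bigstep-well-defined}, replacing $Expr_\tau+\bot$ by $P_\tau+\bot$ and $\vals_\tau+\bot$ by $V_\tau+\bot$, and using \Cref{lem:bigstepsem-pairing} for monotonicity''; you carry this out and also make explicit that $V_\tau+\bot$ is measurable in $P_\tau+\bot$, which the paper leaves implicit. One small remark: $V_\tau = P_\tau\cap(\vals_\tau\times Expr_{\discretize{\tau}})$ follows directly from the definitions of $P_\tau$ and $V_\tau$, since each pair is determined by its first component and $\discretize{e}\in Expr_{\discretize{\tau}}$ always, so you do not need the fact that $e$ is a value iff $\discretize{e}$ is a value there.
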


\begin{proof}
Analogous to the proof of~\Cref{lem:bigstep-well-defined}, replacing
$Expr_\tau+\bot$ by $P_\tau+\bot$, $\vals_\tau+\bot$ by
$V_\tau+\bot$, and using~\Cref{lem:bigstepsem-pairing} for
monotonicity.
\end{proof}

\begin{definition}[Big-step semantics]
Let $e:\tau$. For measurable events in the respective terminal spaces,
the source and coupled \emph{(big-step) semantics} are defined by
\begin{align*}
\bigsem{e:\tau}(A)
&:= \lim_{n\to\infty}\bigsemstep{e:\tau}{n}(A),
&& A\in\mathcal F_{\vals_\tau+\bot},
\\
\bigsem{e:\tau,\discretize{e}:\discretize{\tau}}(A)
&:= \lim_{n\to\infty}
\bigsemstep{e:\tau,\discretize{e}:\discretize{\tau}}{n}(A),
&& A\in\mathcal F_{V_\tau+\bot}.
\end{align*}
\end{definition}

By~\Cref{lem:bigstepsem,lem:bigstepsem-pairing}, the source and coupled
$n$-step semantics are monotonically increasing on every measurable
event in $\vals_\tau+\bot$ and $V_\tau+\bot$, respectively.
By~\Cref{lem:bigstep-well-defined,lem:bigstep-coupling-well-defined},
their pointwise limits are subprobability measures.

\begin{lemma}[Continunity of projections]
	For all $n \in \Nats$,
	\begin{enumerate}
		\item $\pi_1(\lim_{n \to \infty} \bigsemstep{e,e'}{n}) = \lim_{n \to \infty} \pi_1(\bigsemstep{e,e'}{n})$
		\item $\pi_2(\lim_{n \to \infty} \bigsemstep{e,e'}{n}) = \lim_{n \to \infty} \pi_2(\bigsemstep{e,e'}{n})$ 
	\end{enumerate}
\end{lemma}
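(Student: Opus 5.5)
The plan is to unfold $\pi_i$ as an integral against $\mu_n:=\bigsemstep{e,e'}{n}$ and then pass to the limit by monotone convergence. Write $\mu:=\lim_{n}\mu_n$, which is a subprobability measure on $V_\tau+\bot$ by \Cref{lem:bigstep-coupling-well-defined}. Since the limit is only guaranteed to exist on events inside the terminal space, both sides should be read as measures on $\vals_\tau+\bot$ (respectively $\vals_{\discretize{\tau}}+\bot$), evaluated on a measurable event $A$ there. I treat only $\pi_1$; the argument for $\pi_2$ is identical with the second coordinate.

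\textbf{Step 1 (unfold the projection).} By definition of the bind, for $\nu\in\{\mu,\mu_n\}$,
\[
\pi_1(\nu)(A)=\int \delta_{x_1}(A)\,\nu(dx),
\]
where $x_1$ is the first component of $x$ and $\delta_\bot$ is used when $x=\bot$. The integrand is the indicator of the set
\[
\hat A:=\{(e_1,e_2)\in P_\tau\mid e_1\in A\}\cup(\{\bot\}\cap A).
\]
So $\pi_1(\nu)(A)=\nu(\hat A)$, provided $\hat A$ is measurable. For measurability, note that $\hat A$ is the preimage of $A$ under the first coordinate projection restricted to $P_\tau$, possibly together with $\bot$. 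That projection is measurable for the product $\sigma$-algebra, and restricting to the subspace $P_\tau$ preserves measurability (\Cref{def:subspace-sigma-algebra}). If $A\subseteq\vals_\tau+\bot$, then $\hat A\subseteq V_\tau+\bot$, because $e$ is a value iff $\discretize{e}$ is.

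\textbf{Step 2 (pass to the limit).} By Step 1,
\[
\pi_1(\mu)(A)=\mu(\hat A)=\lim_n\mu_n(\hat A)=\lim_n\pi_1(\mu_n)(A).
\]
The middle equality is the definition of the coupled big-step semantics, applied to the terminal event $\hat A$. By \Cref{lem:bigstepsem-pairing} the sequence $\pi_1(\mu_n)(A)$ is monotone, so its limit exists and defines a measure. The limit of measures can then be identified with the measure $\pi_1(\mu)$ on all terminal events, not just setwise, and this is again monotone convergence, as in \Cref{lem:bigstep-well-defined}.

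\textbf{Main obstacle.} The delicate point is that $\mu_n$ lives on $P_\tau+\bot$, not on $V_\tau+\bot$. For a general event $A$ that also contains non-values, $\lim_n\mu_n(\hat A)$ need not exist, because mass leaves non-value states over time. The first thing to try is to restrict the statement explicitly to terminal events, which is all that the proof of \Cref{thm:maintext:sound} uses. On terminal events, monotonicity holds because values and $\bot$ are absorbing, and the identification of $\hat A$ as a terminal event in Step 1 (values correspond to values under $\discretize{\cdot}$) is exactly what makes the argument go through.
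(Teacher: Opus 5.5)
Your proposal is correct and takes essentially the same route as the paper, which also rewrites $\pi_1(\nu)(A)$ as $\nu(A\times X_2)$, identifies the limit of the monotone sequence $\bigsemstep{e,e'}{n}$ with its supremum on that event, and then folds the projection back. Your restriction to terminal events, together with the check that $\hat A$ is measurable and lies in $V_\tau+\bot$, makes explicit what the paper leaves implicit when it invokes monotonicity for an arbitrary $A\in\Sigma_1$.
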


\begin{proof}
	We prove part i) in detail; the proof of part ii) is analogous. Fix an arbitrary measurable $A \in \Sigma_1$:
	\begin{align*}
		&\pi_1(\lim_{n \to \infty} \bigsemstep{e,e'}{n})(A) \\
		&= \left(\lim_{n \to \infty} \bigsemstep{e,e'}{n}\right)(A \times X_2) \tag{by definition of $\pi_1$}\\
		&= \sup_{n \in \Nats} \bigsemstep{e,e'}{n}(A \times X_2) \tag{since $\bigsemstep{e,e'}{n}$ is monotone}\\
		&= \sup_{n \in \Nats} \pi_1(\bigsemstep{e,e'}{n})(A) \tag{by definition of $\pi_1$}\\
		&= \lim_{n \to \infty} \pi_1(\bigsemstep{e,e'}{n})(A) \tag{since $\pi_1(\bigsemstep{e,e'}{n})$ is monotone}
	\end{align*}
	Since $A$ was arbitrary, this implies:
	\[
		\pi_1(\lim_{n \to \infty} \bigsemstep{e,e'}{n}) =  \lim_{n \to \infty} \pi_1(\bigsemstep{e,e'}{n}),
	\]
	as desired.
\end{proof}

\begin{lemma}[$n$-step coupling projection]
	\label{lem:proj_cont}
	For all $n \in \Nats$, 
	\begin{enumerate}
		\item $\pi_1(\bigsemstep{e : \tau,\discretize{e} : \discretize{\tau}}{n}) = \bigsemstep{e : \tau}{n}$
		\item $\pi_2(\bigsemstep{e : \tau,\discretize{e} : \discretize{\tau}}{n}) = \bigsemstep{\discretize{e} : \discretize{\tau}}{n}$
	\end{enumerate}
\end{lemma}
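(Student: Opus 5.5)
We prove both parts by induction on $n$, for all $e:\tau$ simultaneously; part 2 is symmetric to part 1, using the second components of \Cref{lem:small_step_coupling_props} and \Cref{lem:bind_project}. Throughout, $\pi_1$ is the bind-based projection $\pi_1(\mu) = \mu \monbind \lambda(e_1,e_2).\delta_{e_1}$. This projection is extended to $\bot$ via \Cref{def:bot-extension}, so that $\pi_1(\mu)(\{\bot\}) = \mu(\{\bot\})$. All the kernels involved are sub-Markov by \Cref{lem:small-step-coupling-kernel,lem:bind-kernel-coupling,lem:bot-extension-kernel}, so every bind below is well defined and the monad laws (associativity, left unit) are available.

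\emph{Base case $n=0$.} Here $\bigsemstep{e:\tau,\discretize{e}:\discretize{\tau}}{0} = \delta_{(e,\discretize{e})}$. The left unit law gives $\pi_1(\delta_{(e,\discretize{e})}) = \delta_e = \bigsemstep{e:\tau}{0}$.

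\emph{Inductive step.} Write $\mu_n := \bigsemstep{e:\tau,\discretize{e}:\discretize{\tau}}{n}$. By definition of the $n$-step coupling semantics and associativity,
\[
\pi_1(\mu_{n+1}) = \mu_n \monbind \lambda(e',e'').\,\bigl(\sem{e':\tau, e'':\discretize{\tau}}_\bot \monbind \lambda(g,g').\delta_g\bigr) = \mu_n \monbind \lambda(e',e'').\,\pi_1\bigl(\sem{e':\tau,e'':\discretize{\tau}}\bigr)_\bot .
\]
Every pair $(e',e'')$ in the support of $\mu_n$ lies in $P_\tau$, so $e'' = \discretize{e'}$. Hence \Cref{lem:small_step_coupling_props} (part 1) rewrites the inner term to $\sem{e':\tau}_\bot$, which depends only on $e'$. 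At $\bot$ both sides equal $\delta_\bot$, since $\pi_1(\delta_\bot)=\delta_\bot$. We are now in exactly the situation of \Cref{lem:bind_project} (part 1), with $F(e',e'') = \sem{e':\tau}_\bot$ and $G = \sem{\cdot:\tau}_\bot$. It yields $\pi_1(\mu_{n+1}) = \pi_1(\mu_n) \monbind \lambda e'.\sem{e':\tau}_\bot$. The induction hypothesis replaces $\pi_1(\mu_n)$ by $\bigsemstep{e:\tau}{n}$, and the result is by definition $\bigsemstep{e:\tau}{n+1}$.

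\emph{Main obstacle.} The delicate point is matching the shapes precisely rather than the algebra itself. \Cref{lem:bind_project} is stated for kernels $F\colon P_\tau\to\mathcal D(P_\tau+\bot)$ and $G$ on $Expr_\tau$, without $\bot$-extensions. I would first restate it (by the same three-line monad computation) for $\bot$-extended kernels into $\mathcal D(Expr_\tau+\bot)$. I would also check that the projection commutes with the $\bot$-extension, i.e.\ $\pi_1(F_\bot(x)) = (\pi_1\circ F)_\bot(x)$, by case split on $x=\bot$. A further subtlety is that the inductive step needs the hypothesis at a fixed $e$, not at all intermediate expressions. This is fine because I unfold $\mu_{n+1}$ as $\mu_n$ followed by one more step, not one step followed by $\mu_n$, so the inductive hypothesis is applied to the same $e$.
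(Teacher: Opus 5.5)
Your proposal is correct and matches the paper's proof essentially step for step. Both argue by induction on $n$, unfold the $(n+1)$-step coupling as $n$ coupled steps followed by one more, apply associativity, use \Cref{lem:small_step_coupling_props} to replace the projected coupled step by the source step, pull the projection outward with the unit and associativity laws, and close with the induction hypothesis. The only difference is cosmetic: you invoke \Cref{lem:bind_project} for the unit/associativity rewrite (retyped so that $F$ lands in $\mathcal{D}(Expr_\tau+\bot)$, with the $\bot$-extensions made explicit), whereas the paper writes out those two monad-law steps inline.
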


\begin{proof}
	By induction on $n$. We prove i), as the proof for ii) is analogous.

	\underline{Base case:} $n=0$
	\begin{align*}
		&\pi_1\!\left(\bigsemstep{e:\tau,\,\discretize{e'}:\discretize{\tau'}}{0}\right) \\
		= \quad&\pi_1(\delta_{(e:\tau,\,\discretize{e'}:\discretize{\tau'})}) \tag{definition of $n$-step coupling semantics}\\
		= \quad&\delta_{e:\tau} \tag{projection of Dirac} \\
		= \quad&\bigsemstep{e:\tau}{0}. \tag{definition of $n$-step semantics}\\
	\end{align*}
	
	\underline{Inductive step:} $n > 0$
	\begin{align*}
		&\pi_1\!\left(\bigsemstep{e:\tau,\,\discretize{e'}:\discretize{\tau'}}{n}\right) \\
		= \quad&\pi_1(\bigsemstep{e : \tau, \discretize{e} : \discretize{\tau}}{n-1} \monbind  \lambda (f, g). \sem{f : \tau, g : \discretize{\tau}}) \tag{definition of $n$-step coupling semantics}\\
		= \quad&(\bigsemstep{e : \tau, \discretize{e} : \discretize{\tau}}{n-1} \monbind  \lambda (f, g). \sem{f : \tau, g : \discretize{\tau}}) \monbind \lambda (f,g).\dirac{f} \tag{by (2)} \\
		= \quad&\bigsemstep{e : \tau, \discretize{e} : \discretize{\tau}}{n-1} \monbind  \lambda (f, g). (\sem{f : \tau, g : \discretize{\tau}} \monbind \lambda (f,g).\dirac{f}) \tag{monad assoc.}\\
		= \quad&\bigsemstep{e : \tau, \discretize{e} : \discretize{\tau}}{n-1} \monbind  \lambda (f, g). \pi_1(\sem{f,g}) \tag{by (2)}\\
		= \quad&\bigsemstep{e : \tau, \discretize{e} : \discretize{\tau}}{n-1} \monbind  \lambda (f, g). \sem{f} \tag{\Cref{lem:small_step_coupling_props}.\ref{lem:small_step_coupling_props1}}\\
		= \quad&\bigsemstep{e : \tau, \discretize{e} : \discretize{\tau}}{n-1} \monbind  \lambda (f, g). (\dirac{f} \monbind \lambda f.\sem{f}) \tag{monad unit}\\
		= \quad&(\bigsemstep{e : \tau, \discretize{e} : \discretize{\tau}}{n-1} \monbind  \lambda (f, g). \dirac{f}) \monbind \lambda f.\sem{f} \tag{monad assoc.}\\
		= \quad&\pi_1(\bigsemstep{e : \tau, \discretize{e} : \discretize{\tau}}{n-1}) \monbind \lambda f.\sem{f} \tag{by (2)}\\
		= \quad&\bigsemstep{e : \tau}{n-1} \monbind \lambda f.\sem{f} \tag{by I.H.}\\
		= \quad&\bigsemstep{e : \tau}{n} \tag{definition of $n$-step semantics}
	\end{align*}
\end{proof}

Before concluding the soundness proof, we relate the auxiliary
statement $\dc$ used in the coupling semantics to the nested target
expression generated by the discretization function in
\Cref{fig:discretization-code}.

\begin{lemma}
\label{lem:sequential-evaluation}
Suppose an expression $F(e_1,e_2)$ evaluates $e_1$ first, then
$e_2$, and, when both arguments are values $v_1,v_2$, continues
as $q(v_1,v_2)$. Then
\[
	\bigsem{F(e_1,e_2)}
	=
	\bigsem{e_1}
	\monbind
	\lambda v_1.
	\bigsem{e_2}
	\monbind
	\lambda v_2.
	\bigsem{q(v_1,v_2)}.
\]
All functions bound by $\monbind$ are extended to $\bot$ as in
\Cref{def:bot-extension}.
\end{lemma}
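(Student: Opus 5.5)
The plan is to prove the identity by reasoning on the $n$-step approximations and then passing to the limit. The key structural fact is that the small-step rule for $F(e_1,e_2)$ only ever steps $e_1$ while it is not a value, lifting each step through the context $F(\cdot,e_2)$. Once $e_1=v_1$, it only steps $e_2$ through $F(v_1,\cdot)$. Once both are values, it makes a single deterministic step to $q(v_1,v_2)$ (or, for $\dc$, the step is absorbed as in \Cref{tab:smallstep-1}). So I would first establish a general ``evaluation context'' lemma: for a context $C$ that is stepped through by binds of the form $\sem{e}\monbind\lambda g.\dirac{C[g]}$ and that reduces to a continuation $k(v)$ once the hole contains a value, we have
\[
\bigsem{C[e]} = \bigsem{e}\monbind\lambda v.\bigsem{k(v)}.
\]
The lemma then follows by applying this twice: first with $C=F(\cdot,e_2)$ and $k(v_1)=F(v_1,e_2)$, and then, inside the bind, with $C=F(v_1,\cdot)$ and $k(v_2)=q(v_1,v_2)$. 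The step from $F(v_1,v_2)$ to $q(v_1,v_2)$ is a single Dirac step and does not change the limit.

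For the context lemma I would prove two inequalities between the limits.

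\emph{($\le$).} By induction on $n$, I would show that $\bigsemstep{C[e]}{n}(A)\le(\bigsemstep{e}{n}\monbind\lambda x.\,\mathrm{cont}(x))(A)$ for measurable $A$ of terminal states, where $\mathrm{cont}$ sends a value $v$ to $\bigsem{k(v)}$, sends $\bot$ to $\delta_\bot$, and sends a non-value $e'$ to $0$ on terminal sets. The induction step has two parts. One unfolds one step of $C[e]$: if $e$ is not a value, this step equals $\sem{e}\monbind\lambda g.\dirac{C[g]}$, with $\bot$ propagated by the $\bot$-extension (\Cref{def:bot-extension}). The other uses monad associativity together with the fact that values and $\bot$ are absorbing (\Cref{lem:bigstepsem}).

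\emph{($\ge$).} Conversely, for $m$ steps of $e$ followed by $j$ steps of $k(v)$, I would show that $\bigsemstep{C[e]}{m+j+1}$ dominates the corresponding composite. Taking suprema and using monotone convergence (interchange of the limit with the integral against $\bigsem{e}$, justified by \Cref{lem:bigstepsem} and \Cref{lem:kernel-composition}) then gives equality.

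The measurability side conditions are that $g\mapsto\dirac{C[g]}$ and $v\mapsto\bigsem{k(v)}$ are sub-Markov kernels. The first is \Cref{lem:bind-kernel}. For the second, $v\mapsto\bigsemstep{k(v)}{n}(A)$ is measurable by \Cref{lem:subst-measurable}-style arguments combined with \Cref{lem:big-step-n-kernel}, and a pointwise monotone limit of measurable functions is measurable.

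I expect the main obstacle to be the bookkeeping in the ($\ge$) direction: the number of steps taken by $e$ is random, so the split $n=m+j+1$ varies across executions. My first attempt would avoid explicit splitting altogether. I would define $\nu_n := \bigsemstep{e}{n}\monbind\lambda x.\,[x\text{ terminal}]\cdot\bigsemstep{k(x)}{n}$, show that $\bigsemstep{C[e]}{2n+1}\ge\nu_n$ on terminal sets by induction (reusing absorption), and note that $\nu_n$ is monotone in $n$ with limit $\bigsem{e}\monbind\lambda v.\bigsem{k(v)}$ by monotone convergence.
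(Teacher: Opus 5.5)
Your proposal is correct and follows the same strategy as the paper's proof. Both work with the $n$-step approximations, push steps through the evaluation contexts $F(\cdot,e_2)$ and then $F(v_1,\cdot)$ using associativity of $\monbind$ and absorption of values and $\bot$, and then pass to the limit; your two-sided bound (above by $\bigsemstep{e}{n}\monbind\mathrm{cont}$, below by $\nu_n$ via $\bigsemstep{C[e]}{2n+1}$, closed off by monotone convergence) is a careful rendering of the paper's one-line ``induction on $n$, then take limits,'' which does not address that the step counts of $F(e_1,e_2)$ and of the composite do not line up at any finite $n$.
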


\begin{proof}
By induction on the $n$-step semantics, using the small-step rules
for evaluating the first and second arguments and associativity of
$\monbind$. Taking the limit as $n\to\infty$ gives the result.
\end{proof}

\begin{theorem}[$\dc$ is a valid construct]
\label{thm:dc-expansion}
Let
\[
e
=
\uniform(
  e_1 : \float[B_1;V_1],
  e_2 : \float[B_2;V_2]
)
:
\float[B;\top]
\]
be well typed, where $B,B_1,B_2\neq\top$ and
$V_1,V_2\neq\top$. Let $e_{\mathrm{case}}$ be the nested
$\letkw$- and $\ifkw$-expression in the last case of
\Cref{fig:discretization-code}. Then
\[
\bigsem{
  \dc(
    \discretize{e_1 : \float[B_1;V_1]},
    \discretize{e_2 : \float[B_2;V_2]}
  )
  :
  \fin{|B|+1}
}
=
\bigsem{
  e_{\mathrm{case}}
  :
  \fin{|B|+1}
}.
\]
\end{theorem}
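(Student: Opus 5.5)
Both sides of the claimed equation first evaluate $d_1:=\discretize{e_1 : \float[B_1;V_1]}$ and then $d_2:=\discretize{e_2 : \float[B_2;V_2]}$ from left to right, and only afterwards dispatch on the resulting pair of finite indices. The plan is therefore to apply \Cref{lem:sequential-evaluation} to each side and reduce the claim to an equality between the two continuations, checked pointwise on index values.

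\emph{Step 1 (the $\dc$ side).} By the small-step rules for $\dc$ in \Cref{tab:smallstep-1}, $\dc(d_1,d_2)$ steps $d_1$ to a value, then $d_2$, and then continues as $q_{\dc}(k_1,k_2)$. Here $q_{\dc}(k_1,k_2)$ is defined as follows:
\begin{itemize}
\item it is $\diverge$ if $V_1=\emptyset$ or $V_2=\emptyset$;
\item it is $\discretize{\uniform(v_i,w_j) : \float[B;\top]}$ if $k_1=\discretize{v_i : \float[B_1;V_1]}$ and $k_2=\discretize{w_j : \float[B_2;V_2]}$.
\end{itemize}
In the second case the step taken is the $\sem{\cdot}$ of that expression, so after one step the continuation has the same big-step semantics as $q_{\dc}(k_1,k_2)$. \Cref{lem:sequential-evaluation} then gives
\[
\bigsem{\dc(d_1,d_2)}=\bigsem{d_1}\monbind\lambda k_1.\,\bigsem{d_2}\monbind\lambda k_2.\,\bigsem{q_{\dc}(k_1,k_2)}.
\]

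\emph{Step 2 (the $e_{\mathrm{case}}$ side).} The expression $e_{\mathrm{case}}$ is $\letkw\; v_1'=d_1\;\inkw\;\letkw\; v_2'=d_2\;\inkw\; C$, where $C$ is the if-chain over $V_1\times V_2$. Applying \Cref{lem:sequential-evaluation} to the two let-bindings, with the let rule substituting the values, gives
\[
\bigsem{e_{\mathrm{case}}}=\bigsem{d_1}\monbind\lambda k_1.\,\bigsem{d_2}\monbind\lambda k_2.\,\bigsem{C[k_1/v_1',k_2/v_2']}.
\]
It therefore suffices to show that $\bigsem{C[k_1/v_1',k_2/v_2']}=\bigsem{q_{\dc}(k_1,k_2)}$ for every $k_1,k_2$ in the support of $\bigsem{d_1}$ and $\bigsem{d_2}$. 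Both sides send $\bot$ to $\delta_\bot$ by the $\bot$-extension, so the $\bot$ case is immediate.

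\emph{Step 3 (pointwise agreement).} This is where the typing hypotheses enter, and I expect it to be the main obstacle.
\begin{itemize}
\item \emph{Empty value set.} If $V_1$ or $V_2$ is empty, $C$ is literally $\diverge$, and both sides are the zero measure.
\item \emph{Otherwise.} Each guard $k_1 \fineq{(|B_1|+1)} \discretize{v_i}\logand k_2\fineq{(|B_2|+1)}\discretize{w_j}$ is a closed finite comparison and reduces deterministically to $\true$ or $\false$. By $\distinguishes{B_1}{V_1}$ and $\distinguishes{B_2}{V_2}$, the maps $v\mapsto\discretize{v:\float[B_\ell;V_\ell]}$ are injective on $V_1$ and on $V_2$. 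Hence at most one guard fires, namely the one for the unique $(v_i,w_j)$ matching $(k_1,k_2)$. The if-chain then steps deterministically to the corresponding branch $\discretize{\uniform(v_i,w_j)}$, or to $\diverge$ when no guard fires. Deterministic steps leave $\bigsem{\cdot}$ unchanged, which can be shown by a short induction on the $n$-step semantics with an index shift.
\item \emph{No guard fires.} It remains to check that the indices $q_{\dc}$ leaves undefined also lead to divergence or are unreachable. I would handle this by setting $q_{\dc}$ to $\diverge$ on such pairs, which matches the "$\elsekw\;\diverge$" branch, and noting that the $\dc$ table's "o.w." case gives zero mass there.
\end{itemize}
The delicate point in this step is that the fired branch is literally the same expression that $\dc$ steps into. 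Once that is confirmed, the two continuations agree, and the theorem follows by rewriting with Steps 1 and 2.
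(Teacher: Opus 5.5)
Your proposal is correct and follows essentially the same route as the paper: both sides are rewritten via \Cref{lem:sequential-evaluation} as $\bigsem{d_1}\monbind\lambda k_1.\,\bigsem{d_2}\monbind\lambda k_2.\,\bigsem{q(k_1,k_2)}$, with $q$ defaulting to $\diverge$ on unmatched index pairs and recoverability of $(B_1,V_1)$ and $(B_2,V_2)$ ensuring that at most one branch of the if-chain can fire. One small correction: on an unmatched pair of values, the ``o.w.'' row of the $\dc$ table is a Dirac self-loop at $\dc(k_1,k_2)$ rather than literally zero mass, but it never puts mass on values or $\bot$, so its big-step semantics is still that of $\diverge$, which is all your argument needs.
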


\begin{proof}
Write
\[
	d_1=\discretize{e_1:\float[B_1;V_1]}
	\qquad\text{and}\qquad
	d_2=\discretize{e_2:\float[B_2;V_2]}.
\]

For arbitrary values $k_1:\fin{|B_1|+1}$ and
$k_2:\fin{|B_2|+1}$, define
\[
	q(k_1,k_2)
	=
	\begin{cases}
		\discretize{\uniform(v_i,w_j):\float[B;\top]}
		& \begin{array}{l}
			\text{if there exists }(v_i,w_j)\in V_1\times V_2\text{ such that}\\
			k_1=\discretize{v_i:\float[B_1;V_1]}\text{ and }\\
			k_2=\discretize{w_j:\float[B_2;V_2]},
		  \end{array}\\[1ex]
		\diverge & \text{if no such pair exists}.
	\end{cases}
\]
When such a pair exists, it is unique because
$\distinguishes{B_1}{V_1}$ and
$\distinguishes{B_2}{V_2}$; if $V_1=\emptyset$ or
$V_2=\emptyset$, no such pair exists and hence
$q(k_1,k_2)=\diverge$ for all $k_1,k_2$.

By the operational rule for $\dc$ and
\Cref{lem:sequential-evaluation},
\[
	\bigsem{\dc(d_1,d_2)}
	=
	\bigsem{d_1}
	\monbind
	\lambda k_1.
	\bigsem{d_2}
	\monbind
	\lambda k_2.
	\bigsem{q(k_1,k_2)}.
	\tag{1}
\]

Now consider $e_{\mathrm{case}}$. Its two outer
$\letkw$-bindings evaluate $d_1$ and $d_2$ in the same order.
Once they evaluate to $k_1$ and $k_2$, finite equality and
conjunction select the unique branch indexed by $(v_i,w_j)$ when a
matching pair exists; otherwise, including whenever $V_1=\emptyset$
or $V_2=\emptyset$, the final branch yields $\diverge$.
Thus the remaining nested conditional evaluates to $q(k_1,k_2)$.
Applying \Cref{lem:sequential-evaluation} again gives
\[
	\bigsem{e_{\mathrm{case}}}
	=
	\bigsem{d_1}
	\monbind
	\lambda k_1.
	\bigsem{d_2}
	\monbind
	\lambda k_2.
	\bigsem{q(k_1,k_2)}.
	\tag{2}
\]

The right-hand sides of (1) and (2) are identical. Therefore,
\[
	\bigsem{\dc(d_1,d_2)}
	=
	\bigsem{e_{\mathrm{case}}}.
	\qedhere
\]
\end{proof}

The following will imply our desired equivalence claim:

\begin{lemma}[Big-step coupling projection]\label{lem:big_step_coupling_props}
	Let $\bigsem{e : \tau, \discretize{e'} : \discretize{\tau'}} \in \mathcal{D}(P_{\tau} + \bot)$. Then,
	\begin{enumerate}
		\item\label{lem:big_step_coupling_props1} $\pi_1(\bigsem{e : \tau, \discretize{e'} : \discretize{\tau'}}) = \bigsem{e : \tau}$
		\item\label{lem:big_step_coupling_props2} $\pi_2(\bigsem{e : \tau, \discretize{e'} : \discretize{\tau'}}) = \bigsem{\discretize{e'} : \discretize{\tau'}}$
	\end{enumerate}
\end{lemma}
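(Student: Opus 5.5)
The plan is to reduce the statement to its $n$-step counterpart, \Cref{lem:proj_cont}, and then pass to the limit. I treat part (1) in detail; part (2) is symmetric. Fix a measurable event $A$ in the terminal space $\vals_\tau+\bot$ of the source program. By the definition of the projection, $\pi_1(\bigsem{e:\tau,\discretize{e}:\discretize{\tau}})(A)$ is the coupled big-step measure of the preimage event $\{(v,\discretize{v}) \mid v\in A\cap\vals_\tau\}$, together with $\bot$ if $\bot\in A$. I will denote this preimage by $\hat A$ and first check that $\hat A$ is measurable in $V_\tau+\bot$. This follows because $\hat A$ is the intersection of $V_\tau+\bot$ with the product set $(A\cap\vals_\tau)\times Expr_{\discretize{\tau}}$, possibly with $\bot$ added, and such a set lies in the subspace $\sigma$-algebra inherited from $P_\tau+\bot$.

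The main computation is then the following chain:
\begin{align*}
\pi_1(\bigsem{e,\discretize{e}})(A)
&= \lim_{n\to\infty}\bigsemstep{e,\discretize{e}}{n}(\hat A)
= \lim_{n\to\infty}\pi_1(\bigsemstep{e,\discretize{e}}{n})(A)\\
&= \lim_{n\to\infty}\bigsemstep{e:\tau}{n}(A)
= \bigsem{e:\tau}(A).
\end{align*}
The first equality unfolds the definition of the coupled big-step semantics on the measurable event $\hat A$; the limit exists and is monotone by \Cref{lem:bigstepsem-pairing}. The second equality is the continuity-of-projections lemma stated just before \Cref{lem:proj_cont}, which lets the projection be applied step by step. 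The third is \Cref{lem:proj_cont}(1). The last is the definition of $\bigsem{e:\tau}$; its limit exists by \Cref{lem:bigstepsem}.

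The main obstacle is the second step, because two measures live on different spaces. The $n$-step coupled measures live on all of $P_\tau+\bot$, while their limit is only defined on the terminal space $V_\tau+\bot$. The projection $\pi_1$ must therefore be read consistently on both. To handle this, I restrict attention to events $A\subseteq\vals_\tau+\bot$. I then observe that $\pi_1$ maps non-value pairs only to non-values: this holds because $e$ is a value iff $\discretize{e}$ is a value, so non-value pairs in $P_\tau$ never land in $A$. Consequently $\pi_1(\bigsemstep{e,\discretize{e}}{n})(A)=\bigsemstep{e,\discretize{e}}{n}(\hat A)$ holds for every $n$. Once this is established, the monotone limits can be exchanged exactly as in the continuity lemma.

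Part (2) follows by the same chain. It uses $\pi_2$, \Cref{lem:proj_cont}(2), and the monotonicity of the $n$-step semantics of $\discretize{e}:\discretize{\tau}$, which is \Cref{lem:bigstepsem} instantiated at type $\discretize{\tau}$. One further point needs care in part (2): the coupled semantics uses the auxiliary construct $\dc$, whereas the actual discretized program uses the nested expression $e_{\mathrm{case}}$. To bridge this, I will invoke \Cref{thm:dc-expansion} to replace each $\dc$ occurrence by $e_{\mathrm{case}}$ at the big-step level. This replacement is legitimate by a congruence argument based on \Cref{lem:sequential-evaluation}, applied to the surrounding evaluation context.
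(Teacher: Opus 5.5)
Your proposal is correct and follows the paper's own route: unfold the coupled big-step semantics as a limit, commute $\pi_1$ (resp.\ $\pi_2$) with the limit via the continuity-of-projections lemma, apply \Cref{lem:proj_cont}, and refold the definition of $\bigsem{\cdot}$. Your explicit preimage $\hat A$ and your restriction to terminal events handle the mismatch between $P_\tau+\bot$ and $V_\tau+\bot$ more carefully than the paper does.

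You should drop the final $\dc$ bridge. In the soundness appendix $\discretize{\cdot}$ is \emph{defined} to produce $\dc$, so part (2) is already a statement about the $\dc$-program. Moreover, an evaluation-context congruence via \Cref{lem:sequential-evaluation} would not justify replacing $\dc$ occurrences under $\funkw$/$\fixkw$ binders or with free variables.
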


\begin{proof}
	\begin{align*}
		&\pi_1(\bigsem{e : \tau, \discretize{e'} : \discretize{\tau'}}) \\
		&= \pi_1(\lim_{n \to \infty} \bigsemstep{e : \tau, \discretize{e'} : \discretize{\tau'}}{n}) \tag{by defintion of big-step coupling semantics} \\
		&= \lim_{n \to \infty} \pi_1(\bigsemstep{e : \tau, \discretize{e'} : \discretize{\tau'}}{n}) \tag{by communitivity lemma} \\
		&= \lim_{n \to \infty} \bigsemstep{e : \tau}{n} \tag{by definition of n-step projection lemma} \\
		&= \bigsem{e : \tau} \tag{by definition of big-step semantics}
	\end{align*}
\end{proof}


%
%
%
\begin{theorem}\label{thm:contextual-equivalence}
	Let $e:\bool$. Then $\bigsem{e:\bool} = \bigsem{\discretize{e:\bool} : \bool}$.
\end{theorem}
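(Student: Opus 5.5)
The plan follows the outline already sketched for \Cref{thm:maintext:sound} in the main text, now using the appendix lemmas. Fix a closed, fully annotated $e:\bool$. First note that $\discretize{\bool}=\bool$, and that by \Cref{lem:type-transformations} the discretized expression $\discretize{e:\bool}$ is well typed at $\bool$. Hence the pair $(e:\bool,\discretize{e:\bool}:\bool)$ lies in $P_\bool$, and the coupled big-step semantics $\bigsem{e:\bool,\discretize{e}:\bool}$ is a subprobability measure on $V_\bool+\bot$ by \Cref{lem:bigstep-coupling-well-defined}.

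The key structural observation is that the only values of type $\bool$ are $\true$ and $\false$, and that $\discretize{\true}=\true$ and $\discretize{\false}=\false$. Therefore $V_\bool+\bot=\{(\true,\true),(\false,\false),\bot\}$ is a finite, discrete space. Both sides of the claimed equation are measures on the discrete space $\{\true,\false,\bot\}$, so it suffices to check equality pointwise on each $v\in\{\true,\false,\bot\}$.

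For each such $v$, the argument is a chain of equalities. First apply \Cref{lem:big_step_coupling_props}(\ref{lem:big_step_coupling_props1}) to rewrite $\bigsem{e:\bool}(v)$ as $\pi_1(\bigsem{e:\bool,\discretize{e}:\bool})(v)$. Then unfold the definition of $\pi_1$ over the three atoms. Since the atoms $(\true,\true)$ and $(\false,\false)$ project identically under both components, and $\bot$ is sent to $\bot$ by both projections, this expression equals $\pi_2(\bigsem{e:\bool,\discretize{e}:\bool})(v)$. Finally apply \Cref{lem:big_step_coupling_props}(\ref{lem:big_step_coupling_props2}) to obtain $\bigsem{\discretize{e:\bool}:\bool}(v)$.

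All the real work is therefore inherited from \Cref{lem:big_step_coupling_props}. That lemma in turn depends on \Cref{lem:proj_cont}, on continuity of projections, and, most importantly, on the small-step projection lemma \Cref{lem:small_step_coupling_props}.

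I expect the main obstacle to be the second part of the small-step projection lemma for the sampling case, particularly when the parameters are non-constant. The target program there is the nested $\ifkw$-expression $e_{\mathrm{case}}$ rather than the single-step $\dc$ construct. I would bridge this gap with \Cref{thm:dc-expansion}, which replaces $\dc$ by $e_{\mathrm{case}}$ at the level of $\bigsem{\cdot}$. This substitution can be made inside any evaluation context: big-step semantics is compositional through $\monbind$ (\Cref{lem:sequential-evaluation}), so replacing a subterm with one of equal big-step meaning preserves the meaning of the whole program.

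A smaller technical point concerns the constant-parameter uniform case. There one needs the identity $\text{Uniform}(c_1,c_2)(\intervals{B}_k)=p_k$, which is exactly the CDF computation in \Cref{fig:discretization-code}, taking care to use left limits for strict cuts. Once these pieces are in place, the chain of equalities above closes the proof.
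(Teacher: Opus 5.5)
Your proposal is correct and is essentially the paper's own proof: you compare the two measures pointwise on the three atoms of $V_\bool+\bot=\{(\true,\true),(\false,\false),\bot\}$, rewriting $\bigsem{e:\bool}(v)$ as a $\pi_1$-projection of the coupled big-step semantics, observing that this equals the corresponding $\pi_2$-projection, and closing with the two parts of \Cref{lem:big_step_coupling_props}. One small correction to your side remark: no bridging is needed inside \Cref{lem:small_step_coupling_props}, because the paper takes the single-step construct $\dc$ itself as the discretized target there (and a big-step identity such as \Cref{thm:dc-expansion} could not repair a one-step equation anyway), so \Cref{thm:dc-expansion} enters only afterwards, to identify the $\dc$-program with the nested program produced by \Cref{fig:discretization-code}.
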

\begin{proof}
	Since $\mathcal{D}(Vals_\tau + \bot)$ contains only discrete distributions, we proceed by case distinction on the probability of the final outcome $v\in\{\true,\false,\bot\}$.\\

	\emph{The case $v=\true$.} We have 
	\begin{align*}
		&  \bigsem{e : \bool}(\true) \\
		{}={}~& \pi_1 (\bigsem{e : \bool, \discretize{e}:\discretize{\bool}})(\true)
		\tag{\Cref{lem:big_step_coupling_props}.\ref{lem:big_step_coupling_props1}} \\
		{}={}~& (\bigsem{e : \bool, \discretize{e}:\discretize{\bool}} \gg\!= \lambda(e_1,e_2). \delta_{e_1})(\true)
		\tag{definition} \\
		{}={}~& \bigsem{e : \bool, \discretize{e}:\discretize{\bool}}((\true,\true))\cdot \delta_{\true}(\true) \\
		&{}+ 
		\bigsem{e : \bool, \discretize{e}:\discretize{\bool}}((\false,\false))\cdot \delta_{\false}(\true) \\
		&{}+ 
		\bigsem{e : \bool, \discretize{e}:\discretize{\bool}}(\bot)\cdot \delta_{\bot}(\true) 
		\tag{definition of $\gg\!=$} \\
		{}={}~& \bigsem{e : \bool, \discretize{e}:\discretize{\bool}}((\true,\true))
		\tag{definition of $\delta$} \\
		{}={}~& \pi_2(\bigsem{e : \bool, \discretize{e}:\discretize{\bool}})(\true)
		\tag{reasoning completely analogous} \\
		{}={}~& \bigsem{\discretize{e}:\discretize{\bool}}(\true)
		\tag{\Cref{lem:big_step_coupling_props}.\ref{lem:big_step_coupling_props2}} ~.
	\end{align*}
	
	The cases $v=\false$ and $v=\bot$ are completely analogous.
	
%
%
\end{proof}

\section{Benchmarks}
\label{app:benchmarks}
\subsection{Asymptotic Scaling}

\subsubsection{Code Sketches for Scaling Benchmarks}\label{app:synthetic-sketches}
This section contains code snippets of the programs  (\Cref{fig:cond-benchmarks-a-example,fig:cond-benchmarks-b-example,fig:cond-benchmarks-c-example,fig:cond-benchmarks-d-example,fig:alt-benchmarks-a-example,fig:alt-benchmarks-b-example,fig:alt-benchmarks-c-example,fig:alt-benchmarks-d-example}) used in the scaling experiments in \Cref{sec:synthetic-benchmarks}. We implemented a Python script that automatically generates these programs based on configurable parameters such as program size, guard structure, and dependency pattern. For reproducibility, we used a fixed random seed to generate each program. All $r_i$ values in the generated programs, where $0 \leq r_i \leq 1$ is randomly chosen, are rounded to two decimal places.

All of these programs follow the same structural pattern: they define a sequence of random variables $X_1,\ldots,X_n$, where each $X_i$ is drawn from a uniform distribution whose parameters depend on the value of a single predecessor variable through a conditional branch of the form:
\[
X_i =
  \begin{cases}
    \mathrm{Unif}(a_i,b_i), & X_{\pi(i)} < r_i, \\[2pt]
    \mathrm{Unif}(a'_i,b'_i), & \text{otherwise},
  \end{cases}
\]
for some parent selection function $\pi : \{2,\ldots,n\} \to \{1,\ldots,i-1\}$.  
Thus each program is a Bayesian network of conditionally independent nodes, differing only in the choice of $\pi(i)$ that determines the dependency pattern.

\begin{figure}
    \centering
    \begin{subfigure}{\textwidth}
        \begin{lstlisting}
            let x1 = uniform(0,1) in
            let x2 = if x1 < 0.56 then uniform(0,2) else uniform(0,3) in
            let x3 = if x2 < 0.92 then uniform(0,4) else uniform(0,5) in
            let x4 = if x3 < 0.47 then uniform(0,6) else uniform(0,7) in
            let x5 = if x4 < 0.51 then uniform(0,8) else uniform(0,9) in
            x5 < 0.5
        \end{lstlisting}
        \caption{Conditional Independent (\Cref{fig:cond-benchmarks-a}): Size 5 chain in which each variable depends on its immediate predecessor; all uniform distributions are distinct.}
        \label{fig:cond-benchmarks-a-example}
    \end{subfigure}
    \begin{subfigure}{\textwidth}
        \begin{lstlisting}
            let x1 = uniform(0,1) in
            let x2 = if x1 < 0.56 then uniform(0,2) else uniform(0,3) in
            let x3 = if x1 < 0.45 then uniform(0,4) else uniform(0,5) in
            let x4 = if x1 < 0.18 then uniform(0,6) else uniform(0,7) in
            let x5 = if x1 < 0.48 then uniform(0,8) else uniform(0,9) in
            x5 < 0.5
        \end{lstlisting}
        \caption{Conditional Independent Fork (\Cref{fig:cond-benchmarks-b}): Size 5 fork in which $x_1$ is the sole parent of all subsequent variables; all uniform distributions are distinct.}
        \label{fig:cond-benchmarks-b-example}
    \end{subfigure}
    \begin{subfigure}{\textwidth}
        \begin{lstlisting}
            let x1 = uniform(0,1) in
            let x2 = if x1 < 0.18 then uniform(0,2) else uniform(0,3) in
            let x3 = if x1 < 0.63 then uniform(0,4) else uniform(0,5) in
            let x4 = if x2 < 0.09 then uniform(0,6) else uniform(0,7) in
            let x5 = if x1 < 0.14 then uniform(0,8) else uniform(0,9) in
            x5 < 0.5
        \end{lstlisting}
        \caption{Conditional Random 1 (\Cref{fig:cond-benchmarks-c}): Size 5 program with randomly selected predecessors. Some variables may be unused; all uniform distributions are distinct.}
        \label{fig:cond-benchmarks-c-example}
    \end{subfigure}
    \begin{subfigure}{\textwidth}
        \begin{lstlisting}
            let x1 = uniform(0, 10) in
            let x2 = if x1 < 0.06 then x1 else x1 in
            let x3 = if uniform(0, 4) < 0.64 then x2 else x1 in
            let x4 = if x3 < 0.71 then uniform(0, 5) else x1 in
            let x5 = if uniform(0, 4) < 0.4 then x1 else uniform(0, 2) in
            x4 < 0.5
        \end{lstlisting}
        \caption{Conditional Random 2 (\Cref{fig:cond-benchmarks-d}): Size 5 program with randomized guards and branches. Uniform upper bounds are random integers from 0 to 10 and may repeat.}
        \label{fig:cond-benchmarks-d-example}
    \end{subfigure}
    \caption{Representative programs from the scaling benchmarks.}
    \label{fig:cond-benchmark-examples}
\end{figure}

\begin{figure}
    \centering
    \begin{subfigure}{\textwidth}
        \begin{lstlisting}
            let x1 = uniform(0,1) in
            let x2 = if x1 < 0.02 then uniform(0,2) else uniform(0,3) in
            let x3 = if x2 < 0.68 then uniform(0,4) else uniform(0,5) in
            let x4 = if x1 < 0.21 then uniform(0,6) else uniform(0,7) in
            let x5 = if x2 < 0.93 then uniform(0,8) else uniform(0,9) in
            x5 < 0.5
        \end{lstlisting}
        \caption{Alternating Guard 1 (\Cref{fig:alt-benchmarks-a}): Size 5 program with guard span 2. Guard variables cycle over the span, so variables beyond it may be unused.}
        \label{fig:alt-benchmarks-a-example}
    \end{subfigure}
    \begin{subfigure}{\textwidth}
        \begin{lstlisting}
            let x1 = uniform(0,1) in
            let x2 = if x1 < 0.38 then uniform(0,2) else x1 in
            let x3 = if x2 < 0.71 then uniform(0,4) else x2 in
            let x4 = if x1 < 0.42 then uniform(0,6) else x3 in
            let x5 = if x2 < 0.57 then uniform(0,8) else x4 in
            x5 < 0.5
        \end{lstlisting}
        \caption{Alternating Guard 2 (\Cref{fig:alt-benchmarks-b}): Size 5 program with guard span 2. Every variable is later used as a guard or else-branch expression; all uniform distributions are distinct.}
        \label{fig:alt-benchmarks-b-example}
    \end{subfigure}
    \begin{subfigure}{\textwidth}
        \begin{lstlisting}
            let x1 = uniform(0,1) in
            let x2 = if x1 < 0.67 then uniform(0,2) else uniform(0,3) in
            let x3 = if x2 < 0.5 then x2 else uniform(0,4) in
            let x4 = if x1 < 0.5 then x3 else uniform(0,6) in
            let x5 = if x2 < 0.5 then x4 else uniform(0,8) in
            x5 < 0.5
        \end{lstlisting}
        \caption{Alternating Guard 3 (\Cref{fig:alt-benchmarks-c}): Size 5 program with guard span 2. Every variable is later used as a guard or then-branch expression; all uniform distributions are distinct.}
        \label{fig:alt-benchmarks-c-example}
    \end{subfigure}
    \begin{subfigure}{\textwidth}
        \begin{lstlisting}
            let x1 = uniform(0,2) in
            let x2 = if x1 < 0.13 then uniform(0,2) else uniform(0,1) in
            let x3 = if x2 < 0.06 then uniform(0,8) else uniform(0,3) in
            let x4 = if x1 < 0.68 then uniform(0,4) else uniform(0,8) in
            let x5 = if x2 < 0.51 then uniform(0,3) else uniform(0,7) in
            x5 < 0.5
        \end{lstlisting}
        \caption{Random Alternating (\Cref{fig:alt-benchmarks-d}): Size 5 program with guard span 2. Uniform upper bounds are random integers from 0 to 10 and may repeat.}
        \label{fig:alt-benchmarks-d-example}
    \end{subfigure}
    \caption{Representative programs from the scaling benchmarks.}
    \label{fig:alt-benchmark-examples}
\end{figure}

The code snippets show representative size $5$ instances from each benchmark family. The scaling experiments generate larger instances of the same form by varying the program size and, for the alternating benchmarks, the guard span. The programs in \Cref{fig:cond-benchmarks-a-example} follow the scheme depicted above where the program size is $n$, i.e., the number of comparisons performed is $n$. In \Cref{fig:cond-benchmarks-b-example}, each $x_i$ does not necessarily depend on $x_{i-1}$ but on some randomly chosen predecessor\footnote{We generate these programs in a randomized manner with a fixed seed for reproducibility}. In \Cref{fig:cond-benchmarks-c-example,fig:cond-benchmarks-d-example}, we additionally choose at random whether the guards of the conditionals and the branches are either some predecessor variable $x_i$ or a  $\uniform$-sampling instruction.  The programs in \Cref{fig:alt-benchmarks-a-example} also follow the scheme depicted above but where the variables used in the guards cycle with a fixed span size of $m$. For instance, for $m=2$, the pattern is $x_1,x_2,x_1,\ldots$. The programs in \Cref{fig:alt-benchmarks-b-example,fig:alt-benchmarks-c-example,fig:alt-benchmarks-d-example} follow a similar pattern.
\begin{figure}[!t]
\centering
\begin{subfigure}{0.4\textwidth}
\includegraphics[width=\textwidth]{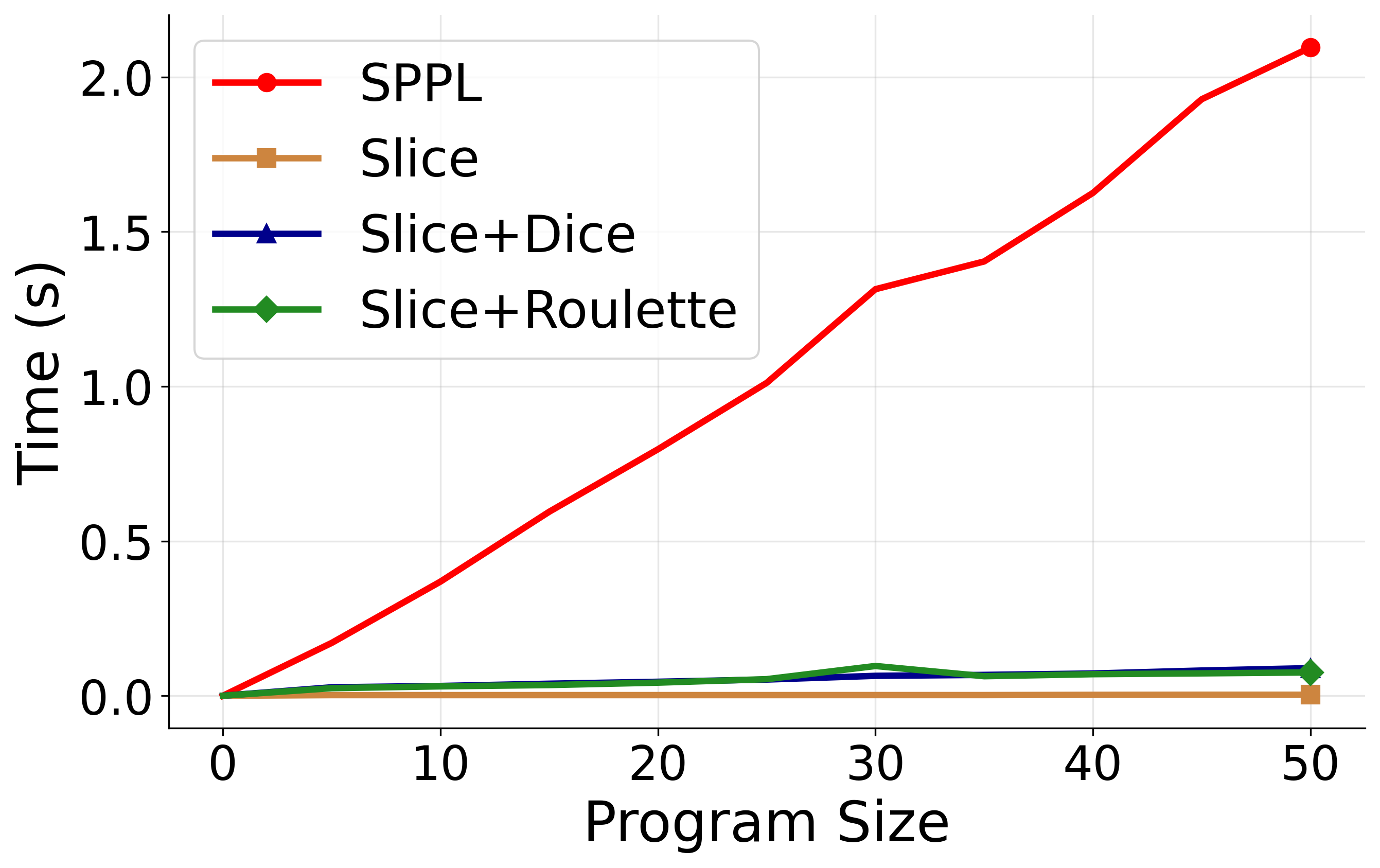}
\caption{Conditional Independent}
\label{fig:cond-benchmarks-a}
\end{subfigure}
\hfill
\begin{subfigure}{0.4\textwidth}
\includegraphics[width=\textwidth]{images/scaling/build_conditional_independent_fork_slice.png}
\caption{Conditional Independent -- Fork}
\label{fig:cond-benchmarks-b}
\end{subfigure}
\hfill
\begin{subfigure}{0.4\textwidth}
\includegraphics[width=\textwidth]{images/scaling/build_conditional_random_independent_slice_1.png}
\caption{Conditional Independent -- Rand. 1}
\label{fig:cond-benchmarks-c}
\end{subfigure}
\hfill
\begin{subfigure}{0.4\textwidth}
\includegraphics[width=\textwidth]{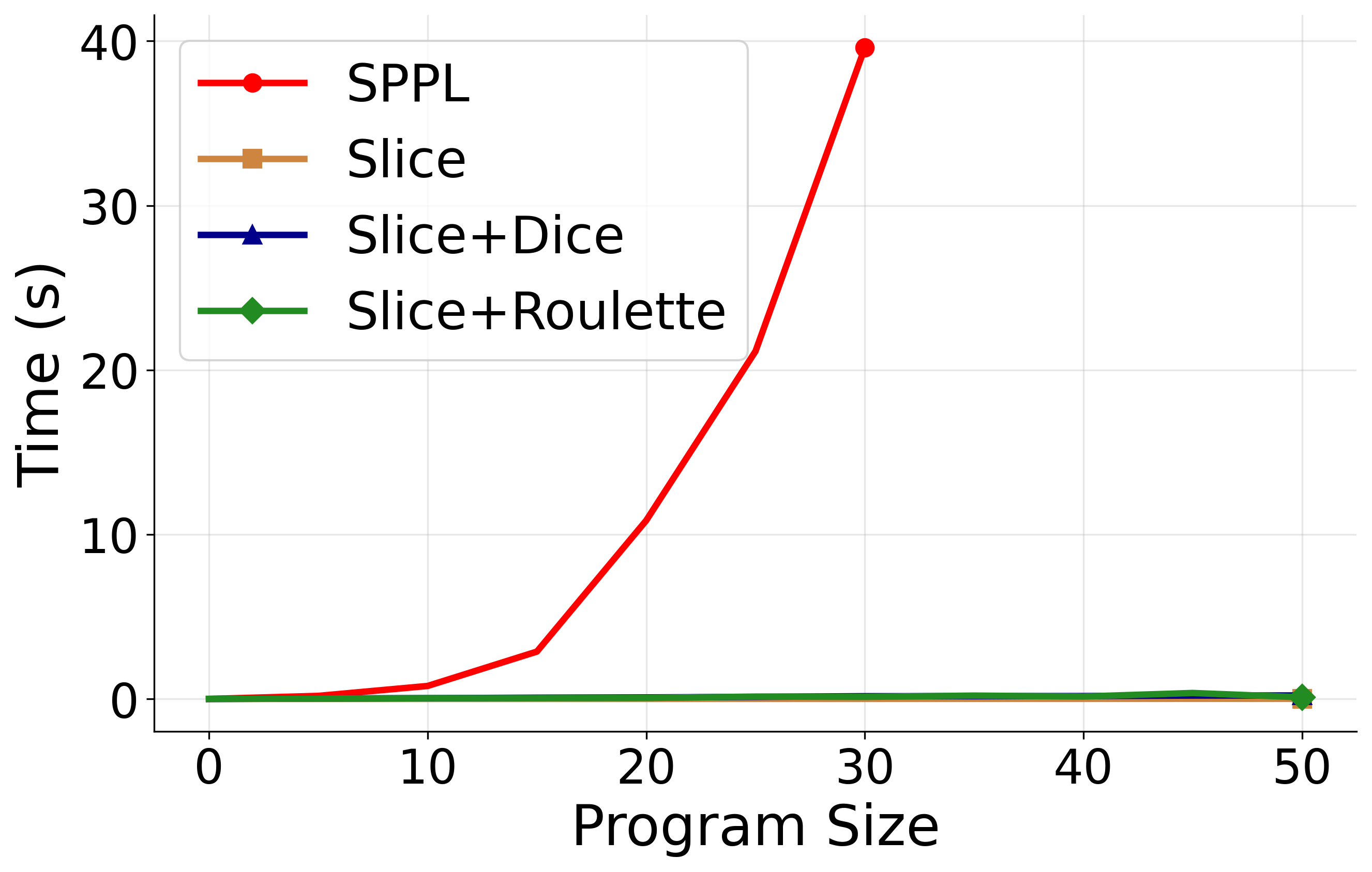}
\caption{Conditional Independent -- Rand. 2}
\label{fig:cond-benchmarks-d}
\end{subfigure}
\caption{Scaling graphs for conditionally independent programs. Time-out=120s; a disappearing line indicates timeout. Graphs (a)--(d) correspond, respectively, to the representative programs in \Cref{fig:cond-benchmarks-a-example,fig:cond-benchmarks-b-example,fig:cond-benchmarks-c-example,fig:cond-benchmarks-d-example}.}
\label{fig:cond-benchmarks}
\end{figure}

\begin{figure}[!t]
\centering
\begin{subfigure}{0.4\textwidth}
\includegraphics[width=\textwidth]{images/scaling/build_alternating_guard_slice_1.png}
\caption{Alt. Guard 1}
\label{fig:alt-benchmarks-a}
\end{subfigure}
\hfill
\begin{subfigure}{0.4\textwidth}
\includegraphics[width=\textwidth]{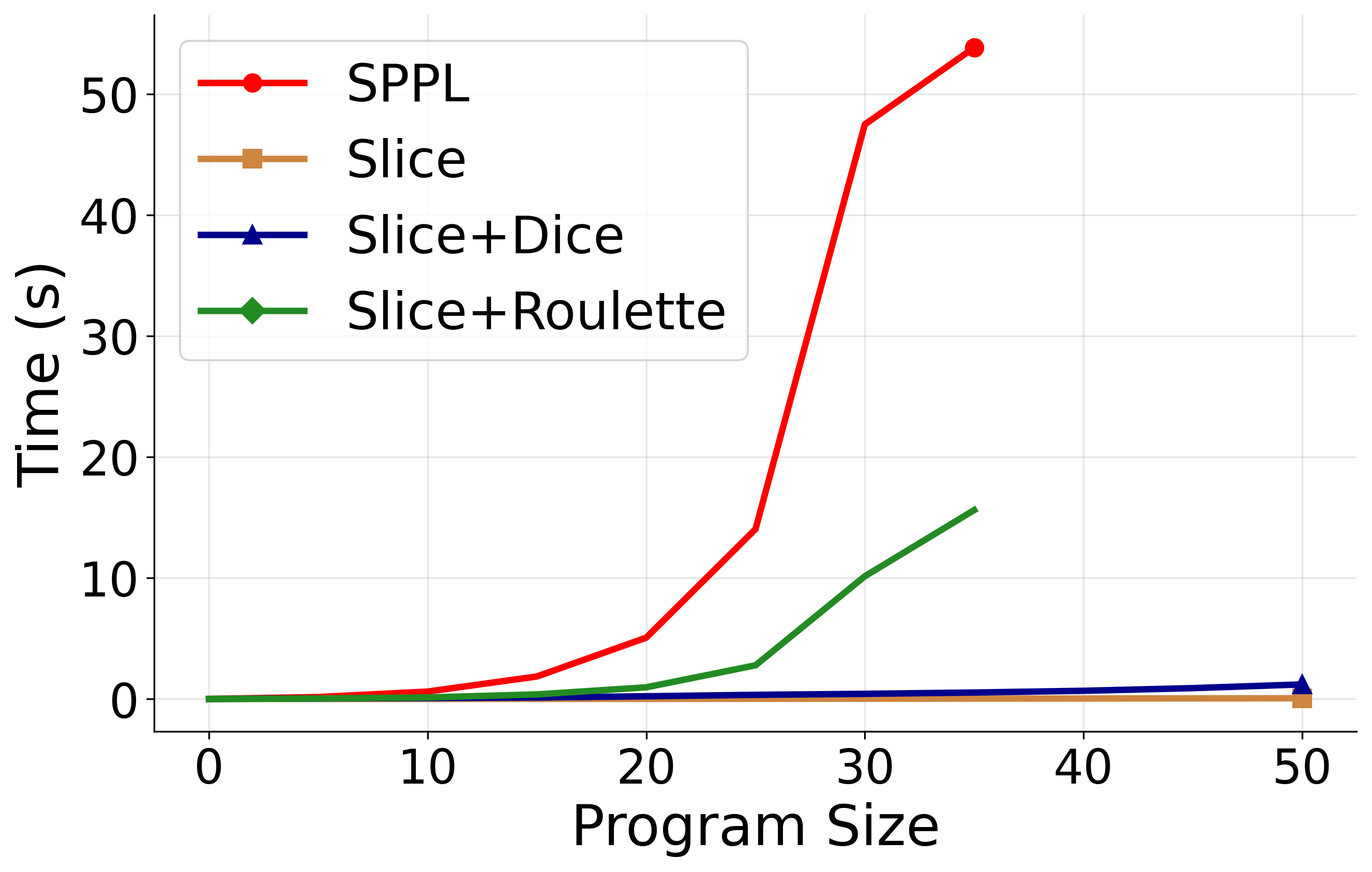}
\caption{Alt. Guard 2}
\label{fig:alt-benchmarks-b}
\end{subfigure}
\hfill
\begin{subfigure}{0.4\textwidth}
\includegraphics[width=\textwidth]{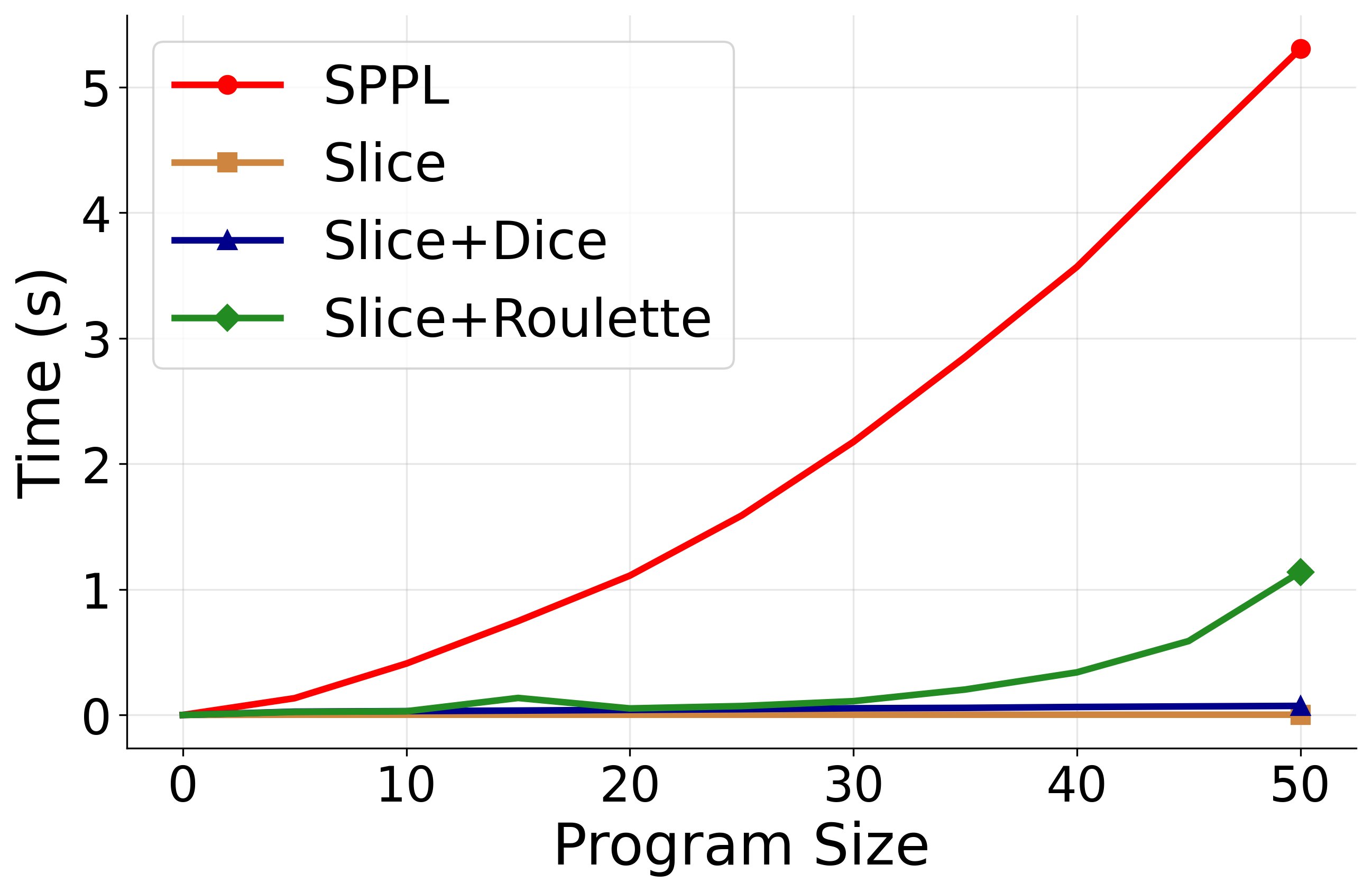}
\caption{Alt. Guard 3}
\label{fig:alt-benchmarks-c}
\end{subfigure}
\hfill
\begin{subfigure}{0.4\textwidth}
\includegraphics[width=\textwidth]{images/scaling/build_random_alternating_guard_slice.png}
\caption{Alt. Guard -- Rand.}
\label{fig:alt-benchmarks-d}
\end{subfigure}
\caption{Scaling graphs for conditionally independent programs with alternating guards. Time-out=120s; a disappearing line indicates timeout. Graphs (a)--(d) correspond, respectively, to the representative programs in \Cref{fig:alt-benchmarks-a-example,fig:alt-benchmarks-b-example,fig:alt-benchmarks-c-example,fig:alt-benchmarks-d-example}.}
\label{fig:alt-benchmarks}
\end{figure}

We report runtime (seconds) for (i) \SPPL{}, (ii) \Slice{} (discretization only), (iii) \Slice{}+\Dice{} (discretization + inference via \Dice{}), and (iv) \Slice{}+\Roulette{} (discretization + inference via \Roulette{}). The timeout is 120 seconds.

%
%

\paragraph{Results.} Discretization remains inexpensive across these benchmark families, and the resulting \Slice{}+\Dice{}/\Roulette{} pipelines scale well with increasing program size. In contrast, exact inference with \SPPL{} exhibits substantially steeper runtime growth and reaches the timeout on several benchmark variants.

\subsection{Decision Trees}
\begin{table}[!t]
    \centering
    \caption{Execution times (seconds) for the benchmarks from \cite{albarghouthi2017fairsquare}, with a timeout of 120s. The \texttt{err.}'s from Hakaru's simplification step represent a crash, caused by an internal symbolic processing exception.}
    \label{tab:benchmarks_dt}
    \small
    \setlength{\tabcolsep}{2.25pt}
    \resizebox{\columnwidth}{!}{%
    \begin{tabular}{llccccccccccccc}
    \toprule
    \makecell{\textbf{Decision}\\\textbf{Program}} &
    \makecell{\textbf{Population}\\\textbf{Model}} &
    \multicolumn{3}{c}{\textbf{Slice+Dice (s)}} &
    \multicolumn{3}{c}{\textbf{Slice+Roulette (s)}} &
    \multicolumn{3}{c}{\textbf{SPPL (s)}} &
    \multirow{2}{*}{\textbf{PSI (s)}} &
    \multicolumn{3}{c}{\textbf{Hakaru (s)}} \\
    \cmidrule(lr){3-5} \cmidrule(lr){6-8} \cmidrule(lr){9-11} \cmidrule(lr){13-15}
     &  & \textbf{Slice} & \textbf{Dice} & \textbf{Total} &
          \textbf{Slice} & \textbf{Roulette} & \textbf{Total} &
          \textbf{Trans.} & \textbf{Judg.} & \textbf{Total} & &
          \textbf{Disint.} & \textbf{Simplify} & \textbf{Total} \\
    \midrule
    DT$_4$            & Ind  & 0.002 & 0.031 & 0.033 & 0.002 & 0.026 & \textbf{0.028} & 0.074 & 0.011 & 0.085 & 3.477 & 0.104 & 1.806 & 1.910 \\
                      & BN1  & 0.002 & 0.033 & 0.035 & 0.002 & 0.031 & \textbf{0.033} & 0.245 & 0.028 & 0.273 & 19.890 & 0.031 & 0.426 & 0.457 \\
                      & BN2  & 0.002 & 0.035 & \textbf{0.037} & 0.003 & 0.034 & \textbf{0.037} & 0.268 & 0.030 & 0.298 & 51.107 & 0.030 & 0.528 & 0.558 \\
    \midrule
    DT$_{14}$         & Ind  & 0.002 & 0.047 & \textbf{0.049} & 0.002 & 0.049 & 0.051 & 0.153 & 0.024 & 0.177 & 26.150 & 0.038 & err. & err. \\
                      & BN1  & 0.002 & 0.056 & \textbf{0.058} & 0.002 & 0.074 & 0.076 & 0.597 & 0.076 & 0.673 & t.o. & 0.029 & 1.067 & 1.096 \\
                      & BN2  & 0.002 & 0.058 & \textbf{0.060} & 0.002 & 0.073 & 0.075 & 0.535 & 0.072 & 0.607 & t.o. & 0.028 & 2.619 & 2.647 \\
    \midrule
    DT$_{16}$         & Ind  & 0.002 & 0.045 & \textbf{0.047} & 0.002 & 0.054 & 0.056 & 0.176 & 0.026 & 0.202 & t.o. & 0.031 & err. & err. \\
                      & BN1  & 0.002 & 0.053 & \textbf{0.055} & 0.002 & 0.079 & 0.081 & 0.535 & 0.088 & 0.623 & t.o. & 0.036 & 1.125 & 1.161 \\
                      & BN2  & 0.003 & 0.053 & \textbf{0.056} & 0.003 & 0.079 & 0.082 & 0.554 & 0.090 & 0.644 & t.o. & 0.035 & 1.594 & 1.629 \\
    \midrule
    DT$^{\alpha}_{16}$ & Ind & 0.002 & 0.034 & \textbf{0.036} & 0.002 & 0.052 & 0.054 & 0.201 & 0.029 & 0.230 & t.o. & 0.031 & err. & err. \\
                       & BN1 & 0.002 & 0.042 & \textbf{0.044} & 0.003 & 0.191 & 0.194 & 0.597 & 0.080 & 0.677 & t.o. & 0.036 & err. & err. \\
                       & BN2 & 0.003 & 0.049 & \textbf{0.052} & 0.003 & 0.224 & 0.227 & 0.642 & 0.075 & 0.717 & t.o. & 0.040 & 2.142 & 2.182 \\
    \midrule
    DT$_{44}$         & Ind  & 0.002 & 0.099 & \textbf{0.101} & 0.002 & 0.276 & 0.278 & 0.402 & 0.073 & 0.475 & t.o. & 0.041 & t.o. & t.o. \\
                      & BN1  & 0.003 & 0.120 & \textbf{0.123} & 0.003 & 0.550 & 0.553 & 1.043 & 0.309 & 1.352 & t.o. & 0.042 & 15.972 & 16.014 \\
                      & BN2  & 0.003 & 0.122 & \textbf{0.125} & 0.003 & 0.546 & 0.549 & 1.135 & 0.346 & 1.481 & t.o. & 0.053 & 44.456 & 44.509 \\
    \bottomrule
    \end{tabular}
    }
  \end{table}

\Cref{tab:benchmarks_dt,tab:benchmarks_dt_extra} report the decision tree benchmarks for varying tree sizes $n$. The first table contains the original decision tree benchmarks from~\cite{albarghouthi2017fairsquare}, while the second extends this evaluation with additional, larger instances that we generated to more systematically study scaling as the tree size increases.

\paragraph{Results.} We observe that \Slice{}+\Dice{} achieves the lowest total runtime in practically all $DT$ programs. \Slice{}+\Roulette{} is competitive only on very small trees, and becomes substantially slower by $DT_{100}$--$DT_{500}$ and times out from $DT_{800}$ onward. Additionally, the discretization cost grows substantially by $DT_{1000}$--$DT_{5000}$. \PSI{} completes the three $DT_4$ variants and the independent $DT_{14}$ variant, but times out on the remaining programs. Hakaru's disintegration phase remains fast, but its subsequent simplification is less robust: it produces errors on several smaller independent variants, approaches the timeout at $DT_{500}$, and times out on all variants from $DT_{800}$ onward.

\begin{table}[!t]
  \centering
  \caption{Execution times (seconds) for the benchmarks from \cite{albarghouthi2017fairsquare}, with a timeout of 120s. For $DT_{5000}$ under our setup, SPPL returned a maximum recursion depth exceeded error (reported as \texttt{err.}).}
  \label{tab:benchmarks_dt_extra}
  \small
  \setlength{\tabcolsep}{2.25pt}
  \resizebox{\columnwidth}{!}{%
  \begin{tabular}{llccccccccccccc}
  \toprule
  \makecell{\textbf{Decision}\\\textbf{Program}} &
  \makecell{\textbf{Population}\\\textbf{Model}} &
  \multicolumn{3}{c}{\textbf{Slice+Dice (s)}} &
  \multicolumn{3}{c}{\textbf{Slice+Roulette (s)}} &
  \multicolumn{3}{c}{\textbf{SPPL (s)}} &
  \multirow{2}{*}{\textbf{PSI (s)}} &
  \multicolumn{3}{c}{\textbf{Hakaru (s)}} \\
  \cmidrule(lr){3-5} \cmidrule(lr){6-8} \cmidrule(lr){9-11} \cmidrule(lr){13-15}
   &  & \textbf{Slice} & \textbf{Dice} & \textbf{Total} &
        \textbf{Slice} & \textbf{Roulette} & \textbf{Total} &
        \textbf{Trans.} & \textbf{Judg.} & \textbf{Total} & &
        \textbf{Disint.} & \textbf{Simplify} & \textbf{Total} \\
  \midrule
  DT$_{50}$         & Ind  & 0.006 & 0.094 & \textbf{0.100} & 0.004 & 0.156 & 0.160 & 0.765 & 0.029 & 0.793 & t.o. & 0.045 & 3.712 & 3.757 \\
                    & BN1  & 0.007 & 0.079 & \textbf{0.086} & 0.006 & 0.451 & 0.457 & 1.646 & 0.085 & 1.731 & t.o. & 0.039 & 3.548 & 3.587 \\
                    & BN2  & 0.007 & 0.083 & \textbf{0.090} & 0.006 & 0.403 & 0.409 & 1.580 & 0.080 & 1.660 & t.o. & 0.037 & 3.635 & 3.672 \\
  \midrule
  DT$_{100}$        & Ind  & 0.011 & 0.138 & \textbf{0.149} & 0.011 & 0.654 & 0.665 & 1.455 & 0.036 & 1.490 & t.o. & 0.057 & 8.490 & 8.547 \\
                    & BN1  & 0.015 & 0.148 & \textbf{0.163} & 0.016 & 1.775 & 1.791 & 2.867 & 0.150 & 3.017 & t.o. & 0.077 & 8.596 & 8.673 \\
                    & BN2  & 0.017 & 0.153 & \textbf{0.170} & 0.016 & 1.775 & 1.791 & 2.914 & 0.152 & 3.066 & t.o. & 0.069 & 8.816 & 8.885 \\
  \midrule
  DT$_{200}$        & Ind  & 0.045 & 0.289 & \textbf{0.334} & 0.045 & 4.089 & 4.134 & 2.710 & 0.059 & 2.770 & t.o. & 0.099 & 24.551 & 24.650 \\
                    & BN1  & 0.064 & 0.322 & \textbf{0.386} & 0.064 & 9.496 & 9.560 & 5.836 & 0.318 & 6.154 & t.o. & 0.111 & 23.741 & 23.852 \\
                    & BN2  & 0.068 & 0.305 & \textbf{0.373} & 0.068 & 9.556 & 9.624 & 5.797 & 0.302 & 6.099 & t.o. & 0.115 & 23.988 & 24.103 \\
  \midrule
  DT$_{500}$        & Ind  & 0.479 & 0.879 & \textbf{1.358} & 0.515 & 55.677 & 56.192 & 7.908 & 0.141 & 8.049 & t.o. & 0.295 & 118.438 & 118.733 \\
                    & BN1  & 0.599 & 0.961 & \textbf{1.560} & 0.636 & 113.426 & 114.062 & 15.017 & 0.787 & 15.804 & t.o. & 0.293 & 117.507 & 117.800 \\
                    & BN2  & 0.616 & 0.999 & \textbf{1.615} & 0.610 & 112.976 & 113.586 & 15.330 & 0.889 & 16.219 & t.o. & 0.301 & 113.667 & 113.968 \\
  \midrule
  DT$_{800}$        & Ind  & 1.738 & 1.592 & \textbf{3.330} & 1.738 & t.o. & t.o. & 14.698 & 0.238 & 14.937 & t.o. & 0.488 & t.o. & t.o. \\
                    & BN1  & 2.184 & 1.690 & \textbf{3.874} & 2.184 & t.o. & t.o. & 26.297 & 1.265 & 27.561 & t.o. & 0.597 & t.o. & t.o. \\
                    & BN2  & 2.151 & 1.704 & \textbf{3.855} & 2.151 & t.o. & t.o. & 26.542 & 1.298 & 27.840 & t.o. & 0.512 & t.o. & t.o. \\
  \midrule
  DT$_{1000}$       & Ind  & 3.584 & 2.096 & \textbf{5.680} & 3.584 & t.o. & t.o. & 17.626 & 0.279 & 17.905 & t.o. & 0.774 & t.o. & t.o. \\
                    & BN1  & 4.125 & 2.192 & \textbf{6.317} & 4.125 & t.o. & t.o. & 33.417 & 1.537 & 34.954 & t.o. & 0.784 & t.o. & t.o. \\
                    & BN2  & 4.052 & 2.152 & \textbf{6.204} & 4.052 & t.o. & t.o. & 34.225 & 1.612 & 35.837 & t.o. & 0.803 & t.o. & t.o. \\
  \midrule
  DT$_{2000}$       & Ind  & 26.791 & 4.607 & \textbf{31.398} & 26.791 & t.o. & t.o. & 44.957 & 0.786 & 45.742 & t.o. & 2.868 & t.o. & t.o. \\
                    & BN1  & 29.148 & 4.726 & \textbf{33.874} & 29.148 & t.o. & t.o. & 84.744 & 3.227 & 87.972 & t.o. & 2.775 & t.o. & t.o. \\
                    & BN2  & 30.353 & 4.715 & \textbf{35.068} & 30.353 & t.o. & t.o. & 76.398 & 3.152 & 79.550 & t.o. & 2.944 & t.o. & t.o. \\
  \midrule
  DT$_{5000}$       & Ind  & t.o. & t.o. & \textbf{t.o.} & t.o. & t.o. & \textbf{t.o.} & err. & err. & \textbf{err.} & t.o. & 16.030 & t.o. & t.o. \\
                  & BN1  & t.o. & t.o. & \textbf{t.o.} & t.o. & t.o. & \textbf{t.o.} & err. & err. & \textbf{err.} & t.o. & 14.422 & t.o. & t.o. \\
                  & BN2  & t.o. & t.o. & \textbf{t.o.} & t.o. & t.o. & \textbf{t.o.} & err. & err. & \textbf{err.} & t.o. & 14.933 & t.o. & t.o. \\
  \bottomrule
  \end{tabular}
  }
\end{table}

\FloatBarrier
\subsection{Baselines}
We finally evaluate \Slice{} on benchmarks drawn from prior work~\cite{gehr2016psi,Saad2021SPPL}. Most benchmark programs are taken directly from the \SPPL{} Github repository and translated into \Slice{}. In some cases, the \SPPL{} versions differ from the original \PSI{} benchmarks, for example by replacing continuous choices with finite ones so that the programs fall within the class supported by \SPPL{}. CoinBias and SurveyUnbias are not included in the \SPPL{} repository, so we adapted these benchmarks in the same style. Consequently, the \Slice{} and \SPPL{} columns in \Cref{tab:benchmarks_psi} evaluate the same benchmark programs and provide an apples-to-apples comparison. The full benchmark programs are provided in \Cref{app:baseline-programs-full}. The first column in the table gives the benchmark name, the second shows the number of datasets on which to condition the program, and the remaining columns display execution times in seconds. For benchmarks with both modified and unmodified versions, the \PSI{} and \Hakaru{} cells report the modified timing marked by $\bullet$ above the unmodified timing marked by $\circ$. The $\bullet$ entries therefore provide the apples-to-apples comparison with \Slice{} and \SPPL{}. For Hakaru, we report disintegration, subsequent simplification, and their total. \texttt{t.o.} denotes the same 120s timeout used elsewhere, and \texttt{err.} denotes a tool error.

\begin{table}[H]
\centering
\caption{Execution times (in seconds) for benchmarks from PSI~\cite{gehr2016psi} and SPPL~\cite{Saad2021SPPL}, with modified and unmodified \PSI{} and Hakaru exact inference baselines. Timeout=120s. For benchmarks with two outputs for \PSI{} and Hakaru, $\bullet$ marks the modified program and $\circ$ marks the unmodified program. \texttt{err.} represents a crash, caused by out of memory.}
    \label{tab:benchmarks_psi}
\small
\setlength{\tabcolsep}{2pt}
\resizebox{\columnwidth}{!}{%
\begin{tabular}{lcccccccc}
    \toprule
    \multirow{2}{*}{\textbf{Benchmark}} &
    \multirow{2}{*}{\textbf{Datasets}} &
    \multirow{2}{*}{\textbf{Slice+Dice (s)}} &
    \multirow{2}{*}{\textbf{Slice+Roulette (s)}} &
    \multirow{2}{*}{\textbf{SPPL (s)}} &
    \multirow{2}{*}{\textbf{PSI (s)}} &
    \multicolumn{3}{c}{\textbf{Hakaru (s)}} \\
    \cmidrule(lr){7-9}
    & & & & & & \textbf{Disintegrate} & \textbf{Simplify} & \textbf{Total} \\
    \midrule
    ClickGraph~\cite{gehr2016psi} & 10 & \textbf{1.469} & \texttt{t.o.} & \texttt{t.o.} &
    \makecell{\texttt{t.o.}\,$\bullet$ \\ \texttt{t.o.}\,$\circ$} &
    \makecell{0.046\,$\bullet$ \\ 0.035\,$\circ$} &
    \makecell{\texttt{t.o.}\,$\bullet$ \\ 11.287\,$\circ$} &
    \makecell{\texttt{t.o.}\,$\bullet$ \\ 11.322\,$\circ$} \\
    \midrule
    ClinicalTrial~\cite{gehr2016psi} & 20 & \textbf{1.582} & 5.020 & 17.895 &
    \makecell{\texttt{t.o.}\,$\bullet$ \\ 1.916\,$\circ$} &
    \makecell{\texttt{t.o.}\,$\bullet$ \\ \texttt{t.o.}\,$\circ$} &
    \makecell{\texttt{t.o.}\,$\bullet$ \\ \texttt{t.o.}\,$\circ$} &
    \makecell{\texttt{t.o.}\,$\bullet$ \\ \texttt{t.o.}\,$\circ$} \\
    \midrule
    CoinBias~\cite{gehr2016psi} & 5 & \textbf{0.466} & 0.590 & 2.382 &
    \makecell{20.138\,$\bullet$ \\ 0.369\,$\circ$} &
    \makecell{0.058\,$\bullet$ \\ 0.025\,$\circ$} &
    \makecell{4.789\,$\bullet$ \\ 0.528\,$\circ$} &
    \makecell{4.847\,$\bullet$ \\ 0.553\,$\circ$} \\
    \midrule
    Election~\cite{Saad2021SPPL} & -- & \textbf{1.587} & \texttt{t.o.} & 19.410 &
    \texttt{t.o.} & 0.019 & \texttt{t.o.} & \texttt{t.o.} \\
    \midrule
    MarkovSwitching~\cite{Saad2021SPPL} & -- & 0.423 & \textbf{0.240} & 19.998 &
    \texttt{t.o.} & 0.020 & 0.275 & 0.295 \\
    \midrule
    StudentInterviews~\cite{Saad2021SPPL} & -- & 3.331 & \texttt{t.o.} & 53.209 &
    \texttt{t.o.} & 0.025 & 0.443 & \textbf{0.468} \\
    \midrule
    SurveyUnbias~\cite{gehr2016psi} & 5 & 75.221 & \textbf{0.651} & 3.519 &
    \makecell{\texttt{t.o.}\,$\bullet$ \\ \texttt{t.o.}\,$\circ$} &
    \makecell{0.027\,$\bullet$ \\ 0.029\,$\circ$} &
    \makecell{\texttt{t.o.}\,$\bullet$ \\ \texttt{err.}\,$\circ$} &
    \makecell{\texttt{t.o.}\,$\bullet$ \\ \texttt{err.}\,$\circ$} \\
    \midrule
    TrueSkill~\cite{gehr2016psi} & 3 & \textbf{0.750} & 68.669 & 32.881 &
    \makecell{\texttt{t.o.}\,$\bullet$ \\ 0.068\,$\circ$} &
    \makecell{0.027\,$\bullet$ \\ 0.018\,$\circ$} &
    \makecell{\texttt{err.}\,$\bullet$ \\ 0.215\,$\circ$} &
    \makecell{\texttt{err.}\,$\bullet$ \\ 0.233\,$\circ$} \\
    \bottomrule
    \end{tabular}  
}  
\end{table}

\paragraph{Results.} Across the modified baseline programs in \Cref{tab:benchmarks_psi}, the \Slice{} pipelines generally provide the fastest successful end-to-end inference: \Slice{}+\Dice{} performs best on most benchmarks, while \Slice{}+\Roulette{} is faster on some instances. \SPPL{} is generally slower, and \PSI{} frequently times out, although both \PSI{} and Hakaru perform better on some unmodified programs. Hakaru's disintegration phase is usually fast, but subsequent Maple simplification often times out or returns an error, limiting its end-to-end reliability.

\clearpage
\subsubsection{Baseline Programs}\label{app:baseline-programs-full}
The exact \Slice{} programs used for these baseline benchmarks are reproduced below.

\lstdefinestyle{baselineprogram}{
  language=Dice,
  basicstyle=\ttfamily\scriptsize,
  columns=fullflexible,
  keepspaces=true,
  breaklines=true,
  breakatwhitespace=false,
  showstringspaces=false,
  numbers=left,
  numberstyle=\tiny\color{gray},
  numbersep=6pt,
  rulecolor=\color{black!25},
  xleftmargin=1.8em,
  aboveskip=0.5em,
  belowskip=1em
}

\begin{lstlisting}[style=baselineprogram,title={ClickGraph}]
let p_similar = beta(1, 1) in

let similar_0 =
    if p_similar <= 0.05 then discrete(0.95, 0.05)
    else if p_similar <= 0.1 then discrete(0.9, 0.1)
    else if p_similar <= 0.15 then discrete(0.85, 0.15)
    else if p_similar <= 0.2 then discrete(0.8, 0.2)
    else if p_similar <= 0.25 then discrete(0.75, 0.25)
    else if p_similar <= 0.3 then discrete(0.7, 0.3)
    else if p_similar <= 0.35 then discrete(0.65, 0.35)
    else if p_similar <= 0.4 then discrete(0.6, 0.4)
    else if p_similar <= 0.45 then discrete(0.55, 0.45)
    else if p_similar <= 0.5 then discrete(0.5, 0.5)
    else if p_similar <= 0.55 then discrete(0.45, 0.55)
    else if p_similar <= 0.6 then discrete(0.4, 0.6)
    else if p_similar <= 0.65 then discrete(0.35, 0.65)
    else if p_similar <= 0.7 then discrete(0.3, 0.7)
    else if p_similar <= 0.75 then discrete(0.25, 0.75)
    else if p_similar <= 0.8 then discrete(0.2, 0.8)
    else if p_similar <= 0.85 then discrete(0.15, 0.85)
    else if p_similar <= 0.9 then discrete(0.1, 0.9)
    else if p_similar <= 0.95 then discrete(0.05, 0.95)
    else discrete(0.0, 1.0)
in

let p_click_url_A_0 = 
    if similar_0 ==#2 1#2 then uniform(0,1) else uniform(0, 1) 
in

let p_click_url_B_0 =
    if similar_0 ==#2 1#2 then p_click_url_A_0 else uniform(0, 1)
in

let click_url_A_0 =
    if p_click_url_A_0 <= 0.1 then discrete(0.9, 0.1)
    else if p_click_url_A_0 <= 0.2 then discrete(0.8, 0.2)
    else if p_click_url_A_0 <= 0.3 then discrete(0.7, 0.3)
    else if p_click_url_A_0 <= 0.4 then discrete(0.6, 0.4)
    else if p_click_url_A_0 <= 0.5 then discrete(0.5, 0.5)
    else if p_click_url_A_0 <= 0.6 then discrete(0.4, 0.6)
    else if p_click_url_A_0 <= 0.7 then discrete(0.3, 0.7)
    else if p_click_url_A_0 <= 0.8 then discrete(0.2, 0.8)
    else if p_click_url_A_0 <= 0.9 then discrete(0.1, 0.9)
    else discrete(0.0, 1.0)
in

let click_url_B_0 =
    if p_click_url_B_0 <= 0.1 then discrete(0.9, 0.1)
    else if p_click_url_B_0 <= 0.2 then discrete(0.8, 0.2)
    else if p_click_url_B_0 <= 0.3 then discrete(0.7, 0.3)
    else if p_click_url_B_0 <= 0.4 then discrete(0.6, 0.4)
    else if p_click_url_B_0 <= 0.5 then discrete(0.5, 0.5)
    else if p_click_url_B_0 <= 0.6 then discrete(0.4, 0.6)
    else if p_click_url_B_0 <= 0.7 then discrete(0.3, 0.7)
    else if p_click_url_B_0 <= 0.8 then discrete(0.2, 0.8)
    else if p_click_url_B_0 <= 0.9 then discrete(0.1, 0.9)
    else discrete(0.0, 1.0)
in

let similar_1 =
    if p_similar <= 0.05 then discrete(0.95, 0.05)
    else if p_similar <= 0.1 then discrete(0.9, 0.1)
    else if p_similar <= 0.15 then discrete(0.85, 0.15)
    else if p_similar <= 0.2 then discrete(0.8, 0.2)
    else if p_similar <= 0.25 then discrete(0.75, 0.25)
    else if p_similar <= 0.3 then discrete(0.7, 0.3)
    else if p_similar <= 0.35 then discrete(0.65, 0.35)
    else if p_similar <= 0.4 then discrete(0.6, 0.4)
    else if p_similar <= 0.45 then discrete(0.55, 0.45)
    else if p_similar <= 0.5 then discrete(0.5, 0.5)
    else if p_similar <= 0.55 then discrete(0.45, 0.55)
    else if p_similar <= 0.6 then discrete(0.4, 0.6)
    else if p_similar <= 0.65 then discrete(0.35, 0.65)
    else if p_similar <= 0.7 then discrete(0.3, 0.7)
    else if p_similar <= 0.75 then discrete(0.25, 0.75)
    else if p_similar <= 0.8 then discrete(0.2, 0.8)
    else if p_similar <= 0.85 then discrete(0.15, 0.85)
    else if p_similar <= 0.9 then discrete(0.1, 0.9)
    else if p_similar <= 0.95 then discrete(0.05, 0.95)
    else discrete(0.0, 1.0)
in

let p_click_url_A_1 = 
    if similar_1 ==#2 1#2 then uniform(0,1) else uniform(0, 1)
in

let p_click_url_B_1 =
    if similar_1 ==#2 1#2 then p_click_url_A_1 else uniform(0, 1)
in

let click_url_A_1 =
    if p_click_url_A_1 <= 0.1 then discrete(0.9, 0.1)
    else if p_click_url_A_1 <= 0.2 then discrete(0.8, 0.2)
    else if p_click_url_A_1 <= 0.3 then discrete(0.7, 0.3)
    else if p_click_url_A_1 <= 0.4 then discrete(0.6, 0.4)
    else if p_click_url_A_1 <= 0.5 then discrete(0.5, 0.5)
    else if p_click_url_A_1 <= 0.6 then discrete(0.4, 0.6)
    else if p_click_url_A_1 <= 0.7 then discrete(0.3, 0.7)
    else if p_click_url_A_1 <= 0.8 then discrete(0.2, 0.8)
    else if p_click_url_A_1 <= 0.9 then discrete(0.1, 0.9)
    else discrete(0.0, 1.0)
in

let click_url_B_1 =
    if p_click_url_B_1 <= 0.1 then discrete(0.9, 0.1)
    else if p_click_url_B_1 <= 0.2 then discrete(0.8, 0.2)
    else if p_click_url_B_1 <= 0.3 then discrete(0.7, 0.3)
    else if p_click_url_B_1 <= 0.4 then discrete(0.6, 0.4)
    else if p_click_url_B_1 <= 0.5 then discrete(0.5, 0.5)
    else if p_click_url_B_1 <= 0.6 then discrete(0.4, 0.6)
    else if p_click_url_B_1 <= 0.7 then discrete(0.3, 0.7)
    else if p_click_url_B_1 <= 0.8 then discrete(0.2, 0.8)
    else if p_click_url_B_1 <= 0.9 then discrete(0.1, 0.9)
    else discrete(0.0, 1.0)
in

let similar_2 =
    if p_similar <= 0.05 then discrete(0.95, 0.05)
    else if p_similar <= 0.1 then discrete(0.9, 0.1)
    else if p_similar <= 0.15 then discrete(0.85, 0.15)
    else if p_similar <= 0.2 then discrete(0.8, 0.2)
    else if p_similar <= 0.25 then discrete(0.75, 0.25)
    else if p_similar <= 0.3 then discrete(0.7, 0.3)
    else if p_similar <= 0.35 then discrete(0.65, 0.35)
    else if p_similar <= 0.4 then discrete(0.6, 0.4)
    else if p_similar <= 0.45 then discrete(0.55, 0.45)
    else if p_similar <= 0.5 then discrete(0.5, 0.5)
    else if p_similar <= 0.55 then discrete(0.45, 0.55)
    else if p_similar <= 0.6 then discrete(0.4, 0.6)
    else if p_similar <= 0.65 then discrete(0.35, 0.65)
    else if p_similar <= 0.7 then discrete(0.3, 0.7)
    else if p_similar <= 0.75 then discrete(0.25, 0.75)
    else if p_similar <= 0.8 then discrete(0.2, 0.8)
    else if p_similar <= 0.85 then discrete(0.15, 0.85)
    else if p_similar <= 0.9 then discrete(0.1, 0.9)
    else if p_similar <= 0.95 then discrete(0.05, 0.95)
    else discrete(0.0, 1.0)
in

let p_click_url_A_2 = 
    if similar_2 ==#2 1#2 then uniform(0,1) else uniform(0, 1)
in

let p_click_url_B_2 =
    if similar_2 ==#2 1#2 then p_click_url_A_2 else uniform(0, 1)
in

let click_url_A_2 =
    if p_click_url_A_2 <= 0.1 then discrete(0.9, 0.1)
    else if p_click_url_A_2 <= 0.2 then discrete(0.8, 0.2)
    else if p_click_url_A_2 <= 0.3 then discrete(0.7, 0.3)
    else if p_click_url_A_2 <= 0.4 then discrete(0.6, 0.4)
    else if p_click_url_A_2 <= 0.5 then discrete(0.5, 0.5)
    else if p_click_url_A_2 <= 0.6 then discrete(0.4, 0.6)
    else if p_click_url_A_2 <= 0.7 then discrete(0.3, 0.7)
    else if p_click_url_A_2 <= 0.8 then discrete(0.2, 0.8)
    else if p_click_url_A_2 <= 0.9 then discrete(0.1, 0.9)
    else discrete(0.0, 1.0)
in

let click_url_B_2 =
    if p_click_url_B_2 <= 0.1 then discrete(0.9, 0.1)
    else if p_click_url_B_2 <= 0.2 then discrete(0.8, 0.2)
    else if p_click_url_B_2 <= 0.3 then discrete(0.7, 0.3)
    else if p_click_url_B_2 <= 0.4 then discrete(0.6, 0.4)
    else if p_click_url_B_2 <= 0.5 then discrete(0.5, 0.5)
    else if p_click_url_B_2 <= 0.6 then discrete(0.4, 0.6)
    else if p_click_url_B_2 <= 0.7 then discrete(0.3, 0.7)
    else if p_click_url_B_2 <= 0.8 then discrete(0.2, 0.8)
    else if p_click_url_B_2 <= 0.9 then discrete(0.1, 0.9)
    else discrete(0.0, 1.0)
in

let similar_3 =
    if p_similar <= 0.05 then discrete(0.95, 0.05)
    else if p_similar <= 0.1 then discrete(0.9, 0.1)
    else if p_similar <= 0.15 then discrete(0.85, 0.15)
    else if p_similar <= 0.2 then discrete(0.8, 0.2)
    else if p_similar <= 0.25 then discrete(0.75, 0.25)
    else if p_similar <= 0.3 then discrete(0.7, 0.3)
    else if p_similar <= 0.35 then discrete(0.65, 0.35)
    else if p_similar <= 0.4 then discrete(0.6, 0.4)
    else if p_similar <= 0.45 then discrete(0.55, 0.45)
    else if p_similar <= 0.5 then discrete(0.5, 0.5)
    else if p_similar <= 0.55 then discrete(0.45, 0.55)
    else if p_similar <= 0.6 then discrete(0.4, 0.6)
    else if p_similar <= 0.65 then discrete(0.35, 0.65)
    else if p_similar <= 0.7 then discrete(0.3, 0.7)
    else if p_similar <= 0.75 then discrete(0.25, 0.75)
    else if p_similar <= 0.8 then discrete(0.2, 0.8)
    else if p_similar <= 0.85 then discrete(0.15, 0.85)
    else if p_similar <= 0.9 then discrete(0.1, 0.9)
    else if p_similar <= 0.95 then discrete(0.05, 0.95)
    else discrete(0.0, 1.0)
in

let p_click_url_A_3 = 
    if similar_3 ==#2 1#2 then uniform(0,1) else uniform(0, 1)
in

let p_click_url_B_3 =
    if similar_3 ==#2 1#2 then p_click_url_A_3 else uniform(0, 1)
in

let click_url_A_3 =
    if p_click_url_A_3 <= 0.1 then discrete(0.9, 0.1)
    else if p_click_url_A_3 <= 0.2 then discrete(0.8, 0.2)
    else if p_click_url_A_3 <= 0.3 then discrete(0.7, 0.3)
    else if p_click_url_A_3 <= 0.4 then discrete(0.6, 0.4)
    else if p_click_url_A_3 <= 0.5 then discrete(0.5, 0.5)
    else if p_click_url_A_3 <= 0.6 then discrete(0.4, 0.6)
    else if p_click_url_A_3 <= 0.7 then discrete(0.3, 0.7)
    else if p_click_url_A_3 <= 0.8 then discrete(0.2, 0.8)
    else if p_click_url_A_3 <= 0.9 then discrete(0.1, 0.9)
    else discrete(0.0, 1.0)
in

let click_url_B_3 =
    if p_click_url_B_3 <= 0.1 then discrete(0.9, 0.1)
    else if p_click_url_B_3 <= 0.2 then discrete(0.8, 0.2)
    else if p_click_url_B_3 <= 0.3 then discrete(0.7, 0.3)
    else if p_click_url_B_3 <= 0.4 then discrete(0.6, 0.4)
    else if p_click_url_B_3 <= 0.5 then discrete(0.5, 0.5)
    else if p_click_url_B_3 <= 0.6 then discrete(0.4, 0.6)
    else if p_click_url_B_3 <= 0.7 then discrete(0.3, 0.7)
    else if p_click_url_B_3 <= 0.8 then discrete(0.2, 0.8)
    else if p_click_url_B_3 <= 0.9 then discrete(0.1, 0.9)
    else discrete(0.0, 1.0)
in

let similar_4 =
    if p_similar <= 0.05 then discrete(0.95, 0.05)
    else if p_similar <= 0.1 then discrete(0.9, 0.1)
    else if p_similar <= 0.15 then discrete(0.85, 0.15)
    else if p_similar <= 0.2 then discrete(0.8, 0.2)
    else if p_similar <= 0.25 then discrete(0.75, 0.25)
    else if p_similar <= 0.3 then discrete(0.7, 0.3)
    else if p_similar <= 0.35 then discrete(0.65, 0.35)
    else if p_similar <= 0.4 then discrete(0.6, 0.4)
    else if p_similar <= 0.45 then discrete(0.55, 0.45)
    else if p_similar <= 0.5 then discrete(0.5, 0.5)
    else if p_similar <= 0.55 then discrete(0.45, 0.55)
    else if p_similar <= 0.6 then discrete(0.4, 0.6)
    else if p_similar <= 0.65 then discrete(0.35, 0.65)
    else if p_similar <= 0.7 then discrete(0.3, 0.7)
    else if p_similar <= 0.75 then discrete(0.25, 0.75)
    else if p_similar <= 0.8 then discrete(0.2, 0.8)
    else if p_similar <= 0.85 then discrete(0.15, 0.85)
    else if p_similar <= 0.9 then discrete(0.1, 0.9)
    else if p_similar <= 0.95 then discrete(0.05, 0.95)
    else discrete(0.0, 1.0)
in

let p_click_url_A_4 = 
    if similar_4 ==#2 1#2 then uniform(0,1) else uniform(0, 1)
in

let p_click_url_B_4 =
    if similar_4 ==#2 1#2 then p_click_url_A_4 else uniform(0, 1)
in

let click_url_A_4 =
    if p_click_url_A_4 <= 0.1 then discrete(0.9, 0.1)
    else if p_click_url_A_4 <= 0.2 then discrete(0.8, 0.2)
    else if p_click_url_A_4 <= 0.3 then discrete(0.7, 0.3)
    else if p_click_url_A_4 <= 0.4 then discrete(0.6, 0.4)
    else if p_click_url_A_4 <= 0.5 then discrete(0.5, 0.5)
    else if p_click_url_A_4 <= 0.6 then discrete(0.4, 0.6)
    else if p_click_url_A_4 <= 0.7 then discrete(0.3, 0.7)
    else if p_click_url_A_4 <= 0.8 then discrete(0.2, 0.8)
    else if p_click_url_A_4 <= 0.9 then discrete(0.1, 0.9)
    else discrete(0.0, 1.0)
in

let click_url_B_4 =
    if p_click_url_B_4 <= 0.1 then discrete(0.9, 0.1)
    else if p_click_url_B_4 <= 0.2 then discrete(0.8, 0.2)
    else if p_click_url_B_4 <= 0.3 then discrete(0.7, 0.3)
    else if p_click_url_B_4 <= 0.4 then discrete(0.6, 0.4)
    else if p_click_url_B_4 <= 0.5 then discrete(0.5, 0.5)
    else if p_click_url_B_4 <= 0.6 then discrete(0.4, 0.6)
    else if p_click_url_B_4 <= 0.7 then discrete(0.3, 0.7)
    else if p_click_url_B_4 <= 0.8 then discrete(0.2, 0.8)
    else if p_click_url_B_4 <= 0.9 then discrete(0.1, 0.9)
    else discrete(0.0, 1.0)
in

let _0 = observe(click_url_A_0 ==#2 1#2) in
let _1 = observe(click_url_B_0 ==#2 1#2) in
let _2 = observe(click_url_A_1 ==#2 1#2) in
let _3 = observe(click_url_B_1 ==#2 1#2) in
let _4 = observe(click_url_A_2 ==#2 0#2) in
let _5 = observe(click_url_B_2 ==#2 0#2) in
let _6 = observe(click_url_A_3 ==#2 1#2) in
let _7 = observe(click_url_B_3 ==#2 1#2) in
let _8 = observe(click_url_A_4 ==#2 0#2) in
let _9 = observe(click_url_B_4 ==#2 0#2) in

p_similar < 0.3
\end{lstlisting}

\lstinputlisting[style=baselineprogram,title={ClinicalTrial}]{baseline_programs/clinical_trial.slice}

\begin{lstlisting}[style=baselineprogram,title={CoinBias}]
let bias = beta(2, 5) in

let tossResults_0 =
    if bias <= 0.05 then discrete(0.95, 0.05)
    else if bias <= 0.1 then discrete(0.9, 0.1)
    else if bias <= 0.15 then discrete(0.85, 0.15)
    else if bias <= 0.2 then discrete(0.8, 0.2)
    else if bias <= 0.25 then discrete(0.75, 0.25)
    else if bias <= 0.3 then discrete(0.7, 0.3)
    else if bias <= 0.35 then discrete(0.65, 0.35)
    else if bias <= 0.4 then discrete(0.6, 0.4)
    else if bias <= 0.45 then discrete(0.55, 0.45)
    else if bias <= 0.5 then discrete(0.5, 0.5)
    else if bias <= 0.55 then discrete(0.45, 0.55)
    else if bias <= 0.6 then discrete(0.4, 0.6)
    else if bias <= 0.65 then discrete(0.35, 0.65)
    else if bias <= 0.7 then discrete(0.3, 0.7)
    else if bias <= 0.75 then discrete(0.25, 0.75)
    else if bias <= 0.8 then discrete(0.2, 0.8)
    else if bias <= 0.85 then discrete(0.15, 0.85)
    else if bias <= 0.9 then discrete(0.1, 0.9)
    else if bias <= 0.95 then discrete(0.05, 0.95)
    else discrete(0.0, 1.0)
in

let tossResults_1 =
    if bias <= 0.05 then discrete(0.95, 0.05)
    else if bias <= 0.1 then discrete(0.9, 0.1)
    else if bias <= 0.15 then discrete(0.85, 0.15)
    else if bias <= 0.2 then discrete(0.8, 0.2)
    else if bias <= 0.25 then discrete(0.75, 0.25)
    else if bias <= 0.3 then discrete(0.7, 0.3)
    else if bias <= 0.35 then discrete(0.65, 0.35)
    else if bias <= 0.4 then discrete(0.6, 0.4)
    else if bias <= 0.45 then discrete(0.55, 0.45)
    else if bias <= 0.5 then discrete(0.5, 0.5)
    else if bias <= 0.55 then discrete(0.45, 0.55)
    else if bias <= 0.6 then discrete(0.4, 0.6)
    else if bias <= 0.65 then discrete(0.35, 0.65)
    else if bias <= 0.7 then discrete(0.3, 0.7)
    else if bias <= 0.75 then discrete(0.25, 0.75)
    else if bias <= 0.8 then discrete(0.2, 0.8)
    else if bias <= 0.85 then discrete(0.15, 0.85)
    else if bias <= 0.9 then discrete(0.1, 0.9)
    else if bias <= 0.95 then discrete(0.05, 0.95)
    else discrete(0.0, 1.0)
in

let tossResults_2 =
    if bias <= 0.05 then discrete(0.95, 0.05)
    else if bias <= 0.1 then discrete(0.9, 0.1)
    else if bias <= 0.15 then discrete(0.85, 0.15)
    else if bias <= 0.2 then discrete(0.8, 0.2)
    else if bias <= 0.25 then discrete(0.75, 0.25)
    else if bias <= 0.3 then discrete(0.7, 0.3)
    else if bias <= 0.35 then discrete(0.65, 0.35)
    else if bias <= 0.4 then discrete(0.6, 0.4)
    else if bias <= 0.45 then discrete(0.55, 0.45)
    else if bias <= 0.5 then discrete(0.5, 0.5)
    else if bias <= 0.55 then discrete(0.45, 0.55)
    else if bias <= 0.6 then discrete(0.4, 0.6)
    else if bias <= 0.65 then discrete(0.35, 0.65)
    else if bias <= 0.7 then discrete(0.3, 0.7)
    else if bias <= 0.75 then discrete(0.25, 0.75)
    else if bias <= 0.8 then discrete(0.2, 0.8)
    else if bias <= 0.85 then discrete(0.15, 0.85)
    else if bias <= 0.9 then discrete(0.1, 0.9)
    else if bias <= 0.95 then discrete(0.05, 0.95)
    else discrete(0.0, 1.0)
in

let tossResults_3 =
    if bias <= 0.05 then discrete(0.95, 0.05)
    else if bias <= 0.1 then discrete(0.9, 0.1)
    else if bias <= 0.15 then discrete(0.85, 0.15)
    else if bias <= 0.2 then discrete(0.8, 0.2)
    else if bias <= 0.25 then discrete(0.75, 0.25)
    else if bias <= 0.3 then discrete(0.7, 0.3)
    else if bias <= 0.35 then discrete(0.65, 0.35)
    else if bias <= 0.4 then discrete(0.6, 0.4)
    else if bias <= 0.45 then discrete(0.55, 0.45)
    else if bias <= 0.5 then discrete(0.5, 0.5)
    else if bias <= 0.55 then discrete(0.45, 0.55)
    else if bias <= 0.6 then discrete(0.4, 0.6)
    else if bias <= 0.65 then discrete(0.35, 0.65)
    else if bias <= 0.7 then discrete(0.3, 0.7)
    else if bias <= 0.75 then discrete(0.25, 0.75)
    else if bias <= 0.8 then discrete(0.2, 0.8)
    else if bias <= 0.85 then discrete(0.15, 0.85)
    else if bias <= 0.9 then discrete(0.1, 0.9)
    else if bias <= 0.95 then discrete(0.05, 0.95)
    else discrete(0.0, 1.0)
in

let tossResults_4 =
    if bias <= 0.05 then discrete(0.95, 0.05)
    else if bias <= 0.1 then discrete(0.9, 0.1)
    else if bias <= 0.15 then discrete(0.85, 0.15)
    else if bias <= 0.2 then discrete(0.8, 0.2)
    else if bias <= 0.25 then discrete(0.75, 0.25)
    else if bias <= 0.3 then discrete(0.7, 0.3)
    else if bias <= 0.35 then discrete(0.65, 0.35)
    else if bias <= 0.4 then discrete(0.6, 0.4)
    else if bias <= 0.45 then discrete(0.55, 0.45)
    else if bias <= 0.5 then discrete(0.5, 0.5)
    else if bias <= 0.55 then discrete(0.45, 0.55)
    else if bias <= 0.6 then discrete(0.4, 0.6)
    else if bias <= 0.65 then discrete(0.35, 0.65)
    else if bias <= 0.7 then discrete(0.3, 0.7)
    else if bias <= 0.75 then discrete(0.25, 0.75)
    else if bias <= 0.8 then discrete(0.2, 0.8)
    else if bias <= 0.85 then discrete(0.15, 0.85)
    else if bias <= 0.9 then discrete(0.1, 0.9)
    else if bias <= 0.95 then discrete(0.05, 0.95)
    else discrete(0.0, 1.0)
in

let _0 = observe(tossResults_0 ==#2 1#2) in
let _1 = observe(tossResults_1 ==#2 1#2) in
let _2 = observe(tossResults_2 ==#2 0#2) in
let _3 = observe(tossResults_3 ==#2 1#2) in
let _4 = observe(tossResults_4 ==#2 0#2) in

bias < 0.3
\end{lstlisting}

\begin{lstlisting}[style=baselineprogram,title={Election}]
let n = 4000 in
let param = discrete(
  0.01: 250,
  0.01: 251,
  0.01: 252,
  0.01: 253,
  0.01: 254,
  0.01: 255,
  0.01: 256,
  0.01: 257,
  0.01: 258,
  0.01: 259,
  0.01: 260,
  0.01: 261,
  0.01: 262,
  0.01: 263,
  0.01: 264,
  0.01: 265,
  0.01: 266,
  0.01: 267,
  0.01: 268,
  0.01: 269,
  0.01: 270,
  0.01: 271,
  0.01: 272,
  0.01: 273,
  0.01: 274,
  0.01: 275,
  0.01: 276,
  0.01: 277,
  0.01: 278,
  0.01: 279,
  0.01: 280,
  0.01: 281,
  0.01: 282,
  0.01: 283,
  0.01: 284,
  0.01: 285,
  0.01: 286,
  0.01: 287,
  0.01: 288,
  0.01: 289,
  0.01: 290,
  0.01: 291,
  0.01: 292,
  0.01: 293,
  0.01: 294,
  0.01: 295,
  0.01: 296,
  0.01: 297,
  0.01: 298,
  0.01: 299,
  0.01: 300,
  0.01: 301,
  0.01: 302,
  0.01: 303,
  0.01: 304,
  0.01: 305,
  0.01: 306,
  0.01: 307,
  0.01: 308,
  0.01: 309,
  0.01: 310,
  0.01: 311,
  0.01: 312,
  0.01: 313,
  0.01: 314,
  0.01: 315,
  0.01: 316,
  0.01: 317,
  0.01: 318,
  0.01: 319,
  0.01: 320,
  0.01: 321,
  0.01: 322,
  0.01: 323,
  0.01: 324,
  0.01: 325,
  0.01: 326,
  0.01: 327,
  0.01: 328,
  0.01: 329,
  0.01: 330,
  0.01: 331,
  0.01: 332,
  0.01: 333,
  0.01: 334,
  0.01: 335,
  0.01: 336,
  0.01: 337,
  0.01: 338,
  0.01: 339,
  0.01: 340,
  0.01: 341,
  0.01: 342,
  0.01: 343,
  0.01: 344,
  0.01: 345,
  0.01: 346,
  0.01: 347,
  0.01: 348,
  0.01: 349
) in

let p =
  if param <= 250 then beta(277, 250)
  else if param <= 251 then beta(277, 251)
  else if param <= 252 then beta(277, 252)
  else if param <= 253 then beta(277, 253)
  else if param <= 254 then beta(277, 254)
  else if param <= 255 then beta(277, 255)
  else if param <= 256 then beta(277, 256)
  else if param <= 257 then beta(277, 257)
  else if param <= 258 then beta(277, 258)
  else if param <= 259 then beta(277, 259)
  else if param <= 260 then beta(277, 260)
  else if param <= 261 then beta(277, 261)
  else if param <= 262 then beta(277, 262)
  else if param <= 263 then beta(277, 263)
  else if param <= 264 then beta(277, 264)
  else if param <= 265 then beta(277, 265)
  else if param <= 266 then beta(277, 266)
  else if param <= 267 then beta(277, 267)
  else if param <= 268 then beta(277, 268)
  else if param <= 269 then beta(277, 269)
  else if param <= 270 then beta(277, 270)
  else if param <= 271 then beta(277, 271)
  else if param <= 272 then beta(277, 272)
  else if param <= 273 then beta(277, 273)
  else if param <= 274 then beta(277, 274)
  else if param <= 275 then beta(277, 275)
  else if param <= 276 then beta(277, 276)
  else if param <= 277 then beta(277, 277)
  else if param <= 278 then beta(277, 278)
  else if param <= 279 then beta(277, 279)
  else if param <= 280 then beta(277, 280)
  else if param <= 281 then beta(277, 281)
  else if param <= 282 then beta(277, 282)
  else if param <= 283 then beta(277, 283)
  else if param <= 284 then beta(277, 284)
  else if param <= 285 then beta(277, 285)
  else if param <= 286 then beta(277, 286)
  else if param <= 287 then beta(277, 287)
  else if param <= 288 then beta(277, 288)
  else if param <= 289 then beta(277, 289)
  else if param <= 290 then beta(277, 290)
  else if param <= 291 then beta(277, 291)
  else if param <= 292 then beta(277, 292)
  else if param <= 293 then beta(277, 293)
  else if param <= 294 then beta(277, 294)
  else if param <= 295 then beta(277, 295)
  else if param <= 296 then beta(277, 296)
  else if param <= 297 then beta(277, 297)
  else if param <= 298 then beta(277, 298)
  else if param <= 299 then beta(277, 299)
  else if param <= 300 then beta(277, 300)
  else if param <= 301 then beta(277, 301)
  else if param <= 302 then beta(277, 302)
  else if param <= 303 then beta(277, 303)
  else if param <= 304 then beta(277, 304)
  else if param <= 305 then beta(277, 305)
  else if param <= 306 then beta(277, 306)
  else if param <= 307 then beta(277, 307)
  else if param <= 308 then beta(277, 308)
  else if param <= 309 then beta(277, 309)
  else if param <= 310 then beta(277, 310)
  else if param <= 311 then beta(277, 311)
  else if param <= 312 then beta(277, 312)
  else if param <= 313 then beta(277, 313)
  else if param <= 314 then beta(277, 314)
  else if param <= 315 then beta(277, 315)
  else if param <= 316 then beta(277, 316)
  else if param <= 317 then beta(277, 317)
  else if param <= 318 then beta(277, 318)
  else if param <= 319 then beta(277, 319)
  else if param <= 320 then beta(277, 320)
  else if param <= 321 then beta(277, 321)
  else if param <= 322 then beta(277, 322)
  else if param <= 323 then beta(277, 323)
  else if param <= 324 then beta(277, 324)
  else if param <= 325 then beta(277, 325)
  else if param <= 326 then beta(277, 326)
  else if param <= 327 then beta(277, 327)
  else if param <= 328 then beta(277, 328)
  else if param <= 329 then beta(277, 329)
  else if param <= 330 then beta(277, 330)
  else if param <= 331 then beta(277, 331)
  else if param <= 332 then beta(277, 332)
  else if param <= 333 then beta(277, 333)
  else if param <= 334 then beta(277, 334)
  else if param <= 335 then beta(277, 335)
  else if param <= 336 then beta(277, 336)
  else if param <= 337 then beta(277, 337)
  else if param <= 338 then beta(277, 338)
  else if param <= 339 then beta(277, 339)
  else if param <= 340 then beta(277, 340)
  else if param <= 341 then beta(277, 341)
  else if param <= 342 then beta(277, 342)
  else if param <= 343 then beta(277, 343)
  else if param <= 344 then beta(277, 344)
  else if param <= 345 then beta(277, 345)
  else if param <= 346 then beta(277, 346)
  else if param <= 347 then beta(277, 347)
  else if param <= 348 then beta(277, 348)
  else beta(277, 349)
in

let votes =
  if p <= 0.05 then binomial(0.025, n)
  else if p <= 0.1 then binomial(0.075, n)
  else if p <= 0.15 then binomial(0.125, n)
  else if p <= 0.2 then binomial(0.175, n)
  else if p <= 0.25 then binomial(0.225, n)
  else if p <= 0.3 then binomial(0.275, n)
  else if p <= 0.35 then binomial(0.325, n)
  else if p <= 0.4 then binomial(0.375, n)
  else if p <= 0.45 then binomial(0.425, n)
  else if p <= 0.5 then binomial(0.475, n)
  else if p <= 0.55 then binomial(0.525, n)
  else if p <= 0.6 then binomial(0.575, n)
  else if p <= 0.65 then binomial(0.625, n)
  else if p <= 0.7 then binomial(0.675, n)
  else if p <= 0.75 then binomial(0.725, n)
  else if p <= 0.8 then binomial(0.775, n)
  else if p <= 0.85 then binomial(0.825, n)
  else if p <= 0.9 then binomial(0.875, n)
  else if p <= 0.95 then binomial(0.925, n)
  else binomial(0.975, n)
in

let win = votes > 2000 in
let _win = observe(win) in
param <= 250
\end{lstlisting}

\lstinputlisting[style=baselineprogram,title={MarkovSwitching (HMM)}]{baseline_programs/hmm.slice}

\begin{lstlisting}[style=baselineprogram,title={StudentInterviews}]
let student_perfect_0 = discrete(0.8, 0.2) in
let student_gpa_0 = if student_perfect_0 ==#2 0#2 then beta(7, 3) else 1.0 in

let student_perfect_1 = discrete(0.8, 0.2) in
let student_gpa_1 = if student_perfect_1 ==#2 0#2 then beta(7, 3) else 1.0 in

let num_recruiters = poisson(25) in
let _recruiters_range = observe(10 <= num_recruiters && num_recruiters <= 40) in

let student_interviews_0 =
    if num_recruiters <= 10 then
        if student_perfect_0 ==#2 1#2 then binomial(0.9, 10)
        else if student_gpa_0 > 0.875 then binomial(0.6, 10)
        else binomial(0.5, 10)
    else if num_recruiters <= 11 then
        if student_perfect_0 ==#2 1#2 then binomial(0.9, 11)
        else if student_gpa_0 > 0.875 then binomial(0.6, 11)
        else binomial(0.5, 11)
    else if num_recruiters <= 12 then
        if student_perfect_0 ==#2 1#2 then binomial(0.9, 12)
        else if student_gpa_0 > 0.875 then binomial(0.6, 12)
        else binomial(0.5, 12)
    else if num_recruiters <= 13 then
        if student_perfect_0 ==#2 1#2 then binomial(0.9, 13)
        else if student_gpa_0 > 0.875 then binomial(0.6, 13)
        else binomial(0.5, 13)
    else if num_recruiters <= 14 then
        if student_perfect_0 ==#2 1#2 then binomial(0.9, 14)
        else if student_gpa_0 > 0.875 then binomial(0.6, 14)
        else binomial(0.5, 14)
    else if num_recruiters <= 15 then
        if student_perfect_0 ==#2 1#2 then binomial(0.9, 15)
        else if student_gpa_0 > 0.875 then binomial(0.6, 15)
        else binomial(0.5, 15)
    else if num_recruiters <= 16 then
        if student_perfect_0 ==#2 1#2 then binomial(0.9, 16)
        else if student_gpa_0 > 0.875 then binomial(0.6, 16)
        else binomial(0.5, 16)
    else if num_recruiters <= 17 then
        if student_perfect_0 ==#2 1#2 then binomial(0.9, 17)
        else if student_gpa_0 > 0.875 then binomial(0.6, 17)
        else binomial(0.5, 17)
    else if num_recruiters <= 18 then
        if student_perfect_0 ==#2 1#2 then binomial(0.9, 18)
        else if student_gpa_0 > 0.875 then binomial(0.6, 18)
        else binomial(0.5, 18)
    else if num_recruiters <= 19 then
        if student_perfect_0 ==#2 1#2 then binomial(0.9, 19)
        else if student_gpa_0 > 0.875 then binomial(0.6, 19)
        else binomial(0.5, 19)
    else if num_recruiters <= 20 then
        if student_perfect_0 ==#2 1#2 then binomial(0.9, 20)
        else if student_gpa_0 > 0.875 then binomial(0.6, 20)
        else binomial(0.5, 20)
    else if num_recruiters <= 21 then
        if student_perfect_0 ==#2 1#2 then binomial(0.9, 21)
        else if student_gpa_0 > 0.875 then binomial(0.6, 21)
        else binomial(0.5, 21)
    else if num_recruiters <= 22 then
        if student_perfect_0 ==#2 1#2 then binomial(0.9, 22)
        else if student_gpa_0 > 0.875 then binomial(0.6, 22)
        else binomial(0.5, 22)
    else if num_recruiters <= 23 then
        if student_perfect_0 ==#2 1#2 then binomial(0.9, 23)
        else if student_gpa_0 > 0.875 then binomial(0.6, 23)
        else binomial(0.5, 23)
    else if num_recruiters <= 24 then
        if student_perfect_0 ==#2 1#2 then binomial(0.9, 24)
        else if student_gpa_0 > 0.875 then binomial(0.6, 24)
        else binomial(0.5, 24)
    else if num_recruiters <= 25 then
        if student_perfect_0 ==#2 1#2 then binomial(0.9, 25)
        else if student_gpa_0 > 0.875 then binomial(0.6, 25)
        else binomial(0.5, 25)
    else if num_recruiters <= 26 then
        if student_perfect_0 ==#2 1#2 then binomial(0.9, 26)
        else if student_gpa_0 > 0.875 then binomial(0.6, 26)
        else binomial(0.5, 26)
    else if num_recruiters <= 27 then
        if student_perfect_0 ==#2 1#2 then binomial(0.9, 27)
        else if student_gpa_0 > 0.875 then binomial(0.6, 27)
        else binomial(0.5, 27)
    else if num_recruiters <= 28 then
        if student_perfect_0 ==#2 1#2 then binomial(0.9, 28)
        else if student_gpa_0 > 0.875 then binomial(0.6, 28)
        else binomial(0.5, 28)
    else if num_recruiters <= 29 then
        if student_perfect_0 ==#2 1#2 then binomial(0.9, 29)
        else if student_gpa_0 > 0.875 then binomial(0.6, 29)
        else binomial(0.5, 29)
    else if num_recruiters <= 30 then
        if student_perfect_0 ==#2 1#2 then binomial(0.9, 30)
        else if student_gpa_0 > 0.875 then binomial(0.6, 30)
        else binomial(0.5, 30)
    else if num_recruiters <= 31 then
        if student_perfect_0 ==#2 1#2 then binomial(0.9, 31)
        else if student_gpa_0 > 0.875 then binomial(0.6, 31)
        else binomial(0.5, 31)
    else if num_recruiters <= 32 then
        if student_perfect_0 ==#2 1#2 then binomial(0.9, 32)
        else if student_gpa_0 > 0.875 then binomial(0.6, 32)
        else binomial(0.5, 32)
    else if num_recruiters <= 33 then
        if student_perfect_0 ==#2 1#2 then binomial(0.9, 33)
        else if student_gpa_0 > 0.875 then binomial(0.6, 33)
        else binomial(0.5, 33)
    else if num_recruiters <= 34 then
        if student_perfect_0 ==#2 1#2 then binomial(0.9, 34)
        else if student_gpa_0 > 0.875 then binomial(0.6, 34)
        else binomial(0.5, 34)
    else if num_recruiters <= 35 then
        if student_perfect_0 ==#2 1#2 then binomial(0.9, 35)
        else if student_gpa_0 > 0.875 then binomial(0.6, 35)
        else binomial(0.5, 35)
    else if num_recruiters <= 36 then
        if student_perfect_0 ==#2 1#2 then binomial(0.9, 36)
        else if student_gpa_0 > 0.875 then binomial(0.6, 36)
        else binomial(0.5, 36)
    else if num_recruiters <= 37 then
        if student_perfect_0 ==#2 1#2 then binomial(0.9, 37)
        else if student_gpa_0 > 0.875 then binomial(0.6, 37)
        else binomial(0.5, 37)
    else if num_recruiters <= 38 then
        if student_perfect_0 ==#2 1#2 then binomial(0.9, 38)
        else if student_gpa_0 > 0.875 then binomial(0.6, 38)
        else binomial(0.5, 38)
    else if num_recruiters <= 39 then
        if student_perfect_0 ==#2 1#2 then binomial(0.9, 39)
        else if student_gpa_0 > 0.875 then binomial(0.6, 39)
        else binomial(0.5, 39)
    else
        if student_perfect_0 ==#2 1#2 then binomial(0.9, 40)
        else if student_gpa_0 > 0.875 then binomial(0.6, 40)
        else binomial(0.5, 40)
in

let student_interviews_1 =
    if num_recruiters <= 10 then
        if student_perfect_1 ==#2 1#2 then binomial(0.9, 10)
        else if student_gpa_1 > 0.875 then binomial(0.6, 10)
        else binomial(0.5, 10)
    else if num_recruiters <= 11 then
        if student_perfect_1 ==#2 1#2 then binomial(0.9, 11)
        else if student_gpa_1 > 0.875 then binomial(0.6, 11)
        else binomial(0.5, 11)
    else if num_recruiters <= 12 then
        if student_perfect_1 ==#2 1#2 then binomial(0.9, 12)
        else if student_gpa_1 > 0.875 then binomial(0.6, 12)
        else binomial(0.5, 12)
    else if num_recruiters <= 13 then
        if student_perfect_1 ==#2 1#2 then binomial(0.9, 13)
        else if student_gpa_1 > 0.875 then binomial(0.6, 13)
        else binomial(0.5, 13)
    else if num_recruiters <= 14 then
        if student_perfect_1 ==#2 1#2 then binomial(0.9, 14)
        else if student_gpa_1 > 0.875 then binomial(0.6, 14)
        else binomial(0.5, 14)
    else if num_recruiters <= 15 then
        if student_perfect_1 ==#2 1#2 then binomial(0.9, 15)
        else if student_gpa_1 > 0.875 then binomial(0.6, 15)
        else binomial(0.5, 15)
    else if num_recruiters <= 16 then
        if student_perfect_1 ==#2 1#2 then binomial(0.9, 16)
        else if student_gpa_1 > 0.875 then binomial(0.6, 16)
        else binomial(0.5, 16)
    else if num_recruiters <= 17 then
        if student_perfect_1 ==#2 1#2 then binomial(0.9, 17)
        else if student_gpa_1 > 0.875 then binomial(0.6, 17)
        else binomial(0.5, 17)
    else if num_recruiters <= 18 then
        if student_perfect_1 ==#2 1#2 then binomial(0.9, 18)
        else if student_gpa_1 > 0.875 then binomial(0.6, 18)
        else binomial(0.5, 18)
    else if num_recruiters <= 19 then
        if student_perfect_1 ==#2 1#2 then binomial(0.9, 19)
        else if student_gpa_1 > 0.875 then binomial(0.6, 19)
        else binomial(0.5, 19)
    else if num_recruiters <= 20 then
        if student_perfect_1 ==#2 1#2 then binomial(0.9, 20)
        else if student_gpa_1 > 0.875 then binomial(0.6, 20)
        else binomial(0.5, 20)
    else if num_recruiters <= 21 then
        if student_perfect_1 ==#2 1#2 then binomial(0.9, 21)
        else if student_gpa_1 > 0.875 then binomial(0.6, 21)
        else binomial(0.5, 21)
    else if num_recruiters <= 22 then
        if student_perfect_1 ==#2 1#2 then binomial(0.9, 22)
        else if student_gpa_1 > 0.875 then binomial(0.6, 22)
        else binomial(0.5, 22)
    else if num_recruiters <= 23 then
        if student_perfect_1 ==#2 1#2 then binomial(0.9, 23)
        else if student_gpa_1 > 0.875 then binomial(0.6, 23)
        else binomial(0.5, 23)
    else if num_recruiters <= 24 then
        if student_perfect_1 ==#2 1#2 then binomial(0.9, 24)
        else if student_gpa_1 > 0.875 then binomial(0.6, 24)
        else binomial(0.5, 24)
    else if num_recruiters <= 25 then
        if student_perfect_1 ==#2 1#2 then binomial(0.9, 25)
        else if student_gpa_1 > 0.875 then binomial(0.6, 25)
        else binomial(0.5, 25)
    else if num_recruiters <= 26 then
        if student_perfect_1 ==#2 1#2 then binomial(0.9, 26)
        else if student_gpa_1 > 0.875 then binomial(0.6, 26)
        else binomial(0.5, 26)
    else if num_recruiters <= 27 then
        if student_perfect_1 ==#2 1#2 then binomial(0.9, 27)
        else if student_gpa_1 > 0.875 then binomial(0.6, 27)
        else binomial(0.5, 27)
    else if num_recruiters <= 28 then
        if student_perfect_1 ==#2 1#2 then binomial(0.9, 28)
        else if student_gpa_1 > 0.875 then binomial(0.6, 28)
        else binomial(0.5, 28)
    else if num_recruiters <= 29 then
        if student_perfect_1 ==#2 1#2 then binomial(0.9, 29)
        else if student_gpa_1 > 0.875 then binomial(0.6, 29)
        else binomial(0.5, 29)
    else if num_recruiters <= 30 then
        if student_perfect_1 ==#2 1#2 then binomial(0.9, 30)
        else if student_gpa_1 > 0.875 then binomial(0.6, 30)
        else binomial(0.5, 30)
    else if num_recruiters <= 31 then
        if student_perfect_1 ==#2 1#2 then binomial(0.9, 31)
        else if student_gpa_1 > 0.875 then binomial(0.6, 31)
        else binomial(0.5, 31)
    else if num_recruiters <= 32 then
        if student_perfect_1 ==#2 1#2 then binomial(0.9, 32)
        else if student_gpa_1 > 0.875 then binomial(0.6, 32)
        else binomial(0.5, 32)
    else if num_recruiters <= 33 then
        if student_perfect_1 ==#2 1#2 then binomial(0.9, 33)
        else if student_gpa_1 > 0.875 then binomial(0.6, 33)
        else binomial(0.5, 33)
    else if num_recruiters <= 34 then
        if student_perfect_1 ==#2 1#2 then binomial(0.9, 34)
        else if student_gpa_1 > 0.875 then binomial(0.6, 34)
        else binomial(0.5, 34)
    else if num_recruiters <= 35 then
        if student_perfect_1 ==#2 1#2 then binomial(0.9, 35)
        else if student_gpa_1 > 0.875 then binomial(0.6, 35)
        else binomial(0.5, 35)
    else if num_recruiters <= 36 then
        if student_perfect_1 ==#2 1#2 then binomial(0.9, 36)
        else if student_gpa_1 > 0.875 then binomial(0.6, 36)
        else binomial(0.5, 36)
    else if num_recruiters <= 37 then
        if student_perfect_1 ==#2 1#2 then binomial(0.9, 37)
        else if student_gpa_1 > 0.875 then binomial(0.6, 37)
        else binomial(0.5, 37)
    else if num_recruiters <= 38 then
        if student_perfect_1 ==#2 1#2 then binomial(0.9, 38)
        else if student_gpa_1 > 0.875 then binomial(0.6, 38)
        else binomial(0.5, 38)
    else if num_recruiters <= 39 then
        if student_perfect_1 ==#2 1#2 then binomial(0.9, 39)
        else if student_gpa_1 > 0.875 then binomial(0.6, 39)
        else binomial(0.5, 39)
    else
        if student_perfect_1 ==#2 1#2 then binomial(0.9, 40)
        else if student_gpa_1 > 0.875 then binomial(0.6, 40)
        else binomial(0.5, 40)
in

let student_offers_0 =
    if student_interviews_0 <= 0 then binomial(0.4, 0)
    else if student_interviews_0 <= 1 then binomial(0.4, 1)
    else if student_interviews_0 <= 2 then binomial(0.4, 2)
    else if student_interviews_0 <= 3 then binomial(0.4, 3)
    else if student_interviews_0 <= 4 then binomial(0.4, 4)
    else if student_interviews_0 <= 5 then binomial(0.4, 5)
    else if student_interviews_0 <= 6 then binomial(0.4, 6)
    else if student_interviews_0 <= 7 then binomial(0.4, 7)
    else if student_interviews_0 <= 8 then binomial(0.4, 8)
    else if student_interviews_0 <= 9 then binomial(0.4, 9)
    else if student_interviews_0 <= 10 then binomial(0.4, 10)
    else if student_interviews_0 <= 11 then binomial(0.4, 11)
    else if student_interviews_0 <= 12 then binomial(0.4, 12)
    else if student_interviews_0 <= 13 then binomial(0.4, 13)
    else if student_interviews_0 <= 14 then binomial(0.4, 14)
    else if student_interviews_0 <= 15 then binomial(0.4, 15)
    else if student_interviews_0 <= 16 then binomial(0.4, 16)
    else if student_interviews_0 <= 17 then binomial(0.4, 17)
    else if student_interviews_0 <= 18 then binomial(0.4, 18)
    else if student_interviews_0 <= 19 then binomial(0.4, 19)
    else if student_interviews_0 <= 20 then binomial(0.4, 20)
    else if student_interviews_0 <= 21 then binomial(0.4, 21)
    else if student_interviews_0 <= 22 then binomial(0.4, 22)
    else if student_interviews_0 <= 23 then binomial(0.4, 23)
    else if student_interviews_0 <= 24 then binomial(0.4, 24)
    else if student_interviews_0 <= 25 then binomial(0.4, 25)
    else if student_interviews_0 <= 26 then binomial(0.4, 26)
    else if student_interviews_0 <= 27 then binomial(0.4, 27)
    else if student_interviews_0 <= 28 then binomial(0.4, 28)
    else if student_interviews_0 <= 29 then binomial(0.4, 29)
    else if student_interviews_0 <= 30 then binomial(0.4, 30)
    else if student_interviews_0 <= 31 then binomial(0.4, 31)
    else if student_interviews_0 <= 32 then binomial(0.4, 32)
    else if student_interviews_0 <= 33 then binomial(0.4, 33)
    else if student_interviews_0 <= 34 then binomial(0.4, 34)
    else if student_interviews_0 <= 35 then binomial(0.4, 35)
    else if student_interviews_0 <= 36 then binomial(0.4, 36)
    else if student_interviews_0 <= 37 then binomial(0.4, 37)
    else if student_interviews_0 <= 38 then binomial(0.4, 38)
    else if student_interviews_0 <= 39 then binomial(0.4, 39)
    else binomial(0.4, 40)
in

let student_offers_1 =
    if student_interviews_1 <= 0 then binomial(0.4, 0)
    else if student_interviews_1 <= 1 then binomial(0.4, 1)
    else if student_interviews_1 <= 2 then binomial(0.4, 2)
    else if student_interviews_1 <= 3 then binomial(0.4, 3)
    else if student_interviews_1 <= 4 then binomial(0.4, 4)
    else if student_interviews_1 <= 5 then binomial(0.4, 5)
    else if student_interviews_1 <= 6 then binomial(0.4, 6)
    else if student_interviews_1 <= 7 then binomial(0.4, 7)
    else if student_interviews_1 <= 8 then binomial(0.4, 8)
    else if student_interviews_1 <= 9 then binomial(0.4, 9)
    else if student_interviews_1 <= 10 then binomial(0.4, 10)
    else if student_interviews_1 <= 11 then binomial(0.4, 11)
    else if student_interviews_1 <= 12 then binomial(0.4, 12)
    else if student_interviews_1 <= 13 then binomial(0.4, 13)
    else if student_interviews_1 <= 14 then binomial(0.4, 14)
    else if student_interviews_1 <= 15 then binomial(0.4, 15)
    else if student_interviews_1 <= 16 then binomial(0.4, 16)
    else if student_interviews_1 <= 17 then binomial(0.4, 17)
    else if student_interviews_1 <= 18 then binomial(0.4, 18)
    else if student_interviews_1 <= 19 then binomial(0.4, 19)
    else if student_interviews_1 <= 20 then binomial(0.4, 20)
    else if student_interviews_1 <= 21 then binomial(0.4, 21)
    else if student_interviews_1 <= 22 then binomial(0.4, 22)
    else if student_interviews_1 <= 23 then binomial(0.4, 23)
    else if student_interviews_1 <= 24 then binomial(0.4, 24)
    else if student_interviews_1 <= 25 then binomial(0.4, 25)
    else if student_interviews_1 <= 26 then binomial(0.4, 26)
    else if student_interviews_1 <= 27 then binomial(0.4, 27)
    else if student_interviews_1 <= 28 then binomial(0.4, 28)
    else if student_interviews_1 <= 29 then binomial(0.4, 29)
    else if student_interviews_1 <= 30 then binomial(0.4, 30)
    else if student_interviews_1 <= 31 then binomial(0.4, 31)
    else if student_interviews_1 <= 32 then binomial(0.4, 32)
    else if student_interviews_1 <= 33 then binomial(0.4, 33)
    else if student_interviews_1 <= 34 then binomial(0.4, 34)
    else if student_interviews_1 <= 35 then binomial(0.4, 35)
    else if student_interviews_1 <= 36 then binomial(0.4, 36)
    else if student_interviews_1 <= 37 then binomial(0.4, 37)
    else if student_interviews_1 <= 38 then binomial(0.4, 38)
    else if student_interviews_1 <= 39 then binomial(0.4, 39)
    else binomial(0.4, 40)
in

let _offer = observe(1 <= student_offers_0 && student_offers_0 <= 1) in
let _recruiters_positive = observe(num_recruiters > 10) in

student_gpa_0 < 0.125
\end{lstlisting}

\begin{lstlisting}[style=baselineprogram,title={SurveyUnbias}]
let bias_0 = beta(1, 1) in
let bias_1 = beta(1, 1) in

let votes_0 =
    if bias_0 <= 0.05 then binomial(0.05, 1000)
    else if bias_0 <= 0.1 then binomial(0.1, 1000)
    else if bias_0 <= 0.15 then binomial(0.15, 1000)
    else if bias_0 <= 0.2 then binomial(0.2, 1000)
    else if bias_0 <= 0.25 then binomial(0.25, 1000)
    else if bias_0 <= 0.3 then binomial(0.3, 1000)
    else if bias_0 <= 0.35 then binomial(0.35, 1000)
    else if bias_0 <= 0.4 then binomial(0.4, 1000)
    else if bias_0 <= 0.45 then binomial(0.45, 1000)
    else if bias_0 <= 0.5 then binomial(0.5, 1000)
    else if bias_0 <= 0.55 then binomial(0.55, 1000)
    else if bias_0 <= 0.6 then binomial(0.6, 1000)
    else if bias_0 <= 0.65 then binomial(0.65, 1000)
    else if bias_0 <= 0.7 then binomial(0.7, 1000)
    else if bias_0 <= 0.75 then binomial(0.75, 1000)
    else if bias_0 <= 0.8 then binomial(0.8, 1000)
    else if bias_0 <= 0.85 then binomial(0.85, 1000)
    else if bias_0 <= 0.9 then binomial(0.9, 1000)
    else if bias_0 <= 0.95 then binomial(0.95, 1000)
    else binomial(1.0, 1000)
in

let votes_1 =
    if bias_1 <= 0.05 then binomial(0.05, 2000)
    else if bias_1 <= 0.1 then binomial(0.1, 2000)
    else if bias_1 <= 0.15 then binomial(0.15, 2000)
    else if bias_1 <= 0.2 then binomial(0.2, 2000)
    else if bias_1 <= 0.25 then binomial(0.25, 2000)
    else if bias_1 <= 0.3 then binomial(0.3, 2000)
    else if bias_1 <= 0.35 then binomial(0.35, 2000)
    else if bias_1 <= 0.4 then binomial(0.4, 2000)
    else if bias_1 <= 0.45 then binomial(0.45, 2000)
    else if bias_1 <= 0.5 then binomial(0.5, 2000)
    else if bias_1 <= 0.55 then binomial(0.55, 2000)
    else if bias_1 <= 0.6 then binomial(0.6, 2000)
    else if bias_1 <= 0.65 then binomial(0.65, 2000)
    else if bias_1 <= 0.7 then binomial(0.7, 2000)
    else if bias_1 <= 0.75 then binomial(0.75, 2000)
    else if bias_1 <= 0.8 then binomial(0.8, 2000)
    else if bias_1 <= 0.85 then binomial(0.85, 2000)
    else if bias_1 <= 0.9 then binomial(0.9, 2000)
    else if bias_1 <= 0.95 then binomial(0.95, 2000)
    else binomial(1.0, 2000)
in

let ansBias_0 = bias_0 in
let ansBias_1 = bias_1 in
let ansBias_2 = bias_0 in
let ansBias_3 = bias_1 in
let ansBias_4 = bias_0 in

let answer_0 =
    if ansBias_0 <= 0.05 then discrete(0.95, 0.05)
    else if ansBias_0 <= 0.1 then discrete(0.9, 0.1)
    else if ansBias_0 <= 0.15 then discrete(0.85, 0.15)
    else if ansBias_0 <= 0.2 then discrete(0.8, 0.2)
    else if ansBias_0 <= 0.25 then discrete(0.75, 0.25)
    else if ansBias_0 <= 0.3 then discrete(0.7, 0.3)
    else if ansBias_0 <= 0.35 then discrete(0.65, 0.35)
    else if ansBias_0 <= 0.4 then discrete(0.6, 0.4)
    else if ansBias_0 <= 0.45 then discrete(0.55, 0.45)
    else if ansBias_0 <= 0.5 then discrete(0.5, 0.5)
    else if ansBias_0 <= 0.55 then discrete(0.45, 0.55)
    else if ansBias_0 <= 0.6 then discrete(0.4, 0.6)
    else if ansBias_0 <= 0.65 then discrete(0.35, 0.65)
    else if ansBias_0 <= 0.7 then discrete(0.3, 0.7)
    else if ansBias_0 <= 0.75 then discrete(0.25, 0.75)
    else if ansBias_0 <= 0.8 then discrete(0.2, 0.8)
    else if ansBias_0 <= 0.85 then discrete(0.15, 0.85)
    else if ansBias_0 <= 0.9 then discrete(0.1, 0.9)
    else if ansBias_0 <= 0.95 then discrete(0.05, 0.95)
    else discrete(0.0, 1.0)
in

let answer_1 =
    if ansBias_1 <= 0.05 then discrete(0.95, 0.05)
    else if ansBias_1 <= 0.1 then discrete(0.9, 0.1)
    else if ansBias_1 <= 0.15 then discrete(0.85, 0.15)
    else if ansBias_1 <= 0.2 then discrete(0.8, 0.2)
    else if ansBias_1 <= 0.25 then discrete(0.75, 0.25)
    else if ansBias_1 <= 0.3 then discrete(0.7, 0.3)
    else if ansBias_1 <= 0.35 then discrete(0.65, 0.35)
    else if ansBias_1 <= 0.4 then discrete(0.6, 0.4)
    else if ansBias_1 <= 0.45 then discrete(0.55, 0.45)
    else if ansBias_1 <= 0.5 then discrete(0.5, 0.5)
    else if ansBias_1 <= 0.55 then discrete(0.45, 0.55)
    else if ansBias_1 <= 0.6 then discrete(0.4, 0.6)
    else if ansBias_1 <= 0.65 then discrete(0.35, 0.65)
    else if ansBias_1 <= 0.7 then discrete(0.3, 0.7)
    else if ansBias_1 <= 0.75 then discrete(0.25, 0.75)
    else if ansBias_1 <= 0.8 then discrete(0.2, 0.8)
    else if ansBias_1 <= 0.85 then discrete(0.15, 0.85)
    else if ansBias_1 <= 0.9 then discrete(0.1, 0.9)
    else if ansBias_1 <= 0.95 then discrete(0.05, 0.95)
    else discrete(0.0, 1.0)
in

let answer_2 =
    if ansBias_2 <= 0.05 then discrete(0.95, 0.05)
    else if ansBias_2 <= 0.1 then discrete(0.9, 0.1)
    else if ansBias_2 <= 0.15 then discrete(0.85, 0.15)
    else if ansBias_2 <= 0.2 then discrete(0.8, 0.2)
    else if ansBias_2 <= 0.25 then discrete(0.75, 0.25)
    else if ansBias_2 <= 0.3 then discrete(0.7, 0.3)
    else if ansBias_2 <= 0.35 then discrete(0.65, 0.35)
    else if ansBias_2 <= 0.4 then discrete(0.6, 0.4)
    else if ansBias_2 <= 0.45 then discrete(0.55, 0.45)
    else if ansBias_2 <= 0.5 then discrete(0.5, 0.5)
    else if ansBias_2 <= 0.55 then discrete(0.45, 0.55)
    else if ansBias_2 <= 0.6 then discrete(0.4, 0.6)
    else if ansBias_2 <= 0.65 then discrete(0.35, 0.65)
    else if ansBias_2 <= 0.7 then discrete(0.3, 0.7)
    else if ansBias_2 <= 0.75 then discrete(0.25, 0.75)
    else if ansBias_2 <= 0.8 then discrete(0.2, 0.8)
    else if ansBias_2 <= 0.85 then discrete(0.15, 0.85)
    else if ansBias_2 <= 0.9 then discrete(0.1, 0.9)
    else if ansBias_2 <= 0.95 then discrete(0.05, 0.95)
    else discrete(0.0, 1.0)
in

let answer_3 =
    if ansBias_3 <= 0.05 then discrete(0.95, 0.05)
    else if ansBias_3 <= 0.1 then discrete(0.9, 0.1)
    else if ansBias_3 <= 0.15 then discrete(0.85, 0.15)
    else if ansBias_3 <= 0.2 then discrete(0.8, 0.2)
    else if ansBias_3 <= 0.25 then discrete(0.75, 0.25)
    else if ansBias_3 <= 0.3 then discrete(0.7, 0.3)
    else if ansBias_3 <= 0.35 then discrete(0.65, 0.35)
    else if ansBias_3 <= 0.4 then discrete(0.6, 0.4)
    else if ansBias_3 <= 0.45 then discrete(0.55, 0.45)
    else if ansBias_3 <= 0.5 then discrete(0.5, 0.5)
    else if ansBias_3 <= 0.55 then discrete(0.45, 0.55)
    else if ansBias_3 <= 0.6 then discrete(0.4, 0.6)
    else if ansBias_3 <= 0.65 then discrete(0.35, 0.65)
    else if ansBias_3 <= 0.7 then discrete(0.3, 0.7)
    else if ansBias_3 <= 0.75 then discrete(0.25, 0.75)
    else if ansBias_3 <= 0.8 then discrete(0.2, 0.8)
    else if ansBias_3 <= 0.85 then discrete(0.15, 0.85)
    else if ansBias_3 <= 0.9 then discrete(0.1, 0.9)
    else if ansBias_3 <= 0.95 then discrete(0.05, 0.95)
    else discrete(0.0, 1.0)
in

let answer_4 =
    if ansBias_4 <= 0.05 then discrete(0.95, 0.05)
    else if ansBias_4 <= 0.1 then discrete(0.9, 0.1)
    else if ansBias_4 <= 0.15 then discrete(0.85, 0.15)
    else if ansBias_4 <= 0.2 then discrete(0.8, 0.2)
    else if ansBias_4 <= 0.25 then discrete(0.75, 0.25)
    else if ansBias_4 <= 0.3 then discrete(0.7, 0.3)
    else if ansBias_4 <= 0.35 then discrete(0.65, 0.35)
    else if ansBias_4 <= 0.4 then discrete(0.6, 0.4)
    else if ansBias_4 <= 0.45 then discrete(0.55, 0.45)
    else if ansBias_4 <= 0.5 then discrete(0.5, 0.5)
    else if ansBias_4 <= 0.55 then discrete(0.45, 0.55)
    else if ansBias_4 <= 0.6 then discrete(0.4, 0.6)
    else if ansBias_4 <= 0.65 then discrete(0.35, 0.65)
    else if ansBias_4 <= 0.7 then discrete(0.3, 0.7)
    else if ansBias_4 <= 0.75 then discrete(0.25, 0.75)
    else if ansBias_4 <= 0.8 then discrete(0.2, 0.8)
    else if ansBias_4 <= 0.85 then discrete(0.15, 0.85)
    else if ansBias_4 <= 0.9 then discrete(0.1, 0.9)
    else if ansBias_4 <= 0.95 then discrete(0.05, 0.95)
    else discrete(0.0, 1.0)
in

let _0 = observe(answer_0 ==#2 1#2) in
let _1 = observe(answer_1 ==#2 0#2) in
let _2 = observe(answer_2 ==#2 1#2) in
let _3 = observe(answer_3 ==#2 1#2) in
let _4 = observe(answer_4 ==#2 0#2) in

bias_0 < 0.5
\end{lstlisting}

\lstinputlisting[style=baselineprogram,title={TrueSkill}]{baseline_programs/trueskill.slice}

\end{document}